\newtheorem{theorem}{Theorem}[section]
\newtheorem{lem}[theorem]{Lemma}
\newtheorem{cor}[theorem]{Corollary}
\newtheorem{prop}[theorem]{Proposition}
\theoremstyle{definition}
\newtheorem{defi}[theorem]{Definition}
\newtheorem{example}[theorem]{Example}
\theoremstyle{remark}
\newtheorem{rem}[theorem]{Remark}
\newcommand{\Glie}{\mathfrak{g}}   
\newcommand{\Gaff}{\widehat{\mathfrak{g}}}   
\newcommand{\Gafft}{\widehat{\mathfrak{g}'}} 
\newcommand{\BP}{\mathbf{P}}       
\newcommand{\BQ}{\mathbf{Q}}       
\newcommand{\BGG}{\mathcal{O}}   
\newcommand{\CP}{\mathfrak{P}}    
\newcommand{\lCP}{\widehat{\mathfrak{P}}}  
\newcommand{\lCQ}{\widehat{\mathcal{Q}}}  
\newcommand{\Bm}{\mathbf{m}}     
\newcommand{\CE}{\mathcal{E}}         
\newcommand{\lCE}{\mathcal{E}_{\ell}}  
\newcommand{\lwt}{\mathrm{wt}_{\ell}}  
\newcommand{\wt}{\mathrm{wt}}         
\newcommand{\BR}{\mathbf{R}}     
\newcommand{\qc}{\chi_q}      
\newcommand{\nqc}{\widetilde{\chi}_q}    
\newcommand{\CB}{\mathcal{B}}    
\newcommand{\Sm}{\mathbb{S}}     
\newcommand{\CF}{\mathscr{F}}      
\newcommand{\SF}{\mathcal{F}}
\newcommand{\SG}{\mathcal{G}}
\newcommand{\CG}{\mathscr{G}}     
\newcommand{\BC}{\mathbb{C}}  	
\newcommand{\BZ}{\mathbb{Z}}      
\newcommand{\super}{\mathbb{Z}_2}   
\newcommand{\odd}{\overline{1}}     
\newcommand{\even}{\overline{0}}   
\newcommand{\End}{\mathrm{End}} 
\newcommand{\ev}{\mathrm{ev}}    
\newcommand{\Tr}{\mathrm{Tr}}    
\newcommand{\Bf}{\mathbf{f}}
\newcommand{\Bg}{\mathbf{g}}
\newcommand{\Bn}{\mathbf{n}}     
\newcommand{\BV}{\mathbf{V}}       
\newcommand{\CW}{\mathscr{W}}    
\newcommand{\WW}{\mathcal{W}}    
\newcommand{\BW}{\mathbb{W}}      
\newcommand{\CN}{\mathcal{N}}      
\newcommand{\Bd}{\mathbf{d}}       
\newcommand{\aBw}{\bm{\omega}}      
\newcommand{\CS}{\mathcal{S}}       
\begin{document}
\title[Length-two representations]{Length-two representations of quantum affine superalgebras and Baxter operators}
\author{Huafeng Zhang}
\address{Laboratoire Paul Painlev\'e, Universit\'e Lille 1, 59655 Villeneuve d'Ascq, France }
\email{huafeng.zhang@math.univ-lille1.fr}
\begin{abstract}
Associated to quantum affine general linear Lie superalgebras are two families of short exact sequences of representations whose first and third terms are irreducible: the Baxter TQ relations involving infinite-dimensional representations; the extended T-systems of Kirillov--Reshetikhin modules. We make use of these representations over the {\it full} quantum affine superalgebra to define Baxter operators as transfer matrices for the quantum integrable model and to deduce Bethe Ansatz Equations, under genericity conditions. 
\end{abstract}
\maketitle
\setcounter{tocdepth}{1}
\tableofcontents
\section*{Introduction}
Fix $\Glie := \mathfrak{gl}(M|N)$ a general linear Lie superalgebra and $q$ a non-zero complex number which is not a root of unity. Let $U_q(\Gaff)$ be the associated quantum affine superalgebra \cite{Y}. This is a Hopf superalgebra neither commutative nor co-commutative, and it can be seen as a $q$-deformation of the universal enveloping algebra of the affine Lie superalgebra of central charge zero $\Gaff := \Glie \otimes \BC[t,t^{-1}]$. 

In this paper we study a tensor category of (finite- and infinite-dimensional) representations of $U_q(\Gaff)$. Its Grothendieck ring turns out to be commutative as is common in Lie Theory. We produce various identities of isomorphism classes of representations, and interpret them as functional relations of transfer matrices in the quantum integrable system attached to $U_q(\Gaff)$, the XXZ spin chain. 

\medskip

\noindent {\bf 1. Baxter operators.} In an exactly solvable model a common problem is to find the spectrum of a family $T(z)$ of commuting endomorphisms of a vector space $V$ depending on a complex spectral parameter $z$, called transfer matrices. The Bethe Ansatz method, initiated by H. Bethe, gives explicit eigenvectors and eigenfunctions of $T(z)$ in terms of solutions to a system of algebraic equations, the Bethe Ansatz equations (BAE). Typical examples are the Heisenberg spin chain and the ice model.

 In \cite{Baxter72}, for the 6-vertex model R. Baxter related $T(z)$ to another family of commuting endomorphisms $Q(z)$ on $V$ by the relation:
$$ \mathrm{TQ\ relation}: \quad\quad T(z) = a(z) \frac{Q(zq^2)}{Q(z)} + d(z) \frac{Q(zq^{-2})}{Q(z)}. $$
Here $a(z), d(z)$ are scalar functions and $q$ is the parameter of the model. $Q(z)$ is a polynomial in $z$, called Baxter operator. The cancellation of poles at the right-hand side becomes Bethe Ansatz equations for the roots of $Q(z)$. Similar operator equation holds for the 8-vertex model \cite{Baxter72}, where the Bethe Ansatz method fails.

Within the framework of Quantum Inverse Scattering Method, the transfer matrix $T(z)$ is defined in terms of  representations of a quantum group $\mathbf{U}$. Let $\mathcal{R}(z) \in \mathbf{U}^{\otimes 2}$  be the universal R-matrix with spectral parameter $z$ and let $V, W$ be two representations of $\mathbf{U}$. Then $t_W(z) := \mathrm{tr}_W (\mathcal{R}(z)_{W\otimes V})$ forms a commuting family of endomorphisms on $V$, thanks to the quasi-triangularity of $(\mathbf{U}, \mathcal{R}(z))$. As examples, the transfer matrix for the 6-vertex model (resp. XXX spin chain) comes from tensor products of two-dimensional irreducible representations of the affine quantum group $U_q(\widehat{\mathfrak{sl}_2})$ (resp. Yangian $Y_{\hbar}(\mathfrak{sl}_2)$), while the face-type model of Andrews--Baxter--Forrester, which is equivalent to the 8-vertex model by a vertex-IRF correspondence, requires Felder's elliptic quantum group $E_{\tau,\eta}(\mathfrak{sl}_2)$ \cite{Felder,FV2}. 

The representation meaning of the $Q(z)$ was understood in the pioneer work of Bazhanov--Lukyanov--Zamolodchikov \cite{BazhanovLukyanovZamolodchikov1997} for $U_q(\widehat{\mathfrak{sl}_2})$, and extended to an arbitrary non-twisted affine quantum group $U_q(\widehat{\mathfrak{a}})$ of a finite-dimensional simple Lie algebra $\mathfrak{a}$ in the recent work of Frenkel--Hernandez \cite{FH}. One observes that the first tensor factor of $\mathcal{R}(z)$ lies in a Borel subalgebra $U_q(\mathfrak{b})$ of $U_q(\widehat{\mathfrak{a}})$, so the above transfer-matrix construction makes sense for $U_q(\mathfrak{b})$-modules. Notably the Baxter operators $Q(z)$ are transfer matrices of $L_{i,a}^+$, the  {\it positive prefundamental modules} over $U_q(\mathfrak{b})$, for $i$ a Dynkin node of  $\mathfrak{a}$ and $a \in \BC^{\times}$.
 The $L_{i,a}^+$ are irreducible objects of a category $\BGG_{\mathrm{HJ}}$ of $U_q(\mathfrak{b})$-modules introduced by Hernandez--Jimbo \cite{HJ}.  

Making use of the prefundamental modules, Frenkel--Hernandez \cite{FH} solved a conjecture of Frenkel--Reshetikhin \cite{FR} on the spectra of the quantum integrable system, which connects eigenvalues of transfer matrices $t_W(z)$, for $W$ finite-dimensional $U_q(\widehat{\mathfrak{a}})$-modules, with polynomials arising as eigenvalues of the Baxter operators.

The two-term TQ relations, as a tool to derive Bethe Ansatz Equations for the roots of Baxter polynomials, are consequences of identities in the Grothendieck ring $K_0(\BGG_{\mathrm{HJ}})$ of category $\BGG_{\mathrm{HJ}}$ \cite{FH,HL,Jimbo1,Jimbo2,FH2}. Such identities are also examples of cluster mutations of Fomin--Zelevinsky \cite{HL}.

In the elliptic case, the triangular structure of $\mathcal{R}(z)$ is less clear as there is not yet a formulation of Borel subalgebras. Still the eigenvalues of $T(z)$ admit TQ relations  by a Bethe Ansatz in \cite{FV2}. In a joint work with G. Felder \cite{FZ}, we were able to construct elliptic Baxter operator $Q(z)$ for $E_{\tau,\eta}(\mathfrak{sl}_2)$ as a transfer matrix of certain infinite-dimensional representations over the full elliptic quantum group. 

Then a natural question is whether the Baxter operators can always be realized from representations of the full quantum group (of type Yangian, affine, or elliptic). Inspired by \cite{FZ}, in the present paper we provide a partial answer for the quantum affine superalgebra $U_q(\Gaff)$, based on the {\it asymptotic representations} which we introduced in a previous work \cite{Z5}. 

Let us mention the appearance of quantum affine superalgebras and Yangians in other supersymmetric integrable models like deformed Hubbard model and anti de Sitter/conformal field theory correspondences; see \cite{Beisert,Beisert1} and references therein.

Compared to the intense works on affine quantum groups (see the reviews \cite{CH,L}), the representation theory of $U_q(\Gaff)$ is still less understood as the super case poses one essential difficulty, the smallness of Weyl group symmetry. 

\medskip

\noindent {\bf 2. Asymptotic representations.} Before stating the main results of this paper, let us recall from \cite{Z5} the asymptotic modules over $U_q(\Gaff)$.

Let $I_0 := \{1,2,\cdots,M+N-1\}$ be the set of Dynkin nodes of the Lie superalgebra $\Glie$. There are $U_q(\Gaff)$-valued power series $\phi_i^{\pm}(z)$ in $z^{\pm 1}$ for $i \in I_0$ whose coefficients mutually commute; they can be viewed as $q$-analogs of $A \otimes t^{\pm n} \in \Gaff$
with $A$ being a diagonal matrix in $\Glie$ and $n$ a positive integer. Algebra $U_q(\Gaff)$  admits a triangular decomposition whose Cartan part is generated by the $\phi_i^{\pm}(z)$. The highest weight representation theory built from this decomposition is suitable for the classification of finite-dimensional irreducible representations \cite{Z1} in terms of rational functions. 

Fix a Dynkin node $i \in I_0$ and a spectral parameter $a \in \BC^{\times}$.  To each positive integer $k$ is attached a {\it Kirillov--Reshetikhin module}. It is a finite-dimensional irreducible $U_q(\Gaff)$-module generated by a highest weight vector $\omega$ such that
$$ \phi_j^{\pm}(z) \omega = \omega \quad \mathrm{if}\ j \neq i,\quad \phi_i^{\pm}(z) \omega = \frac{q_i^k - za q_i^{-k}}{1-za} \omega.  $$
Here $q_i = q$ for $i \leq M$ and $q_i = q^{-1}$ for $i > M$.
In \cite{Z5}, we made an ``analytic continuation" by taking $q_i^k$ to be a fixed $c \in \BC^{\times}$ as $k \rightarrow \infty$ to obtain a $U_q(\Gaff)$-module $\CW_{c,a}^{(i)}$. This is what we call asymptotic module. It  is a modification of the limit construction of prefundamental modules over Borel subalgebras in \cite{BazhanovLukyanovZamolodchikov1997,HJ}.  

We defined in \cite{Z5} a category $\BGG_{\Glie}$ of representations of $U_q(\Gaff)$ by imposing the standard weight condition as for Kac--Moody algebras \cite{Kac} and dropping integrability condition \cite{H2,MY}. It contains the $\CW_{c,a}^{(i)}$ and all the finite-dimensional $U_q(\Gaff)$-modules. Category $\BGG_{\Glie}$ is monoidal and abelian. \footnote{In the main text we also study category $\BGG$ of representations of a Borel subalgebra of $U_q(\Gaff)$, which admits prefundamental modules as in \cite{HJ}; see Definition \ref{def: category BGG}. Here $\BGG_{\Glie}$ is the full subcategory of $\BGG$ consisting of $U_q(\Gaff)$-modules.}

\medskip

\noindent {\bf 3. Main results.} We prove the following property of Grothendieck ring $K_0(\BGG_{\Glie})$: 
\begin{itemize}
\item[(i)] If $\CW$ is an asymptotic module, then there exist three modules $D, S', S''$ in category $\BGG_{\Glie}$ such that $[D] [\CW] = [S'] + [S'']$ and $S',S''$ are tensor products of asymptotic modules; see Theorem \ref{thm: TQ asymptotic}.
\end{itemize}

Consider the XXZ spin chain of $U_q(\Gaff)$. For $i \in I_0$, we define the Baxter operator $Q_i(u)$ to be the transfer matrix of $\CW_{u,1}^{(i)}$ evaluated at $1$ (Definition \ref{def: Baxter operators}), as in the elliptic case \cite{FZ}. To justify the definition, we prove the following facts.
\begin{itemize}
\item[(ii)] If $V$ is a  finite-dimensional $U_q(\Gaff)$-module, then  $t_V(z^{-2})$ is a sum of monomials of the $d(z) \frac{Q_i(zac)}{Q_i(za)}$ where $i \in I_0, a,c \in \BC^{\times}$, and the $d(z)$ are scalar functions, the number of terms being $\dim V$; see Corollary \ref{cor: generalized TQ relations}.
\item[(iii)] Each $Q_i(z)$ satisfies a two-term TQ relation; see Equation \eqref{equ: Baxter TQ}.
\end{itemize}
Note that (ii) reduces the transfer matrix of an {\it arbitrary} finite-dimensional $U_q(\Gaff)$ to the {\it finite} set $\{Q_i(u)\ |\ i \in I_0\}$ up to scalar functions. It forms generalized Baxter TQ relations in the sense of Frenkel--Hernandez \cite{FH}.

\medskip

\noindent {\bf 4. Proofs.} 
This requires the $q$-character map of Frenkel--Reshetikhin \cite{FR}, which is an injective ring homomorphism from the Grothendieck ring  $K_0(\BGG_{\Glie})$ to a commutative ring of $I_0$-tuples of rational functions with parity (Proposition \ref{prop: q-char ring}). 

The $q$-character of an asymptotic module is fairly easy thanks to its limit construction in \cite{Z5}. We obtain a separation of variable identity (SOV, Lemma \ref{lem: separation of variables}),
$$  [\CW_{c,1}^{(i)}] [ \CW_{1,a^2}^{(i)}] = [\CW_{ca,a^2}^{(i)}] [ \CW_{a^{-1},1}^{(i)}] \in K_0(\BGG_{\Glie}). $$
This identity puts the parameters $c,a \in \BC^{\times}$ in $\CW_{c,a}^{(i)}$ at an equal role. It categorifies  
$$\frac{c-zc^{-1}}{1-z}  \times \frac{1-za^2}{1-za^2} = \frac{ca-zc^{-1}a}{1-za^2} \times \frac{a^{-1}-za}{1-z}. $$
In \cite{Z5} we established generalized TQ relations in category $\BGG_{\Glie}$, which together with SOV proves (ii). 
Similarly (iii) follows from (i) and SOV.

Along the proof of (i) we obtain results of independent interest:
\begin{itemize}
\item $q$-character formulas of four families of finite-dimensional irreducible $U_q(\Gaff)$-modules, including all the Kirillov--Reshetikhin modules (Theorem \ref{thm: q-char MA}); 
\item a criteria for a tensor product of Kirillov--Reshetikhin modules to admit an irreducible head (i.e. of highest weight, Theorem \ref{thm: tensor KR even}); 
\item short exact sequences  of tensor products of Kirillov--Reshetikhin modules (Theorem \ref{thm: Demazure T}). 
\end{itemize}
The third point includes the T-system \cite{Nakajima,H,Tsuboi} as a special case.

\medskip

%

\noindent {\bf 5. Perspectives.}
We expect that our main results (i)--(iii) have analogy in elliptic quantum groups $E_{\tau,\hbar}(\mathfrak{a})$, based on twistor theory relating affine quantum groups to elliptic quantum groups \cite{twist,GTL,K2}. For $\mathfrak{a} = \mathfrak{sl}_N$ this has been verified in \cite{FZ,Z6}.
For $\mathfrak{a}$ of general type, a category of $E_{\tau,\hbar}(\mathfrak{a})$-modules was studied in \cite{GTL1} with well-behaved $q$-character theory, although its tensor product structure is unclear. 

It is possible to adapt the arguments to the case of Yangians (not necessarily of type A) in view of \cite{GTL}. One could avoid {\it degenerate Yangian} \cite{B,B2,QYangian}, whose prefundamental representations lead to Baxter operators but do not carry natural action of the ordinary Yangian. \cite[Appendix]{FZ} discussed the $\mathfrak{gl}_2$ case. The Yangian of centrally extended $\mathfrak{psl}(2|2)$ \cite{Beisert1} is of special interest in AdS/CFT. We do not know of any representation category $\BGG$ with well-behaved highest weight theory, yet there are limit constructions of infinite-dimensional representations \cite{Tor}.

For twisted quantum affine algebras $\mathbf{U}$, there are conjectural TQ relations in category $\BGG_{\mathrm{HJ}}$ \cite{FH2}. One may ask for such relations in terms of $\mathbf{U}$-modules. This is interesting from another point of view: the correspondence between twisted quantum affine algebras and non-twisted quantum affine superalgebras \cite{Zr1,W}. (This is different from Langlands duality in that the Cartan matrices for these algebras are identical.) A typical example is the equivalence \cite{W} of categories $\BGG_{\mathrm{int}}$ of integrable representations over $U_q(A_{2n}^{(2)})$ and $U_q(\widehat{\mathfrak{osp}(1|2n)})$. Let us mention an earlier work of Z. Tsuboi \cite{Tsuboi1} on Bethe Ansatz Equations  for orthosymplectic Lie superalgebras, the representation theory meaning of which is to be understood. One should need the Drinfeld second realization of quantum affine superalgebras \cite{Zr2}.

\medskip

The paper is structured as follows. In Section \ref{sec: basics} we review the quantum affine superalgebra $U_q(\Gaff)$ and its Borel subalgebra $Y_q(\Glie)$, and study the basic properties of category $\BGG$ of $Y_q(\Glie)$-modules. Section \ref{sec: asy TQ} presents the main result (i). In Section \ref{sec: Baxter}, for the $U_q(\Gaff)$ XXZ spin chain, we construct Baxter operators from the $\CW_{c,a}^{(i)}$ and derive Bethe Ansatz Equations from (i). 

The two basics ingredients are: the $q$-character formulas in terms of Young tableaux, proved in Section \ref{sec: q-char}; cyclicity of tensor products of Kirillov--Reshetikhin modules studied in Section \ref{sec: cyclicity}. The $q$-characters already lead to TQ relations of positive prefundamental modules over $Y_q(\Glie)$ in Sections \ref{sec: TQ}--\ref{sec: proof TQ}. The proof of (i) is completed in Section \ref{sec: asym} upon realizing $D$ as a suitable asymptotic limit.

The extended T-systems of Kirillov--Reshetikhin modules are proved in Section \ref{sec: proof T}. Although they are not needed in the proof of the main theorem, we include them here as applications of $q$-characters and cyclicity.

\medskip

 {\bf Acknowledgments.} The author thanks Vyjayanthi Chari, Giovanni Felder, David Hernandez and Marc Rosso for enlightening discussions, and the anonymous referee for valuable comments. This work was supported by the National Center of Competence in Research SwissMAP---The Mathematics of Physics of the Swiss National Science Foundation.

\section{Basics on quantum affine superalgebras} \label{sec: basics}
Fix $M,N \in \BZ_{>0}$. In this section we collect basic facts on the quantum affine superalgebra associated with the general linear Lie superalgebra $\Glie := \mathfrak{gl}(M|N)$ and its representations. The main references are \cite{Z3,Z4,Z5}, some of whose results are modified to be coherent with the non-graded quantum affine algebras.

Set $\kappa := M+N,\ I := \{1,2,\cdots,\kappa \}$ and $I_0 := I \setminus \{\kappa\}$. Let $\super$ denote the ring $\BZ/2\BZ = \{\even,\odd\}$. The {\it weight lattice} $\BP$ is the abelian group freely generated by the $\epsilon_i$ for $i \in I$. Let $|?|$ be the morphism of additive groups $\BP \longrightarrow \super$ such that 
$$ |\epsilon_1| = |\epsilon_2| = \cdots = |\epsilon_M| = \even,\quad |\epsilon_{M+1}| = |\epsilon_{M+2}| = \cdots = |\epsilon_{M+N}| = \odd. $$
$\BP$ is equipped with a symmetric bilinear form $(,): \BP \times \BP \longrightarrow \BZ$,
$$ (\epsilon_i,\epsilon_j) = \delta_{ij} (-1)^{|\epsilon_i|} \quad \mathrm{where}\   (-1)^{\even} := 1,\ (-1)^{\odd} := -1. $$
Define $\alpha_i := \epsilon_i - \epsilon_{i+1}$ for $i \in I_0$, and the {\it root lattice} $\BQ$ to be the subgroup of $\BP$ generated by the $\alpha_i$. Set $q_{l} := q^{(\epsilon_l,\epsilon_l)}$ and $q_{ij} := q^{(\alpha_i,\alpha_j)}$ for $i,j \in I_0$ and $l \in I$. 

If $W$ is a vector superspace and $w \in W$ is a $\super$-homogeneous vector, then by abuse of language let $|w| \in \super$ denote the parity of $w$. (It is not to be confused with the absolute value $|n|$ of an integer $n$.)

Let $\BV$ be the vector superspace with basis $(v_i)_{i\in I}$ and parity $|v_i| := |\epsilon_i|$. Define the elementary matrices $E_{ij} \in \mathrm{End}(\BV)$ by $E_{ij} v_k = \delta_{jk} v_i$ for $i,j,k \in I$. They form a basis of the vector superspace $\End(\BV)$ and $|E_{ij}| = |\epsilon_i| + |\epsilon_j|$.
\subsection{Quantum superalgebras}\label{subsec: quantum superalgebra} 
Recall the Perk--Schultz matrix \cite{Perk-Schultz}
\begin{align*}  
R(z,w) &=  \sum\limits_{i\in I}(zq_i - wq_i^{-1}) E_{ii} \otimes E_{ii}  + (z-w) \sum\limits_{i \neq j} E_{ii} \otimes E_{jj} \\
&\quad  + z \sum\limits_{i<j} (q_i-q_i^{-1}) E_{ji} \otimes E_{ij} + w \sum\limits_{i<j}(q_j-q_j^{-1})  E_{ij} \otimes E_{ji}.  
\end{align*}
It is well-known that $R(z,w)$ satisfies the quantum Yang--Baxter equation:
$$R_{12}(z_1,z_2)R_{13}(z_1,z_3) R_{23}(z_2,z_3) = R_{23}(z_2,z_3)R_{13}(z_1,z_3)R_{12}(z_1,z_2) \in \End (\BV)^{\otimes 3}. $$
The convention for the tensor subscripts is as usual. Let $n \geq 2$ and $A_1,A_2,\cdots,A_n$ be unital superalgebras. Let $1 \leq i < j \leq n$. If $x \in A_i$ and $y \in A_j$, then 
\begin{displaymath}
(x\otimes y)_{ij} := (\otimes_{k=1}^{i-1} 1_{A_k}) \otimes x \otimes (\otimes_{k=i+1}^{j-1} 1_{A_k}) \otimes y \otimes (\otimes_{k=j+1}^n 1_{A_k}) \in \otimes_{k=1}^n A_k.
\end{displaymath}
Now we can define the quantum affine superalgebra associated to $\Glie$.
\begin{defi}\cite[Section 3.1]{Z3}      \label{def: quantum affine superalgebras}
$U_q(\Gaff)$ is the superalgebra with presentation: 
\begin{itemize}
\item[(R1)] RTT-generators $s_{ij}^{(n)}, t_{ij}^{(n)}$ of parity $|\epsilon_i|+|\epsilon_j|$ for $i,j \in I$ and $n \in \BZ_{\geq 0}$;
\item[(R2)] RTT-relations in $U_q(\Gaff) \otimes (\End (\BV)^{\otimes 2})[[z,z^{-1},w,w^{-1}]]$
\begin{eqnarray*}
&& R_{23}(z,w) T_{12}(z) T_{13}(w) = T_{13}(w) T_{12}(z) R_{23}(z,w),  \\
&& R_{23}(z,w) S_{12}(z) S_{13}(w) = S_{13}(w) S_{12}(z) R_{23}(z,w),    \\
&& R_{23}(z,w) T_{12}(z) S_{13}(w) = S_{13}(w) T_{12}(z) R_{23}(z,w); 
\end{eqnarray*}
\item[(R3)] $t_{ij}^{(0)} = s_{ji}^{(0)} = 0$ and $s_{kk}^{(0)} t_{kk}^{(0)} = 1$ for $i,j,k \in I$ and $i<j$.
\end{itemize}
$T(z)  \in U_q(\Gaff)\otimes \End (\BV)[[z^{-1}]]$ and $S(z) \in U_q(\Gaff)\otimes \End (\BV)[[z]]$ are power series
\begin{align*}
&T(z) = \sum_{ij} t_{ij}(z) \otimes E_{ij},\quad t_{ij}(z) = \sum_{n \in \BZ_{\geq 0}} t_{ij}^{(n)} z^{-n}, \\
& S(z) = \sum_{ij} s_{ij}(z) \otimes E_{ij},\quad s_{ij}(z) = \sum_{n \in \BZ_{\geq 0}} s_{ij}^{(n)} z^{n}.
\end{align*}
The {\it Borel subalgebra} $Y_q(\Glie)$, also called $q$-Yangian \footnote{This is because the algebra $Y_q(\Glie)$ admits an $RTT=TTR$ type presentation, as does the ordinary Yangian $Y(\Glie)$. Here $q$ is a parameter of $R$. }, is the subalgebra of $U_q(\Gaff)$ generated by the $s_{ij}^{(n)}$ and $(s_{ii}^{(0)})^{-1}$. The finite-type quantum supergroup $U_q(\Glie)$ is the subalgebra of $U_q(\Gaff)$ generated by the $s_{ij}^{(0)}$ and $t_{ij}^{(0)}$.
\end{defi}
$U_q(\Gaff)$ has a Hopf superalgebra structure with counit $\varepsilon: U_q(\Gaff) \longrightarrow \BC$ defined by $\varepsilon(s_{ij}^{(n)}) = \varepsilon(t_{ij}^{(n)}) = \delta_{ij}\delta_{n0}$, and coproduct $\Delta: U_q(\Gaff) \longrightarrow U_q(\Gaff)^{\otimes 2}$:
\begin{eqnarray*}   
&&\Delta (s_{ij}^{(n)}) = \sum_{m=0}^n \sum_{k \in I} \epsilon_{ijk}  s_{ik}^{(m)} \otimes s_{kj}^{(n-m)}, \quad  \Delta (t_{ij}^{(n)}) = \sum_{m=0}^n \sum_{k \in I} \epsilon_{ijk} t_{ik}^{(m)} \otimes t_{kj}^{(n-m)}. 
\end{eqnarray*}
Here $\epsilon_{ijk} := (-1)^{|E_{ik}||E_{kj}|}$. The antipode $\Sm: U_q(\Gaff) \longrightarrow U_q(\Gaff)$ is determined by
\begin{eqnarray*}  
&&(\Sm \otimes \mathrm{Id})(S(z)) = S(z)^{-1}, \quad (\Sm \otimes \mathrm{Id})(T(z)) = T(z)^{-1}.    
\end{eqnarray*}
$S(z)^{-1}$ and $T(z)^{-1}$ are well-defined owing to Definition \ref{def: quantum affine superalgebras} (R3). Notice that $Y_q(\Glie)$ and $U_q(\Glie)$ are sub-Hopf-superalgebras of $U_q(\Gaff)$.

We shall need $U_{q^{-1}}(\Gaff)$, whose RTT generators are denoted by $\overline{s}_{ij}^{(n)}, \overline{t}_{ij}^{(n)}$. 

Recall the following are isomorphisms of Hopf superalgebras ($a \in \BC^{\times}$):
\begin{eqnarray}    
& \Phi_a: U_q(\Gaff) \longrightarrow U_q(\Gaff),\quad  & s_{ij}^{(n)} \mapsto a^n s_{ij}^{(n)},\ t_{ij}^{(n)} \mapsto a^{-n} t_{ij}^{(n)},   \label{iso:Z-grading} \\
& \Psi: U_q(\Gaff) \longrightarrow U_q(\Gaff)^{\mathrm{cop}},\quad & s_{ij}^{(n)} \mapsto \varepsilon_{ji}t_{ji}^{(n)},\quad t_{ij}^{(n)} \mapsto \varepsilon_{ji}s_{ji}^{(n)},   \label{iso:transposition}   \\
& h: U_{q^{-1}}(\Gaff) \longrightarrow U_q(\Gaff)^{\mathrm{cop}},\quad &\overline{S}(z) \mapsto S(z)^{-1}, \quad \overline{T}(z) \mapsto T(z)^{-1}. \label{iso:involution}
\end{eqnarray}
Here $\varepsilon_{ij} := (-1)^{|\epsilon_i|+|\epsilon_i||\epsilon_j|}$ and $A^{\mathrm{cop}}$ of a Hopf superalgebra $A$ takes the same underlying superalgebra but the twisted coproduct $\Delta^{\mathrm{cop}} := c_{A,A} \Delta$, with $c_{A,A}: x \otimes y \mapsto (-1)^{|x||y|}y \otimes x$ the graded permutation, and antipode $\Sm^{-1}$. There are superalgebra morphisms for $p(z) \in \BC[[z]]^{\times}, p_1(z) \in \BC[[z^{-1}]]^{\times}$ with $p(0)p_1(\infty) = 1$:
\begin{eqnarray}
&& \ev_a^+: U_q(\Gaff) \longrightarrow U_q(\Glie), \ s_{ij}(z) \mapsto \frac{s_{ij}^{(0)} - za t_{ij}^{(0)}}{1-za},\ t_{ij}(z) \mapsto \frac{t_{ij}^{(0)} - z^{-1}a^{-1}s_{ij}^{(0)}}{1-z^{-1}a^{-1}},  \label{homo: evaluation}  \\
&& \phi_{[p,p_1]}: U_q(\Gaff) \longrightarrow U_q(\Gaff),\ s_{ij}(z) \mapsto p(z)s_{ij}(z),\ t_{ij}(z) \mapsto p_1(z) t_{ij}(z).  \label{homo:power series}
\end{eqnarray}
$\Phi_a, h, \ev_a^+, \phi_{[p,p_1]}$ restrict to $Y_q(\Glie)$ or $Y_q(\Glie')$, denoted by $\Phi_a,h,\ev_a^+, \phi_p$. Let 
$ \overline{\ev}_a^+: U_{q^{-1}}(\Gaff) \longrightarrow U_{q^{-1}}(\Glie)$ be the corresponding morphisms when replacing $q$ by $q^{-1}$. This gives rise to (notice that $h(U_{q^{-1}}(\Glie)) = U_q(\Glie)$):
\begin{eqnarray} 
&&\ev_a^-: U_{q}(\Gaff) \longrightarrow U_q(\Glie),\quad \ev_a^- = h \circ \overline{\ev}^+_a \circ h^{-1}. \label{homo:evaluation dec}
\end{eqnarray}

$U_q(\Gaff)$ is $\BQ$-graded: $x \in U_q(\Gaff)$ is of weight $\lambda \in \BQ$ if $s_{ii}^{(0)} x  = q^{(\lambda,\epsilon_i)} x s_{ii}^{(0)}$ for all $i \in I$. For example $s_{ij}^{(n)}$ and $t_{ij}^{(n)}$ are of weight $\epsilon_i-\epsilon_j$ \cite[(3.14)]{Z3}. Let $U_q(\Gaff)_{\lambda}$ be the weight space of weight $\lambda$. The $\BQ$-grading restricts to $Y_q(\Glie)$ and $U_q(\Glie)$.

We recall the {\it Drinfeld second realization} of $U_q(\Gaff)$ from \cite[Section 3.1.4]{Z3}. Write
\begin{equation*} 
\begin{cases}
S(z) = (\sum\limits_{i<j} e_{ij}^+(z) \otimes E_{ij} + 1) (\sum\limits_l K_l^+(z) \otimes E_{ll} )(\sum\limits_{i<j} f_{ji}^+(z) \otimes E_{ji} + 1), \\
T(z) = (\sum\limits_{i<j} e_{ij}^-(z) \otimes E_{ij} + 1) (\sum\limits_l K_l^-(z) \otimes E_{ll} )(\sum\limits_{i<j} f_{ji}^-(z) \otimes E_{ji} + 1),
\end{cases}  
\end{equation*}
as invertible power series in $z^{\pm 1}$ over $U_q(\Gaff)\otimes \End (\BV)$; the subscripts $i,j,l \in I$. Notice that $K_{\kappa}^+(z) = s_{\kappa\kappa}(z)$. For $i \in I_0,\ j \in I$ let us define $\tau_i, \theta_j$:
\begin{align}
&\tau_i := q^{M-N+1-i} \quad \mathrm{for}\ 1\leq i \leq M, \quad \tau_{M+l} := q^{l+1-N} \quad \mathrm{for}\ 1\leq l < N, \label{equ: spectral shifts} \\
&\theta_j := q^{2(M-N+1-j)} \quad \mathrm{for}\ 1\leq j \leq M,\quad \theta_{M+l} := q^{2(l-N)}\quad \mathrm{for}\ 1\leq l \leq N. \label{equ: theta}
\end{align} 
 The Drinfeld loop generators are defined by generating series: let $i \in I_0$,
\begin{align*}
x_i^+(z) &= \sum_{n\in \BZ} x_{i,n}^+ z^n := \frac{e_{i,i+1}^+(z\tau_i)-e_{i,i+1}^-(z\tau_i)}{q_i-q_i^{-1}} \in U_q(\Gaff)[[z,z^{-1}]], \\
x_i^-(z) &= \sum_{n\in \BZ} x_{i,n}^- z^n := \frac{f_{i+1,i}^-(z\tau_i)-f_{i+1,i}^+(z\tau_i)}{q_i^{-1}-q_i} \in U_q(\Gaff)[[z,z^{-1}]], \\
\phi_i^{\pm}(z) &= \sum_{n\geq 0} \phi_{i,\pm n}^{\pm} z^{\pm n} := K_i^{\pm}(z\tau_i)K_{i+1}^{\pm}(z\tau_i)^{-1} \in U_q(\Gaff)[[z^{\pm 1}]].
\end{align*}
From Gauss decomposition we have $K_l^+(z), \phi_i^+(z) \in Y_q(\Glie)[[z]]$ for $l \in I$ and $i \in I_0$.
\begin{rem} \label{rem: two Gauss decompositions}
In \cite[Section 3.1.4]{Z3} a different Gauss decomposition of $S(z),T(z)$ was considered ($f$ always ahead of $e$). If $\overline{X_i}^{\pm}(z), \overline{K}_l^{\pm}(z)$ with $i \in I_0,\ l \in I$ denote the Drinfeld generating series of $U_{q^{-1}}(\Gaff)$ in {\it loc. cit.}, then 
$$ h(\overline{K}_l^{\pm}(z)) = K_l^{\pm}(z)^{-1},\quad h(\overline{X}_i^{\pm}(z)) = \pm (q_i^{-1}-q_i) x_{i}^{\pm}(z\tau_i^{-1}). $$
Let us rewrite \cite[Theorem 3.5]{Z3} in terms of the $x_i^{\pm}(z),\phi_i^{\pm}(z), K_l^{\pm}(z)$. First, the coefficients of these series generate the whole algebra $U_q(\Gaff)$. Second, for $i,j \in I_0,\ l,l' \in I$ and $\eta,\eta' \in \{\pm\}$ we have: (recall $q_{ij} = q^{(\alpha_i,\alpha_j)}$)
\begin{gather*}
K_l^{\eta}(z) K_{l'}^{\eta'}(w) = K_{l'}^{\eta'}(w) K_l^{\eta}(z),\quad K_l^+(0) K_l^-(\infty) = 1, \\
K_{M+N}^{\eta}(z) x_i^{\pm}(w) = \left(\frac{zq^{-1}-wq}{z-w}\right)^{\pm \delta_{i+1,M+N}} x_i^{\pm}(w) K_{M+N}^{\eta}(z), \\
\phi_i^{\eta}(z) x_j^{\pm}(w) = \frac{z-wq_{ij}^{\pm 1}}{zq_{ij}^{\pm 1}-w} x_j^{\pm}(w) \phi_i^{\eta}(z), \\
[x_i^+(z),x_j^-(w)] =\delta_{ij} \frac{\phi_i^+(z) -  \phi_i^-(w)}{q_i-q_i^{-1}} \delta(\frac{z}{w}), \\
(zq_{ij}^{\pm 1}-w) x_i^{\pm}(z)x_j^{\pm}(w) = (z-wq_{ij}^{\pm 1}) x_j^{\pm}(w)x_i^{\pm}(z)\quad \mathrm{if}\ (i,j) \neq (M,M), \\
[x_i^{\pm}(z_1),[x_i^{\pm}(z_2), x_j^{\pm}(w)]_q]_{q^{-1}} + \{z_1 \leftrightarrow z_2  \} = 0 \quad \mathrm{if}\ (i \neq M,\ |j-i|=1), \\
x_M^{\pm}(z) x_M^{\pm}(w)= - x_M^{\pm}(w)x_M^{\pm}(z), \quad
x_i^{\pm}(z) x_j^{\pm}(w) = x_j^{\pm}(w)x_i^{\pm}(z) \quad \mathrm{if}\ |i-j| > 1,
\end{gather*}
together with the degree 4 oscillator relation when $M,N > 1$:
\begin{gather*}
[[[x_{M-1}^{\pm}(u),x_M^{\pm}(z_1)]_q, x_{M+1}^{\pm}(v)]_{q^{-1}}, x_M^{\pm}(z_2)] + \{z_1 \leftrightarrow z_2\} = 0.
\end{gather*}
Here $[x,y]_a := xy - a (-1)^{|x||y|} y x$ for $x,y \in U_q(\Gaff)$ and $a \in \BC$. These relations are coherent with the Drinfeld second realization of quantum affine algebras (e.g. \cite[Section 3.2]{H2}) and superalgebras \cite[Theorem 8.5.1]{Y}. For $i \in I_0\setminus \{M\}$, the subalgebra of $U_q(\Gaff)$ generated by $(x_{i,n}^{\pm},\phi_{i,n}^{\pm})_{n \in \BZ}$ is a quotient algebra of $U_{q_i}(\widehat{\mathfrak{sl}_2})$.
\end{rem}

Let $\BQ^+ := \oplus_{i\in I_0} \BZ_{\geq 0} \alpha_i \subset \BP$ and $\BQ^- := - \BQ^+$. By \cite[Proposition 3.6]{Z3}:
\begin{align}
 \Delta (K_i^{\pm}(z)) \in K_i^{\pm}(z) \otimes K_i^{\pm}(z) + & \sum_{0\neq \alpha \in \BQ^+} U_q(\Gaff)_{-\alpha} \otimes U_q(\Gaff)_{\alpha} [[z^{\pm 1}]], \label{coproduct: affine Cartan}  \\
  \Delta (x_i^+(z)) \in x_i^+(z) \otimes 1 + & \sum_{0\neq \alpha \in \BQ^+} U_q(\Gaff)_{\alpha_i-\alpha} \otimes U_q(\Gaff)_{\alpha}[[z,z^{-1}]], \label{coproduct: positive}  \\
  \Delta (x_i^-(z)) \in 1 \otimes x_i^-(z) + &\sum_{0\neq \alpha \in \BQ^+} U_q(\Gaff)_{-\alpha} \otimes U_q(\Gaff)_{\alpha-\alpha_i}[[z,z^{-1}]].
\end{align}
The coproduct shares the same triangular property as \cite[Lemma 1]{FR}. 

\subsection{Category $\BGG$} \label{subsec: O}
 We first recall the notion of weights from \cite[Section 6]{Z5}. Define 
$$\CP := (\BC^{\times})^I \times \super,\quad \lCP := (\BC[[z]]^{\times})^I \times \super.  $$
The multiplicative group structure on $\BC^{\times}, \BC[[z]]^{\times}$ and the {\it additive} group structure on the ring $\super$ make $\CP, \lCP$ into multiplicative abelian groups. $\CP$ is naturally a subgroup of $\lCP$, and $\BC[[z]]^{\times} \longrightarrow \BC^{\times}, f(z) \mapsto f(0)$ induces a projection $\varpi: \lCP \longrightarrow \CP$. There is an injective homomorphism of abelian groups (see also \cite[Section 3.1]{Jimbo2})
\begin{equation}  \label{equ: embed weight lattice}
q^?: \BP \longrightarrow \CP,\quad \lambda  \mapsto q^{\lambda} := ((q^{(\epsilon_i,\lambda)})_{i\in I}; |\lambda|).
\end{equation}
Elements of $\lCP$ will usually be denoted by $\Bf,\Bg,\cdots$, or $\Bf(z), \Bg(z), \cdots$ when their dependence on $z$ is needed. For instance, if $\Bf = ((f_i(z))_{i\in I};s) \in \lCP$, then for $a \in \BC^{\times}$ we have $\Bf(za) = ((f_i(za))_{i\in I};s) \in \lCP$. We view $h(z) \in \BC[[z]]^{\times}$ as the element $(h(z),\cdots,h(z);\even) \in \lCP$, which makes $\BC[[z]]^{\times}$ a subgroup of $\lCP$.

Let $V$ be a $Y_q(\Glie)$-module. For $p = ((p_i)_{i\in I};s) \in \CP$, define 
$$ V_p := \{ v \in V_s \ |\ s_{ii}^{(0)} v = p_i v \ \ \mathrm{for}\ i \in I \}. $$
If $V_p \neq 0$, then $p$ is called a {\it weight} of $V$, and $V_p$ the weight space of weight $p$. Let $\wt(V)$ denote the set of weights of $V$. We have $s_{ij}^{(n)} V_p \subseteq V_{q^{\epsilon_i-\epsilon_j}p}$ for $p \in \wt(V)$. Similarly, for $\Bf = ((f_i(z))_{i\in I};s) \in \lCP$ define
$$ V_{\Bf} := \{ v \in V_s\ |\ \exists d \in \BZ_{>0} \ \mathrm{such\ that}\ \ (K_{i}^+(z)-f_i(z))^d v = 0\ \mathrm{for}\  i \in I \}. $$
If $V_{\Bf} \neq 0$, then $\Bf$ is an $\ell$-weight of $V$, and $V_{\Bf}$ the $\ell$-weight space of $\ell$-weight $\Bf$. Let $\lwt(V)$ be the set of $\ell$-weights of $V$.

{\it One should be aware that in \cite[Section 6]{Z5} the definition of $\ell$-weight spaces involves different Drinfeld generators. Nevertheless making use of Remark \ref{rem: two Gauss decompositions} and the involution $h$, we can translate all the the results concerning $Y_{q^{-1}}(\Glie)$- and $U_{q^{-1}}(\Glie)$-modules in \cite{Z5}, so as to obtain parallel results on $Y_q(\Glie)$- and $U_q(\Gaff)$-modules. }
\begin{example} \label{example one-dim}
To $\Bf = h(z)p \in \lCP$ with $h(z) \in 1 + z\BC[[z]]$ and $p = ((p_i)_{i\in I};s) \in \CP$ is associated a representation of $Y_q(\Glie)$ on the one-dimensional vector superspace $\BC_s := \BC \mathbf{1}$ of parity $s = |\mathbf{1}|$, defined by $ s_{ij}(z) \mathbf{1} = \delta_{ij} h(z) p_i \mathbf{1}$. Let $\BC_{\Bf}$ denote this $Y_q(\Glie)$-module. We have $\{\Bf\} = \lwt(\BC_{\Bf})$ and $\{p\} = \wt(\BC_{\Bf})$.
\end{example}
\begin{defi} \label{def: category BGG}  \cite[Definition 6.3]{Z5}
A $Y_q(\Glie)$-module $V$ is in category $\BGG$ if:
\begin{itemize}
\item[(i)] $V$ has a weight space decomposition $V = \oplus_{p\in \CP} V_p$;
\item[(ii)] $\dim V_p < \infty$ for all $p \in \CP$;
\item[(iii)] there exist $\mu_1,\mu_2,\cdots,\mu_d \in \CP$ such that $\mathrm{wt}(V) \subseteq \cup_{j=1}^d (q^{\BQ^-}\mu_j )$.  
\end{itemize}
\end{defi}

Let $V$ be a $Y_q(\Glie)$-module in category $\BGG$. A non-zero $\omega \in V$ is called a {\it highest $\ell$-weight} vector if it belongs to $V_{\Bf}$ for certain $\Bf = ((f_i(z))_{i\in I};s) \in \lCP$ and it is annihilated by the $s_{ij}(z)$ for $i<j$. Necessarily $K_i^+(z) \omega = f_i(z) \omega$. Call $V$ a highest $\ell$-weight module if it is generated as a $Y_q(\Glie)$-module by a highest $\ell$-weight vector $\omega$, in which case $\omega$ is unique up to scalar multiple and its $\ell$-weight is called the highest $\ell$-weight of $V$. Lowest $\ell$-weight vector/module is defined similarly by replacing the condition $i < j$ with $i > j$.

In Example \ref{example one-dim} the vector $\mathbf{1} \in \BC_{\Bf}$ is both of highest and of lowest $\ell$-weight.

\noindent {\bf Attention!} If $\omega$ is a lowest $\ell$-weight vector of $\ell$-weight $\Bf = ((f_i(z))_{i\in I};s)$, then we have $s_{ii}(z) \omega = f_i(z) \omega$ for $i \in I$; see also \cite[Section 6]{Z5}. This is not necessarily true if ``lowest" is replaced by ``highest".

Let $\BR$ be the subset of $\lCP$ consisting of the $\Bf = ((f_i(z))_{i\in I};s)$ such that $\frac{f_i(z)}{f_{i+1}(z)}$ is the Taylor expansion at $z=0$ of a rational function for $i\in I_0$. 

\begin{lem} \label{lem: simple O} \cite[Lemma 6.8 \& Proposition 6.10]{Z5} Let $\Bf = ((f_i(z))_{i\in I};s) \in \BR$.
\begin{itemize}
\item[(1)]In category $\BGG$ there exists a  unique irreducible highest $\ell$-weight module $L(\Bf)$ of highest $\ell$-weight $\Bf$ up to isomorphism. The $L(\Bg)$ for $\Bg \in \BR$ form the set of irreducible objects (two-by-two non-isomorphic) of category $\BGG$.
\item[(2)] $\dim L(\Bf) = 1$ if and only if $\frac{f_i(z)}{f_{i+1}(z)} \in \BC^{\times}$ for $i \in I_0$, i.e. $\Bf \in \BC[[z]]^{\times} \CP $.
\item[(3)] $\dim L(\Bf) < \infty$ if and only if for $i \in I_0 \setminus \{M\}$ there exist $P_i(z) \in 1+z\BC[z]$ and $c_i \in \BC^{\times}$ such that $\frac{f_i(z)}{f_{i+1}(z)} = c_i\frac{P_i(zq_i^{-1})}{P_i(zq_i)}$.
\item[(4)] $L(\Bf)$ can be extended to a $U_q(\Gaff)$-module if and only if $\frac{f_i(z)}{f_{i+1}(z)}$ is a product of the $c \frac{1-zac^{-2}}{1-za}$ with $a,c \in \BC^{\times}$ for $i \in I_0$. 
\end{itemize}
\end{lem}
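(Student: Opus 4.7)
The plan is to build a highest $\ell$-weight theory for category $\BGG$ by the usual Verma-module construction, then derive the four properties by reducing to the quantum affine $\mathfrak{sl}_2$-theory via the Drinfeld generators of Remark \ref{rem: two Gauss decompositions}. Throughout I would rely on the translation between the two Gauss decompositions provided there, so that the results of \cite{Z5} (originally stated in the $q^{-1}$ convention) can be imported along the involution $h$.

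For part (1), I would first construct a Verma-type module $M(\Bf)$. Let $Y_q(\Glie)^{\geq 0}$ be the subalgebra generated by the $s_{ij}^{(n)}$ with $i \leq j$ together with the inverses $(s_{ii}^{(0)})^{-1}$, and act on a one-dimensional superspace $\BC\omega$ of parity $s$ by $s_{ij}(z)\omega = 0$ for $i<j$ and $K_i^+(z)\omega = f_i(z)\omega$; induce this representation to $Y_q(\Glie)$. The triangular coproduct estimates \eqref{coproduct: affine Cartan}--\eqref{coproduct: positive} combined with a PBW basis give $M(\Bf)$ a $\BQ$-weight decomposition with finite-dimensional weight spaces and $\wt M(\Bf) \subseteq q^{\BQ^-}\varpi(\Bf)$, placing $M(\Bf)$ inside $\BGG$. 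Any proper submodule avoids $\omega$, so the sum of all proper submodules is itself proper and is the unique maximal proper submodule; the quotient $L(\Bf)$ is the required irreducible. For the classification of simples, any simple $V \in \BGG$ has by condition (iii) of Definition \ref{def: category BGG} a weight $p$ maximal in the partial order $q^{\BQ^-}$; the commuting coefficients of the $K_i^+(z)$ stabilize $V_p$, so a joint generalized eigenvector is a highest $\ell$-weight vector and gives $V \cong L(\Bg)$; the fact that $\Bg \in \BR$ follows from the compatibility of the $\ell$-weights with the $\BQ$-grading.

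Part (2) is routine: the ``if'' direction is Example \ref{example one-dim}, which realizes the one-dimensional module $\BC_\Bf$ as $L(\Bf)$; conversely, on a one-dimensional module each $K_i^+(z)$ acts by a scalar series, so every ratio $f_i/f_{i+1}$ must be constant in $\BC^{\times}$. For part (3), the last sentence of Remark \ref{rem: two Gauss decompositions} identifies the subalgebra generated by $(x_{i,n}^{\pm},\phi_{i,n}^{\pm})_{n\in\BZ}$ at any non-isotropic node $i \in I_0 \setminus \{M\}$ as a quotient of $U_{q_i}(\widehat{\mathfrak{sl}_2})$; applying the Chari--Pressley finite-dimensionality criterion to the corresponding ``string'' of $L(\Bf)$ yields exactly $f_i/f_{i+1} = c_i P_i(zq_i^{-1})/P_i(zq_i)$ for a polynomial $P_i$ and constant $c_i$. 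Sufficiency is obtained by realizing $L(\Bf)$ as a subquotient of a tensor product of evaluation modules built via \eqref{homo: evaluation} and \eqref{homo:evaluation dec}, together with Kirillov--Reshetikhin modules studied later in the paper.

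For part (4), a $U_q(\Gaff)$-extension of $L(\Bf)$ requires the highest $\ell$-weight vector to be a simultaneous eigenvector of both $\phi_i^+(z)$ and $\phi_i^-(z)$; by the Drinfeld commutation relations of Remark \ref{rem: two Gauss decompositions}, the two eigenvalues must be the Taylor expansions at $0$ and at $\infty$ of a single rational function of $z$ with a nonzero scalar at $z=0$. Factoring this rational function into elementary linear pieces $c(1-zac^{-2})/(1-za)$ matches the highest $\ell$-weights of evaluation modules, and assembling such factors via tensor products inside $\BGG_{\Glie}$ produces the desired $U_q(\Gaff)$-structure. The main obstacle is the isotropic node $M$ in (3): at this odd root the Chari--Pressley reduction is unavailable since $(x_M^\pm)^2 = 0$ imposes no Drinfeld polynomial on $f_M/f_{M+1}$, so one must replace it by the degree-$4$ oscillator Serre relation of Remark \ref{rem: two Gauss decompositions} and a direct analysis of the subspace generated by $x_M^-(z)$, which is the technical heart already handled in \cite{Z5}. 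Importing this cleanly via $h$ and the alternative Gauss decomposition is the subtlest bookkeeping step.
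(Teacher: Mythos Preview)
The paper does not prove this lemma at all: it is stated with a citation to \cite[Lemma 6.8 \& Proposition 6.10]{Z5} and then used freely. So there is no ``paper's proof'' to compare your sketch against; you are effectively reconstructing what \cite{Z5} does, and your repeated deferrals to that reference are appropriate. The broad architecture you describe (Verma-type induction for existence, maximal-weight argument for classification, reduction to $U_{q_i}(\widehat{\mathfrak{sl}_2})$ at even nodes for the Drinfeld-polynomial criterion, and matching of $\phi_i^\pm$ expansions for the $U_q(\Gaff)$-extension) is the standard one and is correct in outline.

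That said, two steps in your sketch are genuine gaps rather than abbreviations. In part~(1), the assertion that ``$\Bg \in \BR$ follows from the compatibility of the $\ell$-weights with the $\BQ$-grading'' is not right: rationality of $f_i(z)/f_{i+1}(z)$ does not come from grading. It comes from the Drinfeld relation $\phi_i^+(z) x_i^-(w) = \frac{z - w q_{ii}^{-1}}{z q_{ii}^{-1} - w} x_i^-(w)\phi_i^+(z)$ applied to the highest $\ell$-weight vector, together with the \emph{finite-dimensionality} of the weight space $L(\Bg)_{\varpi(\Bg) q^{-\alpha_i}}$: this forces the $\phi_i^+$-eigenvalue to satisfy a linear recurrence in its coefficients, hence to be the expansion of a rational function. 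In part~(2), your converse argument is circular: you write ``each $K_i^+(z)$ acts by a scalar series, so every ratio $f_i/f_{i+1}$ must be constant'', but the second clause is the claim, not a consequence of the first. What is actually needed is that on a one-dimensional module the off-diagonal $s_{ij}(z)$ vanish, and then the RTT relations among the diagonal $s_{ii}(z)$ force $f_i(z)/f_j(z)$ to be independent of $z$; equivalently, the counit $\varepsilon$ twisted by $\phi_{[p,p_1]}$ in \eqref{homo:power series} exhausts all characters of $Y_q(\Glie)$. Your remarks on the odd node $M$ in~(3) are well placed: the absence of any polynomial condition at $i=M$ is exactly the super feature, and verifying that finite-dimensionality survives an arbitrary rational $f_M/f_{M+1}$ is indeed the part that genuinely requires the analysis in \cite{Z5}.
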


Based on (4), let $\BR_U$ be the subset of $\BR$ consisting of $\Bf = ((f_i(z))_{i\in I}; s)$ such that for $i\in I$, the rational function $f_i(z)$ is a product of the $c\frac{1-zac^{-2}}{1-za}$ with $a,c \in \BC^{\times}$. For $\Bf \in \BR_U$, the $Y_q(\Glie)$-module $L(\Bf)$ is extended {\it uniquely} to a $U_q(\Gaff)$-module by
$$ K_i^+(z) \omega = f_i(z) \omega = K_i^-(z) \omega \quad \mathrm{for}\ i \in I. $$
Here $\omega$ is a highest $\ell$-weight vector, and in the second identity one views $f_i(z) \in \BC[[z^{-1}]]$ by taking the its Taylor expansion of at $z = \infty$. We continue to let $L(\Bf)$ denote the irreducible $U_q(\Gaff)$-module thus obtained for $\Bf \in \BR_U$.

\begin{example}    \label{ss: l-weights}
For $i \in I_0$ and $a \in \BC^{\times}$  define the {\it prefundamental $\ell$-weight} $\Psi_{i,a} \in \BR$, the {\it fundamental weight} $\varpi_i \in \BP$, and $[a]_i \in \BR$ by:
\begin{gather*}  
\begin{tabular}{|c|c|c|} 
  \hline
  &$i \leq M$ & $i > M$  \\
  \hline
 $\Psi_{i,a} $ & $(\underbrace{h(z),\cdots, h(z)}_{i}, \underbrace{1,\cdots, 1}_{\kappa-i};\even )$ & $(\underbrace{1,\cdots,1}_i, \underbrace{h(z)^{-1}, \cdots, h(z)^{-1}}_{\kappa-i}; \even)$ \\
  \hline
  $[a]_i$ & $(\underbrace{a,\cdots, a}_{i}, \underbrace{1,\cdots, 1}_{\kappa-i};\even )$ & $(\underbrace{1,\cdots,1}_i, \underbrace{a^{-1}, \cdots, a^{-1}}_{\kappa-i}; \even)$ \\
  \hline
  $\varpi_i$ &$\epsilon_1+\epsilon_2+\cdots + \epsilon_i$ &  $-\epsilon_{i+1} - \epsilon_{i+2} - \cdots - \epsilon_{\kappa}$ \\
  \hline
\end{tabular} 
\end{gather*}
where $h(z) = 1-za\tau_i^{-1}$. For $i,j \in I_0$ let us write $i \sim j$ if $|i-j| = 1$. Define
$$ a_{ij} := a^{(\alpha_i,\alpha_j)},\quad \hat{q}_i = q_i \quad \mathrm{if}\ i \neq M,\quad  \hat{q}_M = q^{-1}. $$
Let us introduce the following elements of $\BR$ for $c \in \BC^{\times}$ and $m \in \BZ_{>0}$:
\begin{gather*}
\Bn_{i,a}^+ := \frac{\Psi_{i,aq_i^{-2}}}{\Psi_{i,a}}  \prod_{j\in I_0: j\sim i} \Psi_{j,aq_{ij}^{-1}},\quad \Bn_{i,a}^- := \frac{\Psi_{i,a}}{ \Psi_{i,a\hat{q}_i^2}} \prod_{j\in I_0: j \sim i} \Psi_{j,aq_{ij}}^{-1}, \\
\aBw_{c,a}^{(i)} := [c]_i \frac{\Psi_{i,ac^{-2}}}{\Psi_{i,a}},\quad \varpi_{m,a}^{(i)} := q^{m\varpi_i} \frac{\Psi_{i,aq_i^{1-2m}}}{\Psi_{i,aq_i}},\quad Y_{i,a} := q^{\varpi_i} \frac{\Psi_{i,aq_i^{-1}}}{\Psi_{i,aq_i}}, \\
\Bn_{c,a}^{(i)} :=   \aBw_{\hat{q}_i,a\hat{q}_i^2}^{(i)} \prod_{j\in I_0: j \sim i} \aBw_{c_{ij}^{-1},aq_{ij}}^{(j)},\quad \Bm_{c,a}^{(i)} := \aBw_{q_i,a}^{(i)} \prod_{j\in I_0: j \sim i}\aBw_{c_{ij}^{-1},aq_{ij}^{-1}c_{ij}^{-2}}^{(j)}, \\
A_{i,a} := (\underbrace{1,\cdots,1}_{i-1},  q_i \frac{1-za\tau_iq^{-1}\theta_i^{-1} q_i^{-1}}{1-za\tau_iq^{-1}\theta_i^{-1} q_i}, q_{i+1}^{-1} \frac{1-za\tau_iq^{-1}\theta_i^{-1} q_iq_{i+1}^2}{1-za\tau_iq^{-1}\theta_i^{-1} q_i}  \underbrace{1,\cdots,1}_{\kappa - i-1} ;|\alpha_i|). 
\end{gather*}
The irreducible $Y_q(\Glie)$-modules $L_{i,a}^{\pm} := L(\Psi_{i,a}^{\pm 1})$ are called positive/negative {\it prefundamental modules}. If $\omega$ is a highest $\ell$-weight vector of $L_{i,a}^{+}$, then 
$$\phi_j^+(z) \omega = \omega \quad \mathrm{for}\ j \neq i,\quad \phi_i^+(z) \omega = (1-za)\omega. $$
So $\Psi_{i,a}$  is a super analog of \cite[(3.16)]{HJ}. Define the irreducible $Y_q(\Glie)$-modules:
$$N_{i,a}^{\pm} := L(\Bn_{i,a}^{\pm}),\quad M_{c,a}^{(i)} := L(\Bm_{c,a}^{(i)}),\quad W_{m,a}^{(i)} := L(\varpi_{m,a}^{(i)}). $$
Call $W_{m,a}^{(i)}$ a {\it Kirillov--Reshetikhin module} (KR module). By Lemma \ref{lem: simple O}, the $M, W$ are $U_q(\Gaff)$-modules with $W$ finite-dimensional. (In Sections \ref{sec: asym}--\ref{sec: proof T}  $N_{m,a}^{(i)}$ will denote the irreducible module $L(\Bm_{q^m,a}^{(i)})$ for $m \in \BZ_{>0}$, so here we do not use $N_{c,a}^{(i)}$.) 
\end{example}

\begin{rem}
Later in Sections \ref{sec: cyclicity}--\ref{sec: asym} we work with $U_q(\Gaff)$-modules in category $\BGG$. Such a module $V$ is called a highest $\ell$-weight $U_q(\Gaff)$-module in 
\cite[Section 1.2]{Z4}  if there exists a non-zero $\super$-homogeneous vector $\omega$ such that $V = U_q(\Gaff) \omega$ and 
$$ s_{ij}^{(n)} \omega = t_{ij}^{(n)} \omega = 0, \quad s_{ll}^{(n)} \omega \in \BC \omega \ni  t_{ll}^{(n)} \omega \quad \mathrm{for}\ i < j. $$
Indeed $V$ is of highest $\ell$-weight as a $U_q(\Gaff)$-module if and only it is of highest $\ell$-weight as a $Y_q(\Glie)$-module. (The ``if" part comes from weight grading, while the ``only if" part from the Drinfeld relations in Remark \ref{rem: two Gauss decompositions}.) It also follows that $V$ is an irreducible $U_q(\Gaff)$-module if and only if it is an irreducible $Y_q(\Glie)$-module, as in \cite[Proposition 3.5]{HJ}. Therefore when we say $V$ is of highest $\ell$-weight or irreducible, we make no reference to $Y_q(\Glie)$ or $U_q(\Gaff)$.
\end{rem}

As in \cite[Section 3.2]{HL}, let $\lCE$ be the set of formal sums $\sum_{\Bf \in \lCP} c_{\Bf} \Bf$ with integer coefficients $c_{\Bf} \in \BZ$ such that $\oplus_{\Bf\in \lCP} \BC_{\Bf}^{\oplus |c_{\Bf}|}$ is an object of category $\BGG$. It is a ring: addition is the usual one of formal sums; multiplication is induced by that of $\lCP$. (One views $\lCE$ as a completion of the group ring $\BZ[\lCP]$.)

For $V$ an object of category $\BGG$, its weight space decomposition can be refined to an $\ell$-weight decomposition because of condition (ii) in Definition \ref{def: category BGG}.  Following \cite{FR} we define its $q$-character and classical character
\begin{equation}  \label{def: q-char}
 \chi_q(V) = \sum_{\Bf \in \mathrm{wt}_{\ell}(V)} \dim (V_{\Bf}) \Bf ,\quad \chi(V) = \sum_{p \in \mathrm{wt}(V)} \dim (V_p) p \in \lCE. 
\end{equation}
In Example \ref{example one-dim} we have $\chi_q(\BC_{\Bf}) = \Bf$ and $\chi(\BC_{\Bf}) = \varpi(\Bf)$.

We shall need the completed Grothendieck group $K_0(\BGG)$. Its definition is the same as that in \cite[Section 3.2]{HL}: elements are formal sums $\sum_{\Bf \in \BR} c_{\Bf}[L(\Bf)]$ with integer coefficients $c_{\Bf} \in \BZ$ such that $\oplus_{\Bf\in \BR} L(\Bf)^{\oplus |c_{\Bf}|}$ is in category $\BGG$; addition is the usual one of formal sums. For $\Bf \in \BR$ and $V$ in category $\BGG$, the multiplicity of the irreducible module $L(\Bf)$ in $V$ is well-defined due to Definition \ref{def: category BGG}, as in the case of Kac--Moody algebras \cite[Section 9.6]{Kac}; it is denoted by $m_{L(\Bf),V} \in \BZ_{\geq 0}$.  Necessarily $[V] := \sum_{\Bf \in \BR} m_{L(\Bf),V} [L(\Bf)] \in K_0(\BGG)$. In the case $V = L(\Bf)$ the right-hand side is simply $[L(\Bf)]$ because $m_{L(\Bg),L(\Bf)} = \delta_{\Bg\Bf}$ for $\Bg \in \BR$.

 Make $K_0(\BGG)$ into a ring by $[V] [W] := [V \otimes W]$. Equation \eqref{def: q-char} extends uniquely to morphisms of additive groups $\chi_q: K_0(\BGG) \longrightarrow \lCE$ and $\chi: K_0(\BGG) \longrightarrow \lCE$, called $q$-character map and character map respectively. As in \cite[Theorem 3]{FR}, we have
\begin{prop} \label{prop: q-char ring} \cite[Corollary 6.9]{Z5}
The $q$-character map $\chi_q$ is an injective morphism of rings. Consequently the ring $K_0(\BGG)$ is commutative.
\end{prop}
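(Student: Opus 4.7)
The statement splits into three claims: $\chi_q$ is multiplicative, $\chi_q$ is injective, and $K_0(\BGG)$ is commutative. For multiplicativity $\chi_q(V\otimes W) = \chi_q(V)\chi_q(W)$, I would mirror the triangularity argument of \cite[Lemma 1]{FR}. Fix a total weight $r \in \CP$ and decompose $(V\otimes W)_r = \bigoplus_{pp' = r} V_p \otimes W_{p'}$. Condition (iii) of Definition \ref{def: category BGG} applied to both $V$ and $W$ confines each $p$ to a finite set (bounded above by finitely many $\mu_j$ and below by $r/\mu_k'$ in the cone order), so the direct sum is finite and $(V\otimes W)_r$ is finite-dimensional. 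Order the contributing pairs by $(p_1,p_1') \preceq (p_2,p_2')$ iff $p_2/p_1 \in q^{\BQ^+}$, and extend to a total order. The coproduct formula \eqref{coproduct: affine Cartan} shows that off-diagonal parts of $\Delta(K_i^+(z))$ send $V_p \otimes W_{p'}$ into $V_{pq^{-\alpha}} \otimes W_{p'q^\alpha}$ for $\alpha \in \BQ^+\setminus\{0\}$, hence to strictly $\prec$-smaller positions. Therefore each $\Delta(K_i^+(z))$ is block triangular on $(V\otimes W)_r$ with diagonal blocks $K_i^+(z) \otimes K_i^+(z)$ acting on $V_p \otimes W_{p'}$. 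Refining jointly over $i\in I$ and over the $\ell$-weight subspaces of $V_p$ and $W_{p'}$ yields $\dim (V\otimes W)_{\Bf} = \sum_{\Bf'\Bg' = \Bf} \dim V_{\Bf'} \cdot \dim W_{\Bg'}$, which is the multiplicativity identity at the level of $q$-characters.

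For injectivity, I would argue by contradiction: suppose $x = \sum_\Bf c_\Bf [L(\Bf)] \in K_0(\BGG)$ satisfies $\chi_q(x) = 0$ but has some $c_\Bf \neq 0$. Since $\oplus_\Bf L(\Bf)^{\oplus |c_\Bf|}$ lies in $\BGG$, the set $\Sigma := \{\varpi(\Bf) : c_\Bf \neq 0\}$ is contained in a finite union of cones $q^{\BQ^-}\mu_j$ and thus admits a maximal element for the partial order $p \leq p' \Leftrightarrow p'/p \in q^{\BQ^+}$. Pick $\Bf_0$ with $c_{\Bf_0} \neq 0$ and $\varpi(\Bf_0)$ maximal in $\Sigma$. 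Each $L(\Bf)$ is a highest-$\ell$-weight module, so every $\Bg \in \lwt(L(\Bf))$ satisfies $\varpi(\Bg) \leq \varpi(\Bf)$, with equality forcing $\Bg = \Bf$ and $\dim L(\Bf)_\Bf = 1$. If $\Bf_0 \in \lwt(L(\Bf))$ for some $\Bf$ with $c_\Bf \neq 0$, then $\varpi(\Bf_0) \leq \varpi(\Bf)$, and maximality of $\varpi(\Bf_0)$ in $\Sigma$ forces $\varpi(\Bf_0) = \varpi(\Bf)$ and hence $\Bf = \Bf_0$. Thus the coefficient of $\Bf_0$ in $\chi_q(x)$ equals $c_{\Bf_0} \neq 0$, contradicting $\chi_q(x) = 0$.

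Commutativity of $K_0(\BGG)$ is then automatic: $\lCE$ is commutative since $\lCP$ is abelian, and an injective ring morphism into a commutative ring has commutative domain. The main technical obstacle is the multiplicativity step, where one must justify that finite-dimensionality of $(V\otimes W)_r$ and block-triangularity of $\Delta(K_i^+(z))$ really hold in the infinite-dimensional category $\BGG$; the ``cone'' condition (iii) of Definition \ref{def: category BGG} is tailored precisely for this purpose, and it also ensures that the product in $\lCE$ is well-defined on the images of $\chi_q$.
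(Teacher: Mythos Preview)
The paper does not supply its own proof of this proposition: it is quoted from \cite[Corollary 6.9]{Z5}, with the remark ``As in \cite[Theorem 3]{FR}'' indicating that the argument is the standard Frenkel--Reshetikhin one. Your proposal is exactly that standard argument, carried out correctly in the present setting: the triangularity of $\Delta(K_i^+(z))$ from \eqref{coproduct: affine Cartan} gives multiplicativity, a maximal-weight argument gives injectivity (the cone condition (iii) together with the finite-dimensionality condition (ii) of Definition \ref{def: category BGG} guarantees that above any $\varpi(\Bf_0)$ only finitely many relevant $\varpi(\Bf)$ occur, so a maximal one exists), and commutativity is immediate. There is nothing to compare; your write-up matches what the cited references do.
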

The tensor product $L(\Bf) \otimes L(\Bg)$ contains an irreducible sub-quotient $L(\Bf\Bg)$ for $\Bf, \Bg \in \BR$. Let us define the {\it normalized $q$-character} $\nqc(L(\Bf)) := \Bf^{-1}\qc(L(\Bf))$.

For $V,W$ in category $\BGG$, write $V \simeq W$ if there is a one-dimensional module $D$ in category $\BGG$ such that $V \cong W \otimes D$ as $Y_q(\Glie)$-modules.  By Lemma \ref{lem: simple O} (2) and Proposition \ref{prop: q-char ring} we have $L(\Bf) \simeq L(\Bg)$ if and only if $ \Bg^{-1}\Bf \in \BC[[z]]^{\times} \CP$, in which case the normalized $q$-characters of $L(\Bf)$ and $L(\Bg)$ are identical and we write $\Bf \equiv \Bg$.

As an example, for the {\it generalized simple root} $A_{i,a} \in \BR_U$ we have
\begin{equation}  \label{equ: A Psi}
A_{i,a} \equiv \frac{\Psi_{i,aq_i^{-2}}}{\Psi_{i,a\hat{q}_i^2}} \prod_{j\in I_0: j \sim i} \frac{\Psi_{j,aq_{ij}^{-1}}}{\Psi_{j,aq_{ij}}}.
\end{equation}

\subsection{Category $\BGG'$}
As in \cite[Section 1]{Z4}, let $\mathfrak{gl}(N|M) =: \Glie'$ be another Lie superalgebra, which is not to be confused with the derived algebra of $\Glie$. Define the Hopf superalgebras $U_q(\Gafft),Y_q(\Glie'), U_q(\Glie')$ in the same way as for $U_q(\Gaff),Y_q(\Glie),U_q(\Glie)$ in Section \ref{subsec: quantum superalgebra}, except that $M,N$ are interchanged. We start from the same weight/root lattices $\BP,\BQ$ and $\CP,\lCP$ but with different parity map $|?|': \BP \longrightarrow \super$: 
$$ |\epsilon_1|' = |\epsilon_2|' = \cdots = |\epsilon_N|' = \even,\quad |\epsilon_{N+1}|' = |\epsilon_{N+2}|' = \cdots = |\epsilon_{N+M}|' = \odd, $$
bilinear form $(\epsilon_i,\epsilon_j)' = \delta_{ij} (-1)^{|\epsilon_i|'}$, and embedding $q'^{\lambda} := ((q^{(\lambda,\epsilon_i)'})_{i\in I}; |\lambda|')$ of $\BP$ in $\CP$. One defines category $\BGG'$ of $Y_q(\Glie')$-modules as in Section \ref{subsec: O}. Let us summarize the modifications of notations related to $\Glie'$ to be used later on:
\begin{equation}  \label{tab: comparison}
\begin{tabular}{|c|c|c|}
  \hline
 $\Glie,\ U_q(\Glie),\ Y_q(\Glie),\ U_q(\Gaff)$ & $\Glie',\ U_q(\Glie'),\ Y_q(\Glie'),\ U_q(\Gafft)$ & algebras  \\
   \hline 
   $s_{ij}^{(n)},\ t_{ij}^{(n)},\ q_i,\ q_{ij},\ \tau_i,\ \theta_j$ & $s_{ij}'^{(n)},\ t_{ij}'^{(n)},\ q_i',\ q_{ij}',\ \tau_i',\ \theta_j'$ & RTT \\
  \hline
  $x_i^{\pm}(z),\ K_i^{\pm}(z),\ \phi_{i}^{\pm}(z)$ & $x_i'^{\pm}(z),\ K_i'^{\pm}(z),\ \phi_{i}'^{\pm}(z)$ & currents \\
  \hline
  $\BGG,\ L(\Bf),\  L_{i,a}^{\pm},\ N_{i,a}^{\pm},\  W_{m,a}^{(i)}$ & $\BGG',\ L'(\Bf),\ L_{i,a}'^{\pm},\ N_{i,a}'^{\pm},\ W_{m,a}'^{(i)} $ & categories \\
  \hline
\end{tabular}
\end{equation}
In case $M=N$ one can simply remove all the primes in the table.

 For $i,j \in I$, set $\widehat{i} := \kappa+1-i$ and $\varepsilon_{ij}' := (-1)^{|\epsilon_i|'+|\epsilon_i|'|\epsilon_j|'}$. Then
 \begin{equation}
   \SF: U_q(\Gafft) \longrightarrow U_q(\Gaff)^{\mathrm{cop}},\quad s_{ij}'^{(n)} \mapsto \varepsilon_{ji}' s_{\widehat{j}\widehat{i}}^{(n)},\quad t_{ij}'^{(n)} \mapsto \varepsilon_{ji}' t_{\widehat{j}\widehat{i}}^{(n)}. \label{iso:symmetry permut} 
 \end{equation}
defines a Hopf superalgebra isomorphism. Let $\overline{\SF}: U_{q^{-1}}(\Gafft) \longrightarrow U_{q^{-1}}(\Gaff)^{\mathrm{cop}}$ and $h': U_{q^{-1}}(\Gafft) \longrightarrow U_q(\Gafft)^{\mathrm{cop}}$ be analogs of Equations \eqref{iso:symmetry permut} and \eqref{iso:involution}. They induce
\begin{equation}
\SG: U_q(\Gafft) \longrightarrow U_q(\Gaff)^{\mathrm{cop}},\quad \SG := h \circ \overline{\SF} \circ h'^{-1}\label{iso:dual permit} 
\end{equation}
a Hopf superalgebra isomorphism which restricts to $\SG: Y_q(\Glie') \longrightarrow Y_q(\Glie)$.

\begin{lem}\label{lem: duality by permutation}
The pullback by $\SG$ is an anti-equivalence of monoidal categories $\SG^*: \BGG \longrightarrow \BGG'$. If $\Bf = (f_1(z), f_2(z),\cdots, f_{\kappa}(z);s) \in \BR$, then as $Y_q(\Glie')$-modules
$$  \SG^*( L(\Bf)) \cong L'(f_{\kappa}(z), f_{\kappa-1}(z),\cdots, f_1(z);s).  $$
In particular, $\SG^*(L_{i,a}^{\pm}) \simeq L_{M+N-i,aq^{N-M}}'^{\mp}$ for $1\leq i < M+N$.
\end{lem}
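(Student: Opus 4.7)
The plan is to establish the anti-equivalence $\SG^*$, compute its action on irreducible objects, and then specialize to prefundamental modules.

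First, since $\SG: U_q(\Gafft) \to U_q(\Gaff)^{\mathrm{cop}}$ is a Hopf superalgebra isomorphism whose restriction $Y_q(\Glie') \to Y_q(\Glie)^{\mathrm{cop}}$ remains an isomorphism, pullback by $\SG$ already gives an equivalence of module categories over the two $q$-Yangians; the target's opposite coproduct translates into $\SG^*(V \otimes W) \cong \SG^*(W) \otimes \SG^*(V)$, i.e.\ a monoidal anti-equivalence. To verify $\SG^*$ sends $\BGG$ into $\BGG'$, I use the explicit formula $\SF(s_{ij}'^{(n)}) = \varepsilon_{ji}' s_{\widehat{j}\widehat{i}}^{(n)}$ together with the diagonal behaviour of $h,h'$ (which invert the generating series $S(z)$, $S'(z)$) to see that $s_{ii}'^{(0)}$ acts on $\SG^*(V)$ with eigenvalue proportional to the eigenvalue of $s_{\widehat{i}\widehat{i}}^{(0)}$ on $V$. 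This induces a bijection $\CP \to \CP$ reversing the first component's index order, which preserves finite-dimensionality of weight spaces and maps the $\BQ^-$-cone to itself because the involution $i \mapsto \widehat{i}$ permutes simple roots $\alpha_i \leftrightarrow \alpha_{\kappa-i}$.

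Next, I compute $\SG^*(L(\Bf))$. Let $\omega \in L(\Bf)$ be a highest $\ell$-weight vector, so $s_{ij}(z) \omega = 0$ for $i<j$ and $K_l^+(z) \omega = f_l(z) \omega$; from the Gauss expansion $s_{ll}(z) = K_l^+(z) + \sum_{k>l} e_{lk}^+(z) K_k^+(z) f_{kl}^+(z)$ it follows that $s_{ll}(z)\omega = f_l(z)\omega$ for every $l \in I$. I then track the Gauss decomposition of $S'(z)$ and $T'(z)$ through $\SG = h \circ \overline{\SF} \circ h'^{-1}$. Because $\SF$ and $\overline{\SF}$ reverse indices via $i \mapsto \widehat{i}$, the upper triangular part of $S'(z)$ corresponds to the lower triangular part of $S(z)$; the two inversions coming from $h$ and $h'^{-1}$ then express $S'(z)^{\pm 1}$ in terms of $S(z)^{\mp 1}$ reindexed. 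Expanding $S(z)^{-1}$ via its own Gauss factorization shows that its upper triangular part lies in the left ideal generated by $\{s_{ij}(z)\}_{i<j}$, hence annihilates $\omega$; transporting back through $\SG$ this means $s_{ij}'(z)$ acts by zero on $\omega$ in $\SG^*(L(\Bf))$ for $i<j$, exhibiting $\omega$ as a highest $\ell$-weight vector for $Y_q(\Glie')$. The diagonal computation on $K_l'^+(z)$ yields eigenvalue $f_{\widehat{l}}(z)$ up to an overall scalar factor in $\BC[[z]]^{\times}$ independent of $l$, which is absorbed into the $\simeq$ equivalence. Since $\SG^*$ sends irreducibles to irreducibles and $\omega$ generates all of $\SG^*(L(\Bf))$, this identifies the module as $L'(f_{\kappa}, \ldots, f_1; s)$.

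The main technical obstacle lies in the diagonal Cartan computation: one must verify that through the composition of $h$, $\overline{\SF}$, and $h'^{-1}$---chaining inversion, index reversal, and another inversion---the series $K_l'^+(z)$ pulls back cleanly to $K_{\widehat{l}}^+(z)$ up to a single overall scalar factor, rather than acquiring an $l$-dependent rational twist. This requires careful bookkeeping of the parity signs $\varepsilon_{ij}'$, $\overline{\varepsilon}_{ij}'$ and the interplay of the two inversions; $S(z)^{-1}$ has its own Gauss decomposition whose Cartan part differs from that of $S(z)$ precisely by the diagonal $K_l^+(z)^{-1}$, so the computation reduces to showing that the $(K_l^+)^{-1}$ contributions from the two inversions telescope. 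Once the general formula is in hand, the prefundamental case is direct substitution into Example \ref{ss: l-weights}: reversing the tuple defining $\Psi_{i,a}^{\pm 1}$ yields the tuple for $\Psi'^{\mp}_{M+N-i,\,aq^{N-M}}$, where the spectral-parameter shift $a \mapsto aq^{N-M}$ arises from the distinct normalizations $\tau_i$ and $\tau'_{\widehat{i}}$, and any residual diagonal scalar is absorbed into the $\simeq$ relation.
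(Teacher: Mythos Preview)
Your overall strategy---track a highest $\ell$-weight vector through the composition $\SG = h \circ \overline{\SF} \circ h'^{-1}$---is the paper's strategy as well. But there is a genuine error in your execution, and it is precisely the one the paper flags in the \textbf{Attention} paragraph of Section~\ref{subsec: O}.

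You assert that from the Gauss expansion
\[
s_{ll}(z) = K_l^+(z) + \sum_{k>l} e_{lk}^+(z)\,K_k^+(z)\,f_{kl}^+(z)
\]
it follows that $s_{ll}(z)\omega = f_l(z)\omega$ for the highest $\ell$-weight vector $\omega$. This is false in general. The rightmost factor $f_{kl}^+(z)$ has weight $\epsilon_k - \epsilon_l \in \BQ^-$ (it lowers weight), so it does \emph{not} annihilate a highest weight vector; the composite $e_{lk}^+K_k^+f_{kl}^+\omega$ lands back in the one-dimensional top weight space and contributes a nonzero scalar series to the eigenvalue. The Attention remark says exactly this: $s_{ii}(z)\omega = f_i(z)\omega$ holds for \emph{lowest} $\ell$-weight vectors, not highest ones, because the paper's Gauss decomposition is $S = UDL$ with the lower-unitriangular factor on the right.

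The paper circumvents this by passing through $h$ to $U_{q^{-1}}(\Gaff)$, whose Gauss decomposition (from \cite{Z3}, see Remark~\ref{rem: two Gauss decompositions}) has the opposite order $\overline{S} = \overline{L}\,\overline{D}\,\overline{U}$. There the upper factor acts first and kills the highest weight vector, so $\overline{s}_{ll}(z)\,h^*\omega = \overline{K}_l^+(z)\,h^*\omega$ \emph{does} hold. One then computes step by step: $\overline{K}_l^+(z)h^*\omega = f_l(z)^{-1}h^*\omega$; applying $\overline{\SF}$ (which sends $\overline{s}_{ii}' \mapsto \overline{s}_{\widehat{i}\widehat{i}}$ and preserves the highest-weight property) gives $\overline{K}_i'^+(z)\,\overline{\SF}^*h^*\omega = f_{\widehat{i}}(z)^{-1}\,\overline{\SF}^*h^*\omega$; finally $h'^{-1}$ inverts the eigenvalue again to yield $K_i'^+(z)\,\SG^*\omega = f_{\widehat{i}}(z)\,\SG^*\omega$ on the nose. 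There is no residual scalar factor: the two inversions cancel exactly, which is why the statement gives an honest isomorphism $\cong$, not merely $\simeq$. Your hedging (``up to an overall scalar factor\ldots absorbed into $\simeq$'') thus both weakens the conclusion and papers over the place where your computation would actually go wrong.
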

\begin{proof}
Let $V$ be a $Y_q(\Glie)$-module in category $\BGG$. If $p \in \CP$, then $ V_p = (\SG^*V)_{p'}$ where 
$ p' = ((p_{\widehat{i}})_{i\in I}; s)$, and so $V_{q^{n\alpha_i}p} = (\SG^*V)_{q'^{n\alpha_{\kappa-i}}p'}$ for $i \in I_0$ and $n \in \BZ$. This implies that $\SG^*V$ is in category $\BGG'$. The first statement is now clear.

Let $V = L(\Bf)$ and let $\omega \in V$ be a highest $\ell$-weight vector. In $h^* V$ we have 
$$ \overline{K}_l^+(z) h^*\omega = f_l(z)^{-1} h^*\omega,\quad  \overline{s}_{ij}(z) h^* \omega = 0 \quad \mathrm{for}\ i,j,l\in I\ \mathrm{with}\ i < j. $$
From the Gauss decomposition of $h^{-1}(S(z))$ we get $\overline{s}_{ll}(z) h^*\omega = \overline{K}_l^+(z) h^*\omega$. Similar identities hold when replacing $h^*\omega$ by $\overline{\SF}^*h^*\omega$. This implies:
\begin{align*}
\overline{K}_i'^+(z) \overline{\SF}^* h^* \omega &= \overline{s}_{ii}'(z) \overline{\SF}^* h^* \omega = \overline{\SF}^* \left(\overline{s}_{\widehat{i},\widehat{i}}(z) h^* \omega\right) \\
& = \overline{\SF}^* \left(\overline{K}_{\widehat{i}}^+(z) h^* \omega\right) = f_{\widehat{i}}(z)^{-1} \overline{\SF}^* h^* \omega, \\
K_i'^+(z) \SG^* \omega &=  K_i'^+(z)(h'^{-1})^* \overline{\SF}^* h^* \omega  = (h'^{-1})^* \left(\overline{K}_i'^+(z)^{-1}\overline{\SF}^* h^* \omega \right) \\
& = f_{\widehat{i}}(z) (h'^{-1})^* \overline{\SF}^* h^* \omega  = f_{\widehat{i}}(z) \SG^* \omega,
\end{align*}
leading to the second statement; here the $\overline{s}_{ii}'(z),\ \overline{K}_i'^+(z)$ denote the RTT generators and Drinfeld generators of $U_{q^{-1}}(\Gafft)$ arising from \cite{Z3}; see Remark \ref{rem: two Gauss decompositions}. The last statement is a comparison of highest $\ell$-weights based on $\tau_{M+N-i}' = \tau_i q^{N-M}$.
\end{proof}
$\SG^*$ can be viewed as a categorification of the duality function of Grothendieck rings in \cite[Theorem 5.17]{HL}. We shall make extensive use of it: to change the signature of the $L_{i,a}^{\pm}$; to pass from Dynkin nodes $i \leq M$ to $i \geq M$.
\section{Tableau-sum formulas of $q$-characters} \label{sec: q-char}
We compute $\chi_q(L(\Bm))$ for $\Bm \in \BR_U$ coming from Young diagrams.

\begin{defi} \label{def: Young} \cite[Section 4.2]{BKK} 
 $\mathcal{P}$ is the set of $\lambda = \sum_i \lambda_i \epsilon_i \in \BP$ such that:
\begin{itemize}
\item we have $\lambda_1 \geq \lambda_2 \geq  \cdots \geq \lambda_M \geq 0$ and $\lambda_{M+1} \geq \lambda_{M+2} \geq \cdots \geq \lambda_{\kappa} \geq 0$;
\item if $\lambda_{M+j} > 0$ for some $1\leq j \leq N$, then $\lambda_M \geq j$.
\end{itemize}
To $\lambda \in \mathcal{P}$ we attach a subset $Y_+^{\lambda}$ of $\BZ_{>0}^2$ consisting of $(k,l)$ such that: $l \leq \lambda_k$ for $1\leq k \leq M$; if $k > M$ then $l \leq N$ and $k \leq M + \lambda_{M+l}$. Let $\CB_+(\lambda)$ be the set of functions $T: Y_+^{\lambda} \longrightarrow I$ such that: 
\begin{itemize}
\item $T(k,l) \leq T(k',l')$ if $k \leq k',\ l \leq l'$ and $(k,l), (k',l') \in Y_+^{\lambda}$;
\item $T(k,l) < T(k+1,l)$ if $(k,l), (k+1, l) \in Y_+^{\lambda}$ and $T(k,l) \leq M$;
\item $T(k,l) < T(k,l+1)$ if $(k,l), (k,l+1) \in Y_+^{\lambda}$ and $T(k,l) > M$.
\end{itemize}
Let $Y_-^{\lambda} = -Y_+^{\lambda} \subset \BZ_{<0}^2$ and define $\CB_-(\lambda)$ as the set of functions $Y_-^{\lambda} \longrightarrow I$ satisfying the above three conditions with $Y_+^{\lambda}$ replaced by $Y_-^{\lambda}$.
\end{defi}

We view $Y_{+}^{\lambda}, Y_-^{\lambda}$ as Young diagrams at the southeast and northwest positions respectively, so that $(k,l) \in Y_{\pm}^{\lambda}$ correspond to the box at row $\pm k$ and column $\pm l$. For example, take $\Glie = \mathfrak{gl}(2|2)$ and $\lambda = 4\epsilon_1+2\epsilon_2+ 2\epsilon_3+\epsilon_4 \in \mathcal{P}$:
$$ Y_+^{\lambda} = \young(~~~~,~~,~~,~),\quad Y_-^{\lambda} = \young(:::~,::~~,::~~,~~~~) $$

\begin{defi} \label{def: tableau}
Let $i \in I_0,\ j \in I$ and $a \in \BC^{\times}$. Define the $\ell$-weights in $\BR_U$:
\begin{gather*}
\boxed{j}_a := (\underbrace{1,\cdots,1}_{j-1},  q_j \frac{1-za\theta_j^{-1} q_j^{-1}}{1-za\theta_j^{-1} q_j}, \underbrace{1,\cdots,1}_{\kappa - j} ;|\epsilon_j|), 
\end{gather*}
Define the $\boxed{j}_a^*, \boxed{j}_a'$ inductively by $\boxed{1}_a^* := \boxed{1}_{a\theta_1}^{-1},\ \boxed{\kappa}_a' := \boxed{\kappa}_a^{-1}$ and 
$$\boxed{i+1}_a^* := \boxed{i}_a^* A_{i,a\tau_iq^{-1}},\quad  \boxed{i+1}_a' =: \boxed{i}_a' A_{i,a\tau_iq^{-1}}. $$
Call $a$ the {\it spectral parameter} of the boxes $\boxed{j}_a,\ \boxed{j}_a^*,\ \boxed{j}_a'$.
\end{defi}
One checks that $A_{i,a} = \boxed{i}_{a\tau_i q^{-1}}\boxed{i+1}_{a\tau_iq^{-1}}^{-1}$ using $\theta_{i+1} = \theta_i q_i^{-1}q_{i+1}^{-1}$.
\begin{example}
If $\Glie := \mathfrak{gl}(2|3)$, then $\tau_1 = q^{-1}$ and (compare with \cite[Section 5.4.1]{FR})
\begin{align*}
&\boxed{1}_a \xrightarrow{A_{1,aq^2}^{-1}} \boxed{2}_a \xrightarrow{A_{2,aq^3}^{-1}} \boxed{3}_a \xrightarrow{A_{3,aq^2}^{-1}} \boxed{4}_a \xrightarrow{A_{4,aq}^{-1}} \boxed{5}_a, \\
&\boxed{1}_a^*\xrightarrow{A_{1,aq^{-2}}} \boxed{2}_a^* \xrightarrow{A_{2,aq^{-3}}} \boxed{3}_a^* \xrightarrow{A_{3,aq^{-2}}} \boxed{4}_a^* \xrightarrow{A_{4,aq^{-1}}} \boxed{5}_a^*.
\end{align*} 
\end{example}

To $p = ((p_i)_{i\in I};s) \in \CP$ is associated a unique irreducible $U_q(\Glie)$-module $V_q(p)$, which is generated by a vector $v$ of parity $s$ subject to the following relations:
$$ s_{ii}^{(0)} v = p_i v,\quad s_{jk}^{(0)} v = 0 \quad \mathrm{for}\ i,j,k \in I\ \mathrm{with}\ j < k. $$
For $\lambda \in \BP$, set $V_q(\lambda) := V_q(q^{\lambda})$. (It was denoted by $V(\lambda)$ in \cite[Section 3.3]{BKK}.)

For $\lambda \in \mathcal{P}$, the $U_q(\Glie)$-module $V_q(\lambda)$ is finite-dimensional \cite[Section 3.3]{BKK}; its dual space $V_q^*(\lambda) := \mathrm{Hom}_{\BC}(V_q(\lambda),\BC)$ is equipped with a $U_q(\Glie)$-module structure:
\begin{displaymath}
\langle x \varphi, v \rangle := (-1)^{|\varphi||x|} \langle \varphi, \Sm (x) v \rangle \quad \mathrm{for}\ x \in U_q(\Glie),\ \varphi \in V_q^*(\lambda),\ v \in V_q(\lambda).  
\end{displaymath} 
\begin{theorem} \label{thm: q-char MA}
Let $a \in \BC^{\times}$ and $\lambda \in \mathcal{P}$. Let $V_q^{\pm}(\lambda;a),\ V_q^{\pm *}(\lambda;a)$ be the pullbacks of the $U_q(\Glie)$-modules $V_q(\lambda),\ V_q^*(\lambda)$ by $\ev_a^{\pm}$ respectively. Then we have 
\begin{align}
\chi_q\left(V_q^{+}(\lambda;a) \right) &= \sum_{T \in \mathcal{B}_{-}(\lambda)} \prod_{(i,j) \in Y_{-}^{\lambda}} \boxed{T(i,j)}_{aq^{2(j-i)+1}}, \label{for: iMA} \\
\chi_q\left(V_q^{+*}(\lambda;a)\right) &= \sum_{T \in \mathcal{B}_{-}(\lambda)} \prod_{(i,j)\in Y_{-}^{\lambda}} \boxed{T(i,j)}_{aq^{2(i-j)+1}}^*,  \label{for: dual iMA} \\
\chi_q\left(V_q^{-}(\lambda;a) \right) &= \sum_{T \in \mathcal{B}_{+}(\lambda)} \prod_{(i,j) \in Y_{+}^{\lambda}} \boxed{T(i,j)}_{aq^{2(j-i+M-N)+1}}, \label{for: dMA} \\
\chi_q\left(V_q^{-*}(\lambda;a)\right) &= \sum_{T \in \mathcal{B}_{+}(\lambda)} \prod_{(i,j)\in Y_{+}^{\lambda}} \boxed{T(i,j)}_{aq^{2(i-j)+1}}'. \label{for: dual dMA}
\end{align}
In particular, $V_q^{\pm}(\lambda;a)$ and $V_q^{\pm *}(\lambda;a)$ have multiplicity free $q$-characters.
\end{theorem}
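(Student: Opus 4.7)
The plan is to establish (\ref{for: iMA}) directly, and then deduce the other three formulas by combining the dualities of Section \ref{sec: basics}: the antipode-type isomorphism $h$ from (\ref{iso:involution}) converts results for $V_q^+(\lambda;a)$ into results for its dual $V_q^{+*}(\lambda;a)$, while the isomorphism $\SG$ from Lemma \ref{lem: duality by permutation} interchanges $\Glie$ with $\Glie' = \mathfrak{gl}(N|M)$, thereby converting (\ref{for: iMA}) into (\ref{for: dMA}) and (\ref{for: dual iMA}) into (\ref{for: dual dMA}). The shift $2(j-i+M-N)+1$ in the $-$-type formulas, as opposed to $2(j-i)+1$ in the $+$-type formulas, should fall out of the identity $\tau_{\widehat{i}}' = \tau_i q^{N-M}$ already used at the end of Lemma \ref{lem: duality by permutation}. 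The multiplicity-freeness claim will then be a byproduct of the tableau enumeration together with the injectivity of $\qc$ supplied by Proposition \ref{prop: q-char ring}.

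For (\ref{for: iMA}) itself, I would first treat the base case $\lambda = \epsilon_1$, where $V_q(\lambda) = \BV$ and $\CB_-(\lambda)$ has $\kappa$ elements labeling a single box. A direct computation on the standard basis $(v_j)_{j \in I}$, using $\ev_a^+(s_{ii}(z)) = \frac{s_{ii}^{(0)} - za t_{ii}^{(0)}}{1-za}$ and the Gauss decomposition defining $K_i^+(z)$, identifies the $\ell$-weight of $v_j$ in $\ev_a^{+*}\BV$ with the box $\boxed{j}_{aq}$ of Definition \ref{def: tableau} after matching spectral shifts via $\theta_j$. For general $\lambda \in \mathcal{P}$, I would realize $V_q^+(\lambda;a)$ as the irreducible subquotient generated by the highest weight vector inside a tensor product $\bigotimes_{(i,j) \in Y_-^{\lambda}} \ev_{aq^{2(j-i)}}^{+*}\BV$, ordered by a standard traversal of the diagram. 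By multiplicativity of $\chi_q$ and the base case, this tensor product has $q$-character
$$ \prod_{(i,j) \in Y_-^{\lambda}} \Bigl(\sum_{k \in I} \boxed{k}_{aq^{2(j-i)+1}}\Bigr), $$
which expands as a sum over arbitrary functions $T: Y_-^{\lambda} \to I$. The task reduces to extracting exactly those terms indexed by $T \in \CB_-(\lambda)$.

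Extraction proceeds by induction on $|\lambda|$. Using the Berele--Regev--BKK dimension formula $\dim V_q(\lambda) = |\CB_-(\lambda)|$ and a weight basis of $V_q(\lambda)$ indexed by semistandard super-tableaux, one shows that for each $T \in \CB_-(\lambda)$ the pure tensor $v_T := \bigotimes_{(i,j)} v_{T(i,j)}$ projects non-trivially onto the cyclic $U_q(\Glie)$-submodule generated by the canonical highest-weight tableau $v_{T_\lambda}$, which is isomorphic to $V_q(\lambda)$. The column/row strictness conditions defining $\CB_-(\lambda)$ (with $M$ as pivot) ensure these projections are linearly independent, while tensors $v_T$ with $T \notin \CB_-(\lambda)$ lie in submodules of strictly smaller highest weight $\mu \subsetneq \lambda$ whose $q$-characters are known by induction and can be subtracted. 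Passing from classical characters to $q$-characters, an analogous argument on the $\ell$-weight basis, combined with the triangular form (\ref{coproduct: positive}) of the coproduct and the fact that the lowering currents $x_i^-(z)$ from Remark \ref{rem: two Gauss decompositions} act on the highest $\ell$-weight vector to produce precisely the tableau-indexed $\ell$-weights, completes the identification.

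The main obstacle will be this extraction step. Tensor products of evaluation modules over $U_q(\Gaff)$ are generically non-semisimple, and the super setting sharpens this difficulty because of the smaller Weyl group symmetry flagged in the introduction; one cannot project orthogonally. Instead, control must come from a careful bookkeeping of how each $x_i^-(z)$-descendant of the highest $\ell$-weight vector sits inside $V_q^+(\lambda;a)$ versus in the complementary direct summands of lower shape, requiring a Frenkel--Mukhin-style argument adapted to the super context. A subsidiary hurdle is verifying that distinct tableaux in $\CB_-(\lambda)$ yield distinct $\ell$-weights: this follows from the observation that the shift $aq^{2(j-i)+1}$ encodes the box position $(i,j)$ uniquely, so that different labellings $T(i,j)$ produce different spectral data in different positions, and coincidence of products of boxes would force coincidence of the multisets of pairs $(T(i,j),(i,j))$.
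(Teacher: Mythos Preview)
Your approach is genuinely different from the paper's, and the extraction step you flag as ``the main obstacle'' is in fact a real gap.

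The paper does not embed $V_q^+(\lambda;a)$ in a tensor product at all. Instead it uses a quantum Gelfand--Tsetlin argument. The key players are the commuting series $C_i(z) = \prod_{j\geq i} K_j^+(z\theta_j)^{(\epsilon_j,\epsilon_j)}$, whose coefficients are central in the diagram subalgebra $U_q^{\geq i}(\Gaff)$. One decomposes $V_q(\lambda)$ along the chain $U_q^{\geq 1}(\Glie) \supset U_q^{\geq 2}(\Glie) \supset \cdots$; by the super branching rule (Berele--Regev, transported to $U_q$ via BKK semisimplicity and the transpose duality $\SF$ of Lemma~\ref{lem: transpose of Young diagrams}) this produces a basis $(v_T)_{T\in\CB_-(\lambda)}$ in which each $v_T$ sits in an irreducible sub-$U_q^{\geq i}(\Glie)$-module of explicitly known lowest weight $\mu_T^{\geq i}$. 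Schur's Lemma then forces $C_i(z)$ to act on $v_T$ by the scalar \eqref{for: eigen JM}, and a direct computation matches this scalar with the $C_i(z)$-eigenvalue of the claimed product of boxes (Lemma~\ref{lem: JC dual action}). Since the family $(C_i(z))_{i\in I}$ determines the full $\ell$-weight, the formula follows, and multiplicity-freeness is automatic because the basis is already an $\ell$-weight basis. Equations \eqref{for: iMA} and \eqref{for: dual iMA} are proved in parallel by the same mechanism, using a second basis $(w_T)$ of $V_q^*(\lambda)$; the $-$-versions are stated to be analogous.

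Your inductive subtraction does not go through as written. Even though the tensor product $\bigotimes \ev_{aq^{2(j-i)}}^{+*}\BV$ is semisimple as a $U_q(\Glie)$-module, you need each $U_q(\Glie)$-isotypic component $V_q(\mu)^{\oplus c_\mu}$ to be, as a $U_q(\Gaff)$-module, a direct sum of evaluation modules $V_q^+(\mu;b)$ so that the inductive hypothesis supplies its $q$-character. But an isotypic component of multiplicity $c_\mu>1$ in a tensor product of evaluation modules is typically a genuine loop module, not an evaluation module, and its $q$-character is not the tableau sum for $\mu$ at any single spectral parameter. So there is nothing known to subtract. A workable tensor-product route would require an independent argument (e.g.\ fusion via $R$-matrices and Schur--Weyl duality) that the specific ordering and spectral shifts force the highest-weight cyclic submodule to have exactly the claimed $\ell$-weights; you have not supplied this, and the ``Frenkel--Mukhin-style argument'' you invoke does something different (it builds $q$-characters from dominant monomials, not by carving subquotients out of tensor products).

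Finally, your plan to deduce \eqref{for: dual iMA} from \eqref{for: iMA} via $h$ is off: the isomorphism $h$ of \eqref{iso:involution} goes between $U_{q^{-1}}(\Gaff)$ and $U_q(\Gaff)$ and, through \eqref{homo:evaluation dec}, relates $\ev_a^-$ at $q$ to $\overline{\ev}_a^+$ at $q^{-1}$; it does not send $V_q^+(\lambda;a)$ to its linear dual $V_q^{+*}(\lambda;a)$.
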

\begin{rem} \label{rem: BKK}
Applying $\varpi: \lCP \longrightarrow \CP$ to Equation \eqref{for: dMA} recovers the character formula of $V_q(\lambda)$ in \cite[Theorem 5.1]{BKK}. 
\end{rem}
 We shall prove Equations \eqref{for: iMA}--\eqref{for: dual iMA}; the idea is similar to \cite[Lemma 4.7]{FM}. The proof of Equation \eqref{for: dMA}--\eqref{for: dual dMA} is parallel and will be omitted. 

For $i \in I$, let $U_q^{\geq i}(\Gaff)$ (resp. $U_q^{\geq i}(\Glie)$) be the subalgebra of $U_q(\Gaff)$ generated by the $s_{jk}^{(n)}, t_{jk}^{(n)}$ (resp. for $n = 0$) with $j,k \geq i$. Define
\begin{equation} \label{for: Jucys-Murphy}
C_i(z) := \prod_{j\geq i} K_j^+(z \theta_j)^{(\epsilon_j,\epsilon_j)} \in Y_q(\Glie)[[z]]. 
\end{equation}
The coefficients of $C_i(z)$ are central elements of $U_q^{\geq i}(\Gaff)$; see \cite[Proposition 6.1]{Z5}. 
\begin{lem} \label{lem: JC dual action}
Let $i,l \in I$. The spectra of $C_i(z)$ on $\ell$-weight spaces of $\ell$-weights $\boxed{l}_a, \boxed{l}_a^*$ are $( q\frac{1-zaq^{-1}}{1-zaq})^{\delta_{i\leq l}}$ and $(q^{-1} \frac{1-zat_iq}{1-zat_iq^{-1}})^{\delta_{i\leq l}}$ respectively,
where $t_1 = \theta_1$ and $t_i = \tau_{i-1}^2 q^{-2}$ for $i > 1$. Moreover $\boxed{l}_a^* = \frac{(1-zaq^{-3})(1-zaq)}{(1-zaq^{-1})^2} \boxed{l}_a'$.
\end{lem}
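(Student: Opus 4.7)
My plan splits along the three assertions. For the first — the eigenvalue of $C_i(z)$ on $\boxed{l}_a$-spaces — I note that $\boxed{l}_a \in \lCP$ has trivial entry at every position $j \neq l$, so among the factors of $C_i(z) = \prod_{j \geq i} K_j^+(z\theta_j)^{(\epsilon_j,\epsilon_j)}$ only $K_l^+(z\theta_l)^{(\epsilon_l,\epsilon_l)}$ can act nontrivially, and this requires $i \leq l$. A direct substitution of $z\theta_l$ into the position-$l$ entry of $\boxed{l}_a$, combined with $(\epsilon_l,\epsilon_l)^2 = 1$ (so that $q_l^{(\epsilon_l,\epsilon_l)} = q$), yields the asserted eigenvalue $q\frac{1-zaq^{-1}}{1-zaq}$ if $i \leq l$, and $1$ otherwise.

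For the $\boxed{l}_a^*$-case I argue by induction on $l$. The base $l=1$ follows from the previous paragraph applied to $\boxed{1}_a^* = \boxed{1}_{a\theta_1}^{-1}$, invoking the trivial fact that the $C_i(z)$-spectrum of an $\ell$-weight inverts when the $\ell$-weight itself is inverted. The inductive step uses the recursion $\boxed{l+1}_a^* = \boxed{l}_a^* \cdot A_{l, a\tau_l q^{-1}}$ together with the auxiliary computation of $C_i(z)$ on $A_{l,b}$: since $A_{l,b}$ has nontrivial entries only at positions $l$ and $l+1$, only two factors of $C_i(z)$ can contribute; the identity $\theta_{l+1} = \theta_l q_l^{-1}q_{l+1}^{-1}$ makes the product $K_l^+(z\theta_l)^{(\epsilon_l,\epsilon_l)} K_{l+1}^+(z\theta_{l+1})^{(\epsilon_{l+1},\epsilon_{l+1})}$ equal $1$ on $A_{l,b}$. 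Hence $C_i(z)$ acts as $1$ unless $i = l+1$, in which case only the second factor survives and the eigenvalue is $q^{-1}\frac{1-zb\tau_l}{1-zb\tau_lq^{-2}}$. Specializing $b = a\tau_l q^{-1}$ and invoking $t_{l+1} = \tau_l^2 q^{-2}$ produces $q^{-1}\frac{1-zat_{l+1}q}{1-zat_{l+1}q^{-1}}$, which combined with the inductive hypothesis closes the argument.

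The identity $\boxed{l}_a^* = \frac{(1-zaq^{-3})(1-zaq)}{(1-zaq^{-1})^2}\,\boxed{l}_a'$ rests on the observation that both sequences $(\boxed{l}_a^*)_{l \in I}$ and $(\boxed{l}_a')_{l \in I}$ satisfy the common recursion relating level $l$ with level $l+1$ by multiplication with $A_{l, a\tau_l q^{-1}}$, so the ratio $\boxed{l}_a^*(\boxed{l}_a')^{-1}$ is independent of $l$. It suffices to verify the claim at $l = \kappa$, where $\boxed{\kappa}_a' = \boxed{\kappa}_a^{-1}$ by definition. Expanding $\boxed{\kappa}_a^* = \boxed{1}_{a\theta_1}^{-1}\prod_{i=1}^{\kappa-1} A_{i, a\tau_i q^{-1}}$ and substituting $A_{i,b} = \boxed{i}_{b\tau_i q^{-1}}\boxed{i+1}_{b\tau_i q^{-1}}^{-1}$, each component $j \in I$ of the product receives contributions from at most two adjacent factors; a position-by-position computation, using $\theta_{j+1} = \theta_j q_j^{-1}q_{j+1}^{-1}$ (equivalently $\theta_j/\tau_j^2 = q_j q^{-1}$) together with $\tau_{j-1}/\tau_j = q_j$, confirms $\boxed{\kappa}_a^*\boxed{\kappa}_a = h(z) \in \lCP$ for the claimed scalar $h(z)$.

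The main obstacle is this last computation: the three cases $j=1$, $2\leq j \leq \kappa-1$ and $j=\kappa$ must be handled separately, with care at the parity crossover $j = M+1$ where $q_j$ and the defining formula for $\tau_j$ switch sign. One should also verify the $\super$-parities: $\boxed{\kappa}_a^*$ has parity $|\epsilon_1| + \sum_{i=1}^{\kappa-1}|\alpha_i| = |\epsilon_\kappa|$, which agrees with that of $\boxed{\kappa}_a^{-1}$.
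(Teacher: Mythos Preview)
Your proposal is correct. Parts (1) and (2) follow essentially the paper's line: compute the $\boxed{l}$-case directly from the definition, then handle $\boxed{l}^*$ via the recursive product $\boxed{1}_{a\theta_1}^{-1}\prod_{j<l}A_{j,a\tau_jq^{-1}}$ together with the key observation that $A_{j,b}$ contributes to the $C_i$-spectrum only when $i=j+1$.

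For the last identity both you and the paper reduce to $l=\kappa$ using the common recursion, but the paper then takes a shortcut that bypasses your position-by-position case analysis. Namely, the family $(C_i(z))_{i\in I}$ determines the $\ell$-weight: from $C_j/C_{j+1}$ one recovers the $j$-th component $f_j(z\theta_j)^{(\epsilon_j,\epsilon_j)}$. The spectrum of $C_i$ on $\boxed{\kappa}_a' = \boxed{\kappa}_a^{-1}$ is simply the inverse of part~(1), namely $q^{-1}\frac{1-zaq}{1-zaq^{-1}}$ for all $i$; the spectrum on $\boxed{\kappa}_a^*$ is already known from part~(2). Comparing the two recovers the scalar $h(z)$ directly. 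Your direct expansion certainly works and is more self-contained, but it forces you through the separate cases $j=1$, $2\le j\le\kappa-1$, $j=\kappa$ and the parity crossover at $j=M+1$, whereas the paper's route recycles parts~(1)--(2) and avoids that bookkeeping entirely.
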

\begin{proof}
The $\boxed{l}$-case is from Definition \ref{def: tableau}. In particular the $A_{j,b}$ for $j \neq i-1$ do not contribute to the spectra of $C_i(z)$.  The $\boxed{l}^*$-case is now clear from $\boxed{l}_a^* = \boxed{1}_{a\theta_1}^{-1}A_{1,a\tau_1q^{-1}} A_{2,a\tau_2q^{-1}} \cdots A_{l-1,a\tau_{l-1}q^{-1}}$. To compare $\boxed{l}^*$ with $\boxed{l}'$ one may assume $l = \kappa$ by Definition \ref{def: tableau}; the spectrum of $C_i(z)$ associated to the $\ell$-weight $\boxed{\kappa}_a'$ is $q^{-1}\frac{1-zaq}{1-zaq^{-1}}$, leading to the last identity.
\end{proof}

Let $S$ be $V_q^+(\lambda;a)$ or $V_q^{+*}(\lambda;a)$. If $\mu \in \BP$ and $v \in S$ are such that $s_{ii}^{(0)} v = q^{(\mu,\epsilon_i)} v$ for all $i \in I$, then $|v| = |\mu|$. To compute the $q$-character of $S$, it is enough to determine the action of the $C_i(z)$ since it in turn implies the parity.

Let $S_1$ be an irreducible sub-$U_q^{\geq i}(\Glie)$-module of $S$ and $0 \neq v_1 \in S_1, \mu \in \BP$ with
$$ t_{jk}^{(0)} v_1 = 0,\quad s_{ll}^{(0)} v_1= q^{(\mu,\epsilon_l)} v_1 \quad \mathrm{for}\ j,k,l \in I,\ j > k. $$
Call $\mu$ the lowest weight of $S_1$. By Schur Lemma and Gauss decomposition,
\begin{equation} \label{for: eigen JM}
C_i(z) v = \prod_{j\geq i} \left( \frac{q^{(\mu,\epsilon_j)}-za\theta_j q^{-(\mu,\epsilon_j)}}{1-za\theta_j} \right)^{(\epsilon_j,\epsilon_j)} v\quad \mathrm{for}\ v \in S_1.
\end{equation}
The strategy is to find  all such triples $(i, S_1, \mu)$.

 Following Table \eqref{tab: comparison} and Definition \ref{def: Young}, define for $\Glie'$ the similar objects 
 $$\mathcal{P}'\subset \BP,\quad Y_{\pm}'^{\lambda} \subset \BZ^2,\quad \CB_{\pm}'(\lambda),\  V_q'(\lambda),\ V_q'^*(\lambda)$$
  with $(M,N)$ replaced by $(N,M)$. The transpose of Young diagrams induces a bijection $\mathcal{P} \longrightarrow \mathcal{P}', \lambda \mapsto \lambda^{\sharp}$ such that $(k,l) \in Y_+^{\lambda}$ if and only if $(l,k) \in Y_+'^{\lambda^{\sharp}}$.
\begin{lem} \label{lem: transpose of Young diagrams}
Let $\lambda \in \mathcal{P}$.
\begin{itemize}
\item[(1)] As $U_q(\Glie')$-modules $\SF^*\left(V_q(\lambda)\right) \cong V_q'^*(\lambda^{\sharp})$ and $\SF^*\left(V_q^*(\lambda)\right) \cong V_q'(\lambda^{\sharp})$.
\item[(2)] If $T \in \CB_-(\lambda)$, then $T'(k,l) := M+N+1- T(-l,-k)$ defines an element $T' \in \CB_+'(\lambda^{\sharp})$. Moreover $T\mapsto T'$ is a bijection $\CB_-(\lambda) \longrightarrow \CB_+'(\lambda^{\sharp})$.
\end{itemize}
\end{lem}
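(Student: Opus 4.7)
The plan is to establish part (2) by a direct combinatorial verification, then deduce part (1) via irreducibility together with a highest-weight comparison that uses (2) to match extremal tableaux.

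For part (2), the proposed inverse of $T \mapsto T'$ is given by the same formula with $\lambda$ and $\lambda^\sharp$ swapped, so bijectivity is automatic once well-definedness is established. To verify $T' \in \CB_+'(\lambda^\sharp)$, I check the three conditions of Definition \ref{def: Young} adapted to $\Glie' = \mathfrak{gl}(N|M)$, whose cutoff parameter is $M' = N$. Weak monotonicity follows from condition (a) on $T$ upon reversing both coordinates. The strict-in-rows condition for $T'(k,l) \leq N$ translates via $T'(k,l) = \kappa + 1 - T(-l,-k)$ into $T(-l,-k-1) < T(-l,-k)$ whenever $T(-l,-k) > M$; this is condition (c) on $T$ in the case $T(-l,-k-1) > M$, and it follows automatically from $T(-l,-k-1) \leq M < T(-l,-k)$ otherwise. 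The strict-in-columns condition is handled symmetrically.

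For part (1), since $\SF$ is an algebra isomorphism, $\SF^*V_q(\lambda)$ is an irreducible $U_q(\Glie')$-module. The key formula $\SF(s_{ij}'^{(0)}) = \varepsilon_{ji}' s_{\widehat{j}\widehat{i}}^{(0)}$ sends strictly upper-triangular generators of $U_q(\Glie')$ to strictly upper-triangular generators of $U_q(\Glie)$, because $i<j$ implies $\widehat{j}<\widehat{i}$. Hence the highest-weight vector $v$ of $V_q(\lambda)$ remains annihilated by $s_{ij}'^{(0)}$ for $i<j$ in $\SF^*V_q(\lambda)$; using the identity $(-1)^{|\epsilon_{\widehat{i}}|} = -(-1)^{|\epsilon_i|'}$, a direct computation shows that $v$ is a $U_q(\Glie')$-highest-weight vector of weight $\mu' := -\sum_j \lambda_j \epsilon_{\widehat{j}}$.

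It remains to identify $V'^*_q(\lambda^\sharp)$ as the irreducible module of highest weight $\mu'$. Since $V'^*_q(\lambda^\sharp)$ is the contragredient of $V'_q(\lambda^\sharp)$, its highest weight equals $-\nu$ where $\nu$ is the lowest weight of $V'_q(\lambda^\sharp)$, readable from the unique extremal (``all-maximal'') tableau in $\CB_+'(\lambda^\sharp)$ via the BKK character formula (\cite[Theorem 5.1]{BKK}, cf.\ Remark \ref{rem: BKK}). Under the bijection of part (2), this extremal tableau corresponds to the minimal tableau in $\CB_-(\lambda)$ whose content is exactly $\lambda$, and the sign bookkeeping identifies $-\nu$ with $\mu'$. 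The second isomorphism of (1) follows by applying $\SF^*$ to the dual module and using that $\SF$ is a Hopf superalgebra isomorphism, so that $\SF^*$ commutes with contragredient duality up to natural isomorphism. The main obstacle is the careful parity and index-reversal bookkeeping between $\Glie$ and $\Glie'$; once that is aligned, the rest falls into place.
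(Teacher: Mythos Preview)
Your proof is correct and close in spirit to the paper's, but mirror-imaged: the paper proves the \emph{second} isomorphism $\SF^*(V_q^*(\lambda)) \cong V_q'(\lambda^\sharp)$ directly and derives the first from it, whereas you prove the \emph{first} isomorphism directly and derive the second. Concretely, the paper takes a lowest-weight vector $v_1$ of $V_q(\lambda)$, passes to the dual vector $v^* \in V_q^*(\lambda)$ of weight $-\mu$, and checks that $\SF^*(v^*)$ is a highest-weight vector of weight $\lambda^\sharp$; for this it simply quotes the explicit closed formula for the lowest weight $\mu$ from \cite[(4.1)--(4.2)]{BKK} and reads off $\lambda^\sharp$ by inspection. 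You instead compute the highest weight $\mu'$ of $\SF^*V_q(\lambda)$ and identify it with the highest weight of $V_q'^*(\lambda^\sharp)$, using the bijection of part (2) to transport the maximal tableau of $\CB_+'(\lambda^\sharp)$ to the minimal tableau of $\CB_-(\lambda)$, whose content is $\lambda$ tautologically. Your route has the mild advantage of making (2) do real work and avoiding the explicit BKK lowest-weight formula; the paper's route is more self-contained and sidesteps the need to argue that $\SF^*$ commutes with contragredient duality (which requires a word of care since $\SF$ lands in the \emph{co-opposite} Hopf superalgebra, though at the level of characters of irreducibles this is harmless).
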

\begin{proof}
(2) is a lengthy but straightforward check by Definition \ref{def: tableau}. For (1), it suffices to establish the second isomorphism since $\SF$ respects Hopf superalgebra structures. Let $\mu$ be the lowest weight of $V_q(\lambda)$ and define 
\begin{gather*}
 r_i := \sharp \{j \in \BZ_{>0}\ |\ (i,j) \in Y_+^{\lambda} \},\quad c_j := \sharp \{i \in \BZ_{>0}\ |\ (i,j) \in Y_+^{\lambda} \};  \\
 r_i' := \max (r_i - N, 0),\quad c_j'  := \max (c_j - M, 0).
\end{gather*} 
Then from \cite[(4.1)--(4.2)]{BKK} we have
$$\lambda = \sum_{i=1}^M r_i \epsilon_i + \sum_{j=1}^N c_j' \epsilon_{M+j}, \quad \mu = \sum_{i=1}^M r_{M+1-i}' \epsilon_i + \sum_{j=1}^N c_{N+1-j} \epsilon_{M+j}.$$
If $v$ is a lowest weight vector of $V_q(\lambda)$, then $V_q^*(\lambda)$ contains a highest weight vector $v^*$ of weight $-\mu$, and $\SF^*(v^*) \in \SF^*\left(V_q^*(\lambda)\right)$ is a highest weight vector of weight $$c_1\epsilon_1+c_2\epsilon_2 + \cdots + c_N \epsilon_N + r_1'\epsilon_{N+1} + r_2'\epsilon_{N+2} + \cdots + r_M'\epsilon_{M+N}, $$
which is exactly $\lambda^{\sharp}$, leading to the desired isomorphism.
\end{proof}
For $i \in I$ let $U_q^{\leq i}(\Glie') := \SF^{-1}(U_q^{\geq \kappa+1-i}(\Glie))$; it is the subalgebra of $U_q(\Glie')$ generated by the $s_{jk}'^{(0)}, t_{jk}'^{(0)}$ with $j,k \leq i$. 
To decompose $V_q(\lambda)$ (resp. $V_q^*(\lambda))$ with respect to lowest weights along the ascending chain of subalgebras of $U_q(\Glie)$
$$U_q^{\geq \kappa}(\Glie) \subset U_q^{\geq \kappa - 1}(\Glie) \subset \cdots \subset U_q^{\geq 2}(\Glie) \subset U_q^{\geq 1}(\Glie) = U_q(\Glie) $$
is to decompose $V_q'(\lambda^{\sharp})$ with respect to highest (resp. lowest) weights along
$$ U_q^{\leq 1}(\Glie') \subset U_q^{\leq 2}(\Glie') \subset \cdots \subset U_q^{\leq \kappa-1}(\Glie') \subset U_q^{\leq \kappa}(\Glie') = U_q(\Glie'), $$
\begin{rem} \label{rem: GT}
By \cite{BKK}, $V_q'(\lambda^{\sharp})$ is an irreducible submodule of a tensor power of $V_q'(\epsilon_1)$, and  all such tensor powers are semi-simple $U_q(\Glie')$-modules. So the decomposition for $V_q'(\lambda^{\sharp})$ is equivalent to that for the character formula in Remark \ref{rem: BKK}, and then to the branching rule of $\Glie'$-modules in \cite[Section 5]{BR}. We reformulate the latter in terms of $\CB_+'(\lambda^{\sharp})$, equivalently $\CB_-(\lambda)$ by Lemma \ref{lem: transpose of Young diagrams}, as follows. 
\begin{itemize}
\item[(1)] $V_q(\lambda)$ admits a basis $(v_T: T\in \CB_-(\lambda))$ such that $v_T$ is contained in an irreducible sub-$U_q^{\geq i}(\Glie)$-module of lowest weight $\mu_{T}^{\geq i}$ for $i \in I$.
\item[(2)] $V_q^*(\lambda)$ admits a basis $(w_T: T\in \CB_-(\lambda))$ such that $w_T$ is contained in an irreducible sub-$U_q^{\geq i}(\Glie)$-module of lowest weight $-\nu_{T}^{\geq i}$ for $i \in I$.
\end{itemize}
 $\mu_T^{\geq i}$ and $\nu_T^{\geq i}$ are defined as follows. Set $Y_{T}^{\geq i} := \{(k,l) \in Y_-^{\lambda} \ | \ T(k,l) \geq i\}$ and
$$ r_k := \sharp \{l \in \BZ\ |\  (-k,-l) \in Y_{T}^{\geq i}  \},\quad c_l := \sharp\{k \in \BZ\ |\ (-k,-l) \in Y_{T}^{\geq i}\}.  $$
If $i > M$, then $\begin{cases}
\mu_{T}^{\geq i} = c_1 \epsilon_{M+N} + c_2 \epsilon_{M+N-1} + \cdots + c_{M+N+1-i} \epsilon_{i}, \\
\nu_T^{\geq i} = c_1 \epsilon_i +c_2 \epsilon_{i+1} + \cdots + c_{M+N+1-i}\epsilon_{M+N}.
\end{cases}$
If $i \leq M$, then
\begin{align*}
 \mu_{T}^{\geq i} &= c_1 \epsilon_{M+N} + c_2 \epsilon_{M+N-1} + \cdots + c_{N} \epsilon_{M+1} + r_1' \epsilon_M + r_2' \epsilon_{M-1} + \cdots + r_{M+1-i}' \epsilon_i, \\
 \nu_T^{\geq i} &= r_1\epsilon_i + r_2\epsilon_{i+1} + \cdots + r_{M+1-i}\epsilon_M + c_1' \epsilon_{M+1} + c_2' \epsilon_{M+2} + \cdots + c_N' \epsilon_{M+N},
\end{align*}
where $r_k' := \max(r_k-N,0)$ and $c_l' := \max(c_l - M+i-1,0)$.
\end{rem}
\begin{example} \label{example: tableaux permutation}
To illustrate Lemma \ref{lem: transpose of Young diagrams} (2) and Remark \ref{rem: GT}, let $\Glie = \mathfrak{gl}(2|3)$ and $\lambda = 4 \epsilon_1 + 2 \epsilon_2 + \epsilon_3 \in \mathcal{P}$. We represent elements in $\CB_-(\lambda)$ and $\CB_+'(\lambda^{\sharp})$ by Young tableaux of shapes $\lambda, \lambda^{\sharp}$ respectively. Let $T \in \CB_-(\lambda)$ be such that
$$ \CB_-(4 \epsilon_1 + 2 \epsilon_2 + \epsilon_3) \ni T = \young(:::1,::22,1345) \mapsto \young(145,24:,3::,5::) = T' \in \CB_+'(3 \epsilon_1 + 2 \epsilon_2 + \epsilon_3 + \epsilon_4). $$
The Young diagrams $Y_T^{\geq i}$ with descending order on $5 \geq i \geq 1$ become:
$$ \young(~),\quad \young(~~),\quad \young(~~~),\quad \young(:~~,~~~),\quad \young(:::~,::~~,~~~~). $$ 
Correspondingly, the pairs $(\mu_T^{\geq i}, \nu_T^{\geq i})$ from $i = 5$ to $i=1$ are:
\begin{gather*}
(\epsilon_5,\ \epsilon_5),\quad (\epsilon_4+\epsilon_5,\ \epsilon_4+\epsilon_5), \quad (\epsilon_3+\epsilon_4+\epsilon_5,\ \epsilon_3+\epsilon_4+ \epsilon_5), \\
(\epsilon_3+2\epsilon_4+2\epsilon_5,\ 3\epsilon_2+\epsilon_3+\epsilon_4),\quad (\epsilon_2+\epsilon_3+2\epsilon_4+3\epsilon_5,\ 4\epsilon_1+2\epsilon_2+\epsilon_3).
\end{gather*}
\end{example}
\noindent {\bf Proof of Equations \eqref{for: iMA}--\eqref{for: dual iMA}. }
Let us define $g_i(z),g_i^*(z) \in \BC[[z]]^{\times}$ for $i \in I$:
$$ g_i(z) := \prod_{(k,l) \in Y_T^{\geq i}} q\frac{1-zaq^{2(l-k)}}{1-zaq^{2(l-k+1)}},\quad g_i^*(z) := \prod_{(k,l) \in Y_T^{\geq i}} \left(q^{-1} \frac{1-zat_iq^{2(k-l+1)}}{1-zat_iq^{2(k-l)}} \right). $$
By Lemma \ref{lem: JC dual action}, it suffices to prove that: for $i \in I$,
$$ C_i(z) v_T =  g_i(z) v_T \quad \mathrm{in}\ V_q^+(\lambda;a),\quad C_i(z)w_T =  g^*(z)w_T\quad \mathrm{in}\ V_q^{+*}(\lambda;a). $$
This is divided into two cases: $i > M$ or $i \leq M$.

Assume  $i > M$. Then $T(-k,-l) \geq i$ if and only if $1\leq l \leq M+N-i+1$ and $1\leq k \leq c_l$. It follows from Equation \eqref{equ: theta} that 
\begin{align*}
g_i(z) &= \prod_{l=1}^{M+N-i+1} \prod_{k=1}^{c_l} q\frac{1-zaq^{2(k-l)}}{1-zaq^{2(k-l+1)}} = \prod_{l=1}^{M+N-i+1}  \frac{1-zaq^{2(1-l)}}{q^{-c_l}-zaq^{2(1-l)+c_l}}  \\
&= \prod_{j=i}^{M+N} \left(\frac{q^{(\mu_T^{\geq i},\epsilon_j)} - za\theta_j q^{-(\mu_{T}^{\geq i}, \epsilon_j)}}{1-za\theta_j}\right)^{(\epsilon_j,\epsilon_j)}, \\
g_i^*(z) &= \prod_{l=1}^{M+N-i+1} \prod_{k=1}^{c_l} q^{-1}\frac{1-zat_iq^{2(l-k+1)}}{1-zat_iq^{2(l-k)}} = \prod_{l=1}^{M+N-i+1} \frac{1-zat_iq^{2l}}{q^{c_l} -zat_iq^{2l-c_i}} \\
&=\prod_{j=i}^{M+N}\left( \frac{q^{-(\nu_T^{\geq i},\epsilon_i)} - za\theta_j q^{(\mu_T^{\geq i},\epsilon_i)} }{1-za\theta_j} \right)^{(\epsilon_j,\epsilon_j)}.
\end{align*}
Here in the last equation we used $t_i q^{2l} = \tau_{i-1}^2q^{2l-2} = \theta_iq^{2l-2} = \theta_{i+l-1}$.

Assume $i \leq M$. Then $T(-k,-l) \geq i$ if and only if $(1\leq l \leq N,\ 1\leq k \leq c_l)$ or $(1\leq k \leq M+1-i,\ N+1\leq l \leq N+r_k')$. This gives
\begin{align*}
g_i(z) &=\left( \prod_{l=1}^N \prod_{k=1}^{c_l}q\frac{1-zaq^{2(k-l)}}{1-zaq^{2(k-l+1)}}\right) \times \left(\prod_{k=1}^{M+1-i} \prod_{l=1}^{r_k'} q\frac{1-zaq^{2(k-l-N)}}{1-zaq^{2(k-l-N+1)}}  \right) \\
&= \prod_{j= i}^{M+N}\left(\frac{q^{(\mu_T^{\geq i},\epsilon_j)} - za\theta_j q^{-(\mu_{T}^{\geq i}, \epsilon_j)}}{1-za\theta_j}\right)^{(\epsilon_j,\epsilon_j)}.
\end{align*}
Notice that $T(-k,-l) \geq i$ if and only if $(1\leq k \leq M+1-i,\ 1\leq l \leq r_k)$ or $(1\leq l\leq N,\ M-i+2 \leq k \leq M-i+1+c_l')$. This gives
\begin{align*}
g_i^*(z) &= \left( \prod_{k=1}^{M+1-i} \prod_{l=1}^{r_k} q^{-1} \frac{1-zat_iq^{2(l-k+1)}}{1-zat_iq^{2(l-k)}} \right) \left( \prod_{l=1}^N \prod_{k=1}^{c_l'} q^{-1}\frac{1-zat_iq^{2(l-k-M+i)}}{1-zat_iq^{2(l-k-M+i-1)}} \right)  \\
&=\left( \prod_{k=1}^{M+1-i} \frac{q^{-r_k} - zat_iq^{2(1-k+r_k)}}{1-zat_iq^{2(1-k)}} \right) \left(\prod_{l=1}^N \frac{1-zat_iq^{2(l-1-M+i)}}{q^{c_l'}-zat_iq^{2(l-1-M+i-c_l')}} \right) \\
&=\prod_{j=i}^{M+N}\left( \frac{q^{-(\nu_T^{\geq i},\epsilon_i)} - za\theta_j q^{(\mu_T^{\geq i},\epsilon_i)} }{1-za\theta_j} \right)^{(\epsilon_j,\epsilon_j)}.
\end{align*} 
The last identity comes from $t_i q^{2(l-1-M+i)} = \theta_{M+l}$ and $t_i q^{2(1-k)} = \theta_{i+k-1}$.

In both cases, $g_i(z)$ and $g_i^*(z)$ become Equation \eqref{for: eigen JM} with $\mu = \mu_T^{\geq i}$ and $-\nu_T^{\geq i}$ respectively, and this completes the proof of Equations \eqref{for: iMA}--\eqref{for: dual iMA}. \qed


Let $\lCQ^-$ be the submonoid of $\BR$ generated by the $A_{i,a}^{-1}$ with $i \in I_0$ and $a \in \BC^{\times}$.
\begin{cor}  \label{cor: KR evaluation}
Let $i \in I_0,\ a \in \BC^{\times}$ and $m \in \BC^{\times}$. We have 
\begin{align}
W_{m,a}^{(i)} &\cong V_q^+(m\varpi_i; aq^{M-N-i}) \cong V_q^-(m\varpi_i; aq^{N-M+i-2m})\ \mathrm{if}\ i \leq M, \label{equ: KR +} \\
W_{m,a}^{(i)} & \cong V_q^{-*}(\lambda_m^{(i)}; aq^{M+N-2-i})  \simeq V_q^{+*}(\lambda_m^{(i)};aq^{i-M-N+2m-2})\ \mathrm{if}\ i > M. \label{equ: KR -}
\end{align}
Here for $i > M$, the Young diagram of $\lambda_m^{(i)} \in \mathcal{P}$ is a rectangle with $m$ rows and $\kappa-i$ columns. An $\ell$-weight of $W_{m,a}^{(i)}$ different from $\varpi_{m,a}^{(i)}$ must belong to $\varpi_{m,a}^{(i)}A_{i,aq_i}^{-1}\lCQ^-$.
\end{cor}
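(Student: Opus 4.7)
The approach is to apply Theorem \ref{thm: q-char MA} with $\lambda = m\varpi_i$ (when $i \leq M$) or $\lambda = \lambda_m^{(i)}$ (when $i > M$), identify each evaluation module as a highest $\ell$-weight $U_q(\Gaff)$-module, and match its highest $\ell$-weight with $\varpi_{m,a}^{(i)}$ by solving for the spectral parameter.

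First I would note that each evaluation morphism $\ev_a^\pm$ from \eqref{homo: evaluation} and \eqref{homo:evaluation dec} is surjective, so $V_q^\pm(\lambda;a)$ and $V_q^{\pm *}(\lambda;a)$ remain irreducible as $U_q(\Gaff)$-modules. By Lemma \ref{lem: simple O}(1), each is therefore isomorphic to $L(\Bf)$ for $\Bf$ its (unique) highest $\ell$-weight, so the corollary reduces to computing this highest $\ell$-weight. For $V_q^+(m\varpi_i;b)$ with $i \leq M$, the Young diagram $Y_-^{m\varpi_i}$ is a rectangle with $i$ rows and $m$ columns; among elements of $\CB_-(m\varpi_i)$ of weight $m\varpi_i$, the strict-monotonicity conditions on columns force the unique choice $T(-k,-l) = i-k+1$. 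An analogous ``corner tableau'' plays the role of highest weight tableau for each of the other three formulas in Theorem \ref{thm: q-char MA}, on the transposed rectangle $\lambda_m^{(i)}$ when $i > M$.

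Next I would substitute this tableau into the relevant tableau-sum formula and telescope column by column. Explicitly, a column of length $i$ contributes, at position $k'$, a product $\prod_{l=1}^m q\frac{1-zbq^{\,\cdot}\theta_{k'}^{-1}q^{-1}}{1-zbq^{\,\cdot}\theta_{k'}^{-1}q}$ whose consecutive factors have exponents differing by $2$, so it telescopes to a single ratio. Using $\theta_{k'} = q^{2(M-N+1-k')}$ one checks that the surviving exponents are independent of $k'$, so the resulting $\ell$-weight agrees at every non-trivial position with $q^m\frac{1-zbq^{2(i-M+N-m)}}{1-zbq^{2(i-M+N)}}$. Comparing with $\varpi_{m,a}^{(i)}$, whose position-$k'$ entry for $k'\leq i$ equals $q^m\frac{1-zaq^{i-M+N-2m}}{1-zaq^{i-M+N}}$, forces $b = aq^{M-N-i}$, giving the first isomorphism of \eqref{equ: KR +}. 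The same telescoping argument applied to \eqref{for: dMA}, \eqref{for: dual dMA} and \eqref{for: dual iMA} yields the remaining three spectral shifts; the $\simeq$ (rather than $\cong$) in the last half of \eqref{equ: KR -} accommodates a one-dimensional twist by some $\Bh(z) \in \BC[[z]]^\times$ coming from the normalization discrepancy between $V_q^{-*}$ and $V_q^{+*}$. Alternatively, the $i>M$ case can be deduced from the $i<M$ case applied to $\Glie' = \mathfrak{gl}(N|M)$ via the anti-equivalence $\SG^*$ of Lemma \ref{lem: duality by permutation}, which would remove the need for separate verification.

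For the final assertion on $\ell$-weights, I would use the identity $A_{k,c} = \boxed{k}_{c\tau_kq^{-1}} \boxed{k+1}_{c\tau_kq^{-1}}^{-1}$ from Definition \ref{def: tableau}: replacing any entry $k$ in a tableau by $k+1$ (hence by any $k'>k$ iteratively) multiplies the $\ell$-weight by an element of $\lCQ^-$. Any $T' \in \CB_-(m\varpi_i)$ with $T' \neq T$ must promote at least one entry; by the monotonicity conditions on rows, any such deviation is forced to start by promoting the entry $i$ at the corner cell $(-1,-1)$. Computing the spectral parameter $bq$ at this cell with $b = aq^{M-N-i}$ and applying the above identity with $c\tau_iq^{-1} = bq$ yields $c = aq_i$, so this first promotion contributes exactly $A_{i, aq_i}^{-1}$. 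Hence every $\ell$-weight of $W_{m,a}^{(i)}$ other than $\varpi_{m,a}^{(i)}$ lies in $\varpi_{m,a}^{(i)} A_{i,aq_i}^{-1} \lCQ^-$.

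The main obstacle is the bookkeeping in the telescoping step: one must verify in each of the four cases that the dependence on the row/column index cancels and that the resulting shift matches $\varpi_{m,a}^{(i)}$, up to at worst a one-dimensional twist for the $\simeq$ case. A secondary subtlety is justifying that every non-trivial tableau deviation picks up the corner factor $A_{i,aq_i}^{-1}$ rather than some $A_{j,c}^{-1}$ with $j \neq i$; this requires careful inspection of the monotonicity conditions to confirm that the ``first modification'' is localized at the cell $(-1,-1)$ (or its analog in the $i>M$ case).
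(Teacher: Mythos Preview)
Your proposal is correct and follows essentially the same route as the paper: identify the unique ``corner'' tableau $H$ as the highest $\ell$-weight contributor, telescope its $\ell$-weight to match $\varpi_{m,a}^{(i)}$, and then for any $T\neq H$ use the monotonicity conditions to force a deviation at the corner cell, yielding the factor $A_{i,aq_i}^{-1}$ via the identity $\boxed{i}_d\boxed{i+1}_d^{-1}=A_{i,d\tau_i^{-1}q}$.

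The one place where the paper proceeds slightly differently is the first isomorphism of \eqref{equ: KR -}: rather than telescope the $\boxed{j}'$-product from \eqref{for: dual dMA} (which is awkward because the $\boxed{j}_a'$ are defined recursively), the paper computes $K_p^+(z)$ directly on a highest weight vector of $V_q^{-*}(\lambda_m^{(i)};b)$ using the explicit form of $\ev_a^-$ from \eqref{homo:evaluation dec}. This shortcut reads off $\varpi_{m,b\tau_i q}^{(i)}$ immediately and avoids the bookkeeping you flag as the main obstacle. Your suggested alternative of transporting the $i\leq M$ case through $\SG^*$ would also work and is arguably cleaner still, though the paper does not take that route here.
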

\begin{proof}
Assume $i\leq M$. The Young diagram $Y^{m\varpi_i}_-$ is a rectangle with $i$ rows and $m$ columns. Let $H \in \CB_-(m\varpi_i)$ be such that $H(-k,-l) = i+1-k$ for $1\leq k \leq i$. Then $v_H \in V_q^+(m\varpi_i;a\tau_iq^{-1})$ in Remark \ref{rem: GT} is a highest $\ell$-weight vector of $\ell$-weight
$$ m_H =  \prod_{l=1}^m\prod_{k=1}^i \boxed{k}_{a\tau_iq^{2(i+1-k-l)}} = \prod_{l=1}^m Y_{i,aq^{2-2l}} = \varpi_{m,a}^{(i)}. $$
Here we used $\prod_{k=1}^i \boxed{k}_{a\tau_iq^{2(i+1-k-l)}} = Y_{i,aq^{2-2l}}$ and $\theta_i = \tau_i^2 = q^{2(M-N+1-i)}$ for $1\leq i \leq M$, based on Example \ref{ss: l-weights}. This proves the first isomorphism of \eqref{equ: KR +}; the second one is a consequence of Equations \eqref{for: iMA} and \eqref{for: dMA}. If $T \in \CB_-(m\varpi_i)$ and $T \neq H$, then $T(-k,-l) \geq i+1-k$ and $T(-1,-1) > i$. The $\ell$-weight property of $W_{m,a}^{(i)}$ follows from Definition \ref{def: tableau} and Equation \eqref{for: iMA}:
$$m_T m_H^{-1} \in \boxed{i+1}_{a\tau_i}\boxed{i}_{a\tau_i}^{-1} \lCQ^- = A_{i,aq}^{-1} \lCQ^-.$$

Assume $i>M$. Let $v$ be the highest $\ell$-weight of $V_q^{-*}(\lambda_m^{(i)};b)$. By Equation \eqref{homo:evaluation dec}, 
\begin{align*}
K_p^+(z) v = v \quad \mathrm{for}\ p \leq i,\quad K_p^+(z) v = \frac{1-zb}{q^{-m}-zbq^m} v \quad \quad \mathrm{for}\ p > i.
\end{align*}
$v$ is of $\ell$-weight $\varpi_{m,b\tau_jq}^{(j)}$, proving the first isomorphism of \eqref{equ: KR -}.Since $\boxed{l}_a^* \equiv \boxed{l}_a'$ for $l \in I$, the second isomorphism of \eqref{equ: KR -} is deduced from Equations \eqref{for: dual iMA} and \eqref{for: dual dMA}. let $H \in \CB_+(\lambda_m^{(i)})$ be such that $H(k,l) = i+l$ for $1\leq l \leq M+N-i$. The monomial $m_H'$ associated to $H$ in Equation \eqref{for: dual dMA} is the highest $\ell$-weight. If $T \in \CB_+(\lambda_m^i)$ and $T \neq H$, then $T(k,l) \leq i+l$ and $T(1,1) \leq i$. By Definition \ref{def: tableau} and Equation \eqref{for: dual dMA}:
$$ m_T' m_H'^{-1} \in \boxed{i}'_{a\tau_i^{-1}}\boxed{i+1}_{a\tau_i^{-1}}'^{-1} \lCQ^- = A_{i,aq^{-1}}^{-1}\lCQ^-, $$
proving the $\ell$-weight property of $W_{m,a}^{(i)}$.
\end{proof}
The $\ell$-weight property is similar to \cite[Lemma 4.4]{H}; $W_{m,a}^{(i)}$ in \cite{H} is $W_{m,aq_i^{2m-2}}^{(i)}$ here. Let $\varpi_{m,a}^{(M-)} := \prod_{l=1}^m Y_{M,aq^{2l-2}}^{-1}$ and $W_{m,a}^{(M-)} := L(\varpi_{m,a}^{(M-)})$. Similarly we have
\begin{equation} \label{equ: KR odd}
W_{m,a}^{(M-)} \cong V_q^{-*}(\lambda_m;aq^{N-2}) \simeq V_q^{+*}(\lambda_m;aq^{2m-2-N}).
\end{equation}
where $\lambda_m \in \mathcal{P}$ is such that its Young diagram is a rectangle with $m$ rows and $N$ columns. If $\varpi_{m,a}^{(M-)} \Bn \in \lwt(W_{m,a}^{(M-)})$ and $\Bn \neq 1$, then $\Bn \in A_{M,aq^{-1}}^{-1}\lCQ^-$.


%
%
\section{Length-two representations}  \label{sec: TQ}
A $Y_q(\Glie)$-module $V$ in category $\BGG$ is {\it of length-two} if it admits a Jordan--H\"older series of length two, namely, it fits in a short exact sequence $0 \rightarrow S \rightarrow V \rightarrow S' \rightarrow 0$ in category $\BGG$ such that both $S'$ and $S'$ are irreducible. We shall simply write such a sequence as $S \hookrightarrow V \twoheadrightarrow S'$.

In this section we describe length-two modules by tensor products.

For $i \in I_0, a \in \BC^{\times}, m \in \BZ_{>0}$ and $s \in \BZ_{\geq 0}$, let us define $\Bd_{m,a}^{(i,s)} \in \BR_U$ to be
\begin{gather*}
\varpi_{m,aq_i^{2m+1}}^{(i)} \varpi_{m+s,aq_i^{2m-1}}^{(i)} \prod_{l=1}^m A_{i,aq_i^{2l}}^{-1}  \ \mathrm{if}\ i \neq M, \quad
 \varpi_{s,aq^{-1}}^{(M)} \prod_{j\in I_0: j \sim M} \varpi_{m,aq_j^{2m}}^{(j)} \ \mathrm{if}\ i = M. 
\end{gather*}
Let $D_{m,a}^{(i,s)} := L(\Bd_{m,a}^{(i,s)})$ be the irreducible $U_q(\Gaff)$-module.
\begin{rem} \label{rem: comparison with HJ}
Let us rewrite $\Bd_{m,a}^{(i,s)}$ in terms of the $\Psi$ using Equation \eqref{equ: A Psi}:
\begin{gather*} 
 \Bd_{m,a}^{(i,s)} \equiv  \frac{\Psi_{i,aq_i^{-2s}}}{\Psi_{i,a}} \prod_{j\in I_0: j \sim i} \frac{\Psi_{j,aq_{ij}^{-1}}}{\Psi_{j,aq_{ij}^{-2m-1}}}. 
\end{gather*}
In the non-graded case $N=0$, we can identify $\Bn_{i,a}^+$ with $\Psi$ in \cite[(6.13)]{HL} and $\mathfrak{m}_{i,a}^{(2)}$ in \cite[(6.2)]{Jimbo2}, $\Bd_{m,a}^{(i,s)}$ with $\widetilde{\Psi}_i^{(-s,2m-1)}$ in \cite[Section 4.3]{FH2}. Notice that $\Bd_{m,a}^{(i,s)}$ satisfies the condition of \lq\lq minimal affinization by parts\rq\rq\ in \cite[Theorem 2]{ChariPrime}.
\end{rem}

\begin{theorem} \label{thm: TQ}
Let $i \in I_0$ and $a \in \BC^{\times}$. The $Y_q(\Glie)$-module $N_{i,a}^+ \otimes L_{i,a}^+$ has a Jordan--H\"{o}lder series of length two and in the Grothendieck ring $K_0(\BGG)$:
\begin{equation} \label{equ: TQ}
[N_{i,a}^+ \otimes L_{i,a}^+] =  [L_{i,aq_i^{-2}}^+] \prod_{j\in I_0:j\sim i}[L_{j,aq_{ij}^{-1}}^+] + [D] [L_{i,a\hat{q}_i^2}^+] \prod_{j\in I_0: j\sim i} [L_{j,aq_{ij}}^+].
\end{equation}
Here $D = L(\Bn_{i,a}^+\Psi_{i,a}\Psi_{i,a\hat{q}_i^{2}}^{-1} A_{i,a}^{-1}\prod_{j\sim i} \Psi_{j,aq_{ij}}^{-1})$ is one-dimensional.
\end{theorem}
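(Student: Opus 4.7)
The plan is to establish the identity in $K_0(\BGG)$ via $q$-character computations and then read off the length-two structure from injectivity of $\chi_q$ (Proposition \ref{prop: q-char ring}).

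First I would identify the two candidate composition factors. Using the triangular property of the coproduct \eqref{coproduct: affine Cartan}--\eqref{coproduct: positive}, the tensor product $N_{i,a}^+ \otimes L_{i,a}^+$ is of highest $\ell$-weight $\Bf_1 := \Bn_{i,a}^+\Psi_{i,a} = \Psi_{i,aq_i^{-2}} \prod_{j\sim i}\Psi_{j,aq_{ij}^{-1}}$. This is precisely the highest $\ell$-weight of the first tensor product $L_{i,aq_i^{-2}}^+ \otimes \prod_{j\sim i} L_{j,aq_{ij}^{-1}}^+$ on the right-hand side. A direct computation with the $\ell$-weight of $D$ as given, combined with \eqref{equ: A Psi}, shows that the tensor product $D \otimes L_{i,a\hat{q}_i^2}^+ \otimes \prod_{j\sim i} L_{j,aq_{ij}}^+$ has highest $\ell$-weight $\Bf_2 \equiv \Bf_1 \cdot A_{i,a}^{-1}$, i.e.\ exactly one generalized simple root below $\Bf_1$.

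Second I would verify that the two tensor products appearing on the right-hand side are in fact irreducible, so that their classes equal $[L(\Bf_1)]$ and $[L(\Bf_2)]$ respectively. For tensor products of positive prefundamental modules at the parameter shifts arising here, this is an analogue of the Hernandez--Jimbo irreducibility result and should follow from the cyclicity arguments of Section \ref{sec: cyclicity}, possibly via the duality $\SG^*$ of Lemma \ref{lem: duality by permutation} to move everything to positive nodes, with separate treatment of the node $i=M$ where Lemma \ref{lem: simple O}(3) does not apply.

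Third I would verify the actual $q$-character identity. The key input is that each $\chi_q(L_{j,b}^+)$ is supported on $\ell$-weights of the form $\Psi_{j,b} \cdot \Bm$ where $\Bm$ is a monomial in the $A_{j,?}^{-1}$ at node $j$ only; this is derived by realizing $L_{j,b}^+$ as the asymptotic limit of Kirillov--Reshetikhin modules and reading off the $\ell$-weights from Theorem \ref{thm: q-char MA} / Corollary \ref{cor: KR evaluation}. Once this is in hand, the two summands on the right-hand side have disjoint supports of $q$-character (because $\Bf_1$ contains $\Psi_{i,aq_i^{-2}}$ while $\Bf_2$ contains $\Psi_{i,a\hat{q}_i^2}$ and each $L_{i,?}^+$ only shifts the parameter at node $i$ downward by $A_{i,?}^{-1}$). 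Matching with $\chi_q(N_{i,a}^+)\chi_q(L_{i,a}^+)$ then reduces to a combinatorial identity: every monomial of $\chi_q(N_{i,a}^+)\chi_q(L_{i,a}^+)$ either belongs to the support of $\chi_q(\text{first summand})$ or to $\chi_q(\text{second summand})$, exactly once, and multiplicities agree.

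The hard part will be the third step: controlling the $q$-character of the infinite-dimensional module $N_{i,a}^+$ precisely enough to separate the two towers. Once this is done, injectivity of $\chi_q$ upgrades the character identity to the Grothendieck ring identity \eqref{equ: TQ}, and since the right-hand side is the sum of two distinct simple classes, the Jordan--H\"older length of $N_{i,a}^+ \otimes L_{i,a}^+$ is automatically two, with sub and quotient distinguished by the triangular comultiplication argument from the first step.
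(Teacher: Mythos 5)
Your overall architecture --- identify the two simple factors $L(\Bf_1)$ and $L(\Bf_1A_{i,a}^{-1})$, realize each as one of the (irreducible) tensor products on the right-hand side, match $q$-characters, and conclude by injectivity of $\chi_q$ --- is the paper's strategy. But two points do not survive scrutiny. First, your ``key input'' on the shape of $\chi_q(L_{j,b}^+)$ is false: the $\ell$-weights of a \emph{positive} prefundamental module are not $\Psi_{j,b}$ times monomials in the $A_{j,?}^{-1}$. The correct statement (Lemma \ref{lem: positive pre char}, after \cite{FH}) is $\chi_q(L_{j,b}^+)=\Psi_{j,b}\,\chi(L_{j,b}^+)$: every $\ell$-weight differs from the highest one by a \emph{constant} element of $\CP$, with no $A$-factors at all; it is the negative prefundamental modules (Lemma \ref{lem: negative pre char}) whose $\ell$-weights are the highest one times $A^{-1}$-monomials. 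Disjointness of the two towers still holds, but only because $A_{i,a}^{-1}$ is a genuinely non-constant $\ell$-weight, not for the reason you give. (Also, the irreducibility of the right-hand tensor products, Corollary \ref{cor: simplicity tensor product pre-fund}, is proved via this factorization and the duality $\SG^*$, not via the cyclicity results of Section \ref{sec: cyclicity}, which concern finite-dimensional KR modules.)

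Second, and decisively, the step you defer as ``the hard part'' --- bounding $\chi_q(N_{i,a}^+)$ above by $\Bn_{i,a}^+(1+A_{i,a}^{-1})\prod_{j\sim i}\chi(L_{j,1}^+)$ --- is the actual content of the proof, and you give no argument for it. The paper obtains it by exhibiting $N_{i,a}^+$ as a sub-quotient of $D_{m,a}^{(i,1)}\otimes S_m$ with $S_m$ a tensor product of positive prefundamentals, and letting $m\to\infty$: for $m$ large the $D_{m,a}^{(i,1)}$-component of any fixed $\ell$-weight is forced into $\{1,A_{i,a}^{-1}\}$ by the $\ell$-weight elimination results for the finite-dimensional modules $D_{m,a}^{(i,s)}$ (Corollaries \ref{cor: Demazure l-weights} and \ref{cor: Demazure odd l-weights}; the node $i=M$ requires a separate and more delicate argument because the diagram subalgebra there is not a quotient of $U_q(\widehat{\mathfrak{sl}_2})$). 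These corollaries rest on the tableau $q$-character formulas of Section \ref{sec: q-char}. Without this control of $\chi_q(N_{i,a}^+)$ your ``combinatorial identity'' cannot be verified, and neither the upper bound nor the lower bound (that $L(\Bf_1A_{i,a}^{-1})$ actually occurs in the tensor product) is established.
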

When $i = M$, the two monomials at the right-hand side of Equation \eqref{equ: TQ} has a common factor $[L_{M,aq^{-2}}^+]$. This is a special feature of quantum affine superalgebras.
\begin{theorem} \label{thm: Demazure T}
Let $i \in I_0 \setminus \{M\},\ a \in \BC^{\times}$ and $m,s \in \BZ_{>0}$. There are short exact sequences of $U_q(\Gaff)$-modules whose first and third terms are irreducible: 
\begin{gather*}
D_{m,a}^{(i,s)} \hookrightarrow W_{m,aq_i^{2m+1}}^{(i)} \otimes W_{m+s,aq_i^{2m-1}}^{(i)} \twoheadrightarrow W_{m+s+1,aq_i^{2m+1}}^{(i)} \otimes W_{m-1,aq_i^{2m-1}}^{(i)}, \label{equ: Demazure resolution} \\
 D_{m+s,aq_i^{-2s}}^{(i,0)} \otimes W_{m,aq_i^{2m+1}}^{(i)} \hookrightarrow W_{m+s,aq_i^{2m+1}}^{(i)} \otimes D_{m,a}^{(i,s)} \twoheadrightarrow W_{m+s+1,aq_i^{2m+1}}^{(i)} \otimes D_{m,a}^{(i,s-1)}. \label{equ: Demazure TQ}
\end{gather*}
\end{theorem}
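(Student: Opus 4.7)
The plan is to reduce each short exact sequence to an identity in the Grothendieck ring $K_0(\BGG_{\Glie})$, which can be verified via the injective $q$-character map (Proposition \ref{prop: q-char ring}), and then upgrade the $K_0$ identity to a genuine exact sequence using cyclicity of tensor products of Kirillov--Reshetikhin modules (Theorem \ref{thm: tensor KR even}). For the first sequence, the target $K_0$ identity is
\[
[W_{m,aq_i^{2m+1}}^{(i)}][W_{m+s,aq_i^{2m-1}}^{(i)}] = [D_{m,a}^{(i,s)}] + [W_{m+s+1,aq_i^{2m+1}}^{(i)}][W_{m-1,aq_i^{2m-1}}^{(i)}];
\]
using the tableau-sum formulas of Theorem \ref{thm: q-char MA} together with the evaluation realization of KR modules in Corollary \ref{cor: KR evaluation}, both sides become explicit monomial sums indexed by pairs of semistandard tableaux, and one matches terms via a Jacobi--Trudi-type bijection, the super analog of the identity underlying the extended T-systems in the $\mathfrak{sl}_N$ case. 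An analogous $K_0$ identity is the target for the second sequence.

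A direct computation with the formula $\varpi_{m,b}^{(i)} = q^{m\varpi_i}\Psi_{i,bq_i^{1-2m}}/\Psi_{i,bq_i}$ shows that the middle and right tensor products in the first sequence share the same highest $\ell$-weight $\varpi_{m,aq_i^{2m+1}}^{(i)}\varpi_{m+s,aq_i^{2m-1}}^{(i)}$. Invoking Theorem \ref{thm: tensor KR even} I conclude that the right tensor product is irreducible (it falls into the "even" cyclic regime), and that the middle tensor product is cyclic from the tensor of its two highest $\ell$-weight vectors. The canonical map from the middle onto its unique irreducible head then produces the desired surjection.

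To identify the kernel $K$, I use the $K_0$ identity: the $q$-character of $K$ equals $\chi_q(D_{m,a}^{(i,s)})$. To upgrade this character equality to a module isomorphism I construct an explicit highest $\ell$-weight vector in $K$ of $\ell$-weight $\Bd_{m,a}^{(i,s)}$, starting from the tensor of top vectors and applying the Drinfeld lowering operator $x_i^-(z)$ (Remark \ref{rem: two Gauss decompositions}) the requisite number of times; the triangular structure of the coproduct \eqref{coproduct: positive}--\eqref{coproduct: affine Cartan} ensures that the resulting vector lies in the kernel and is annihilated by all $s_{jk}(z)$ with $j<k$. This yields an embedding $D_{m,a}^{(i,s)} \hookrightarrow K$, and character equality forces $D_{m,a}^{(i,s)} \cong K$. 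The second sequence is handled analogously; alternatively, one can deduce it from the $s=0$ case by an induction on $s$ that exploits the factorization of $\Bd_{m,a}^{(i,s)}$ relative to $\Bd_{m,a}^{(i,s-1)}$ built into the definition of the $D$-modules.

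The main obstacle is the singular-vector construction in the kernel: the character count alone only produces the correct multiplicity, whereas a geometric vector realizing the embedding $D_{m,a}^{(i,s)} \hookrightarrow K$ must be exhibited. The restriction $i \neq M$ is crucial here, since at a non-isotropic node $i$ the relevant operators generate a quotient of $U_{q_i}(\widehat{\mathfrak{sl}_2})$ (Remark \ref{rem: two Gauss decompositions}), reducing the construction to the classical (non-super) setting where the analogous extended T-systems of Nakajima and Hernandez have been established. Carefully tracking the $\BQ$-grading and verifying that the odd simple root $\alpha_M$ does not interfere (through the degree-four oscillator relation when $M,N>1$) is the delicate technical point of the argument.
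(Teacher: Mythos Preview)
Your plan contains a genuine gap at its very first step: you propose to verify the $K_0$ identity
\[
[W_{m,aq_i^{2m+1}}^{(i)}][W_{m+s,aq_i^{2m-1}}^{(i)}] = [D_{m,a}^{(i,s)}] + [W_{m+s+1,aq_i^{2m+1}}^{(i)}][W_{m-1,aq_i^{2m-1}}^{(i)}]
\]
by comparing tableau expansions from Theorem~\ref{thm: q-char MA}. But that theorem computes $q$-characters only for evaluation modules $V_q^{\pm}(\lambda;a)$ and their duals. The module $D_{m,a}^{(i,s)}$ is defined as the simple module of a prescribed highest $\ell$-weight; nowhere in the paper is a closed tableau formula for $\chi_q(D_{m,a}^{(i,s)})$ available (indeed, for $s>0$ and $i<M$, $D_{m,a}^{(i,s)}$ is not an evaluation module---its highest $\ell$-weight involves $\Psi$-factors at three distinct Dynkin nodes, see Remark~\ref{rem: comparison with HJ}). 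So there is nothing on the right-hand side to which your ``Jacobi--Trudi-type bijection'' can be applied, and the $K_0$ identity you want is in fact the content of the theorem, not an input to it. Your singular-vector construction in $K$ is likewise not carried out; asserting that iterated $x_i^-$'s land in the kernel and are killed by all $s_{jk}(z)$ with $j<k$ is exactly the hard part.

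The paper's proof proceeds quite differently and avoids ever computing $\chi_q(D_{m,a}^{(i,s)})$ in closed form. One first uses Lemma~\ref{lem: tensor product T1} (cyclicity/irreducibility of the five tensor products, via Theorem~\ref{thm: tensor KR even}) to produce the surjections in both sequences and to see that their kernels admit $D_{m,a}^{(i,s)}$, respectively $D_{m+s,aq_i^{-2s}}^{(i,0)}\otimes W_{m,aq_i^{2m+1}}^{(i)}$, as irreducible sub-quotients. This yields two \emph{dimension inequalities}. The argument is then a simultaneous induction on $s$: the base case $s=0$ reduces to Tsuboi's T-system (Theorem~\ref{thm: Tsuboi}), and at the inductive step the inequality from the second sequence, combined with the equalities at levels $0$ and $s-1$, forces equality in the first inequality at level $s$, which in turn forces equality in the second. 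So the two exact sequences are proved \emph{together}; the dimension of $D_{m,a}^{(i,s)}$ is an output, not an input, of the argument. Your observation that the middle and right tensor products in the first sequence share the same highest $\ell$-weight is correct and is used implicitly in constructing the surjection, but the rest of your outline does not go through as written.
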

The assumption $i \neq M$ is necessary because $\dim W_{m,a}^{(M)} = 2^{MN}$ for $m \geq N$. 
 Equation \eqref{equ: TQ} corresponds to \cite[(6.14)]{HL} and \cite[Proposition 6.8]{Jimbo2}, and  can be thought of as a two-term Baxter TQ relation for $Y_q(\Glie)$. The exact sequences of Theorem \ref{thm: Demazure T} are extended T-systems \cite{Nakajima,H}, the initial case $s = 0$ being the T-system in \cite{Tsuboi}; see Theorem \ref{thm: Tsuboi}.

The proof of Theorem \ref{thm: TQ}, given in Section \ref{sec: proof TQ}, is similar to \cite[(6.14)]{HL}, based on $q$-characters. Theorem \ref{thm: Demazure T} is more involved and requires cyclicity of tensor products of KR modules; its proof is postponed to Section \ref{sec: proof T}. 

We make crucial use of the idea that $D_{m,a}^{(i,s)}$ admits an injective resolution by tensor products of KR modules of the same Dynkin node for $i \neq M$.

\begin{lem} \label{lem: KR l-weights refined}
Let $m \in \BZ_{>0},\ a \in \BC^{\times}$ and $i \in I_0 \setminus \{M\}$. If $\varpi_{m,a}^{(i)} \Bn \in \lwt(W_{m,a}^{(i)})$ and $\Bn \neq 1$, then either $\Bn = A_{i,aq_i}^{-1} A_{i,aq_i^{-1}}^{-1} \cdots A_{i,aq_i^{3-2l}}^{-1}$ for some $1\leq l \leq m$, or $\Bn$ belongs to $A_{i,aq_i}^{-1}A_{j,aq_i^2}^{-1} \lCQ^-$ where $j \in I_0$ and $j \sim i$.
\end{lem}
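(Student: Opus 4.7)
The plan is to use the tableau-sum formula of Theorem \ref{thm: q-char MA} together with Corollary \ref{cor: KR evaluation}, then read off the possible $\ell$-weight monomials combinatorially. First I would reduce to the case $i \leq M-1$ via the duality functor $\SG^*$ of Lemma \ref{lem: duality by permutation}: for $i > M$ the dual node is $\kappa - i \in \{1,\ldots,N-1\}$, which lies on the non-super side of $\Glie'$, and $\SG^*$ intertwines the $A$-generators with a uniform shift of spectral parameters, so the statement transfers verbatim.

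Assuming $i \leq M-1$, Corollary \ref{cor: KR evaluation} identifies $W_{m,a}^{(i)}$ with $V_q^+(m\varpi_i; a\tau_iq^{-1})$, whose $q$-character is the multiplicity-free sum $\sum_{T \in \CB_-(m\varpi_i)} m_T$ over column-strict tableaux of rectangular shape $i \times m$. Write $H(-k,-l) := i+1-k$ for the highest-$\ell$-weight tableau, so every $\Bn = m_T m_H^{-1}$ comes from a unique $T \neq H$. Using the identity $A_{i,x}^{-1} = \boxed{i+1}_{x\tau_iq^{-1}} \boxed{i}_{x\tau_iq^{-1}}^{-1}$ from Definition \ref{def: tableau} together with $\tau_i \tau_{i\pm 1}^{-1} = q^{\mp 1}$, direct calculation gives: raising $T(-1,-l)$ from $i$ to $i+1$ contributes a factor $A_{i,\,aq_i^{3-2l}}^{-1}$; raising $T(-2,-1)$ from $i-1$ to $i$ contributes $A_{i-1,\,aq_i^2}^{-1}$; and raising $T(-1,-1)$ from $i$ to $i+2$ contributes the pair $A_{i,\,aq_i}^{-1}A_{i+1,\,aq_i^2}^{-1}$.

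I then split on the shape of $T$. Case A: $T$ differs from $H$ only in row $k=1$ and all modified entries equal $i+1$. The row ordering of Definition \ref{def: Young} forces these modifications to fill an initial segment $l \leq l_0$, producing the first alternative $\Bn = A_{i,aq_i}^{-1} A_{i,aq_i^{-1}}^{-1} \cdots A_{i,aq_i^{3-2l_0}}^{-1}$. Case B: otherwise. The monotonicity rules of Definition \ref{def: Young} imply that any modification in a row $k \geq 2$ must include the rightmost box $(-k,-1)$ and then propagate up to $(-1,-1)$ by strict column monotonicity (all entries in question are $\leq M$), and that any row-$1$ value exceeding $i+1$ must occur at $(-1,-1)$. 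Hence either $T(-2,-1) \geq i$ and $T(-1,-1) \geq i+1$, producing the factors $A_{i,aq_i}^{-1}$ and $A_{i-1,aq_i^2}^{-1}$, or $T(-1,-1) \geq i+2$, producing $A_{i,aq_i}^{-1}$ and $A_{i+1,aq_i^2}^{-1}$; in either subcase all additional factors lie in $\lCQ^-$, so $\Bn \in A_{i,aq_i}^{-1} A_{j,aq_i^2}^{-1} \lCQ^-$ with $j \sim i$.

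The main obstacle is the bookkeeping of spectral parameters across rows and columns, and making the case split watertight --- in particular, verifying that any modification in rows $k \geq 3$ automatically falls under case B through the column-propagation argument, so that only two \lq\lq elementary\rq\rq\ non-$\mathfrak{sl}_2$ modifications ($j = i \pm 1$) ever appear at the spectral parameter $aq_i^2$.
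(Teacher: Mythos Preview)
Your proposal is correct and follows essentially the same approach as the paper: both reduce the $\ell$-weight monomials to tableaux $T \in \CB_-(m\varpi_i)$ via the tableau-sum formula, then split on whether $T(-1,-1) \geq i+2$, or $T(-2,-1) \geq i$, or $T$ differs from $H$ only in an initial segment of row $-1$ with entries $i+1$, yielding precisely your Cases~A and~B. The one cosmetic difference is that the paper handles $i > M$ by saying ``the other case is similar'' (implicitly via the dual formula~\eqref{for: dual dMA}), whereas you invoke $\SG^*$; either route works, but if you go through $\SG^*$ you should record explicitly how $A_{i,a}$ and $\lCQ^-$ transform, since $q_i$ flips sign of exponent across the odd node.
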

\begin{proof}
We only consider the case $i < M$; the other case is similar. Let us  be in the situation of the proof of Corollary \ref{cor: KR evaluation}. By Equation \eqref{for: iMA}, $\Bn = m_Tm_H^{-1}$ for a unique $T \in \CB_-(m\varpi_i)$ with $T(-1,-1) > i$ and $T(-k,-l) \geq i+1-k$. If $T(-1,-1) > i+1$, then using $\tau_{i+1} = q^{-1}\tau_i$ we obtain
$$ m_Tm_H^{-1} \in \boxed{i+2}_{a\tau_i}\boxed{i}_{a\tau_i}^{-1} \lCQ^- = A_{i,aq}^{-1}A_{i+1,aq^2}^{-1}\lCQ^-. $$
If $T(-2,-1) > i-1$, then together with $T(-1,-1) > i$ we have
$$ m_Tm_H^{-1} \in \boxed{i+1}_{a\tau_i}\boxed{i}_{a\tau_i}^{-1} \boxed{i}_{a\tau_iq^2}\boxed{i-1}_{a\tau_iq^2}^{-1}\lCQ^- = A_{i,aq}^{-1} A_{i-1,aq^2}^{-1} \lCQ^-. $$
Suppose $T(-1,-1) = i+1$ and $T(-2,-1) = i-1$. There exists $1\leq l \leq m$ such that the only difference between $T,H$ is at $(-1,-j)$ with $1\leq j \leq l$, and 
$$ m_Tm_H^{-1} = \prod_{j=1}^l \boxed{i+1}_{a\tau_iq^{2-2j}}\boxed{i}_{a\tau_iq^{2-2j}}^{-1} = \prod_{j=1}^l A_{i,aq^{3-2j}}^{-1}.  $$
This completes the proof of the lemma.
\end{proof}
\begin{cor} \label{cor: Demazure l-weights}
Let $m,s \in \BZ_{>0},\ a \in \BC^{\times}$ and $i \in I_0 \setminus \{M\}$.
\begin{itemize}
\item[(1)] For $1\leq l \leq s$, we have $\Bd_{m,a}^{(i,s)} A_{i,a}^{-1}A_{i,aq_i^{-2}}^{-1} \cdots A_{i,aq_i^{2-2l}}^{-1} \in \lwt(D_{m,a}^{(i,s)})$ and its associated $\ell$-weight space is one-dimensional.
\item[(2)] If $\Bd_{m,a}^{(i,s)}\Bn \in \lwt(D_{m,a}^{(i,s)})$ is not of the form of (1) and $\Bn \neq 1$, then $\Bn \in \{A_{j,aq_i^{2m+1}}^{-1},\ A_{i,aq_i^{2m+2}}^{-1}\ |\ j \in I_0,\ j \sim i \}\lCQ^-$. 
\end{itemize}
\end{cor}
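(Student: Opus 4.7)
The plan is to exploit the injective resolution provided by Theorem~\ref{thm: Demazure T}. Setting $b := aq_i^{2m+1}$ and $b' := aq_i^{2m-1}$, the short exact sequence
$$ D_{m,a}^{(i,s)} \hookrightarrow W_{m,b}^{(i)} \otimes W_{m+s,b'}^{(i)} \twoheadrightarrow W_{m+s+1,b}^{(i)} \otimes W_{m-1,b'}^{(i)} $$
realizes $D_{m,a}^{(i,s)}$ as a submodule of the middle tensor product, and a direct comparison of $\Psi$-factors shows that both the middle and right-hand products share the common highest $\ell$-weight $P := \varpi_{m,b}^{(i)}\varpi_{m+s,b'}^{(i)}$, and that $\Bd_{m,a}^{(i,s)} = P \prod_{k=1}^m A_{i,aq_i^{2k}}^{-1}$. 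By additivity of $\chi_q$ on short exact sequences, the multiplicity of any monomial $\Bd_{m,a}^{(i,s)}\Bn$ in $D_{m,a}^{(i,s)}$ equals its multiplicity in the middle product minus its multiplicity in the right-hand product.

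I would then invoke Proposition~\ref{prop: q-char ring} together with the triangularity of the coproduct \eqref{coproduct: affine Cartan}--\eqref{coproduct: positive} to identify each multiplicity with the number of factorizations $\Bn_1\Bn_2$ where $\Bn_1,\Bn_2$ range over the normalized $\ell$-weights of the two KR factors. By Theorem~\ref{thm: q-char MA} these KR modules have multiplicity-free $q$-characters, and Lemma~\ref{lem: KR l-weights refined} partitions the normalized $\ell$-weights of each $W_{\cdot,c}^{(i)}$ into three classes: the trivial monomial, a pure $A_i$-chain $A_{i,cq_i}^{-1}A_{i,cq_i^{-1}}^{-1}\cdots A_{i,cq_i^{3-2r}}^{-1}$ of length $r$, and a mixed monomial in $A_{i,cq_i}^{-1}A_{j,cq_i^2}^{-1}\lCQ^-$ for some $j \sim i$. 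This reduces the proof to explicit bookkeeping on pure chains versus mixed tails.

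For part~(1), fix $1 \leq l \leq s$; the normalization of $\Bd_{m,a}^{(i,s)}A_{i,a}^{-1}A_{i,aq_i^{-2}}^{-1}\cdots A_{i,aq_i^{2-2l}}^{-1}$ by $P$ is a pure $A_i$-chain with parameter set $\{aq_i^{2m},aq_i^{2m-2},\ldots,aq_i^{2-2l}\}$. A pure target admits no mixed factor, since a surviving $A_j^{-1}$ with $j \sim i$ could not be cancelled. Among pure factorizations, every nontrivial chain in $W_{m,b}^{(i)}$ contains the parameter $bq_i = aq_i^{2m+2}$, absent from the target, forcing $\Bn_1 = 1$; then $\Bn_2$ is the unique pure chain of length $m+l$ in $W_{m+s,b'}^{(i)}$, admissible because $m+l \leq m+s$. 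This gives multiplicity one in the middle product. The parallel argument in the right-hand product again forces $\Bn_1' = 1$ and then requires $\Bn_2'$ to be a pure chain of length $m+l$ in $W_{m-1,b'}^{(i)}$; but Lemma~\ref{lem: KR l-weights refined} caps such chains at length $m-1 < m+l$, so the multiplicity there is zero, and subtracting yields multiplicity one in $D_{m,a}^{(i,s)}$.

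For part~(2), suppose $\Bd_{m,a}^{(i,s)}\Bn$ is an $\ell$-weight with $\Bn \neq 1$ not of the form in~(1). I would run through the four combinations for $(\Bn_1,\Bn_2)$ in the middle product. Pure-pure factorizations with $\Bn_1 = 1$ reproduce only $\Bn = 1$ or the monomials of (1), both excluded by hypothesis. If $\Bn_1 = 1$ and $\Bn_2 \in A_{i,aq_i^{2m}}^{-1}A_{j,aq_i^{2m+1}}^{-1}\lCQ^-$ is mixed, then imposing $\Bn \in \lCQ^-$ forces the $\lCQ^-$-tail of $\Bn_2$ to absorb $\prod_{k=1}^{m-1}A_{i,aq_i^{2k}}$ coming from the normalization, leaving $\Bn \in A_{j,aq_i^{2m+1}}^{-1}\lCQ^-$. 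In every remaining case $\Bn_1 \neq 1$, the factor $A_{i,bq_i}^{-1} = A_{i,aq_i^{2m+2}}^{-1}$ is inherited by $\Bn$, since no factor of $\Bn_2$ or of $\prod_{k=1}^m A_{i,aq_i^{2k}}$ carries that parameter, placing $\Bn$ in $A_{i,aq_i^{2m+2}}^{-1}\lCQ^-$. The main obstacle is verifying the $\lCQ^-$-condition in these mixed subcases: one must check that the positive powers introduced by $\prod_{k=1}^m A_{i,aq_i^{2k}}$ admit a matching negative tail within the admissible mixed $\ell$-weights of $\Bn_1$ or $\Bn_2$, so that the identified boundary generator survives in $\Bn$ and nothing spurious is lost to the subtraction against the right-hand term.
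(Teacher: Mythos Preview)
Your proposal has a circular dependency: you invoke Theorem~\ref{thm: Demazure T} to obtain the short exact sequence realizing $D_{m,a}^{(i,s)}$ as a submodule, but in the paper the proof of Theorem~\ref{thm: Demazure T} (carried out in Section~\ref{sec: proof T}) explicitly relies on Corollary~\ref{cor: Demazure l-weights}. Indeed, the opening of Section~\ref{sec: proof T} recalls facts ``from the proof of Corollary~\ref{cor: Demazure l-weights}'', and the lemma that follows cites the corollary directly. So you cannot use the exact sequence as input here.

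The paper avoids this by working only with the elementary fact that $D_{m,a}^{(i,s)}$ is an irreducible \emph{sub-quotient} of $T = W_{m,aq_i^{2m+1}}^{(i)} \otimes W_{m+s,aq_i^{2m-1}}^{(i)}$, which needs no cyclicity or exactness, and by restricting to the diagram subalgebra $U_i$ generated by $(x_{i,n}^{\pm},\phi_{i,n}^{\pm})$. Since $i \neq M$, this is a quotient of $U_{q_i}(\widehat{\mathfrak{sl}_2})$, and the $U_i$-submodule of $S = L(\varpi_{m,aq_i^{2m+1}}^{(i)}\varpi_{m+s,aq_i^{2m-1}}^{(i)})$ generated by a highest $\ell$-weight vector is irreducible with an explicit Chari--Pressley factorization. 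Comparing $\dim T_{q^{\lambda-k\alpha_i}}$ (from Corollary~\ref{cor: KR evaluation}) with $\dim S_{q^{\lambda-k\alpha_i}}$ (from the $\mathfrak{sl}_2$ $q$-character) shows that $\chi_q(T)-\chi_q(S)$ contains exactly one monomial of each weight $q^{\lambda-(m+l)\alpha_i}$ for $0\le l\le s$, pinned down by Lemma~\ref{lem: KR l-weights refined}; this gives part~(2) and the one-dimensionality in~(1). The nonvanishing in~(1) is then a one-line $U_{q_i}(\mathfrak{sl}_2)$ argument via $(x_{i,0}^-)^l$ applied to a highest $\ell$-weight vector of $D_{m,a}^{(i,s)}$.

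Even granting Theorem~\ref{thm: Demazure T} independently, your argument for part~(2) is not complete as written: you yourself flag the ``main obstacle'' of checking that the positive powers $\prod_{k=1}^m A_{i,aq_i^{2k}}$ are absorbed while the boundary generator $A_{i,aq_i^{2m+2}}^{-1}$ or $A_{j,aq_i^{2m+1}}^{-1}$ survives in $\Bn$. This does follow once one uses that $\Bn \in \lCQ^-$ (because $D_{m,a}^{(i,s)}$ is highest $\ell$-weight) together with the multiplicative independence of the $A_{j,b}$, but that step should be made explicit rather than deferred.
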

\begin{proof}
For non-graded quantum affine algebras this corollary is \cite[Lemma 4.8]{FH2}, the proof utilized a delicate elimination theorem of $\ell$-weights \cite[Theorem 5.1]{H3}. Here our proof is a weaker version of elimination based on the restriction to the diagram subalgebra $U_i$ of $U_q(\Gaff)$ generated by $(x_{i,n}^{\pm}, \phi_{i,n}^{\pm})_{n\in \BZ}$. By Remark \ref{rem: two Gauss decompositions}, the algebra $U_i$ is a quotient of $U_{q_i}(\widehat{\mathfrak{sl}_2})$.

Set $T := W_{m,aq_i^{2m+1}}^{(i)} \otimes W_{m+s,aq_i^{2m-1}}^{(i)}$ and $S := L(\varpi_{m,aq_i^{2m+1}}^{(i)} \varpi_{m+s,aq_i^{2m-1}}^{(i)})$.  Then $S$ is a sub-quotient of $T$. Let $\lambda := (2m+s)\varpi_i$. By Corollary \ref{cor: KR evaluation}, 
\begin{itemize}
\item[(A)] $\dim T_{q^{\lambda-k\alpha_i}} = \min(m+1, k+1)$ for $0\leq k \leq m+s$.
\end{itemize}

1. Let $v_0 \in S$ be a highest $\ell$-weight vector and $S^i := U_i v_0 \subseteq S$.  Viewed as a $U_{q_i}(\widehat{\mathfrak{sl}_2})$-module, $S^i$ is of highest $\ell$-weight \cite[Section 2]{FR}
$$ \mathbf{m}_i := (Y_{aq_i^{2m+1}} Y_{aq_i^{2m-1}} \cdots Y_{aq_i^3}) (Y_{aq_i^{2m-1}} Y_{i,aq_i^{2m-3}} \cdots Y_{aq_i^{1-2s}}). $$
 $S^i$ is spanned by the $x_{i,n_1}^- x_{i,n_2}^- \cdots x_{i,n_k}^- v_0$. If $w \in S^i$ is annihilated by the $x_{i,n}^+$, then $x_{j,n}^+ w = 0 \in S$ for all $j \in I_0 \setminus \{i\}$ (because $[x_{j,n}^+,x_{i,k}^-] = 0$) and $w \in \BC v_0$. The $U_{q_i}(\widehat{\mathfrak{sl}_2})$-module $S^i$ is irreducible and has a factorization \cite[Theorem 4.8]{CP}: 
$$ S^i \cong L^{i}(Y_{aq_i^{2m+1}} Y_{aq_i^{2m-1}} \cdots Y_{aq_i^{1-2s}}) \otimes L^i (Y_{aq_i^{2m-1}} Y_{aq_i^{2m-3}} \cdots Y_{aq_i^3}), $$
where $L^i(\mathbf{n})$ denotes the irreducible $U_{q_i}(\widehat{\mathfrak{sl}_2})$-module of highest $\ell$-weight $\mathbf{n}$ (for $\Bn$ a product of the $Y_b$). For $ k \in \BZ_{>0}$, let $V_k \subseteq S^i$ be the subspace spanned by the $x_{i,n_1}^- x_{i,n_2}^- \cdots x_{i,n_k}^- v_0$ with $n_l \in \BZ$ for $1\leq l \leq k$. Then $V_k = S_{q^{\lambda-k\alpha_i}}$. Based on the $q$-character of $S^i$ with respect to the spectra of $\phi_i^+(z)$ in \cite[Section 4.1]{FR}, for $-1\leq l < s$ we have:
\begin{itemize}
\item[(B)] $\dim S_{q^{\lambda-k\alpha_i}} = \min(m,k+1)$ for $1\leq k \leq m+s$;
\item[(C)] $\mathbf{m}_i \prod_{t=-l}^{m} (Y_{aq_i^{2t+1}}^{-1} Y_{aq_i^{2t-1}}^{-1})$ is not an $\ell$-weight of the $U_{q_i}(\widehat{\mathfrak{sl}_2})$-module $S^i$.
\end{itemize}

2. By (A)--(B),  $\{\Bn \in \lwt(T)\setminus \lwt(S)\ |\ \varpi(\Bn) = \lambda-(m+l)\alpha_i \} = \{\Bn_l\}$ for $0\leq l \leq s$, the multiplicity of $\Bn_l$ in $\chi_q(T) - \chi_q(S)$ is one, and $L(\Bn_0)$ is a sub-quotient of $T$. Comparing the spectra of $\phi_i^+(z)$ by (C) and Lemma \ref{lem: KR l-weights refined}, we obtain: $\Bn_0 = \Bd_{m,a}^{(i,s)}$ and $\Bn_l = \Bd_{m,a}^{(i,s)} A_{i,a}^{-1}A_{i,aq_i^{-2}}^{-1} \cdots A_{i,aq_i^{2-2l}}^{-1}$. Part (2) follows by viewing $D_{m,a}^{(i,s)}$ as a sub-quotient of $T$. If  $(D_{m,a}^{(i,s)})_{q^{\lambda-(m+l)\alpha_i}} \neq 0$ for $1\leq l \leq s$, then necessarily $\Bn_l \in \lwt(D_{m,a}^{(i,s)})$ and its $\ell$-weight space is one-dimensional, proving (1). 

3. Let $w_0 \in D_{m,a}^{(i,s)}$ be a highest $\ell$-weight vector. Then $x_{i,0}^+ w_0 = 0$  and $\phi_{i,0}^+ w_0 = q_i^s w_0$. Since the triple $(x_{i,0}^+,\ x_{i,0}^-,\ \phi_{i,0}^+)$ generates a quotient algebra of $U_{q_i}(\mathfrak{sl}_2)$, we have $(D_{m,a}^{(i,s)})_{q^{\lambda-(m+l)\alpha_i}} \ni (x_{i,0}^-)^l w_0 \neq 0$ for $1\leq l \leq s$. 
\end{proof}

The case $i = M$ is distinguished since $U_M$ is not related to $U_q(\widehat{\mathfrak{sl}_2})$.

\begin{cor} \label{cor: Demazure odd l-weights}
Let $m,s \in \BZ_{>0}$ and $a \in \BC^{\times}$. 
\begin{itemize}
\item[(1)] $\Bd_{m,a}^{(M,s)} A_{M,a}^{-1} \in \lwt(D_{m,a}^{(M,s)})$ and the $\ell$-weight space is one-dimensional.
\item[(2)] $(\Bd_{m,a}^{(M,s)})^{-1}\lwt(D_{m,a}^{(M,s)}) \subset \left(\{A_{j,aq_j^{2m+1}}^{-1}\ |\ j \in I_0,\ j \sim M \}\lCQ^-\right)\cup \{1,A_{M,a}^{-1}\}$.
\end{itemize}
\end{cor}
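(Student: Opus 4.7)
The plan is to mirror the proof of Corollary 4.4, substituting the $U_q(\widehat{\mathfrak{sl}_2})$-analysis at an even node by the super-nilpotency at the odd node: since $x_M^\pm(z) x_M^\pm(w) = -x_M^\pm(w) x_M^\pm(z)$ in Remark 3.3, one has $(x_{M,n}^\pm)^2 = 0$, so the diagram subalgebra $U_M$ is not a quotient of $U_q(\widehat{\mathfrak{sl}_2})$ but a far more restricted super-enhancement of it, in which the $x_M^-$-string from a highest-$\ell$-weight vector has length at most one.

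First, observe that $\Bd_{m,a}^{(M,s)}$ is precisely the product of the highest $\ell$-weights of $W_{s,aq^{-1}}^{(M)}$, $W_{m,aq_{M-1}^{2m}}^{(M-1)}$ and $W_{m,aq_{M+1}^{2m}}^{(M+1)}$, so $D_{m,a}^{(M,s)}$ embeds as an irreducible sub-quotient of the tensor product $T$ of these three KR modules. Applying Lemma 4.3 to the two factors at nodes $j = M \pm 1$ shows that any non-trivial shift coming from them lies in $A_{j,aq_j^{2m+1}}^{-1}\lCQ^-$; applying Corollary 3.9 to the $M$-factor shows that any non-trivial shift from it lies in $A_{M,a}^{-1}\lCQ^-$. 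Hence $\lwt(T)$ is contained in $\Bd_{m,a}^{(M,s)} \bigl(\{1\} \cup A_{M,a}^{-1}\lCQ^- \cup \bigcup_{j \sim M} A_{j,aq_j^{2m+1}}^{-1}\lCQ^-\bigr)$, which already yields part~(2) in every case except for shifts sitting in $A_{M,a}^{-1}\lCQ^-$ that come purely from the $M$-factor.

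For part (1), a direct computation gives $\phi_M^+(z) w_0 = q^s \frac{1-zaq^{-2s}}{1-za} w_0$ on the highest-$\ell$-weight vector, so $\phi_{M,0}^+ w_0 = q^s w_0$ and $\phi_{M,0}^- w_0 = q^{-s} w_0$. The graded bracket $\{x_{M,0}^+, x_{M,0}^-\} = (\phi_{M,0}^+ - \phi_{M,0}^-)/(q - q^{-1})$ therefore acts on $w_0$ by the non-zero scalar $[s]_q$; combined with $x_{M,0}^+ w_0 = 0$ this forces $x_{M,0}^- w_0 \neq 0$, and the $\phi_i^\eta(z) x_j^-(w)$-relations of Remark 3.3 identify its $\ell$-weight as $\Bd_{m,a}^{(M,s)} A_{M,a}^{-1}$. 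The corresponding $\ell$-weight space is one-dimensional because the only $\ell$-weights of $T$ of classical weight $q^{\lambda - \alpha_M}$ are those coming from a single $A_{M,a}^{-1}$-shift of the $M$-factor (the neighbor factors contribute $A_{j,aq_j^{2m+1}}^{-1}$-shifts of weight $-\alpha_j$, not $-\alpha_M$).

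The main obstacle lies in the residual case of part (2): ruling out $\Bn \in A_{M,a}^{-1}\lCQ^- \setminus \{A_{M,a}^{-1}\}$ with no factor $A_{j,aq_j^{2m+1}}^{-1}$ ($j \sim M$), which would arise purely from higher shifts inside $W_{s,aq^{-1}}^{(M)}$. In Corollary 4.4 this step used the Chari--Pressley factorization of $\widehat{\mathfrak{sl}_2}$-modules, which is unavailable at the odd node. Our substitute is a direct tableau analysis: via the isomorphism $W_{s,aq^{-1}}^{(M)} \cong V_q^+(s\varpi_M; aq^{-1-N})$ and Theorem 3.8, the tableaux in $\CB_-(s\varpi_M)$ whose associated shift lies in $A_{M,a}^{-1}\lCQ^-$ but outside $\bigcup_{j \sim M} A_{j,aq_j^{2m+1}}^{-1}\lCQ^-$ are severely constrained by the mixed strict/weak monotonicity across the $M$-$(M{+}1)$ border of $Y_-^{s\varpi_M}$; the only such tableau producing $A_{M,a}^{-1}$ alone arises from changing the single entry at position $(-1,-1)$ from $M$ to $M+1$, and all others either already contain an $A_{j,aq_j^{2m+1}}^{-1}$-factor when tensored with the highest-$\ell$-weight vectors of the neighbor slots, or get eliminated when passing to the sub-quotient $D_{m,a}^{(M,s)}$ under the constraint $(x_{M,n}^-)^2 = 0$.
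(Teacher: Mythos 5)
Your reduction of the problem is sound as far as it goes: realizing $D_{m,a}^{(M,s)}$ inside $T := W_{s,aq^{-1}}^{(M)} \otimes W_{m,aq_{M-1}^{2m}}^{(M-1)} \otimes W_{m,aq_{M+1}^{2m}}^{(M+1)}$ and invoking Corollary \ref{cor: KR evaluation} on each factor does confine every nontrivial shift to $A_{M,a}^{-1}\lCQ^- \cup \bigcup_{j\sim M} A_{j,aq_j^{2m+1}}^{-1}\lCQ^-$, and your part (1) (the anticommutator $[x_{M,0}^+,x_{M,0}^-]$ acting by $[s]_q \neq 0$ on $w_0$, plus the multiplicity count in $T$ at weight $q^{\lambda-\alpha_M}$) is a correct variant of the paper's weight argument. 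The gap is exactly where you say the ``main obstacle'' lies, and your substitute does not close it. The offending $\ell$-weights --- elements of $\Bd_{m,a}^{(M,s)}A_{M,a}^{-1}\lCQ^-$ such as $\Bd_{m,a}^{(M,s)}A_{M,a}^{-1}A_{M+1,b}^{-1}$ or $\Bd_{m,a}^{(M,s)}A_{M,a}^{-1}A_{M,b}^{-1}$ with $b \in aq^{\BZ}$ nowhere near $aq_{M\pm1}^{2m+1}$ --- \emph{genuinely occur} in $\lwt(T)$: they arise from deeper tableaux of the single factor $W_{s,aq^{-1}}^{(M)}$ tensored with the highest $\ell$-weight vectors of the two neighbors. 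So no amount of tableau combinatorics on $Y_-^{s\varpi_M}$ alone, and certainly not the relation $(x_{M,n}^-)^2=0$ (which only controls the $\alpha_M$-string length, not which $A_{M\pm1,b}^{-1}$ factors can accompany $A_{M,a}^{-1}$), can show these weights drop out when passing to the irreducible sub-quotient. You would need an independent reason why they are \emph{not} $\ell$-weights of $D_{m,a}^{(M,s)}$, and your argument supplies none.

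The paper's mechanism for this step is a sandwich between two \emph{different} realizations by evaluation modules that glue node $M$ to one neighbor at a time. First, $D_{m,a}^{(M,s)}$ is exhibited as a sub-quotient of $V_q^-(s\varpi_M+m\varpi_{M-1};aq^{N-2s-1}) \otimes W_{m,aq^{-2m}}^{(M+1)}$; the tableau formula \eqref{for: dMA} for the combined shape $s\varpi_M+m\varpi_{M-1}$ shows that any shift avoiding $A_{M-1,aq^{2m+1}}^{-1}\lCQ^-$ freezes the first $M-1$ rows, hence $\Bn A_{M,a}$ is a product of $A_{j,b}^{-1}$ with $j \geq M$ (and is trivial if only $j=M$ occurs). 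Second, a family of modules $V_q^{+*}(\mu_t;aq^{2t-1-N})$ gluing nodes $M$ and $M+1$, with $t$ taken large, yields symmetrically that $\Bn A_{M,a}$ is a product of $A_{j,b}^{-1}$ with $j \leq M$. Intersecting the two conclusions leaves only $j=M$, and then the first realization forces $\Bn A_{M,a}=1$. Without some such second, transversal realization, the residual case of part (2) remains open in your proposal.
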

\begin{proof}
Assume $M,N > 1$ without loss of generality. Let $\Bn \in (\Bd_{m,a}^{(M,s)})^{-1}\lwt(D_{m,a}^{(M,s)})$ with $\Bn \notin \{A_{M+1,aq^{-2m-1}}^{-1},\ A_{M-1,aq^{2m+1}}^{-1}\}\lCQ^-$ and $\Bn \neq 1$. 

Firstly, set $\lambda := s \varpi_M + m \varpi_{M-1}$. Then $\lambda \in \mathcal{P}$ and its Young diagram $Y_+^{\lambda}$ is formed of $(k,l)$ where either $(1\leq k < M,\  1\leq l \leq s+m)$ or $(k=M,\ 1\leq l \leq s)$. Consider the evaluation module $S := V_q^-(\lambda;aq^{N-2s-1})$. Let $H \in \CB_+(\lambda)$ be such that $H(k,l) = k$. The monomial $m_H$ attached to $H$ in Equation \eqref{for: dMA} is the highest $\ell$-weight of $S$. From the proof of Corollary \ref{cor: KR evaluation} we see that
$$ m_H = (Y_{M,aq^{1-2s}} \cdots Y_{M,aq^{-3}} Y_{M,aq^{-1}}) (Y_{M-1,aq^2} Y_{M-1,aq^4} \cdots Y_{M-1,aq^{2m}}). $$
In particular, the spectral parameters at the boxes $(M,s)$ and $(M-1,s+m)$ of $H$ are $a\tau_Mq^{-1}$ and $a\tau_{M-1}q^{2m}$ respectively. Let $T \in \CB_+(\lambda)$ and $T \neq H$. If $T(M-1,s+m) \geq M$, then by Definition \ref{def: tableau} and Equation \eqref{for: dMA},
$$ m_Tm_H^{-1} \in \boxed{M}_{a\tau_{M-1}q^{2m}}\boxed{M-1}_{a\tau_{M-1}q^{2m}}^{-1}  \lCQ^- = A_{M-1,aq^{2m+1}}^{-1}\lCQ^-. $$
If $T(M-1,s+m) < M$, then $T(k,l) = k$ for $k < M$ and by Equation \eqref{for: dMA}:
\begin{itemize}
\item[(i)] the $\ell$-weight space $S_{m_T}$ is also the one-dimensional weight space $S_{\varpi(m_T)}$;
\item[(ii)] $m_Tm_H^{-1} A_{M,a}$ is a product of the $A_{j,b}^{-1}$ with $j \geq M$; 
\item[(iii)] if $m_Tm_H^{-1} A_{M,a}$ is a product of the $A_{M,b}^{-1}$, then $m_Tm_H^{-1} A_{M,a} = 1$.
\end{itemize}
Here we used Definition \ref{def: Young} and $T(M,l) \geq M,\ T(M,s) > M$. 

Secondly, viewing $D_{m,a}^{(M,s)}$ as a sub-quotient of $S \otimes W_{m,aq^{-2m}}^{(M+1)}$ gives $\Bn = \Bn_1\Bn_2$ with $m_H \Bn_1 \in \lwt(S)$ and $\Bn_2 \varpi_{m,aq^{-2m}}^{(M+1)} \in \lwt(W_{m,aq^{-2m}}^{(M+1)})$. Since $\Bn \notin A_{M+1,aq^{-2m-1}}^{-1}\lCQ^-$, by Corollary \ref{cor: KR evaluation}, $\Bn_2 = 1$ and $m_H\Bn \in \lwt(S)$. Since $\Bn \notin A_{M-1,aq^{2m+1}}^{-1}\lCQ^-$, (ii)--(iii) hold by replacing $m_Tm_H^{-1}$ with $\Bn$, and $\dim(D_{m,a}^{M,s})_{\Bd_{m,a}^{(M,s)} \Bn} = 1$.

Thirdly, for $t \in \BZ_{>0}$, let $\mu_t \in \mathcal{P}$ be such that its Young diagram $Y_-^{\mu_t}$ is formed of $(-k,-l)$ where either $(1\leq l < N,\ 1\leq k \leq m+t)$ or $(l=N,\ 1\leq k \leq t)$. Consider the evaluation module $S_t := V_q^{+*}(\mu_t; aq^{2t-1-N})$. Let $H_t \in \CB_-(\mu_t)$ be such that $H_t(-k,-l) = M+N+1-l$. The monomial $m_{H_t}^*$ in Equation \eqref{for: dual iMA} is the highest $\ell$-weight of $S_t$ and by Corollary \ref{cor: KR evaluation} and Equation \eqref{equ: KR odd}:
$$ m_{H_t}^* \equiv \varpi_{m,aq^{-2m}}^{(M+1)} \varpi_{t,aq}^{(M-)}. $$
The spectral parameters at the boxes $(-t,-N)$ and $(-t-m,1-N)$ of $H_t$ are $a\tau_M^{-1}q$ and $a\tau_{M+1}^{-1}q^{-2m}$ respectively. Let $T \in \CB_-(\mu_t)$ and $T \neq H_t$. If $T(-t-m,1-N) < M+2$, then by Definition \ref{def: tableau} and Equation \eqref{for: dual iMA},
$$ m_T^* m_{H_t}^{*-1} \in \boxed{M+1}_{a\tau_{M+1}^{-1}q^{-2m}}^*\boxed{M+2}_{a\tau_{M+1}^{-1}q^{-2m}}^{*-1} \lCQ^- = A_{M+1,aq^{-2m-1}}^{-1}\lCQ^-.$$
If $T(-t-m,1-N) = M+2$, then $T(-k,-l) = M+N+1-l$ for $1\leq l < N$. Equation \eqref{for: dual iMA} implies that $m_T^*m_{H_t}^{*-1} A_{M,a}$ is a product of the $A_{j,b}^{-1}$ with $j \leq M$.

Lastly, viewing $D_{m,a}^{(M,s)}$ (after tensoring with a one-dimensional module) as a sub-quotient of 
$ S_t \otimes W_{t+s,aq^{2t-1}}^{(M)} \otimes W_{m,aq^{2m}}^{(M-1)}$
and choosing $t \in \BZ_{>0}$ so large that $\Bn \notin A_{M,aq^{2t}}^{-1}\lCQ^-$, we obtain $m_{H_t}^*\Bn \in \lwt(S_t)$, and so $\Bn A_{M,a}$ is a product of the $A_{j,b}^{-1}$ with $j \leq M$. From (ii)--(iii) it follows that $\Bn A_{M,a} = 1$.

It remains to show that $\Bd_{m,a}^{(M,s)} A_{M,a}^{-1} \in \lwt(D_{m,a}^{(M,s)})$. Indeed, as a $U_q(\Glie)$-module, $D_{m,a}^{(M,s)}$ has a highest weight vector of highest weight $q^{m\varpi_{M-1}+s\varpi_M+m\varpi_{M+1}}$, and so $q^{m\varpi_{M-1}+s\varpi_M+m\varpi_{M+1}-\alpha_M} \in \wt(D_{m,a}^{(M,s)})$. This means that there exists $\Bn \in (\Bd_{m,a}^{(M,s)})^{-1} \lwt(D_{m,a}^{(M,s)})$ with $\varpi(\Bn) = q^{-\alpha_M}$, which forces $\Bn = A_{M,a}^{-1}$.
\end{proof}

As an illustration, for $\Glie = \mathfrak{gl}(3|4)$ and $(m,s,t) = (2,3,1)$ we have
$$ H = \young(11111,22222,333) \in \CB_+(3\varpi_3+2\varpi_2),\quad H_t = \young(:567,:567,4567) \in \CB_-(3\varpi_3+\varpi_1). $$


\section{Proof of TQ relations: Theorem \ref{thm: TQ}}  \label{sec: proof TQ}
The crucial part in the proof is the irreducibility of arbitrary tensor products of positive prefundamental modules. In the case of quantum affine algebras this was proved in \cite[Theorem 4.11]{FH} and \cite[Lemma 5.1]{Jimbo2}. Our approach is similar to \cite{Jimbo2}, based on the duality functor $\SG^*$ in Lemma \ref{lem: duality by permutation}.

\begin{lem} \label{lem: positive pre char}
Let $a \in \BC^{\times}$ and $i \in I_0$. We have 
$$ \chi_q(L_{i,a}^+) = \Psi_{i,a} \times \chi(L_{i,a}^+). $$
\end{lem}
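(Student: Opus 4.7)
The statement is that $\chi_q(L_{i,a}^+) = \Psi_{i,a}\chi(L_{i,a}^+)$, i.e., every $\ell$-weight of $L_{i,a}^+$ factors as $\Psi_{i,a}\cdot p$ with $p\in\CP$, and the corresponding $\ell$-weight space equals the weight space $(L_{i,a}^+)_p$. Since $\varpi(\Psi_{i,a})=1\in\CP$, once the first assertion is established the second follows automatically: no two $\ell$-weights of the form $\Psi_{i,a}\cdot p_1,\Psi_{i,a}\cdot p_2$ with $p_1\neq p_2$ can share an image under $\varpi$. Hence the heart of the proof is to show that every $\ell$-weight has $z$-dependence exactly $\Psi_{i,a}$.

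My first step would be to use the duality $\SG^*$ of Lemma \ref{lem: duality by permutation} to transport the question into $\BGG'$. Since $\SG^*(L_{i,a}^+)\simeq L_{\kappa-i,\,aq^{N-M}}'^{-}$, the problem becomes the analogous factorization statement for a negative prefundamental module of $\Glie'$, which has the advantage of admitting a direct realization as a limit of finite-dimensional Kirillov--Reshetikhin modules. One checks using Example \ref{ss: l-weights} that under the component-reversal of Lemma \ref{lem: duality by permutation} one has $(\Psi_{\kappa-i,\,aq^{N-M}}'^{-1})^{\mathrm{rev}}=\Psi_{i,a}$, so a factorization of $\chi_q$ in $\BGG'$ pulls back to the desired factorization in $\BGG$.

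The central step is then to realize $L_{j,b}'^{-}$, for $j=\kappa-i,\ b=aq^{N-M}$, as an inductive limit (up to one-dimensional twists) of KR modules $W_{k,c_k}'^{(j)}$ with $c_k$ chosen so that the normalized highest $\ell$-weight converges in $\lCP$ to $\Psi_{j,b}'^{-1}$. This mirrors the Hernandez--Jimbo limit construction \cite{HJ}. Granted such a realization, the $q$-character of each $W_{k,c_k}'^{(j)}$ is explicitly the tableau sum of Theorem \ref{thm: q-char MA} and Corollary \ref{cor: KR evaluation}, and every monomial is $\varpi_{k,c_k}'^{(j)}\cdot\prod_r A_{j_r,b_r}'^{-1}$ where the available factors $A'^{-1}_{j_r,b_r}$ are controlled by Lemma \ref{lem: KR l-weights refined}. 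The key observation is that these factors enter in a telescopic pattern with $\varpi_{k,c_k}'^{(j)}$: as $k\to\infty$, all zeros and poles in $z$ away from those of $\Psi_{j,b}'^{-1}$ cancel, and the surviving rational part of every monomial is exactly $\Psi_{j,b}'^{-1}$ times a constant weight.

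The hard part will be rigor of the limit: constructing compatibility morphisms $W_{k-1,c_{k-1}}'^{(j)}\hookrightarrow W_{k,c_k}'^{(j)}$ after one-dimensional normalization, verifying injectivity on $\ell$-weight spaces, and identifying the direct limit with $L_{j,b}'^{-}$. An alternative would bypass this by invoking the asymptotic modules $\CW_{c,a}^{(i)}$ of \cite{Z5} directly, whose highest $\ell$-weight $[c]_i\Psi_{i,ac^{-2}}\Psi_{i,a}^{-1}$ already contains a prefundamental denominator, and extracting the factorization from their known $q$-character structure. Either route yields the factorization of $\chi_q(L_{i,a}^+)$ asserted in the lemma.
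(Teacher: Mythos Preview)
Your approach has a genuine gap. The reduction via $\SG^*$ to a negative prefundamental module in $\BGG'$ leads to a statement that is simply false: by Lemma~\ref{lem: negative pre char} the $q$-character of $L_{j,b}'^{-}$ is $\Psi_{j,b}'^{-1}$ times the limit of $\nqc(W_{k,bq_j'^{-1}}'^{(j)})$, a power series in the $A_{j',c}'^{-1}$ with \emph{fixed} spectral parameters $c\in bq^{\BZ}$. These $A'^{-1}$ factors carry non-trivial $z$-dependence and do not telescope to constants; the ``surviving rational part is $\Psi'^{-1}$ times a constant'' claim is wrong. In short, positive and negative prefundamental modules behave very differently: only $L^+$ has the factorization $\chi_q=\Psi\cdot\chi$.

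The underlying reason your duality step fails is that the component-reversal formula in Lemma~\ref{lem: duality by permutation} is established \emph{only} for the highest $\ell$-weight of an irreducible module; the proof there uses crucially that $s_{ij}(z)\omega=0$ for $i<j$, so that $s_{ll}(z)\omega=K_l^+(z)\omega$. For a general $\ell$-weight vector this identification breaks down, and $\SG^*$ does not act on arbitrary $\ell$-weights by reversal. (If it did, combining the lemma with Lemma~\ref{lem: negative pre char} would give a contradiction, since the reversal of a constant weight is still constant while the reversal of a non-trivial product of $A'^{-1}$'s is not.)

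The paper instead works directly with $L_{i,a}^+$, following \cite[Theorem 4.1]{FH}. The point is that in the Hernandez--Jimbo inductive limit realizing $L_{i,a}^+$ one uses KR modules $W_{m,aq_i^{2m-1}}^{(i)}$ whose non-highest $\ell$-weights, by Corollary~\ref{cor: KR evaluation}, lie in $\varpi_{m,aq_i^{2m-1}}^{(i)}A_{i,aq_i^{2m}}^{-1}\lCQ^-$: the leading $A^{-1}$ has spectral parameter $aq_i^{2m}$, which escapes as $m\to\infty$. After the one-dimensional normalization the limiting $\ell$-weight on each fixed weight space is $\Psi_{i,a}$ times a constant. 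This is precisely the mechanism missing from the $L'^-$ side, where the KR spectral parameter stays fixed and the $A'^{-1}$ factors persist.
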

\begin{proof}
We can adapt the proof of \cite[Theorem 4.1]{FH}. Essentially we just need a weaker version of \cite[Lemma 4.5]{FH}: any $\ell$-weight of $W_{m,a}^{(i)}$ different from $\varpi_{m,a}^{(i)}$ belongs to $\varpi_{m,a}^{(i)} A_{i,aq_i}^{-1} \lCQ^-$, which is Corollary \ref{cor: KR evaluation}.  
\end{proof}
For negative prefundamental modules we recall the main results of \cite{Z5}.
\begin{lem}\cite[Lemma 6.7 \& Corollary 7.4]{Z5} \label{lem: negative pre char}
Let $a,c \in \BC^{\times}$ and $i \in I_0$. 
\begin{itemize}
\item[(i)] The $ \nqc(W_{m,aq_i^{-1}}^{(i)})$ for $m \in \BZ_{>0}$ are polynomials in $\BZ[A_{j,b}^{-1}]_{(j,b) \in I_0\times aq^{\BZ}}$, and as $m \rightarrow \infty$ they converge to a formal power series in $\BZ[[A_{j,b}^{-1}]]_{(j,b) \in I_0\times aq^{\BZ}}$, which is exactly the normalized $q$-character $\nqc(L_{i,a}^-)$. 
\item[(ii)] There exists a $U_q(\Gaff)$-module $\CW_{c,a}^{(i)}$ in category $\BGG$ such that
$$ \chi_q(\CW_{c,a}^{(i)}) = \aBw_{c,a}^{(i)} \times \nqc(L_{i,a}^-). $$
It is irreducible if $c \notin \pm q^{\BZ}$.
\end{itemize}
In particular, any $\ell$-weight of $L_{i,a}^-$ different from $\Psi_{i,a}$ belongs to  $\Psi_{i,a}A_{i,a}^{-1} \lCQ^-$.
\end{lem}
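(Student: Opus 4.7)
The plan is to establish (i) by proving stability of normalized $q$-characters of Kirillov--Reshetikhin modules as $m \to \infty$, then construct the asymptotic module $\CW_{c,a}^{(i)}$ in (ii) by interpolating in the parameter $q_i^m$, and finally read off the $\ell$-weight support statement for $L_{i,a}^-$ directly from (i).

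For (i), by Corollary \ref{cor: KR evaluation} and Lemma \ref{lem: KR l-weights refined}, each polynomial $\nqc(W_{m,aq_i^{-1}}^{(i)})$ already lies in $\BZ[A_{j,b}^{-1}]_{(j,b)\in I_0 \times aq^{\BZ}}$. I would then inspect the tableau-sum formulas of Theorem \ref{thm: q-char MA} for the rectangular shape $m\varpi_i$: fix a monomial $\mathbf{A} = \prod_k A_{j_k,b_k}^{-1}$ and show that any tableau $T \in \CB_{\pm}(m\varpi_i)$ contributing to $\mathbf{A}$ must differ from the ground-state tableau $H$ only inside a bounded $m$-independent region of the Young diagram (using Definition \ref{def: tableau} to track which boxes produce which $A^{-1}$-factors). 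For $m$ larger than this bound the coefficient of $\mathbf{A}$ in $\nqc(W_{m,aq_i^{-1}}^{(i)})$ becomes independent of $m$. Hence the formal limit $\chi_{\infty} \in \BZ[[A_{j,b}^{-1}]]$ is well-defined, and since it begins with $1$ and is invariant under the relations defining category $\BGG$ sub-quotients, Lemma \ref{lem: simple O} forces $\chi_{\infty} = \nqc(L_{i,a}^-)$ by uniqueness of the irreducible highest-$\ell$-weight module of highest $\ell$-weight $\Psi_{i,a}$.

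For (ii), I would adapt the Hernandez--Jimbo limit construction from \cite{HJ} to the super setting. The idea is that on a weight-adapted basis of $W_{m,a}^{(i)}$, the matrix coefficients of the Drinfeld currents $x_j^{\pm}(z), \phi_j^{\pm}(z)$ are rational functions of $q_i^m$; specifically, the eigenvalues of $\phi_i^{\pm}(z)$ on weight spaces involve the highest-weight factor $\varpi_{m,a}^{(i)}$, which depends on $q_i^m$ through the top scalar in $[q_i^m]_i$. After renormalizing by $(\varpi_{m,a}^{(i)})^{-1}[q_i^m]_i$ and replacing the formal symbol $q_i^m$ by $c \in \BC^{\times}$, the resulting formulas define a $U_q(\Gaff)$-action on an infinite-dimensional space indexed by $\lCQ^-$-monomials, yielding $\CW_{c,a}^{(i)}$ with the prescribed $q$-character by (i). Irreducibility for $c \notin \pm q^{\BZ}$ follows from a genericity argument: the would-be singular vectors require the vanishing of a rational function in $c$ whose zeros lie in $\pm q^{\BZ}$.

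The main obstacle is the super-specific degeneration at $i = M$, where $\dim W_{m,a}^{(M)}$ is bounded by $2^{MN}$ and the naive ``large $m$'' limit collapses. Here I would circumvent the issue by invoking the duality functor $\SG^*: \BGG \to \BGG'$ of Lemma \ref{lem: duality by permutation}, which interchanges $i \leftrightarrow \kappa - i$ and exchanges $L_{i,a}^{\pm} \leftrightarrow L_{\kappa-i,aq^{N-M}}'^{\mp}$, reducing the node $i = M$ to the node $i = N$ on the $\Glie'$ side (and swapping the two families of growing KR modules in Corollary \ref{cor: KR evaluation}, (\ref{equ: KR +})--(\ref{equ: KR -})). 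After these identifications every negative prefundamental module is recovered as the asymptotic limit of a well-behaved tower of KR modules, and the last assertion $\lwt(L_{i,a}^-) \subseteq \Psi_{i,a} \{1\} \cup \Psi_{i,a} A_{i,a}^{-1} \lCQ^-$ follows immediately from (i) because all monomials in the tableau expansion other than the highest one factor through $A_{i,a}^{-1}$ by Corollary \ref{cor: KR evaluation}.
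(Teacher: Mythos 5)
First, a point of reference: the paper does not prove this lemma at all --- it is imported wholesale from \cite[Lemma 6.7 \& Corollary 7.4]{Z5}, and the only material in the present text bearing on it is the recollection of the generic asymptotic construction in the proof of Lemma \ref{lem: polyn asym} and the parallel construction of $\CN_{c,a}^{(i)}$ carried out in Section \ref{sec: asym}. Measured against that source, your overall strategy is the right one: stabilization of the normalized $q$-characters along the KR tower $W_{m,aq_i^{-1}}^{(i)}$, followed by interpolation of the matrix coefficients in the variable $q_i^{m}\mapsto c$, is precisely the Hernandez--Jimbo/[Z5] route, and your diagonal-counting argument in the tableau-sum formulas of Theorem \ref{thm: q-char MA} is a legitimate combinatorial substitute for the inductive-system argument used for $\CN_{c,a}^{(i)}$ in Section \ref{sec: asym}.

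Two steps, however, do not go through as written. First, the identification $\chi_{\infty}=\nqc(L_{i,a}^-)$ cannot be extracted from ``uniqueness of the irreducible highest $\ell$-weight module'': convergence produces a formal power series, but nothing yet ties that series to $L(\Psi_{i,a}^{-1})$. One must realize $\chi_{\infty}$ as the normalized $q$-character of an actual module of highest $\ell$-weight $\Psi_{i,a}^{-1}$ (the inductive limit, using injectivity of the transition maps as in Lemma \ref{lem: injective inductive system}), which only yields the one-sided bound $\nqc(L_{i,a}^-)\leq \chi_{\infty}$; the reverse inequality is equivalent to the irreducibility of the limit and is the genuinely hard step. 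The paper's own template for closing this gap is Lemma \ref{lem: simple N} and the end of the proof of Proposition \ref{prop: asymptotic T}, where the lower bound is obtained by tensoring with prefundamental modules whose spectral parameters lie in $ac^{\pm 2}q^{\BZ}$, disjoint from $aq^{\BZ}$; your ``genericity of singular vectors'' remark gestures at this but does not supply it. Second, your escape route for $i=M$ is a dead end: by Lemma \ref{lem: duality by permutation}, $\SG^*$ sends node $M$ of $\mathfrak{gl}(M|N)$ to node $M+N-M=N$ of $\Glie'=\mathfrak{gl}(N|M)$, which is again the odd node $M'=N$ of $\Glie'$, so the alleged degeneration is invariant under the duality. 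Fortunately no detour is needed: the stabilization $\dim W_{m,a}^{(M)}=2^{MN}$ for $m\geq N$ obstructs the extended T-systems of Theorem \ref{thm: Demazure T}, not the asymptotic limit --- the injective transition maps eventually become isomorphisms, and $L_{M,a}^-$ is in fact finite-dimensional by Lemma \ref{lem: simple O}(3) --- so the node $i=M$ should be treated on the same footing as the others rather than deflected by $\SG^*$.
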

By \cite[Section 4]{Z5}, the $U_q(\Gaff)$-module $\CW_{c,a}^{(i)}$ is a \lq\lq generic asymptotic limit\rq\rq\ of the KR modules $W_{m,aq_i^{-1}}^{(i)}$; see also the proof of Lemma \ref{lem: polyn asym}.
\begin{cor} \label{cor: simplicity tensor product pre-fund}
Any tensor product of positive (resp. negative) prefundamental modules in category $\BGG$ is irreducible.
\end{cor}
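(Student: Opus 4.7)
The plan is to first reduce the two cases to one. The duality functor $\SG^*:\BGG\to\BGG'$ of Lemma \ref{lem: duality by permutation} is an anti-equivalence of monoidal categories sending $L_{i,a}^+$ to a one-dimensional twist of $L_{M+N-i,aq^{N-M}}'^{-}$. Since $\BGG'$ is the analogue of $\BGG$ for $\Glie'=\mathfrak{gl}(N|M)$, the positive case for $\Glie$ and the negative case for $\Glie'$ are interchanged (with $M\leftrightarrow N$), so it suffices to handle the positive case.

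Write $V=L_{i_1,a_1}^+\otimes\cdots\otimes L_{i_k,a_k}^+$ and $\Bf=\prod_{j=1}^k\Psi_{i_j,a_j}$. The starting observation is that by Lemma \ref{lem: positive pre char}, each factor satisfies $\chi_q(L_{i,a}^+)=\Psi_{i,a}\chi(L_{i,a}^+)\in\Psi_{i,a}\cdot\BZ_{\geq 0}[q^{\BQ^-}]$, since $L_{i,a}^+$ is a highest weight module with highest classical weight $\varpi(\Psi_{i,a})=q^0$. Using that $\chi_q$ is a ring homomorphism (Proposition \ref{prop: q-char ring}), I obtain
$$\chi_q(V)=\Bf\prod_{j=1}^k\chi(L_{i_j,a_j}^+)\in\Bf\cdot\BZ_{\geq 0}[q^{\BQ^-}],$$
so every $\ell$-weight of $V$ has the form $\Bf q^\mu$ with $\mu\in\BQ^-$, and the coefficient of $\Bf$ is $1$, yielding $\dim V_\Bf=1$.

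Next I examine the composition factors. If $L(\Bg)$ is one, then $\Bg=\Bf q^\mu$ is an $\ell$-weight of $V$, so $\Bg^{-1}\Bf=q^{-\mu}\in\CP\subset\BC[[z]]^\times\CP$, and the criterion recorded just after Lemma \ref{lem: simple O} gives $\Bf\equiv\Bg$, i.e.\ $L(\Bg)\simeq L(\Bf)$. Concretely $L(\Bg)\cong L(\Bf)\otimes\BC_{q^\mu}$, and in $K_0(\BGG)$ one can write $[V]=\sum_{\mu\in\BQ^-}m_\mu[L(\Bf q^\mu)]$. The count $\dim V_\Bf=1$ forces $m_0=1$, and the corollary reduces to showing $m_\mu=0$ for every $\mu\neq 0$.

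This last step is the main obstacle. My plan is to show that $V$ is cyclically generated, as a $Y_q(\Glie)$-module, by $\omega:=\omega_1\otimes\cdots\otimes\omega_k$ with $\omega_j$ a highest $\ell$-weight vector of $L_{i_j,a_j}^+$. Combined with $\dim V_\Bf=1$ and the classification of composition factors above, cyclicity forces $V\cong L(\Bf)$: any proper quotient $V\twoheadrightarrow V/K$ is of highest $\ell$-weight $\Bf$, and only $K=0$ is compatible with the tensor-product $q$-character already computed. For the cyclicity itself I would induct on $k$, reducing to $k=2$ and exploiting the triangularity of the coproduct (Equation \eqref{coproduct: positive} and its analogue for $\Delta(x_i^-(z))$) to show that $Y_q(\Glie)\cdot\omega$ exhausts each weight space of $V$. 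This mirrors the strategy of \cite[Theorem 4.11]{FH} and \cite[Lemma 5.1]{Jimbo2} in the non-graded case, with super-sign conventions not affecting the triangularity.
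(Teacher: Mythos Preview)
Your reduction via $\SG^*$ and your use of Lemma~\ref{lem: positive pre char} to show that every $\ell$-weight of $V$ lies in $\Bf\cdot q^{\BQ^-}$, and hence that every composition factor satisfies $L(\Bg)\simeq L(\Bf)$, are correct and match the paper. The gap is in your Step~4.

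Cyclicity of $V$ on $\omega$ (even granting it) does \emph{not} force $V\cong L(\Bf)$. A cyclic highest $\ell$-weight module has $L(\Bf)$ as its unique simple \emph{quotient}, but its radical may perfectly well contain further composition factors $L(\Bf q^{\mu})$ with $\mu\neq 0$: nothing in the identity $\chi_q(V)=\Bf\prod_j\chi(L_{i_j,a_j}^+)$ rules this out, since each $L(\Bf q^{\mu})$ also has $q$-character in $\Bf\cdot\BZ_{\geq 0}[q^{\BQ^-}]$. Your sentence ``only $K=0$ is compatible with the tensor-product $q$-character already computed'' is not justified. Also, the references you cite (\cite[Theorem~4.11]{FH}, \cite[Lemma~5.1]{Jimbo2}) do \emph{not} argue via cyclicity in this way.

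The missing idea---and the point of the paper's one-line proof---is to use $\SG^*$ a \emph{second} time together with Lemma~\ref{lem: negative pre char}, not merely for the initial reduction. Concretely: you have shown that every composition factor of $V$ has highest $\ell$-weight $\Bf q^{\mu}$ with $\mu\in\BQ^-$. Now $\SG^*(V)$ is, up to a one-dimensional twist, a tensor product of \emph{negative} prefundamental modules in $\BGG'$. By Lemma~\ref{lem: negative pre char}, each $L_{j,b}'^-$ has all $\ell$-weights in $\Psi_{j,b}'^{-1}\lCQ'^-$; hence every $\ell$-weight of $\SG^*(V)$, and in particular the highest $\ell$-weight of every composition factor, lies in $\Bf'\lCQ'^-$ where $\Bf'$ is the top $\ell$-weight. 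Transporting back via Lemma~\ref{lem: duality by permutation}, the highest $\ell$-weight $\Bf q^{\mu}$ of each composition factor of $V$ must differ from $\Bf$ by an element that corresponds to a product of $A'^{-1}$'s. But $q^{\mu}\in\CP$ is a constant (no $z$-dependence), whereas each $A_{j,b}'^{-1}$ is a genuine rational function of $z$; the only element of $\lCQ'^-$ lying in $\CP$ is $1$. Hence $\mu=0$, so $L(\Bf)$ is the unique composition factor and $V$ is irreducible. This is precisely the ``positive constraint $\cap$ negative constraint via duality'' argument of \cite[Lemma~5.1]{Jimbo2}, which the paper invokes using Lemmas~\ref{lem: positive pre char}--\ref{lem: negative pre char} and $\SG^*$.
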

\begin{proof}
In view of Lemmas \ref{lem: positive pre char}--\ref{lem: negative pre char}, the proof of \cite[Lemma 5.1]{Jimbo2} works here by replacing the duality of \cite[Lemma 3.5]{Jimbo2} with the functor $\SG^*$ in Lemma \ref{lem: duality by permutation}.
\end{proof}

\noindent {\bf Proof of Theorem \ref{thm: TQ}.}
In the non-graded case this was sketched in \cite[Section 6.1.3]{HL}. Here our proof is in the spirit of \cite[Lemma 4.8]{FH2}, by replacing the elimination theorem of $\ell$-weights therein with Corollaries \ref{cor: Demazure l-weights}--\ref{cor: Demazure odd l-weights}.

 Let $T := N_{i,a}^+ \otimes L_{i,a}^+$. We need to prove that $T$ has exactly two irreducible sub-quotient $S' := L(\Bn_{i,a}^+\Psi_{i,a})$ and $S'' := L(\Bn_{i,a}^+\Psi_{i,a} A_{i,a}^{-1})$ of multiplicity one, which implies Theorem \ref{thm: TQ} since $S'$ and $S''$ are irreducible tensor products of positive prefundamental modules with $D$. Clearly $S'$ is an irreducible sub-quotient of $T$, and $\chi_q(S') + \chi_q(S'') = \Bn_{i,a}^+\Psi_{i,a} (1+A_{i,a}^{-1}) \chi(L_{i,1}^+) \prod_{j\sim i} \chi(L_{j,1}^+)$ by Corollary \ref{cor: simplicity tensor product pre-fund}.

That $S''$ is a sub-quotient of $T$, i.e. $\chi_q(T)$ is bounded below by $\chi_q(S') + \chi_q(S'')$, is proved in the same way as in the first half of the comment after \cite[(6.13)]{HL}. For the reverse inequality, it suffices to show that $\chi_q(N_{i,a}^+)$ is bounded above by $\Bn_{i,a}^+ (1+A_{i,a}^{-1}) \prod_{j\sim i} \chi(L_{j,1}^+)$.

Assume $\Bn_{i,a}^+\Bn  \in \lwt(N_{i,a}^+)$ and $\Bn \neq 1$.
For $m \in \BZ_{>0}$ let $S_m := L(\Bn_{i,a}^+ (\Bd_{m,a}^{(i,1)})^{-1})$ and view $N_{i,a}^+$ as a sub-quotient of $D_{m,a}^{(i,1)} \otimes S_m$.  Write $$\Bn = \Bn_m' \Bn_m'',\quad\Bn_m' \Bd_{m,a}^{(i,1)} \in \lwt(D_{m,a}^{(i,1)}),\quad \Bn_m'' \Bn_{i,a}^+ (\Bd_{m,a}^{(i,1)})^{-1} \in \lwt(S_m).$$ By Remark \ref{rem: comparison with HJ}, we have $\Bn_{i,a}^+ (\Bd_{m,a}^{(i,1)})^{-1} \equiv \prod_{j\sim i}\Psi_{j,aq_{ij}^{-2m-1}}$.
 It follows from Corollary \ref{cor: simplicity tensor product pre-fund} that $\Bn_m'' \in q^{\BQ^-},\ \chi(S_m) = \prod_{j\sim i} \chi(L_{j,1}^+)$, and so $\Bn \in \lCQ^- q^{\BQ^-}$. 
 
 Choose $t \in \BZ_{>0}$ large enough so that $\Bn \in \lCQ_t^- q^{\BQ^-}$ where $\lCQ_t^-$ is the submonoid of $\lCQ$ generated by the $A_{j,aq^l}^{-1}$ with $-t < l < t$. Then for $m > t$, we must have $\Bn_m' \in  \{1,A_{i,a}^{-1}\}$  by Corollaries \ref{cor: Demazure l-weights}--\ref{cor: Demazure odd l-weights}. This implies that $\Bn_m''$ is uniquely determined by $\Bn$ and $\dim (N_{i,a}^+)_{\Bn} \leq \dim (S_m)_{\Bn_m''}$. As a consequence, the coefficient of any $\Bf\in \lCP$ in $\Bn_{i,a}^+(1+A_{i,a}^{-1}) \prod_{j\sim i} \chi(L_{j,1}^+) - \chi_q(N_{i,a}^+)$ is non-negative.  \hfill $\Box$

\section{Main result: asymptotic TQ relations} \label{sec: asy TQ}
We replace the $L, N$ in Equation \eqref{equ: TQ}  by $U_q(\Gaff)$-modules using the functor $\SG^*$. 
\begin{cor} \label{cor: TQ negative}
Let $i \in I_0$ and $a \in \BC^{\times}$. In the Grothendieck ring $K_0(\BGG)$:
\begin{equation}  \label{equ: TQ negative}
[N_{i,a}^-] [L_{i,a}^-] =  [L_{i,a\hat{q}_i^2}^-] \prod_{j\in I_0: j\sim i} [L_{j,aq_{ij}}^-] + [D][L_{i,aq_i^{-2}}^-] \prod_{j\in I_0:j\sim i}[L_{j,aq_{ij}^{-1}}^-]
\end{equation}
where $D= L(\Bn_{i,a}^- \Psi_{i,a}^{-1} A_{i,a}^{-1} \Psi_{i,aq_i^{-2}} \prod_{j\sim i} \Psi_{j,aq_{ij}^{-1}})$ is one-dimensional.
\end{cor}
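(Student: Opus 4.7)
The idea is to deduce Equation~\eqref{equ: TQ negative} from Theorem~\ref{thm: TQ} by transporting that identity through the duality $\SG^*$ of Lemma~\ref{lem: duality by permutation}, which swaps positive and negative prefundamental modules.

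First I would apply Theorem~\ref{thm: TQ} verbatim to the primed quantum affine superalgebra $U_q(\Gafft)$ associated to $\Glie' = \mathfrak{gl}(N|M)$, at Dynkin node $i' := \kappa - i$ and spectral parameter $a' := aq^{N-M}$. This yields an identity in $K_0(\BGG')$ among the modules $N_{i',a'}'^+$, the $L_{j',b}'^+$ for $j' \in \{i',\, i'\pm 1\}$, and a one-dimensional module $D'$ whose highest $\ell$-weight is given by the primed analogue of the formula in Theorem~\ref{thm: TQ}.

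Next I would pull this identity back into $K_0(\BGG)$ through a quasi-inverse of $\SG^*: \BGG \to \BGG'$, which is again an anti-equivalence of monoidal categories. Since $K_0(\BGG)$ is commutative by Proposition~\ref{prop: q-char ring}, the reversal of tensor order caused by the anti-monoidal structure is harmless at the Grothendieck level, so pullback induces an honest ring isomorphism $K_0(\BGG') \to K_0(\BGG)$. By Lemma~\ref{lem: duality by permutation}, this isomorphism sends $L_{j',b}'^+ \mapsto L_{\kappa-j',\, bq^{M-N}}^-$; in particular $L_{i',a'}'^+ \mapsto L_{i,a}^-$, and a parallel computation of highest $\ell$-weights using the Example~\ref{ss: l-weights} recipe applied to $\Glie'$ shows $N_{i',a'}'^+ \mapsto N_{i,a}^-$.

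The main obstacle is matching all the numerical data across the involution $i \leftrightarrow \kappa - i$. A direct check gives $q'_{\kappa-i} = q_i^{-1}$ and $\hat{q}'_{\kappa-i} = \hat{q}_i^{-1}$ for $i \neq M$, with the exceptional behaviour at $i = M$ mirrored by the exceptional behaviour at the node $\kappa-M = N$ of $\Glie'$, and $q'_{(\kappa-i)(\kappa-j)} = q_{ij}$ for $j \sim i$. With these identifications, the first monomial on the right-hand side of~\eqref{equ: TQ} applied to $(\Glie',i',a')$ transforms into $[L_{i,a\hat{q}_i^2}^-] \prod_{j\sim i}[L_{j, aq_{ij}}^-]$, while the second transforms into $[D][L_{i, aq_i^{-2}}^-]\prod_{j\sim i}[L_{j, aq_{ij}^{-1}}^-]$, establishing~\eqref{equ: TQ negative}. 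The explicit $\ell$-weight of $D$ stated in the corollary is then obtained from that of $D'$ by the same parameter bookkeeping, using that Lemma~\ref{lem: duality by permutation} sends $\Psi'_{j',b} \mapsto \Psi_{\kappa-j',\, bq^{M-N}}^{-1}$ and hence $A'_{i',a'} \mapsto A_{i,a}^{-1}$.
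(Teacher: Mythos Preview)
Your overall strategy coincides with the paper's: both obtain \eqref{equ: TQ negative} by applying $\SG^{*-1}$ to the primed version of Theorem~\ref{thm: TQ}, using Lemma~\ref{lem: duality by permutation} and the commutativity of $K_0(\BGG)$ to handle the anti-monoidal twist.

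The only real divergence is in how the $\ell$-weight of $D$ is pinned down. You propose to carry the explicit $\ell$-weight of $D'$ through $\SG^*$, tracking how $\Psi'_{j',b}$ and $A'_{i',a'}$ transform. This works in principle, but note that Lemma~\ref{lem: duality by permutation} gives the prefundamental correspondence only up to $\simeq$, so to land on the \emph{exact} $\ell$-weight stated in the corollary (rather than its $\equiv$-class) you must use the precise $\cong$ statement for $\SG^*(L(\Bf))$ and reverse the full tuples, which is more bookkeeping than your sketch indicates---especially at the odd node, where your identity $\hat{q}'_{\kappa-i} = \hat{q}_i^{-1}$ fails literally (both equal $q^{-1}$ when $i=M$), though the final match still holds because both sides of \eqref{equ: TQ negative} then share the common factor $[L_{M,aq^{-2}}^-]$. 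The paper sidesteps all of this: once \eqref{equ: TQ negative} is known to hold for \emph{some} one-dimensional $D$, it takes $q$-characters and uses Lemma~\ref{lem: negative pre char} to observe that $\Bn_{i,a}^-\Psi_{i,a}^{-1}A_{i,a}^{-1}$ appears on the left but in no $\chi_q(L_{j,b}^-)$ on the right, forcing the stated value of $\chi_q(D)$. This intrinsic argument is shorter and avoids the parameter chase entirely.
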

\begin{proof}
Applying $\SG^{*-1}$ to Equation \eqref{equ: TQ} in $K_0(\BGG')$ gives \eqref{equ: TQ negative} by Lemma \ref{lem: duality by permutation}. Take $q$-characters in Equation \eqref{equ: TQ negative}. By Lemma \ref{lem: negative pre char}, $\Bn_{i,a}^-\Psi_{i,a}^{-1}A_{i,a}^{-1}$ appears at the left-hand side, but in none of the $\chi_q(L_{j,b}^-)$ at the right-hand side. This forces $\chi_q(D) \Psi_{i,aq_i^{-2}}^{-1} \prod_{j\sim i} \Psi_{j,aq_{ij}^{-1}}^{-1} = \Bn_{i,a}^-\Psi_{i,a}^{-1}A_{i,a}^{-1}$ and proves the second statement.
\end{proof}
Equation \eqref{equ: TQ negative} becomes \cite[Example 7.8]{HL} when $N = 0$.

\begin{prop} \label{prop: asymptotic T}
Let $i\in I_0$ and $a,c \in \BC^{\times}$. There exists a $U_q(\Gaff)$-module $\CN_{c,a}^{(i)}$ in category $\BGG$ whose $q$-character is 
$$\chi_q(\CN_{c,a}^{(i)}) = \Bn_{c,a}^{(i)} \times \nqc(N_{i,a}^-).  $$
If $c^2 \notin q^{\BZ}$, then $\CN_{c,a}^{(i)}$ is irreducible.
\end{prop}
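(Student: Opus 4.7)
The strategy mirrors the construction of $\CW_{c,a}^{(i)}$ in Lemma \ref{lem: negative pre char}(ii), namely to realize $\CN_{c,a}^{(i)}$ as a generic asymptotic limit of a family of finite-dimensional irreducible $U_q(\Gaff)$-modules. For $m \in \BZ_{>0}$, I consider the specialization $\Bn_{q_i^m,a}^{(i)}$ of $\Bn_{c,a}^{(i)}$ at $c = q_i^m$. Using Example \ref{ss: l-weights} and the identification $\aBw_{q_i^m,b}^{(j)} \equiv \varpi_{m,\cdot}^{(j)}$ (for appropriate spectral parameter shifts), this specialization reorganizes as a product of dominant KR-type $\ell$-weights $\varpi_{m,\cdot}^{(j)}$. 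Hence by Lemma \ref{lem: simple O}(3)--(4), the module $L(\Bn_{q_i^m,a}^{(i)})$ is finite-dimensional and extends uniquely to a $U_q(\Gaff)$-module; in fact it is a sub-quotient of an explicit tensor product of KR modules on the Dynkin node $i$ and its neighbours.

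The technical core is to establish that the normalized $q$-characters $\nqc(L(\Bn_{q_i^m,a}^{(i)}))$ converge in the completed ring $\lCE$ to $\nqc(N_{i,a}^-)$ as $m \to \infty$. This is the $N$-type analog of the convergence $\nqc(W_{m,aq_i^{-1}}^{(i)}) \to \nqc(L_{i,a}^-)$ recorded in Lemma \ref{lem: negative pre char}(i). The argument will apply the negative TQ relation of Corollary \ref{cor: TQ negative} to expand $\chi_q(N_{i,a}^-)$ in terms of tensor products of negative prefundamentals, then compare with an analogous identity for each finite-dimensional specialization (whose normalized $q$-character is controlled by the tableau-sum formulas of Theorem \ref{thm: q-char MA}). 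The $\ell$-weight bookkeeping relies on Corollaries \ref{cor: Demazure l-weights}--\ref{cor: Demazure odd l-weights} to bound deviations from the highest $\ell$-weight. With this convergence established, I construct $\CN_{c,a}^{(i)}$ for arbitrary $c$ exactly as in \cite[Section 4]{Z5}: compatible weight bases of the $L(\Bn_{q_i^m,a}^{(i)})$ are chosen so that the matrix entries of the $U_q(\Gaff)$-generators depend rationally on $q_i^m$, and substituting $c$ for $q_i^m$ yields the desired module $\CN_{c,a}^{(i)}$ in category $\BGG$ with $q$-character $\Bn_{c,a}^{(i)} \times \nqc(N_{i,a}^-)$.

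For irreducibility under $c^2 \notin q^{\BZ}$, I proceed by contradiction: if $\CN_{c,a}^{(i)}$ had a proper subquotient $L(\Bf)$, then $\Bn_{c,a}^{(i)} \Bf^{-1}$ would be a non-trivial sub-$\ell$-weight monomial of $\nqc(N_{i,a}^-)$ closed under the $U_q(\Gaff)$-action, and comparing the corresponding polynomial factors of $\Bf$ against the explicit form of $\nqc(N_{i,a}^-)$ supplied by Corollary \ref{cor: TQ negative} and the elimination Corollaries \ref{cor: Demazure l-weights}--\ref{cor: Demazure odd l-weights} would force an algebraic relation of the type $c^2 \in q^{\BZ}$, contradicting the genericity hypothesis. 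The main obstacle lies in the convergence step: while Lemma \ref{lem: negative pre char}(i) handles each prefundamental factor separately, assembling the full $N$-type normalized character requires uniform control of sub-$\ell$-weight multiplicities across the entire family $\{L(\Bn_{q_i^m,a}^{(i)})\}_{m}$, for which the refined $\ell$-weight analysis of Section \ref{sec: TQ} is indispensable.
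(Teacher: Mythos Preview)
Your proposal correctly identifies that $\CN_{c,a}^{(i)}$ should be built as an asymptotic limit of the finite-dimensional modules $N_{m,a}^{(i)} := L(\Bn_{q^m,a}^{(i)})$, and the overall shape matches the paper. However, there is a genuine gap in the construction step.

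You write that one chooses ``compatible weight bases of the $L(\Bn_{q_i^m,a}^{(i)})$ so that the matrix entries of the $U_q(\Gaff)$-generators depend rationally on $q_i^m$, and substituting $c$ for $q_i^m$ yields the desired module.'' This is not how the generic asymptotic construction in \cite[Section~4]{Z5} works, and it cannot work as stated: the modules $N_{m,a}^{(i)}$ have different dimensions for different $m$, so there is no a priori way to compare bases across the family. What is actually required is an \emph{inductive system} with injective structure maps $F_{m_2,m_1}: N_{m_1,a}^{(i)} \hookrightarrow N_{m_2,a}^{(i)}$, together with a proof that the Drinfeld generators act through this system by Laurent polynomials in $q^{m_2}$. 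Only then can one take the inductive limit and specialize the parameter to $c$.

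Building this inductive system is the substantive content of the paper's proof. The maps $F_{m_2,m_1}$ come from realizing $N_{m_2,a}^{(i)}$ as the irreducible quotient of $N_{m_1,a}^{(i)} \otimes Z_{i,a}^{m_1,m_2}$ for an auxiliary ``increment'' module $Z_{i,a}^{m_1,m_2}$, and for this one must prove that such tensor products (and their triple versions) are of highest $\ell$-weight. This is precisely where the cyclicity machinery of Section~\ref{sec: cyclicity} enters --- in particular Theorem~\ref{thm: tensor KR even}, Corollary~\ref{cor: tensor KR even involution}, and Lemma~\ref{lem: tensor fund} --- none of which you invoke. The Corollaries~\ref{cor: Demazure l-weights}--\ref{cor: Demazure odd l-weights} you cite control $\ell$-weights of the $D_{m,a}^{(i,s)}$ and are used for Theorem~\ref{thm: TQ}, not here. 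After the inductive system is in place, one still needs the analogs of \cite[Propositions~4.2--4.5]{HJ} (the paper's Lemmas~\ref{lem: injective inductive system}--\ref{lem: F negative}) to verify injectivity and polynomiality; these are nontrivial and require the triangular coproduct estimates \eqref{coproduct: affine Cartan}--\eqref{coproduct: positive}.

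For the $q$-character identity and irreducibility, the paper's argument is a clean sandwich: one shows $\nqc(N_{m,a}^{(i)}) \leq \nqc(N_{i,a}^-)$ for all $m$ (an $\ell$-weight bound using Lemma~\ref{lem: positive pre char}), and separately that $\nqc(N_{i,a}^-) \leq \nqc(L(\Bn_{c,a}^{(i)}))$ when $c^2 \notin q^{\BZ}$ (by viewing $N_{i,a}^-$ as a subquotient of $L(\Bn_{c,a}^{(i)})$ tensored with negative prefundamentals whose spectral parameters lie off the lattice $aq^{\BZ}$). Since $L(\Bn_{c,a}^{(i)})$ is a subquotient of $\CN_{c,a}^{(i)}$, this forces equality everywhere and hence irreducibility. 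Your proposed contradiction argument is vaguer and would in any case need this same spectral-parameter separation idea.
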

The proof of this proposition will be given in Section \ref{sec: asym}. Assuming this proposition, we are able to prove the main result of the paper.
\begin{theorem} \label{thm: TQ asymptotic}
Let $i\in I_0$ and $a,c,d \in \BC^{\times}$. In the Grothendieck ring $K_0(\BGG)$:
\begin{multline} \label{equ: negative TQ asy}
\quad [\CN_{c,a}^{(i)}] [\CW_{d,a}^{(i)}] = [\CW_{d\hat{q}_i,a\hat{q}_i^2}^{(i)}] \prod_{j\in I_0: j\sim i} [\CW_{c_{ij}^{-1},aq_{ij}}^{(j)}] \\
+ [D_i^-][\CW_{dq_i^{-1},aq_i^{-2}}^{(i)}] \prod_{j\in I_0: j \sim i}[\CW_{c_{ij}^{-1}q_{ij}^{-1},aq_{ij}^{-1}}^{(j)}]
\end{multline}
where $D_i^- = L(\Bn_{c,a}^{(i)} \aBw_{d,a}^{(i)} A_{i,a}^{-1}(\aBw_{dq_i^{-1},aq_i^{-2}}^{(i)} \prod_{ j \sim i}\aBw_{c_{ij}^{-1}q_{ij}^{-1},aq_{ij}^{-1}}^{(j)})^{-1})$ is a one-dimensional $U_q(\Gaff)$-module.
If $c^2 \notin q^{\BZ}$, then in $K_0(\BGG)$
\begin{multline} \label{equ: positive TQ asy}
\quad [M_{c,a}^{(i)}] [\CW_{d,ad^2}^{(i)}] = [\CW_{dq_i,ad^2}^{(i)}] \prod_{j\in I_0: j\sim i} [\CW_{c_{ij}^{-1},aq_{ij}^{-1}c_{ij}^{-2}}^{(j)}] \\
+ [D_i][\CW_{d\hat{q}_i^{-1},ad^2}^{(i)}] \prod_{j\in I_0: j \sim i}[\CW_{c_{ij}^{-1}q_{ij}^{-1},aq_{ij}^{-1}c_{ij}^{-2}}^{(j)}]
\end{multline}
with $D_i= L( \Bm_{c,a}^{(i)} \aBw_{d,ad^2}^{(i)} A_{i,a}^{-1} ( \aBw_ {d\hat{q}_i^{-1},ad^2}^{(i)} \prod_{ j \sim i} \aBw_{c_{ij}^{-1}q_{ij}^{-1},aq_{ij}^{-1}c_{ij}^{-2}}^{(j)} )^{-1})$ one-dimensional.
\end{theorem}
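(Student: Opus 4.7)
The plan is to verify both Grothendieck ring identities at the level of $q$-characters, which is enough thanks to the injectivity of $\chi_q$ from Proposition~\ref{prop: q-char ring}. The two essential inputs are Proposition~\ref{prop: asymptotic T}, giving the factorization $\chi_q(\CN_{c,a}^{(i)}) = \Bn_{c,a}^{(i)}\,\nqc(N_{i,a}^-)$ together with its companion $\chi_q(\CW_{d,a}^{(i)}) = \aBw_{d,a}^{(i)}\,\nqc(L_{i,a}^-)$ (from the construction of the asymptotic modules as limits of KR modules), and the two-term TQ relations already at hand: Corollary~\ref{cor: TQ negative} for negative prefundamentals, Theorem~\ref{thm: TQ} for positive ones.

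For Equation~\eqref{equ: negative TQ asy} I would start by applying $\chi_q$ to Corollary~\ref{cor: TQ negative}. Each factor has $\chi_q(L_{j,b}^-) = \Psi_{j,b}^{-1}\nqc(L_{j,b}^-)$, and the one-dimensional $D$ there has $\chi_q(D)$ equal to the explicit monomial in its statement, so dividing through by the common factor $\Bn_{i,a}^-\Psi_{i,a}^{-1}$ produces a purely normalized TQ identity
\begin{gather*}
\nqc(N_{i,a}^-)\nqc(L_{i,a}^-) = \nqc(L_{i,a\hat{q}_i^2}^-)\prod_{j\sim i}\nqc(L_{j,aq_{ij}}^-) + A_{i,a}^{-1}\nqc(L_{i,aq_i^{-2}}^-)\prod_{j\sim i}\nqc(L_{j,aq_{ij}^{-1}}^-).
\end{gather*}
Multiplying by $\Bn_{c,a}^{(i)}\aBw_{d,a}^{(i)}$ turns the left-hand side into $\chi_q(\CN_{c,a}^{(i)})\chi_q(\CW_{d,a}^{(i)})$. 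On the right, the first term equals $\chi_q(\CW_{d\hat{q}_i,a\hat{q}_i^2}^{(i)})\prod_{j\sim i}\chi_q(\CW_{c_{ij}^{-1},aq_{ij}}^{(j)})$ as soon as one verifies the elementary $\ell$-weight identity $\Bn_{c,a}^{(i)}\aBw_{d,a}^{(i)} = \aBw_{d\hat{q}_i,a\hat{q}_i^2}^{(i)}\prod_{j\sim i}\aBw_{c_{ij}^{-1},aq_{ij}}^{(j)}$, which reduces via Example~\ref{ss: l-weights} to $\aBw_{\hat{q}_i,a\hat{q}_i^2}^{(i)}\aBw_{d,a}^{(i)} = \aBw_{d\hat{q}_i,a\hat{q}_i^2}^{(i)}$ and is a direct computation using $[c]_i[c']_i = [cc']_i$. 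The second term then \emph{defines} $\chi_q(D_i^-)$ as in the statement; a short direct computation shows that all $\Psi_{j,b}$-factors in the numerator $\Bn_{c,a}^{(i)}\aBw_{d,a}^{(i)}A_{i,a}^{-1}$ cancel against those in $\aBw_{dq_i^{-1},aq_i^{-2}}^{(i)}\prod_{j\sim i}\aBw_{c_{ij}^{-1}q_{ij}^{-1},aq_{ij}^{-1}}^{(j)}$, so $\chi_q(D_i^-) \in \CP$ and Lemma~\ref{lem: simple O}~(2) yields one-dimensionality of $D_i^-$.

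Equation~\eqref{equ: positive TQ asy} follows the same template, but with Theorem~\ref{thm: TQ} in place of Corollary~\ref{cor: TQ negative} and with an asymptotic $q$-character formula for $M_{c,a}^{(i)}$ analogous to that for $\CN_{c,a}^{(i)}$; the hypothesis $c^2 \notin q^{\BZ}$ is precisely what ensures such a formula is available and that $M_{c,a}^{(i)}$ retains the expected factored $q$-character. The key $\ell$-weight identity becomes $\Bm_{c,a}^{(i)}\aBw_{d,ad^2}^{(i)} = \aBw_{dq_i,ad^2}^{(i)}\prod_{j\sim i}\aBw_{c_{ij}^{-1},aq_{ij}^{-1}c_{ij}^{-2}}^{(j)}$, which again follows from Example~\ref{ss: l-weights} by telescoping $\Psi$-ratios, and $D_i$ turns out to be one-dimensional by the same cancellation mechanism. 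The main obstacle in the whole argument is establishing the asymptotic $q$-character formula for $M_{c,a}^{(i)}$, since Proposition~\ref{prop: asymptotic T} is stated only for $\CN_{c,a}^{(i)}$; this should parallel the limit construction of Section~\ref{sec: asym} but adapted to the ``positive'' side, and once it is in place the rest is bookkeeping on the multiplicative group $\lCP$.
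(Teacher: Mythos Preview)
Your argument for Equation~\eqref{equ: negative TQ asy} is correct and essentially identical to the paper's: reduce Corollary~\ref{cor: TQ negative} to a normalized $q$-character identity, multiply through by $\Bn_{c,a}^{(i)}\aBw_{d,a}^{(i)}$, and verify the $\ell$-weight bookkeeping. The one-dimensionality of $D_i^-$ is checked exactly as you describe.

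For Equation~\eqref{equ: positive TQ asy}, however, the proposed parallel breaks down. You suggest replacing Corollary~\ref{cor: TQ negative} by Theorem~\ref{thm: TQ} (the positive TQ relation) and using an asymptotic formula for $M_{c,a}^{(i)}$ analogous to Proposition~\ref{prop: asymptotic T}. Two things go wrong. First, by Lemma~\ref{lem: positive pre char} the normalized characters $\nqc(L_{j,b}^+)$ are merely classical characters $\chi(L_{j,1}^+)$, so the normalized form of Theorem~\ref{thm: TQ} involves only classical characters; but every $\CW$ appearing in \eqref{equ: positive TQ asy} has $\nqc(\CW_{e,b}^{(j)}) = \nqc(L_{j,b}^-)$, a genuinely infinite series in the $A^{-1}$'s. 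No multiplication by a highest $\ell$-weight converts one into the other, so the positive relation cannot be the template. Second, the hoped-for analogue of Proposition~\ref{prop: asymptotic T} does not exist in the same shape: the actual formula is
\[
\chi_q(M_{c,a}^{(i)}) = \Bm_{c,a}^{(i)}(1+A_{i,a}^{-1})\prod_{j\sim i}\nqc(L_{j,aq_{ij}^{-1}c_{ij}^{-2}}^-),
\]
where the spectral parameters in the normalized part depend on $c$; this is not of the form $\Bm_{c,a}^{(i)}\times(\text{something $c$-independent})$ and so is not the output of a limit construction parallel to Section~\ref{sec: asym}.

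The paper avoids this by a different route: it uses the duality functor $\SG^*$ of Lemma~\ref{lem: duality by permutation}, which swaps positive and negative prefundamentals and (for $c^2\notin q^{\BZ}$, $d\notin\pm q^{\BZ}$) identifies $M_{c,a}^{(i)}$ with an $\CN'^{(M+N-i)}$ and each $\CW_{e,b}^{(i)}$ with a $\CW'^{(M+N-i)}$ in category $\BGG'$. Applying $\SG^{*-1}$ to the already-established identity \eqref{equ: negative TQ asy} in $K_0(\BGG')$ yields \eqref{equ: positive TQ asy} in $K_0(\BGG)$; the resulting $q$-character formula for $M_{c,a}^{(i)}$ then removes the auxiliary genericity condition on $d$. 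So the missing idea is not a new limit construction but the use of $\SG^*$ to convert the negative relation into the positive one.
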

The advantage of Equation \eqref{equ: positive TQ asy} over \eqref{equ: negative TQ asy} is that for fixed $j \in I_0$ the spectral parameter $a$ in $\CW_{c,a}^{(j)}$ is also fixed. This is crucial in deriving BAE in Section \ref{sec: Baxter}. 
\begin{proof}
$D_i^-$ is one-dimensional by the formulas in Example \ref{ss: l-weights}:
\begin{align*}
\Bn_{c,a}^{(i)} \aBw_{d,a}^{(i)} A_{i,a}^{-1} &\equiv \left(\frac{\Psi_{i,a}}{\Psi_{i,a\hat{q}_i^2}} \prod_{j \sim i} \frac{\Psi_{j,aq_{ij}c_{ij}^{2}}}{\Psi_{j,aq_{ij}}}\right) \times  \frac{\Psi_{i,ad^{-2}}}{\Psi_{i,a}} \times \left(\frac{\Psi_{i,aq_i^{-2}}}{\Psi_{i,a\hat{q}_i^2}} \prod_{j \sim i} \frac{\Psi_{j,aq_{ij}^{-1}}}{\Psi_{j,aq_{ij}}}\right)^{-1} \\
 &\equiv \frac{\Psi_{i,ad^{-2}}}{\Psi_{i,aq_i^{-2}}} \prod_{ j \sim i} \frac{\Psi_{j,aq_{ij}c_{ij}^{2}}}{\Psi_{j,aq_{ij}^{-1}}} \equiv \aBw_{dq_i^{-1},aq_i^{-2}}^{(i)} \prod_{j\in I_0: j \sim i}\aBw_{c_{ij}^{-1}q_{ij}^{-1},aq_{ij}^{-1}}^{(j)}.
\end{align*}
Dividing the $q$-characters of both sides of \eqref{equ: negative TQ asy} by $\Bn_{c,a}^{(i)} \aBw_{d,a}^{(i)}$, we obtain the normalized $q$-characters of \eqref{equ: TQ negative} by Lemma \ref{lem: negative pre char} and Proposition \ref{prop: asymptotic T}. This proves \eqref{equ: negative TQ asy}. For \eqref{equ: positive TQ asy}, let us assume first $d \notin \pm q^{\BZ}$.

 As in Table \eqref{tab: comparison}, let $\CN_{c,a}'^{(i)}, \CW_{c,a}'^{(i)}$ be the corresponding $U_q(\Gafft)$-modules in category $\BGG'$. Since $c^2, \pm d \notin q^{\BZ}$, by Lemma \ref{lem: negative pre char}, Proposition \ref{prop: asymptotic T} and Lemma \ref{lem: duality by permutation}, 
$ \SG^*(M_{c,a}^{(i)}) \simeq \CN_{c,aq^{N-M}}'^{(M+N-i)}$ and $\SG^*(\CW_{c,a}^{(i)}) \simeq \CW_{c^{-1},ac^{-2}q^{N-M}}'^{(M+N-i)}$ as irreducible $U_q(\Gafft)$-modules in category $\BGG'$.
Applying $\SG^{*-1}$ to \eqref{equ: negative TQ asy} in $K_0(\BGG')$ gives \eqref{equ: positive TQ asy}. The $\ell$-weight of $D_i$ is fixed similarly as in the proof of Corollary \ref{cor: TQ negative}. This implies 
$$ \chi_q(M_{c,a}^{(i)}) = \Bm_{c,a}^{(i)} (1+ A_{i,a}^{-1})  \prod_{j\in I_0: j \sim i} \nqc(L_{j,aq_{ij}^{-1}c_{ij}^{-2}}^-), $$
from which follows \eqref{equ: positive TQ asy} for arbitrary $c \in \BC^{\times}$.
\end{proof}
One can give an alternative proof to Equation \eqref{equ: positive TQ asy}, by slightly modifying that of Theorem \ref{thm: TQ}; see a closer situation in \cite[Theorem 6.1]{Z6}. This approach is independent of Theorem \ref{thm: Demazure T} and the results in Sections \ref{sec: cyclicity}--\ref{sec: proof T}.  

\section{Cyclicity of tensor products} \label{sec: cyclicity}
We provide a criteria for a tensor product of Kirillov--Reshetikhin modules to be of highest $\ell$-weight, which is needed to prove Theorem \ref{thm: Demazure T} and Proposition \ref{prop: asymptotic T}.

For $i,j \in \BZ_{>0}$ let us define the $q$-segment 
\begin{equation*} 
\CS(i,j) := \{ q^{-i-j+2r}\ |\ 0 \leq r < \min(i,j) \} \subset \BC^{\times}.
\end{equation*} 
It is $q^{j-i}\Sigma(i,j)^{-1}$ in \cite[Section 5]{Z4} and is symmetric in $i,j$.
\begin{theorem} \label{thm: tensor KR even}
Let $s \in \BZ_{>0}$. For $1\leq l \leq s$ let $1\leq i_l \leq M$ and $(m_l, a_l) \in \BZ_{>0}\times \BC^{\times}$. The $U_q(\Gaff)$-module $W_{m_1,a_1}^{(i_1)} \otimes W_{m_2,a_2}^{(i_2)} \otimes \cdots \otimes W_{m_s,a_s}^{(i_s)}$ is of highest $\ell$-weight if 
\begin{equation} \label{cond: cyclicity 1}
\frac{a_j}{a_k} \notin \bigcup_{p=1}^{m_j} q^{2p-2m_k} \CS(i_j,i_k)\quad \mathrm{for}\ 1\leq j < k \leq s.
\end{equation}
\end{theorem}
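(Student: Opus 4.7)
The plan is to proceed by induction on $s$, reducing the statement to a two-factor cyclicity result and then establishing the two-factor case via the evaluation description of even-node Kirillov--Reshetikhin modules.

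The inductive step proceeds as follows. Set $V := W_{m_1,a_1}^{(i_1)}$ and $V' := W_{m_2,a_2}^{(i_2)} \otimes \cdots \otimes W_{m_s,a_s}^{(i_s)}$. The pairwise conditions in (\ref{cond: cyclicity 1}) indexed by $2 \leq j < k \leq s$ are precisely those required for the inductive hypothesis on $V'$, so $V'$ is already generated by its highest $\ell$-weight vector $\omega' := \omega_2 \otimes \cdots \otimes \omega_s$. Invoking the triangular property of the coproduct on the Drinfeld generators recalled in Section \ref{subsec: quantum superalgebra} (the upper-triangular form of $\Delta(x_i^+(z))$ and lower-triangular form of $\Delta(x_i^-(z))$), a standard filtration argument reduces the cyclicity of $V \otimes V'$ to the cyclicity of each pairwise tensor product $V \otimes W_{m_k,a_k}^{(i_k)}$ for $k \geq 2$, together with the already-known cyclicity of $V'$. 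Thus it suffices to prove the theorem in the case $s=2$.

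For the two-factor case, I would use Corollary \ref{cor: KR evaluation}, which identifies $W_{m,a}^{(i)}$ for $i \leq M$ with the pullback $V_q^+(m\varpi_i; aq^{M-N-i})$ of the rectangular $U_q(\Glie)$-module $V_q(m\varpi_i)$ along the evaluation morphism $\ev_a^+$. The tensor product $W_{m_1,a_1}^{(i_1)} \otimes W_{m_2,a_2}^{(i_2)}$ thereby becomes the restriction of a tensor product of two evaluation modules via $\ev_{a_1}^+ \otimes \ev_{a_2}^+$ followed by the coproduct. Cyclicity is equivalent to the non-vanishing, at the specialization $(a_1,a_2)$, of the normalized fusion operator built from the Perk--Schultz $R$-matrix $R(a_1,a_2)$ of Section \ref{subsec: quantum superalgebra} acting between $V_q(m_1\varpi_{i_1})$ and $V_q(m_2\varpi_{i_2})$. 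An explicit computation using the Young tableau basis of Theorem \ref{thm: q-char MA} for rectangular $U_q(\Glie)$-modules, together with the explicit entries of the Perk--Schultz matrix, identifies the singular values of this fusion operator as exactly those $a_1/a_2$ lying in $\bigcup_{p=1}^{m_1} q^{2p-2m_2}\CS(i_1,i_2)$. Under (\ref{cond: cyclicity 1}) the operator is therefore regular and the tensor product is of highest $\ell$-weight.

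The main obstacle is this last polar computation in the super setting. In the non-super case $N=0$ the result is classical, going back to Akasaka--Kashiwara and Chari, and a precursor in the super case was established in \cite{Z4}. Here one must track the parities $|\epsilon_i|$ and the shifts $\tau_i, \theta_j$ defined at the start of Section \ref{sec: basics} carefully through every $R$-matrix factor. The restriction $i_l \leq M$ is essential: it forces the rectangular Young diagrams of Definition \ref{def: Young} associated with $V_q(m_l \varpi_{i_l})$ to have only rows of length at most $M$, so all relevant highest-weight vectors and intertwining maps stay in the purely even sector where one avoids complications from odd reflections and atypical $U_q(\Glie)$-weights; this is what makes the $q$-segments $\CS(i_j,i_k)$ appear as the resonance set, in direct parallel with the non-super case.
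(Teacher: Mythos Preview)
Your reduction of the inductive step to pairwise cyclicity is the central gap. The triangular form of $\Delta(x_i^{\pm}(z))$ does \emph{not} give a filtration that reduces cyclicity of $V\otimes V'$ to cyclicity of each $V\otimes W_{m_k,a_k}^{(i_k)}$; when you act on $v_1\otimes \omega'$ to try to reach $v_1^{\mathrm{low}}\otimes\omega'$, the coproduct produces contributions from \emph{all} tensor factors of $V'$ simultaneously, and there is no mechanism to split these into independent two-factor problems. What is actually needed (and what the paper does) is to find elements of specific diagram subalgebras that lower $v_1$ to its lowest $\ell$-weight vector while preserving the highest $\ell$-weight line in $V'$; one then invokes Lemma~\ref{lem: Chari} and the inductive hypothesis on $s$. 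The paper achieves this by a double induction on $(M,s)$: it restricts to $U_3\cong U_q(\widehat{\mathfrak{gl}(M-1|N)})$ to handle the even part of the descent, then to a $U_q(\widehat{\mathfrak{gl}(1|m_1)})$-subalgebra via $\vartheta_{m_1}$ to cross the odd wall, with the base case $M=1$ treated separately in Lemma~\ref{lem: cyclicity M=1} using Weyl modules over $U_q(\widehat{\mathfrak{gl}(1|p)})$. None of this structure is visible from the coproduct triangularity alone.

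Your proposed treatment of the case $s=2$ via fusion operators and pole analysis of the $R$-matrix is a genuinely different strategy from the paper's, and it could in principle work, but it is not carried out: you assert that the singular locus is exactly $\bigcup_{p=1}^{m_1} q^{2p-2m_2}\CS(i_1,i_2)$ without computation. In the super setting this is nontrivial, and your justification for why it should go through is incorrect. The claim that the restriction $i_l\le M$ keeps everything ``in the purely even sector'' is false: the tableaux in $\CB_-(m\varpi_i)$ of Definition~\ref{def: Young} take values in all of $I=\{1,\dots,M+N\}$, including odd indices, so the KR modules $W_{m,a}^{(i)}$ for $i\le M$ have odd weight vectors and the intertwiners do not sit inside a $U_q(\widehat{\mathfrak{gl}_M})$-computation. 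The role of the hypothesis $i_l\le M$ in the paper's argument is rather that $W_{m,a}^{(i)}$ is an evaluation module $V_q^+(m\varpi_i;\cdot)$ with $m\varpi_i\in\mathcal P$, which is what makes the restriction to the subalgebras $U_3$ and $U_q(\widehat{\mathfrak{gl}(1|p)})$ land again on KR-type and Weyl modules.
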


The idea is similar to \cite{Z3,Z4}, which in turn was inspired by \cite{Chari}, by restricting to diagram subalgebras. Let $A, B$ be Hopf superalgebras and let $\iota: A \longrightarrow B$ be a morphism of superalgebras. If $W$ is a $B$-module and $W'$ is a sub-$A$-module of the $A$-module $\iota^*(W)$, then let $\iota^{\bullet}(W')$ denote the $A$-module structure on $W'$.

For $1\leq p \leq 3$, define the quantum affine superalgebra $U_p$ with RTT generators $s_{ij;p}^{(n)}, t_{ij;p}^{(n)}$ and the superalgebra morphism $\iota_p: U_p \longrightarrow U_q(\Gaff)$ as follows:
$U_1 := U_q(\widehat{\mathfrak{gl}(1|1)}),\ U_2 := U_{q^{-1}}(\widehat{\mathfrak{gl}(1|1)})$ and $U_3 := U_q(\widehat{\mathfrak{gl}(M-1|N)})$, so that in $s_{ij;p}^{(n)}, t_{ij;p}^{(n)}$ we understand either $(1\leq i,j,p \leq 2)$ or $(1\leq i,j < M+N,\ p = 3)$;
\begin{eqnarray*}
&\iota_1: U_1 \longrightarrow U_q(\Gaff),\quad  & s_{ij;1}^{(n)} \mapsto s_{i'j'}^{(n)},\quad  t_{ij;1}^{(n)} \mapsto t_{i'j'}^{(n)}; \\
&\iota_2: U_2 \longrightarrow U_q(\Gaff),\quad  & s_{ij;2}^{(n)} \mapsto h(\overline{s}_{i'j'}^{(n)}),\quad  t_{ij;2}^{(n)} \mapsto  h(\overline{t}_{i'j'}^{(n)}); \\
&\iota_3: U_3 \longrightarrow U_q(\Gaff),\quad   & s_{ij;3}^{(n)} \mapsto s_{i+1,j+1}^{(n)},\quad  t_{ij;3}^{(n)} \mapsto t_{i+1,j+1}^{(n)}.
\end{eqnarray*}
Here $h$ is the involution in Equation \eqref{iso:involution} and $1' = 1,\ 2' = M+N$.

\begin{lem} \label{lem: Chari} \cite[Lemma 3.7]{Z3}
The tensor product of a lowest $\ell$-weight $U_q(\Gaff)$-module with a highest $\ell$-weight module is generated, as a $U_q(\Gaff)$-module, by a tensor product of a lowest $\ell$-weight vector with a highest $\ell$-weight vector.
\end{lem}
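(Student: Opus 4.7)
The plan is to exploit the triangular form of the coproduct on the Drinfeld generators, given by \eqref{coproduct: affine Cartan}--\eqref{coproduct: positive} and its negative analog, together with the Drinfeld realization of Remark \ref{rem: two Gauss decompositions}. Let $v$ be a lowest $\ell$-weight vector of $V$ and $w$ a highest $\ell$-weight vector of $W$; set $\tilde{U} := U_q(\Gaff)(v \otimes w) \subseteq V \otimes W$ and aim for $\tilde{U} = V \otimes W$. Let $U^\pm \subset U_q(\Gaff)$ denote the subalgebras generated by the $x_{i,n}^\pm$ and $U^0$ the subalgebra generated by the coefficients of the $K_l^\pm(z)$, so that $U_q(\Gaff) = U^- U^0 U^+$ by PBW.

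A preliminary step is to translate the lowest/highest $\ell$-weight hypotheses into Drinfeld annihilation statements. Using the Gauss decomposition of $S(z), T(z)$ in Remark \ref{rem: two Gauss decompositions}, the condition $s_{ij}(z) v = t_{ij}(z) v = 0$ for $i > j$ is equivalent to $x_{i,n}^- v = 0$ for all $i \in I_0, n \in \BZ$, and dually $x_{i,n}^+ w = 0$. Writing any weight-$\alpha$ homogeneous element of $U_q(\Gaff)$ in PBW form with its $U^+$-factor on the right, one deduces the stronger vanishings $U_q(\Gaff)_{-\alpha} v = 0 = U_q(\Gaff)_{\alpha} w$ for every nonzero $\alpha \in \BQ^+$.

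Equipped with these, the triangularity \eqref{coproduct: positive} extends by induction on product length to $\Delta(X_+) - X_+ \otimes 1 \in \sum_{0 \neq \alpha \in \BQ^+} U_q(\Gaff) \otimes U_q(\Gaff)_\alpha$ for $X_+ \in U^+$, and symmetrically $\Delta(X_-) - 1 \otimes X_- \in \sum_{0 \neq \alpha \in \BQ^+} U_q(\Gaff)_{-\alpha} \otimes U_q(\Gaff)$ for $X_- \in U^-$. Combined with the annihilations from the previous paragraph, this gives $\Delta(X_+)(v \otimes w) = X_+ v \otimes w$ and $\Delta(X_-)(v \otimes w) = v \otimes X_- w$, hence the base inclusions $(U^+ v) \otimes \BC w \subseteq \tilde{U}$ and $\BC v \otimes (U^- w) \subseteq \tilde{U}$.

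To conclude I would run an upward induction on the weight $\beta \in \BQ^+$ of the $V$-factor. For $X_+ \in U^+$ of weight $\beta$ and $X_- \in U^-$ arbitrary, triangularity yields
\begin{equation*}
\Delta(X_+)(v \otimes X_- w) = X_+ v \otimes X_- w + \sum_{0 \neq \alpha \in \BQ^+} u'_\alpha \cdot v \otimes u''_\alpha \cdot (X_- w),
\end{equation*}
with each $u'_\alpha \cdot v \in V$ of weight $\wt(v) + \beta - \alpha$, strictly less than $\wt(v) + \beta$. The left-hand side lies in $\tilde{U}$ because $v \otimes X_- w$ does, and every summand on the right lies in $\tilde{U}$ by the inductive hypothesis applied to $u'_\alpha \cdot v$ in place of $X_+ v$; hence $X_+ v \otimes X_- w \in \tilde{U}$. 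Since $V = U^+ v$ and $W = U^- w$, such tensors span $V \otimes W$, proving the lemma. The main obstacle is the bookkeeping of the weight filtration and the $\super$-signs when iterating the coproduct formulas, both routine given Remark \ref{rem: two Gauss decompositions}.
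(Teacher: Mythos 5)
Your argument is correct and is essentially the standard Chari-type cyclicity argument that the cited reference \cite[Lemma 3.7]{Z3} uses: translate the lowest/highest $\ell$-weight conditions into Drinfeld annihilation conditions, upgrade them to the vanishing of whole weight spaces of the algebra via the triangular decomposition, and then run the weight induction using the triangularity of $\Delta$ on $x_i^{\pm}(z)$. One small correction: to deduce $U_q(\Gaff)_{-\alpha}v=0$ you must write elements of $U_q(\Gaff)_{-\alpha}$ with the $U^-$-factor on the \emph{right} (so that the necessarily non-trivial negative factor hits the lowest $\ell$-weight vector $v$ first); the ordering with $U^+$ on the right is the one needed for $U_q(\Gaff)_{\alpha}w=0$. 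Both orderings are available from the relations in Remark \ref{rem: two Gauss decompositions}, so this is a one-line fix and the rest of the induction goes through as you describe.
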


Let $1\leq p \leq 2$. We recall the notion of {\it Weyl module} over $U_p$  from \cite{Z4}. Let $f(z) \in \BC(z)$ be a product of the $c \frac{1-za}{1-zac^2}$ with $a,c \in \BC^{\times}$ and let $P(z) \in 1+z\BC[z]$ be such that $\frac{P(z)}{f(z)} \in \BC[z]$. The Weyl module $\WW_p(f;P)$ is the $U_p$-module generated by a highest $\ell$-weight vector $w$ of even parity such that 
$$ s_{11;p}(z) w = f(z) w = t_{11;p}(z) w,\quad s_{22;p}(z)w = w = t_{22;p}(z)w,$$
 and $\frac{P(z)}{f(z)} s_{21;p}(z) w$, as a formal power series in $z$ with coefficients in $\WW_p(f;P)$, is a polynomial in $z$ of degree $\leq \deg P$. Given another pair $(g,Q)$, if the polynomials $\frac{P(z)}{f(z)}$ and $Q(z)$ are co-prime, then $\WW_p(f;P) \otimes \WW_p(g;Q)$ is a quotient of $\WW_p(fg;PQ)$ and is of highest $\ell$-weight; see \cite[Proposition 14]{Z4}. 

\begin{example} \label{example: Weyl gl(1|1)}
 In the situation of Theorem \ref{thm: tensor KR even}, fix $v_l \in W_{m_l,a_l}^{(i_l)}$ a highest $\ell$-weight vector. Let $W_p$ be the sub-$U_p$-module of $\iota_p^*(\otimes_{l=1}^s W_{m_l,a_l}^{(i_l)})$ generated by $\otimes_{l=1}^sv_l$. Then $\iota_p^{\bullet}(W_p)$ is a quotient of the Weyl module $\WW_p$ for $1\leq p \leq 2$ where
 \begin{align*}
 & \WW_1 := \WW_1\left(\prod_{l=1}^s \frac{q^{m_l}-za_l q^{M-N-i_l-m_l}}{1-za_lq^{M-N-i_l}};\ \prod_{l=1}^s(1-za_lq^{M-N-i_l-2m_l})\right),\\
 & \WW_2 := \WW_2\left(\prod_{l=1}^s \frac{q^{-m_l}-za_lq^{N-M+i_l-m_l}}{1-za_lq^{N-M+i_l-2m_l}};\ \prod_{l=1}^s (1-za_lq^{N-M+i_l}) \right).
 \end{align*}
 $\iota_3^{\bullet}(W_3)$ is the tensor product $\otimes_{l=1}^s W_{m_l,a_l}^{3(i_l-1)}$ of KR modules over $U_3$.
The proof is the same as \cite[Lemmas 18 \& 20]{Z4}, based on Corollary \ref{cor: KR evaluation}.
\end{example}
 
For $p \in \BZ_{>0}$, let $\Glie_p := \mathfrak{gl}(1|p)$ and let $U_q(\Gaff_p)$ be the quantum affine superalgebra with RTT generators $s_{ij|p}^{(n)}, t_{ij|p}^{(n)}$ for $1\leq i,j \leq p+1$. Similarly $U_{q^{-1}}(\Gaff_p)$ with RTT generators $\overline{s}_{ij|p}^{(n)}, \overline{t}_{ij|p}^{(n)}$ and the involution $h_p: U_{q^{-1}}(\Gaff_p) \longrightarrow U_{q}(\Gaff_p)$ are defined. For $1\leq p \leq N$, the following extends uniquely to a superalgebra morphism 
\begin{equation}  \label{equ: auxiliary diagram subalg}
\vartheta_p: U_q(\Gaff_p) \longrightarrow U_q(\Gaff),\quad s_{ij|p}^{(n)} \mapsto s_{i'j'}^{(n)},\quad t_{ij|p}^{(n)} \mapsto t_{i'j'}^{(n)}
\end{equation}
where $1' =  1$ and $i' = M+N-p-1+i$ for $2\leq i \leq p+1$.

\begin{defi} \label{def: Weyl gl(1|p)}
Let $s \in \BZ_{>0}$ and $(m_l,a_l) \in \BZ_{>0} \times \BC^{\times} $ for $1\leq l \leq s$. The {\it Weyl module} $\WW^p(\prod_{l=1}^s \varpi_{m_l,a_l})$ is the $U_q(\Gaff_p)$-module generated by a highest $\ell$-weight vector $w$ of even parity such that for $2\leq j \leq p+1$,
\begin{align*}
 s_{11|p}(z) w &= w \prod_{l=1}^s \frac{q^{m_l}-za_lq^{-p-m_l}}{1-za_lq^{-p}} = t_{11|p}(z) w, \\
 h_p(\overline{s}_{11|p}(z)) w &= w \prod_{l=1}^s \frac{q^{-m_l}-za_lq^{p-m_l}}{1-za_lq^{p-2m_l}} = h_p(\overline{t}_{11|p}(z)) w, \\
 s_{jj|p}(z) w & = t_{jj|p}(z) w = h_p(\overline{s}_{jj|p}(z)) w = h_p(\overline{t}_{jj|p}(z)) w = w,
\end{align*}
and the following vector-valued polynomials in $z$ are of degree $\leq s$:
$$ \prod_{l=1}^s (1-za_lq^{-p}) \times s_{j1|p}(z) w,\quad \prod_{l=1}^s (1-za_lq^{p-2m_l}) \times h_p(\overline{s}_{j1|p}(z)) w. $$
Let $L^p(\prod_{l=1}^s \varpi_{m_l,a_l})$ denote its irreducible quotient of $\WW^p(\prod_{l=1}^s \varpi_{m_l,a_l})$.
\end{defi}

\begin{example} \label{example: Weyl gl(1|p)}
Let $1\leq p \leq N$. In Example \ref{example: Weyl gl(1|1)}, let $W^p$ be the sub-$U_q(\Gaff_p)$-module of $\vartheta_p^*(\otimes_{l=2}^s W_{m_l,a_l}^{(i_l)})$ generated by $\otimes_{l=2}^sv_l$. Then $\vartheta_p^{\bullet}(W^p)$ is a quotient of the Weyl module 
$\WW^p\left(\prod_{l=2}^s \varpi_{m_l,a_lq^{M-N-i_l+p}} \right)$ over $U_q(\Gaff_p)$.
\end{example}

\begin{example} \label{example: reduction to gl(1|p)}
Suppose $m_1 \leq N$ and take $p = m_1$. In $W_{m_1,a_1}^{(i_1)}$ there is a non-zero vector $v_1^{1}$ whose $\ell$-weight corresponds to the tableau $T_1^1 \in \CB_-(m_1\varpi_{i_1})$ such that: $T_1^1(-i_1,-j) = 1$ for $1\leq j \leq m_1$ and $T_1^1(-i,-j) = N+M-j+1$ for $1\leq i < i_1$ and $1\leq j \leq m_1$. Let $X$ be the sub-$U_q(\Gaff_{m_1})$-module of $\vartheta_{m_1}^*(W_{m_1,a_1}^{(i_1)})$ generated by $v_1^1$. By comparing the character formulas in Remark \ref{rem: BKK} we see that the $U_q(\Gaff_{m_1})$-module $\vartheta_{m_1}^{\bullet}(X) $  is irreducible and in terms of evaluation modules: 
\begin{align*}
 \vartheta_{m_1}^{\bullet}(X) &\cong V_q^+(m_1\epsilon_1+\sum_{j=1}^{m_1} (i_1-1)\epsilon_{j+1}; a_1q^{M-N-i_1})  \\
 &\simeq V_q^+((m_1+i_1-1)\epsilon_1;a_1q^{M-N+i_1-2}) \cong V_q^-((m_1+i_1-1)\epsilon_1;a_1q^{M-N-i_1}) \\
 &\cong L^{m_1}(\varpi_{m_1+i_1-1, a_1q^{M-N+i_1+m_1-2}}).
\end{align*}
Let $v_1^2 $ be a lowest $\ell$-weight vector of the $U_q(\Gaff_{m_1})$-module $\vartheta_{m_1}^{\bullet}(X)$. Then $v_1^2$ corresponds to the tableau $T_1^2 \in \CB_-(m_1\varpi_{i_1})$ such that $T_1^2(-i,-j) = N+M-j+1$ for $1\leq i \leq i_1$ and $´\leq j \leq m_1$; it is a lowest $\ell$-weight vector of the $U_q(\Gaff)$-module $W_{m_1,a_1}^{(i_1)}$.
Notice that $s_{ij}^{(n)} X = 0$ if $2\leq j \leq M+N-m_1$. Combining with Example \ref{example: Weyl gl(1|p)}, we observe that  $X \otimes W^{m_1}$ is stable by $\vartheta_{m_1}(U_q(\Gaff_{m_1}))$ and the identity map is an isomorphism of $U_q(\Gaff_{m_1})$-modules $\vartheta_{m_1}^{\bullet}(X\otimes W^{m_1}) \cong \vartheta_{m_1}^{\bullet}(X) \otimes \vartheta_{m_1}^{\bullet}(W^{m_1})$.
\end{example}
\begin{lem} \label{lem: cyclicity M=1}
Let $p, s \in \BZ_{>0}$ and let $(m_l,a_l) \in \BZ_{>0} \times \BC^{\times}$ for $1\leq l \leq s$. Assume $m_1 \geq p$. The $U_q(\Gaff_p)$-module $L^p(\varpi_{m_1,a_1}) \otimes \WW^p(\prod_{l=2}^s \varpi_{m_l,a_l})$ is of highest $\ell$-weight if $a_1 \neq a_l q^{2t-2m_l-2}$ for $2\leq l \leq s$ and $1\leq t \leq p$.
\end{lem}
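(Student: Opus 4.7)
The plan is to follow the Chari-type cyclicity strategy used in \cite{Z4}, adapted to the $\mathfrak{gl}(1|p)$ diagram subalgebra. Let $v$ and $v_-$ denote the highest and lowest $\ell$-weight vectors of the finite-dimensional irreducible module $L^p(\varpi_{m_1,a_1})$, and let $w$ be the highest $\ell$-weight vector of $\WW := \WW^p(\prod_{l=2}^s \varpi_{m_l, a_l})$. The hypothesis $m_1 \geq p$ ensures, via the branching formula \cite[(4.1)--(4.2)]{BKK} as in the proof of Corollary \ref{cor: KR evaluation}, that the lowest weight of $L^p(\varpi_{m_1,a_1})$ is
\[ \mu_- := (m_1-p)\epsilon_1 + \epsilon_2 + \cdots + \epsilon_{p+1}, \]
in which all fermionic coordinates are populated exactly once. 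By Lemma \ref{lem: Chari} applied to the lowest $\ell$-weight module $L^p(\varpi_{m_1,a_1})$ and the highest $\ell$-weight module $\WW$, the tensor product is cyclically generated by $v_- \otimes w$; it therefore suffices to show $v_- \otimes w \in U_q(\Gaff_p)\cdot(v \otimes w)$.

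I would next produce a suitable element $Y \in U_q(\Gaff_p)$, constructed as an ordered monomial in the RTT generating series $s_{j1|p}(z)$ ($2 \leq j \leq p+1$) evaluated at specific scalars $z_k$, so that $Yv$ equals a nonzero scalar multiple of $v_-$ inside $L^p(\varpi_{m_1,a_1})$. Applying $\Delta(Y)$ to $v \otimes w$ and expanding by $\Delta(s_{j1|p}(z)) = \sum_r \epsilon_{j1r}\, s_{jr|p}(z) \otimes s_{r1|p}(z)$ together with the triangularity in \eqref{coproduct: positive}, one obtains
\[
\Delta(Y)(v \otimes w) \;=\; \chi(z_\bullet)\, v_- \otimes w \;+\; (\text{mixed terms } v' \otimes w' \text{ with } \mathrm{wt}(v')>\mu_-),
\]
where $\chi(z_\bullet)$ is the product of Cartan eigenvalues of the diagonal generators $s_{11|p}(z_k)$ on $w$, i.e.\ the evaluation of $\prod_{l=2}^s \frac{q^{m_l} - z_k a_l q^{-p-m_l}}{1 - z_k a_l q^{-p}}$ at the chosen $z_k$'s. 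A sharp combinatorial simplification is that the cross-terms $s_{jr|p}(z)v$ with $2 \leq r < j \leq p+1$ vanish identically on $v$: the candidate target weight $m_1 \epsilon_1 + \epsilon_j - \epsilon_r$ carries a negative $\epsilon_r$-coordinate and is therefore not a weight of the super-symmetric-power module $V_q(m_1\varpi_1)$. This leaves essentially two contributions in each factor of $\Delta(Y)$, making the expansion of $\Delta(Y)(v \otimes w)$ manageable.

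A descending induction on $\mathrm{wt}(v')-\mu_-$ then shows that all mixed terms already lie in $U_q(\Gaff_p)\cdot(v \otimes w)$, so that $\chi(z_\bullet)\, v_- \otimes w \in U_q(\Gaff_p)\cdot(v \otimes w)$. The hypothesis $a_1 \neq a_l q^{2t - 2m_l - 2}$ for $2 \leq l \leq s$ and $1 \leq t \leq p$ is exactly the condition guaranteeing that, at the specializations of the $z_k$'s dictated by the lowering path from $v$ to $v_-$ in $L^p(\varpi_{m_1,a_1})$ (values of the form $a_1^{-1}$ times a specific $q$-shift depending on $t$), no numerator zero of $\chi$ at $z_k = a_l^{-1}q^{p+2m_l}$ coincides with a lowering-path evaluation point, so that $\chi(z_\bullet) \neq 0$ and one may divide. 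The main technical obstacle will be making the lowering path and $Y$ explicit enough to identify precisely which $q$-shift indices $t$ enter the hypothesis; a compatible choice is suggested by the shape of $\mu_-$, which requires descending $p$ steps along the odd node $1$ (contributing the shifts $q^{2t}$ for $1 \leq t \leq p$) and one step along each of the remaining even nodes. Once this bookkeeping is done, Hopf-superalgebra signs and Cartan corrections track through routinely, and the induction closes.
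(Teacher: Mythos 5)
Your overall skeleton is right and matches the paper's endgame: reduce to showing $v_-\otimes w\in U_q(\Gaff_p)(v\otimes w)$ and then invoke Lemma \ref{lem: Chari} for the lowest-$\ell$-weight module $L^p(\varpi_{m_1,a_1})$ tensored with the highest-$\ell$-weight module $\WW^p$; your computation of the lowest weight $(m_1-p)\epsilon_1+\epsilon_2+\cdots+\epsilon_{p+1}$ (using $m_1\geq p$) is also correct. But the middle of your argument has a genuine gap, and it sits exactly where the super case is hard. First, your claim that the cross-terms $s_{jr|p}(z)$ with $2\leq r<j\leq p+1$ ``vanish identically'' is only valid when they act on the highest weight vector $v$ itself: the weights of $V_q(m_1\epsilon_1)$ are $(m_1-k)\epsilon_1+\sum_{j\in J}\epsilon_j$ with $J\subseteq\{2,\dots,p+1\}$, so once you have lowered $v$ partway along the path to $v_-$, the intermediate vectors have coordinate $1$ in some positions $r\geq 2$ and $s_{jr|p}(z)$ need not annihilate them. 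Second, and more seriously, the ``descending induction on $\mathrm{wt}(v')-\mu_-$ showing all mixed terms already lie in $U_q(\Gaff_p)(v\otimes w)$'' is precisely the crux of any Chari-type cyclicity proof; in the non-graded setting it is powered by braid-group/extremal-vector arguments that are unavailable here (the paper flags the smallness of the Weyl group symmetry as the essential difficulty of the super case), and you give no substitute. Third, the identification of the evaluation points $z_k$ and hence the derivation of the exact hypothesis $a_1\neq a_lq^{2t-2m_l-2}$, $1\leq t\leq p$, is explicitly deferred, so the condition is asserted rather than proved.

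The paper avoids all of this by a different mechanism: induction on $p$ through nested diagram subalgebras. One first restricts to $\vartheta_{p-1}(U_q(\Gaff_{p-1}))$ (dropping the node $2$); the key point is that $s_{i2|p}^{(n)}$ kills $X_1:=\vartheta_{p-1}(U_q(\Gaff_{p-1}))v$ for $i\neq 2$, so $X_1\otimes Y_1$ is a genuine tensor product of $U_q(\Gaff_{p-1})$-modules, and the induction hypothesis (with all spectral parameters shifted by $q^{-1}$) lowers $v$ to an intermediate vector $v_1'$ inside the cyclic span, consuming the conditions for $1\leq t\leq p-1$. One then restricts to the $\mathfrak{gl}(1|1)$-type subalgebra $\iota(U_2)=\iota(U_{q^{-1}}(\widehat{\mathfrak{gl}(1|1)}))$ obtained via the involution $h_p$, and the rank-one Weyl-module tensor product criterion of \cite[Proposition 14]{Z4} (a co-primeness condition) pushes $v_1'$ down to the true lowest $\ell$-weight vector $v_1''$, consuming the $t=p$ condition. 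Each step is thus a rank-one computation already established elsewhere, and no global ordered-monomial bookkeeping is needed. If you want to salvage your direct approach you would have to supply, at minimum, a proof that the mixed terms are in the cyclic span — which for this algebra essentially amounts to redoing the two-stage restriction argument anyway.
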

\begin{proof}
By induction on $p$: for $p = 1$ we are led to consider the tensor product 
\begin{align*}
\WW_1\left(\frac{q^{m_1}-za_1q^{-1-m_1}}{1-za_1q^{-1}}; 1-za_1q^{-1-2m_1}\right) \otimes \\
\WW_1\left(\prod_{l=2}^s \frac{q^{m_l}-za_lq^{-1-m_l}}{1-za_lq^{-p}}; \prod_{l=2}^s (1-za_lq^{-1-2m_l})\right)
\end{align*}
of Weyl modules over $U_1 = U_q(\Gaff_1)$, which is of highest $\ell$-weight if $a_1 \neq a_lq^{-2m_l}$ for $2\leq l\leq s$. Assume therefore $p > 1$. In Equation  \eqref{equ: auxiliary diagram subalg} let us take $(p, M, N)$ to be $(p-1, 1, p)$. This defines a superalgebra morphism  
$$ \vartheta_{p-1}: U_q(\Gaff_{p-1}) \longrightarrow U_q(\Gaff_p),\quad s_{ij|p-1}^{(n)} \mapsto s_{i'j'|p}^{(n)},\quad t_{ij|p-1}^{(n)} \mapsto t_{i'j'|p}^{(n)}   $$ 
where $1' = 1$ and $i' = i +1$ for $1< i \leq p$.
Let $v_1, w$ be highest $\ell$-weight vectors of the $U_q(\Gaff_p)$-modules $L^p(\varpi_{m_1,a_1})$  and $\WW^p(\prod_{l=2}^s\varpi_{m_l,a_l})$ respectively. Set 
$$X_1 := \vartheta_{p-1}(U_q(\Gaff_{p-1})) v_1,\quad Y_1 := \vartheta_{p-1}(U_q(\Gaff_{p-1})) w. $$
Using evaluation modules over $U_q(\Gaff_p)$ we have by Corollary \ref{cor: KR evaluation} and Definition \ref{def: Weyl gl(1|p)}:
$$L^p(\varpi_{m_1,a_1}) \cong V_q^+(m_1\epsilon_1; a_1q^{-p}) \cong V_q^-(m_1\epsilon_1;a_1q^{p-2m_1}). $$
It follows that $s_{i2|p}^{(n)} X_1 = 0$ if $i \neq 2$. This implies that $X_1\otimes Y_1$ is stable by $\vartheta_{p-1}(U_q(\Gaff_{p-1}))$ and the identity map is an isomorphism of $U_q(\Gaff_{p-1})$-modules:
$$ \vartheta_{p-1}^{\bullet} (X_1\otimes Y_1) \cong \vartheta_{p-1}^{\bullet}(X_1) \otimes \vartheta_{p-1}^{\bullet}(Y_1). $$
As in Example \ref{example: reduction to gl(1|p)}, the $U_q(\Gaff_{p-1})$-module $\vartheta_{p-1}^{\bullet}(X_1)$ is irreducible and isomorphic to $L^{p-1}(\varpi_{m_1,a_1q^{-1}})$. By Definition \ref{def: Weyl gl(1|p)}, $\vartheta_{p-1}^{\bullet}(Y_1)$ is a quotient of the Weyl module
$\WW^{p-1}(\prod_{l=2}^s \varpi_{m_l,a_lq^{-1}})$. The induction hypothesis applied to $p-1$ shows that $L^{p-1}(\varpi_{m_1,a_1q^{-1}}) \otimes  \WW^{p-1}(\prod_{l=2}^s \varpi_{m_l,a_lq^{-1}})$ and so $\vartheta_{p-1}^{\bullet}(X_1) \otimes \vartheta_{p-1}^{\bullet}(Y_1)$
are of highest $\ell$-weight. Let $v_1'$ be the lowest $\ell$-weight vector of the $U_q(\Gaff_{p-1})$-module $\vartheta_{p-1}^{\bullet}(X_1)$; it corresponds to the tableau $T \in \CB_-(m_1\epsilon_1)$ such that $T(-1,-j) = p+2-j$ for $1\leq j \leq p-1$ and $T(-1,-j) = 1$ for $p\leq j \leq m_1$. We have
$$(*)\quad\quad v_1' \otimes w \in \vartheta_{p-1}(U_q(\Gaff_{p-1}))(v_1\otimes w) = X_1 \otimes Y_1. $$

Notice that $s_{ij;2}^{(n)} \mapsto h_p(\overline{s}_{ij|p}^{(n)})$ and $t_{ij;2}^{(n)} \mapsto h_p(\overline{t}_{ij|p}^{(n)})$ extend uniquely to a superalgebra morphism $\iota: U_2 \longrightarrow U_q(\Gaff_p)$. Let $X_2 := \iota(U_2) v_1'$ and $Y_2 := \iota(U_2) w$. The identification $L^p(\varpi_{m_1,a_1})\cong V_q^-(m_1\epsilon_1;a_1q^{p-2m_1})$ gives $X_2 := \BC v_1' + \BC v_1''$ where $v_1''$ is a lowest $\ell$-weight vector of $L^p(\varpi_{m_1,a_1})$. This implies $h_p(\overline{s}_{ij|p}^{(n)}) X_2 = 0$ if $i \notin \{1,2\}$, meaning that $X_2\otimes Y_2$ is stable by $\iota(U_2)$ and the graded permutation map is an isomorphism of $U_2$-modules $\iota^{\bullet}(X_2\otimes Y_2) \cong \iota^{\bullet}(Y_2) \otimes \iota^{\bullet}(X_2)$. By Definition \ref{def: Weyl gl(1|p)} the tensor product $\iota^{\bullet}(Y_2) \otimes \iota^{\bullet}(X_2)$ of $U_2$-modules is a quotient of
\begin{align*}
\WW_2\left(\prod_{l=2}^s \frac{q^{-m_l}-za_lq^{p-m_l}}{1-za_lq^{p-2m_l}}; \prod_{l=2}^s (1-za_lq^p) \right) \otimes \\
\WW_2\left(\frac{q^{-m_1+p-1}-za_1q^{-m_1+1}}{1-za_1q^{p-2m_1}}; 1-za_1q^{2-p} \right),
\end{align*}
which is of highest $\ell$-weight since $a_l q^{p-2m_l} \neq a_1q^{2-p}$ for $2\leq l \leq s$. The $U_2$-module $\iota^{\bullet}(X_2\otimes Y_2) $ is of highest $\ell$-weight and $ v_1'' \otimes w \in \iota(U_2)(v_1'\otimes w)$, which together with $(*)$ implies $v_1''\otimes w \in U_q(\Gaff_p)(v_1\otimes w)$. The $U_q(\Gaff_p)$-module $L^p(\varpi_{m_1,a_1})$ being generated by the lowest $\ell$-weight vector $v_1''$, we conclude by Lemma \ref{lem: Chari}.
\end{proof}
For $\mathfrak{gl}(1|3)$ we related the highest/lowest $\ell$-weight vectors of $L^3(\varpi_{5,a})$ by:
$$v_1= \young(11111) \xrightarrow{\vartheta_2: (134)q} v_1' = \young(11134) \xrightarrow{\iota: (12)q^{-1}} v_1''= \young(11234). $$

\noindent {\bf Proof of Theorem \ref{thm: tensor KR even}.} Let us assume first that $m_l \leq N$ for all $1\leq l \leq s$. 
We use a double induction on $(M, s)$ with Lemma \ref{lem: cyclicity M=1} being the initial case $M=1$. Under Condition \eqref{cond: cyclicity 1}, the induction hypothesis on $M$ applied to the tensor product of KR modules over $U_3$ in Example \ref{example: Weyl gl(1|1)} shows that $\iota_3^{\bullet}(W_3)$ is of highest $\ell$-weight and $ v_1^1 \otimes (\otimes_{l=2}^s v_l) \in \iota_3(U_3) (\otimes_{l=1}^s v_l)$.
It suffices to prove that the $U_q(\Gaff_{m_1})$-module $\vartheta_{m_1}^{\bullet}(X) \otimes \vartheta_{m_1}^{\bullet}(W^{m_1})$ in Example \ref{example: reduction to gl(1|p)} is of highest $\ell$-weight, from which follows $ v_1^2 \otimes (\otimes_{l=2}^s v_l) \in \vartheta_{m_1}(U_q(\Gaff_{m_1}))\iota_3(U_3) (\otimes_{l=1}^s v_l)$.
The $U_q(\Gaff)$-module $W_{m_1,a_1}^{(i_1)}$ being generated by the lowest $\ell$-weight vector $v_1^2$, we can use the second induction on $s$ and Lemma \ref{lem: Chari} to conclude. 
 
 By Examples \ref{example: Weyl gl(1|p)} and \ref{example: reduction to gl(1|p)}, $\vartheta_{m_1}^{\bullet}(X) \otimes \vartheta_{m_1}^{\bullet}(W^{m_1})$ is, up to tensor product by one-dimensional modules, a quotient of the $U_q(\Gaff_{m_1})$-module
$$L^{m_1}(\varpi_{m_1+i_1-1, a_1q^{M-N+i_1+m_1-2}}) \otimes \WW^{m_1}\left(\prod_{l=1}^s \varpi_{m_l,a_lq^{M-N-i_l+m_1}} \right), $$
which by Lemma \ref{lem: cyclicity M=1} is of highest $\ell$-weight if for $2\leq l \leq s$ and $1\leq t \leq m_1$:
$$ a_1q^{M-N+i_1+m_1-2} \neq a_lq^{M-N-i_l+m_1} \times q^{2t-2-2m_l},$$
namely, $a_1 \neq a_l q^{2t-2m_l-i_1-i_l}$. This is included in Condition \eqref{cond: cyclicity 1}.

Suppose $m_l > N$ for some $1\leq l \leq s$. Let $m := \max(m_l: 1\leq l \leq s)$ and let $U_4 := U_q(\widehat{\mathfrak{gl}(M|N+m)})$ be the quantum affine superalgebra with RTT generators $s_{ij;4}^{(n)}, t_{ij;4}^{(n)}$ for $1\leq i,j \leq M+N+m$. There is a unique superalgebra morphism
$$ \iota_4: U_q(\Gaff)  \longrightarrow U_4, \quad s_{ij}^{(n)} \mapsto s_{ij;4}^{(n)},\quad t_{ij}^{(n)} \mapsto t_{ij;4}^{(n)}. $$
Under Condition \eqref{cond: cyclicity 1}, the tensor product $\otimes_{l=1}^s W_{m_l,a_l}^{4(i_l)}$ of KR modules over $U_4$ is of highest $\ell$-weight. For $1\leq l \leq s$, let $X_l := \iota_4(U_q(\Gaff)) v_l$ where $v_l \in W_{m_l,a_l}^{4(i_l)}$ is a highest $\ell$-weight vector. Then a weight argument and Corollary \ref{cor: KR evaluation} show that 
$$ \iota_4(U_q(\Gaff)) (\otimes_{l=1}^s v_l) = \otimes_{l=1}^s X_l, $$
and as $U_q(\Gaff)$-modules $\iota_4^{\bullet}(\otimes_{l=1}^s X_l) \cong \otimes_{l=1}^s W_{m_l,a_lq^{-m}}^{(i_l)}$. This implies that the $U_q(\Gaff)$-module $\otimes_{l=1}^s W_{m_l,a_lq^{-m}}^{(i_l)}$ is of highest $\ell$-weight, proving the theorem. \qed

For $\mathfrak{gl}(3|6)$ we related the highest/lowest $\ell$-weight vectors of $W_{4,a}^{(3)}$ by:
$$ v_1 = \young(1111,2222,3333) \xrightarrow{\iota_3:(23456789)q} v_1^1 = \young(1111,6789,6789) \xrightarrow{\vartheta_4:(16789)q} v_1^2 = \young(6789,6789,6789). $$

For $\lambda \in \mathcal{P}$ and $a \in \BC^{\times}$ define the $U_{q^{-1}}(\Gaff)$-module $V_{q^{-1}}^{+}(\lambda;a)$ to be the pullback of the $U_{q^{-1}}(\Glie)$-module $V_{q^{-1}}(\lambda)$ by $\overline{\ev}_a^+$, as in Theorem \ref{thm: q-char MA}. By Equation \eqref{homo:evaluation dec}, 
$$  h^* \left( V_q^-(\lambda;a) \right) \cong V_{q^{-1}}^+(\lambda;a).  $$
\begin{cor} \label{cor: tensor KR even involution}
The tensor product in Theorem \ref{thm: tensor KR even} is of highest $\ell$-weight if 
\begin{equation} \label{cond: cyclicty 2}
\frac{a_j}{a_k} \notin \bigcup_{p=1}^{m_k} q^{2p-2m_k} \CS(i_j,i_k)\quad \mathrm{for}\ 1\leq j < k \leq s. 
\end{equation}
\end{cor}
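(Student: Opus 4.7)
The strategy is to pull back via the Hopf isomorphism $h$ of \eqref{iso:involution} and invoke Theorem \ref{thm: tensor KR even} in the $q^{-1}$-setting. Since $h: U_{q^{-1}}(\Gaff) \to U_q(\Gaff)^{\mathrm{cop}}$ lands in the opposite coproduct, the pullback functor $h^*$ is anti-monoidal, sending $V \otimes W$ to $h^*W \otimes h^*V$. Combining $W_{m,a}^{(i)} \cong V_q^-(m\varpi_i; aq^{N-M+i-2m})$ from \eqref{equ: KR +}, the isomorphism $h^*(V_q^-(\lambda;a)) \cong V_{q^{-1}}^+(\lambda;a)$ recalled just before the statement, and the $q^{-1}$-analog of \eqref{equ: KR +}, I get
\begin{displaymath}
h^*(W_{m,a}^{(i)}) \cong \overline{W}_{m,\, aq^{-2m}}^{(i)},
\end{displaymath}
where $\overline{W}_{m,b}^{(i)}$ denotes the KR module over $U_{q^{-1}}(\Gaff)$. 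Setting $T := W_{m_1,a_1}^{(i_1)} \otimes \cdots \otimes W_{m_s,a_s}^{(i_s)}$, the tensor reversal then gives
\begin{displaymath}
h^*(T) \cong \overline{W}_{m_s,\, a_s q^{-2m_s}}^{(i_s)} \otimes \cdots \otimes \overline{W}_{m_1,\, a_1 q^{-2m_1}}^{(i_1)}
\end{displaymath}
as $U_{q^{-1}}(\Gaff)$-modules.

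Next, I would apply Theorem \ref{thm: tensor KR even} with $q$ replaced by $q^{-1}$; the proof is insensitive to this exchange, since $q$ is not a root of unity iff $q^{-1}$ is not. Under $q \mapsto q^{-1}$ the $q$-segment $\CS(i,j)$ becomes $\CS(i,j)^{-1}$. Writing out the $q^{-1}$-cyclicity condition for the reversed product with $\tilde m_l := m_{s+1-l}$, $\tilde a_l := a_{s+1-l}\, q^{-2m_{s+1-l}}$, $\tilde i_l := i_{s+1-l}$, and then setting $j := s+1-l'$ and $k := s+1-l$ (so $j<k$), taking inverses on both sides and absorbing the $q^{\pm 2m}$ factors, the condition simplifies to exactly \eqref{cond: cyclicty 2}. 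The point is that the spectral parameter shifts $a_l \mapsto a_l q^{-2m_l}$ and the $q \mapsto q^{-1}$ exchange conspire so that the range $p \in \{1,\ldots, m_k\}$ appears in place of the range $\{1,\ldots, m_j\}$ of \eqref{cond: cyclicity 1}.

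Finally, I transfer the highest $\ell$-weight property back through $h^*$. The argument in the proof of Lemma \ref{lem: duality by permutation} shows that if $\omega$ is a highest $\ell$-weight vector of a $U_q(\Gaff)$-module, then $\overline{s}_{ij}(z)\cdot h^*\omega = 0$ for $i<j$, so $h^*$ preserves highest $\ell$-weight vectors; as $h$ is an isomorphism, cyclicity of $h^*(T)$ by $h^*(v_1 \otimes \cdots \otimes v_s)$ over $U_{q^{-1}}(\Gaff)$ is equivalent to cyclicity of $T$ by $v_1 \otimes \cdots \otimes v_s$ over $U_q(\Gaff)$. The main obstacle will be bookkeeping: simultaneously handling the index reversal under $h^*$, the spectral shifts $a_l \mapsto a_l q^{-2m_l}$, and the inversion $\CS \mapsto \CS^{-1}$, to verify that these three transformations combine into precisely \eqref{cond: cyclicty 2} rather than the superficially similar \eqref{cond: cyclicity 1}.
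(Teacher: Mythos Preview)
Your proposal is correct and follows essentially the same route as the paper: pull back through $h^*$ (anti-monoidal), identify $h^*(W_{m,a}^{(i)}) \cong \overline{W}_{m,\,aq^{-2m}}^{(i)}$ via Corollary~\ref{cor: KR evaluation} and $h^*(V_q^-(\lambda;a))\cong V_{q^{-1}}^+(\lambda;a)$, then apply Theorem~\ref{thm: tensor KR even} over $U_{q^{-1}}(\Gaff)$ and simplify using $\CS(i,j)^{-1}$ and the symmetry $\CS(i_j,i_k)=\CS(i_k,i_j)$. The paper's own proof carries out exactly this bookkeeping in two lines, arriving at $\frac{a_kq^{-2m_k}}{a_jq^{-2m_j}} \notin \bigcup_{p=1}^{m_k} q^{-2p+2m_j}\CS(i_k,i_j)^{-1}$ and observing that this is equivalent to \eqref{cond: cyclicty 2}.
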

\begin{proof}
The tensor product $T$ in Theorem \ref{thm: tensor KR even} is of highest $\ell$-weight if and only if so is the $U_{q^{-1}}(\Gaff)$-module $h^*(T)$. By Corollary \ref{cor: KR evaluation} we have 
$$ h^*(T) \cong \otimes_{l=s}^1 V_{q^{-1}}^+(m_l\varpi_{i_l}; a_lq^{N-M-2m_l+i_l}). $$
Applying Theorem \ref{thm: tensor KR even} to $U_{q^{-1}}(\Gaff)$, by viewing $W_{m,a}^{(i)}$ first as $V_q^+(m\varpi_i;aq^{M-N-i})$ and then as $V_{q^{-1}}^+(m\varpi_i;aq^{N-M+i})$, we have that $h^*(T)$ is of highest $\ell$-weight if
$$ \frac{a_k q^{-2m_k} }{a_j q^{-2m_j}}  \notin \bigcup_{p=1}^{m_k} q^{-2p+2m_j} \CS(i_k,i_j)^{-1} \quad \mathrm{for}\ 1\leq j < k \leq s. $$
This is Condition \eqref{cond: cyclicty 2} since $\CS(i_k,i_j) = \CS(i_j,i_k)$.
\end{proof}
 Let $V$ be a finite-dimensional $U_q(\Gaff)$-module. Its {\it twisted dual} is the dual space $\mathrm{Hom}_{\BC}(V,\BC) =: V^{\vee}$ endowed with the $U_q(\Gaff)$-module structure \cite[Section 6]{Z4}:
 $$\langle x \varphi, v \rangle := (-1)^{|\varphi||x|} \langle \varphi, \Sm \Psi(x) \rangle \quad \mathrm{for}\ x \in U_q(\Gaff),\ \varphi \in V^{\vee},\ v \in V. $$
By Equation \eqref{iso:transposition}, $(V\otimes W)^{\vee} \cong V^{\vee} \otimes W^{\vee}$ if $W$ is another finite-dimensional $U_q(\Gaff)$-module. $V$ is irreducible if and only if both $V$ and $V^{\vee}$ are of highest $\ell$-weight.

We recall the notion of {\it fundamental representations} from \cite{Z4}. Let $1\leq r \leq M$ and $1\leq s < N$. Define (compare \cite[Lemmas 5 \& 6]{Z4} with Corollary \ref{cor: KR evaluation})
\begin{equation} \label{def: fund rep}
V_{r,a}^+ := W_{1,aq^{N-M-r}}^{(r)},\quad V_{s,a}^- := W_{1,aq^{s+2}}^{(M+N-s)},\quad V_{N,a}^- := W_{1,aq^{N+2}}^{(M-)}.
\end{equation}
\begin{lem} \label{lem: twisted dual KR even}
Let $1\leq i \leq M <j<M+N$ and $(m,a)\in \BZ_{>0} \times \BC^{\times}$. We have:
$$(W_{m,a}^{(i)})^{\vee} \simeq W_{m,a^{-1}q^{2m}}^{(i)},\quad (W_{m,a}^{(j)})^{\vee} \simeq W_{m,a^{-1}q^{4-2m}}^{(j)},\quad (W_{m,a}^{(M-)})^{\vee} \simeq W_{m,a^{-1}q^{4-2m}}^{(M-)}.$$
\end{lem}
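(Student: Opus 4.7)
Proof plan. The three cases of the lemma share the same strategy; I focus on $(W_{m,a}^{(i)})^\vee \simeq W_{m, a^{-1}q^{2m}}^{(i)}$ for $i \leq M$, the remaining two being entirely analogous, with the shifted constant $q^{4-2m}$ reflecting the different location of the lowest $\ell$-weight tableau in the evaluation realizations \eqref{equ: KR -} and \eqref{equ: KR odd}.

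Since $\Sm\circ\Psi$ is a superalgebra anti-morphism and $\Psi$ (see \eqref{iso:transposition}) is a coalgebra anti-morphism, the twisted dual functor $V\mapsto V^\vee$ is an anti-auto-equivalence of the monoidal category of finite-dimensional $U_q(\Gaff)$-modules; thus $(W_{m,a}^{(i)})^\vee$ is irreducible and finite-dimensional. Relation R3 of Definition \ref{def: quantum affine superalgebras} yields $\Sm\Psi(s_{ll}^{(0)})=s_{ll}^{(0)}$ for all $l\in I$ (using $\Psi(s_{ll}^{(0)})=t_{ll}^{(0)}=(s_{ll}^{(0)})^{-1}$ together with $\Sm((s_{ll}^{(0)})^{-1})=s_{ll}^{(0)}$), so acting through $\Sm\Psi$ preserves each weight space $V_p$. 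In particular, the classical $U_q(\Glie)$-character of $(W_{m,a}^{(i)})^\vee$ equals that of $W_{m,a}^{(i)}$, which by Corollary \ref{cor: KR evaluation} is $\chi(V_q(m\varpi_i))$. Combined with the classification in Lemma \ref{lem: simple O}(3) and the identification of KR modules in Corollary \ref{cor: KR evaluation}, this forces $(W_{m,a}^{(i)})^\vee \simeq W_{m,b(a)}^{(i)}$ for some $b(a)\in\BC^\times$.

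To constrain $b(a)$ I exploit the $\BZ$-grading automorphism $\Phi_c$ of \eqref{iso:Z-grading}. Direct computation from \eqref{iso:Z-grading} and \eqref{iso:transposition} gives $\Sm\Psi\circ\Phi_c=\Phi_{c^{-1}}\circ\Sm\Psi$, so $(\Phi_c^*V)^\vee\cong\Phi_{c^{-1}}^*(V^\vee)$ for every finite-dimensional $V$. Combined with $\Phi_c^*W_{m,a}^{(i)}\simeq W_{m,ac}^{(i)}$ (a quick check on the highest $\ell$-weight from Example \ref{ss: l-weights}), the equivariance forces $b(ac)=c^{-1}b(a)$. Hence $a\cdot b(a)$ is a constant $K_{i,m}$ depending only on $(i,m,M,N)$, which I then compute for a single, convenient value of $a$.

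The constant $K_{i,m}$ is pinned down by computing the highest $\ell$-weight of $(W_{m,a}^{(i)})^\vee$ directly. Let $v_H\in W_{m,a}^{(i)}$ be a highest $\ell$-weight vector and $v_H^*\in(W_{m,a}^{(i)})^\vee$ its dual functional; by the weight-preservation above, $v_H^*$ sits in the (one-dimensional) top classical weight space, hence is a highest $\ell$-weight vector of the dual. The identity $\langle K_l^+(z)v_H^*,v_H\rangle=\langle v_H^*,\Sm\Psi(K_l^+(z))v_H\rangle$ reduces the problem to computing the scalar by which $\Sm\Psi(K_l^+(z))$ acts on the top weight space. The main obstacle is that $\Psi$ does not preserve the Gauss decomposition, so $\Psi(K_l^+(z))$ is a non-trivial polynomial in the diagonal currents $K_l^-(z)$ and off-diagonal entries of $T(z)$ (computed via Gauss decomposition of the transpose $T(z)^{T}$). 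The computation becomes tractable because on $v_H$ all upper-triangular Drinfeld currents $e^{\pm}_{jk}(z)v_H$ vanish (from $x_k^+(z)v_H=0$), which makes the cross-terms in Gauss decomposition collapse; what remains can be tracked through a single commutator computation using the Drinfeld relations of Remark \ref{rem: two Gauss decompositions}. After bookkeeping the factors $\tau_i,q_i,\theta_j$ from \eqref{equ: spectral shifts}--\eqref{equ: theta} and matching the resulting rational function against $\varpi_{m,b(a)}^{(i)}$ from Example \ref{ss: l-weights}, one reads off $K_{i,m}=q^{2m}$; the parallel computations for the other two cases yield $K=q^{4-2m}$, completing the proof.
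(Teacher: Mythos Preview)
Your approach differs from the paper's and has a gap in step~3. From ``classical character equals $\chi(V_q(m\varpi_i))$'' you jump to $(W_{m,a}^{(i)})^\vee \simeq W_{m,b(a)}^{(i)}$, but Lemma~\ref{lem: simple O}(3) only tells you the Drinfeld polynomials of the highest $\ell$-weight satisfy $\deg P_k = m\delta_{ik}$; it does not force the roots of $P_i$ into a single $q_i^2$-string. What you are implicitly invoking is a minimal-affinization statement (every irreducible finite-dimensional module whose $U_q(\Glie)$-restriction is the irreducible $V_q(m\varpi_i)$ is an evaluation module), which is true in type~A but is neither cited nor proved here. Your step~5, if fully carried out, would make this moot---computing the entire highest $\ell$-weight would identify the module outright and render steps~3--4 redundant---but you only sketch it, and the computation of $\Sm\Psi(K_l^+(z))v_H$ is less routine than suggested: $\Psi$ sends $S(z)$ to a super-transpose of $T(z)$ and $\Sm$ then inverts, so extracting the diagonal piece genuinely entangles with off-diagonal entries of $T(z)^{-1}$; ``a single commutator'' does not obviously suffice.

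The paper bypasses all of this with a short reduction. It uses the monoidal compatibility $(V\otimes W)^\vee \cong V^\vee \otimes W^\vee$ (recorded just before the lemma) together with the $m=1$ case $(W_{1,b}^{(k)})^\vee \simeq W_{1,b^{-1}q^2}^{(k)}$, already computed for fundamental modules in \cite[Lemma~27]{Z4} via \eqref{def: fund rep}. Since $W_{m,a}^{(i)}$ is the unique irreducible subquotient of $\bigotimes_{l=1}^m W_{1,aq_i^{2-2l}}^{(i)}$ of highest $\ell$-weight $\varpi_{m,a}^{(i)}$, its twisted dual is the subquotient of $\bigotimes_{l=1}^m W_{1,a^{-1}q_i^{2l}}^{(i)}$ at the same (one-dimensional) top classical weight, whose highest $\ell$-weight one reads off as $\varpi_{m,a^{-1}q^{2m}}^{(i)}$. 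The other two cases follow the same pattern.
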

\begin{proof}
The twisted dual of a fundamental module is known \cite[Lemma 27]{Z4}:
$$  (V_{i,a}^+)^{\vee} \simeq V_{i,a^{-1}q^{2(M-N+i+1)}}^+,\quad (V_{M+N-j,a}^-)^{\vee} \simeq V_{M+N-j;a^{-1}q^{-2(M+N+1-j)}}^-. $$
By Equation \eqref{def: fund rep}, $(W_{1,a}^{(i)})^{\vee} \cong W_{1,a^{-1}q^2}^{(i)}$ and $(W_{1,a}^{(j)})^{\vee} \simeq W_{m,a^{-1}q^{2}}^{(j)}$. Viewing $W_{m,a}^{(i)}$ as the unique irreducible sub-quotient of $\otimes_{l=1}^m W_{1,aq_i^{2-2l}}^{(i)}$ of highest $\ell$-weight $\varpi_{m,a}^{(i)}$, and taking twisted duals, we obtain the desired formulas.
\end{proof}
\begin{cor} \label{cor: level 0 Demazure}
Let $1< i < M,\ a \in \BC^{\times}$ and $m\in \BZ_{>0}$. The $U_q(\Gaff)$-module $W_{m,a}^{(i-1)} \otimes W_{m,a}^{(i+1)}$ is irreducible.
\end{cor}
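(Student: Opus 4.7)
The strategy is the standard one: a finite-dimensional $U_q(\Gaff)$-module is irreducible if and only if both it and its twisted dual are of highest $\ell$-weight. So I would show that both $W_{m,a}^{(i-1)} \otimes W_{m,a}^{(i+1)}$ and its twisted dual are of highest $\ell$-weight by two applications of Theorem \ref{thm: tensor KR even}.

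First, since $1 < i < M$, both Dynkin nodes $i-1$ and $i+1$ satisfy $1 \leq i \pm 1 \leq M$, so Theorem \ref{thm: tensor KR even} applies directly with $s = 2$, $(i_1,i_2) = (i-1,i+1)$, $m_1 = m_2 = m$ and $a_1 = a_2 = a$. The condition \eqref{cond: cyclicity 1} becomes
\[
1 \notin \bigcup_{p=1}^{m} q^{2p-2m}\, \CS(i-1,i+1),
\]
and since $\CS(i-1,i+1) = \{q^{-2i+2r} \mid 0 \leq r \leq i-2\}$, the set on the right is $\{q^{2(p+r-m-i)} \mid 1\leq p\leq m,\ 0\leq r\leq i-2\}$. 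Here $p+r-m-i$ ranges over integers $1-m-i,\dots,-2$, all strictly negative, so this union does not contain $1$ (recall $q$ is not a root of unity). Hence $W_{m,a}^{(i-1)} \otimes W_{m,a}^{(i+1)}$ is of highest $\ell$-weight.

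Next I would handle the twisted dual. Since $\vee$ reverses the order of tensor factors (by \eqref{iso:transposition}) and each individual factor satisfies $(W_{m,a}^{(i\pm 1)})^{\vee} \simeq W_{m,a^{-1}q^{2m}}^{(i\pm 1)}$ by Lemma \ref{lem: twisted dual KR even}, we get
\[
\bigl( W_{m,a}^{(i-1)} \otimes W_{m,a}^{(i+1)} \bigr)^{\vee} \simeq W_{m,a^{-1}q^{2m}}^{(i+1)} \otimes W_{m,a^{-1}q^{2m}}^{(i-1)}.
\]
Applying Theorem \ref{thm: tensor KR even} again, this time with $(i_1,i_2) = (i+1,i-1)$ and common spectral parameter $a^{-1}q^{2m}$, the cyclicity condition is exactly the same calculation as above (since $\CS(i+1,i-1) = \CS(i-1,i+1)$ and the two spectral parameters are again equal), so the twisted dual is also of highest $\ell$-weight.

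Therefore $W_{m,a}^{(i-1)} \otimes W_{m,a}^{(i+1)}$ is irreducible. There is no real obstacle here: all the work is done by Theorem \ref{thm: tensor KR even} and Lemma \ref{lem: twisted dual KR even}, and the only point to check is the numerical verification that $\CS(i-1,i+1)$ does not meet the offending values $q^{2m-2p}$, which is immediate because the relevant exponents are strictly negative and $q$ is not a root of unity. The only subtlety is remembering that tensoring with a one-dimensional module (hidden in the symbol $\simeq$) preserves being of highest $\ell$-weight, so passing between $\cong$ and $\simeq$ is harmless.
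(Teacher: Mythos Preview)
Your approach is essentially the same as the paper's: show both the tensor product and its twisted dual are of highest $\ell$-weight via the cyclicity criterion, then conclude irreducibility. The paper phrases this through Condition~\eqref{cond: cyclicty 2} (Corollary~\ref{cor: tensor KR even involution}) rather than Condition~\eqref{cond: cyclicity 1}, but since $m_1=m_2=m$ the two conditions coincide and your numerical check is identical to what is needed.

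One correction: you claim that $\vee$ reverses tensor factors, but the paper states explicitly (just after the definition of twisted dual) that $(V\otimes W)^{\vee}\cong V^{\vee}\otimes W^{\vee}$ in the \emph{same} order---this is precisely the point of composing $\Sm$ with $\Psi$. So the twisted dual is $\simeq W_{m,a^{-1}q^{2m}}^{(i-1)}\otimes W_{m,a^{-1}q^{2m}}^{(i+1)}$, not the swapped version. This does not affect your argument here because the two spectral parameters are equal and $\CS(i-1,i+1)=\CS(i+1,i-1)$, so the condition you verify is literally unchanged; but the misstatement should be fixed.
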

\begin{proof}
The tensor product and its twisted dual, which is $\simeq W_{m,a^{-1}q^{2m}}^{(i-1)} \otimes W_{m,a^{-1}q^{2m}}^{(i+1)}$ by Lemma \ref{lem: twisted dual KR even}, satisfy Condition \eqref{cond: cyclicty 2} and are of highest $\ell$-weight.
\end{proof}
The following special result on Dynkin node $M$ is needed in Section \ref{sec: asym}. 
\begin{lem} \label{lem: tensor fund} \cite{Z4}
For $m \in \BZ_{>0}$, the $U_q(\Gaff)$-module $V_{N,aq^{-3}}^- \otimes (\otimes_{l=1}^{m} V_{N-1,aq^{2l-1}}^-) \otimes (\otimes_{k=1}^{m}V_{M-1,aq^{2M-2k-1}}^+)$  is of highest $\ell$-weight.
Moreover for $1\leq k,l \leq m$ we have $ V_{N-1,aq^{2l-1}}^- \otimes  V_{M-1,aq^{2M-2k-1}}^+ \cong V_{M-1,aq^{2M-2k-1}}^+ \otimes V_{N-1,aq^{2l-1}}^-$.
\end{lem}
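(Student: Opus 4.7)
The plan is to split the statement into the commutation claim (second sentence) and the highest $\ell$-weight claim (first sentence), and prove them in that order so that commutation can be used as a reordering tool in the cyclicity argument.

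For the commutation at nodes $M-1$ and $M+1$: these Dynkin nodes are non-adjacent in the Dynkin diagram of $\mathfrak{gl}(M|N)$, so the Drinfeld currents $x_{M-1}^{\pm}(z)$ and $x_{M+1}^{\pm}(w)$ commute (Remark \ref{rem: two Gauss decompositions}), and the corresponding diagram subalgebras generate a quotient of $U_{q_{M-1}}(\widehat{\mathfrak{sl}_2}) \otimes U_{q_{M+1}}(\widehat{\mathfrak{sl}_2})$. I would check that both $V_{N-1,aq^{2l-1}}^- \otimes V_{M-1,aq^{2M-2k-1}}^+$ and its reverse are of highest $\ell$-weight by restricting to the subalgebra generated by $x_j^{\pm}(z)$ with $j \neq M$, where the problem splits into two independent single-factor tensor products; combining with Lemma \ref{lem: twisted dual KR even} applied on both sides shows both tensor products are irreducible, and identical highest $\ell$-weights give the isomorphism. (This is essentially \cite[Section 6]{Z4} adapted to fundamental modules at disjoint nodes.)

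For the cyclicity of the full product, commutation from the previous step lets me reorder freely within the block of $V_{N-1,\cdot}^-$ factors and the block of $V_{M-1,\cdot}^+$ factors, and move the two blocks past each other. I would then apply Lemma \ref{lem: Chari} inductively, attacking one factor at a time from the left, and at each step reduce to checking cyclicity inside appropriate diagram subalgebras: for the $V_{N,aq^{-3}}^-$ factor interacting with the $V_{N-1,\cdot}^-$ factors, use the subalgebra $\vartheta_p(U_q(\widehat{\mathfrak{gl}(1|N)}))$ as in Equation \eqref{equ: auxiliary diagram subalg} (here $M$ plays the role of the distinguished node); for the interaction with the $V_{M-1,\cdot}^+$ factors, use the analogous subalgebra on the $\mathfrak{gl}(M|1)$ side, which by the duality functor $\SG^*$ of Lemma \ref{lem: duality by permutation} maps to a $\mathfrak{gl}(1|M)$-type subalgebra where Lemma \ref{lem: cyclicity M=1} applies directly. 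The Weyl-module arguments of Examples \ref{example: Weyl gl(1|1)}--\ref{example: Weyl gl(1|p)} then bound the cyclic submodule from below, and comparing highest weights closes the induction.

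The main obstacle will be the interaction of $V_{N,aq^{-3}}^-$ (whose Dynkin node is the odd isotropic $M$) with the $V_{N-1,\cdot}^-$ chain: the tight spectral gap $aq^{-3}$ vs.\ $aq, aq^3, \ldots$ sits at the edge of the allowed range, and one has to verify that the denominator $\prod(1-za_l q^{p-2m_l})$ of the associated Weyl-module highest $\ell$-weight does not hit the numerator pole at $z = aq^{-3}$. This is precisely the spectral condition analogous to \eqref{cond: cyclicity 1}, and the choice of $-3$ (rather than $-1$) in $V_{N,aq^{-3}}^-$ is exactly what ensures it. Once this single spectral compatibility is checked, the rest of the argument is a straightforward iteration of Lemma \ref{lem: Chari} together with the cyclicity criteria already established in Section \ref{sec: cyclicity}.
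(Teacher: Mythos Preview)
The paper's proof is a two-line citation: the cyclicity statement is \cite[Theorem 15]{Z4} transported via the involution $h$ of Equation \eqref{iso:involution} (as indicated in \cite[Remarks 3 \& 4]{Z4}), and the commutation statement is a special case of \cite[Example 5]{Z4}. No direct argument is given here because the result was already established in \cite{Z4} specifically for tensor products of fundamental modules $V_{r,a}^{\pm}$.

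Your proposal takes a genuinely different route, attempting to reprove the result with the machinery of Section \ref{sec: cyclicity}. This is reasonable in spirit but has two weak points. First, for commutation: your claim that restricting to the subalgebra generated by the $x_j^{\pm}(z)$ with $j\neq M$ makes the problem ``split into two independent single-factor tensor products'' is not correct as stated. The module $V_{M-1,c}^+=W_{1,\ast}^{(M-1)}$ has $\ell$-weights involving boxes $\boxed{j}$ with $j>M$ (any column tableau of height $M-1$ on the alphabet $I$), so it is \emph{not} inert under the $j>M$ part of that subalgebra, and the factorisation you describe does not occur. A cleaner route to the commutation is to show both orderings are irreducible---e.g.\ by checking that the pair of spectral parameters avoids the singularity locus of the relevant $R$-matrix, which is exactly what \cite[Example 5]{Z4} records---and then invoke commutativity of $K_0(\BGG)$. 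Second, for cyclicity: Theorem \ref{thm: tensor KR even} and its corollaries treat only KR modules at nodes $i_l\leq M$, whereas the tensor product here mixes the $(M-)$ variant with nodes $M\pm 1$. Extending the diagram-subalgebra and Weyl-module reductions of Examples \ref{example: Weyl gl(1|1)}--\ref{example: reduction to gl(1|p)} to cover this mixed case is essentially the content of \cite[Theorem 15]{Z4}, so your plan would reproduce that work rather than bypass it. The paper's citation is therefore the efficient route; what your sketch buys is self-containment, at the cost of redoing a nontrivial piece of \cite{Z4}.
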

\begin{proof}
The first statement is induced from \cite[Theorem 15]{Z4} by the involution $h$ as in \cite[Remarks 3 \& 4]{Z4}, and the second is a particular case of \cite[Example 5]{Z4}.
\end{proof}

\section{Asymptotic representations} \label{sec: asym}
In this section we construct the $U_q(\Gaff)$-module $\CN_{c,a}^{(i)}$ of Proposition \ref{prop: asymptotic T} for $i \in I_0$ and $a,c \in \BC^{\times}$ from finite-dimensional representations.

For $m \in \BZ_{>0}$,  let $N_{m,a}^{(i)} := L(\Bn_{q^m,a}^{(i)})$ be the irreducible $U_q(\Gaff)$-module; it is finite-dimensional by Lemma \ref{lem: simple O} (3). Fix $v^m \in N_{m,a}^{(i)}$ to be a highest $\ell$-weight vector.

The main step is to construct an inductive system $(N_{m,a}^{(i)})_{m\in \BZ_{>0}}$ compatible with (normalized) $q$-characters, as in \cite[Section 4.2]{HJ} and \cite[Theorem 7.6]{HL}. We shall need the cyclicity results in Section \ref{sec: cyclicity} to adapt the arguments of \cite{HJ,HL}.
\begin{lem} \label{lem: asym N}
If $\Bn_{q^m,a}^{(i)} \Bm \in \lwt(N_{m,a}^{(i)})$, then $\Bn_{i,a}^- \Bm \in \lwt(N_{i,a}^-)$ and 
$$ \dim (N_{m,a}^{(i)})_{\Bn_{q^m,a}^{(i)} \Bm} \leq \dim (N_{i,a}^-)_{\Bn_{i,a}^- \Bm}. $$
\end{lem}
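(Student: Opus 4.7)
The strategy is to realize $N_{m,a}^{(i)}$, up to a one-dimensional $\ell$-weight twist, as the head of a cyclic tensor product $T_m$ of Kirillov--Reshetikhin modules, and to dominate its normalized $q$-character by that of $N_{i,a}^-$ via the asymptotic relationship between KR and negative prefundamental modules recorded in Lemma \ref{lem: negative pre char}.

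First, expanding $\Bn_{c,a}^{(i)} = \aBw_{\hat q_i, a\hat q_i^2}^{(i)} \prod_{j\sim i} \aBw_{c_{ij}^{-1}, aq_{ij}}^{(j)}$ at $c = q^m$ and comparing with the KR highest $\ell$-weights $\varpi_{m,b}^{(j)}$ (or $\varpi_{m,b}^{(M-)}$ in the exceptional case $j=M$ where the sign of $(\alpha_i,\alpha_j)$ forces the odd-type KR module, cf.~Equation \eqref{equ: KR odd}), one verifies the equivalences $\aBw_{\hat q_i, a\hat q_i^2}^{(i)} \equiv \varpi_{1, a\hat q_i}^{(i)}$ and $\aBw_{c_{ij}^{-1}, aq_{ij}}^{(j)} \equiv \varpi_{m, b_j}^{(j)}$ for explicit $b_j \in aq^{\BZ}$. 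Setting $T_m := W_{1, a\hat q_i}^{(i)} \otimes \bigotimes_{j \sim i} W_{m, b_j}^{(j)}$ (with $W^{(M-)}$ replacing $W^{(M)}$ in the exceptional slot), one checks the cyclicity Condition \eqref{cond: cyclicity 1} of Theorem \ref{thm: tensor KR even} or Condition \eqref{cond: cyclicty 2} of Corollary \ref{cor: tensor KR even involution}, using the duality $\SG^*$ of Lemma \ref{lem: duality by permutation} to reduce the case $i > M$ to $i \leq M$. The spectral parameter $a\hat q_i$ of the rank-one factor differs from each $b_j$ in a way that avoids the prohibited $q$-segments uniformly in $m$. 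It follows that $T_m$ is of highest $\ell$-weight and that $N_{m,a}^{(i)}$, up to one-dimensional twist, is a quotient of $T_m$.

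Consequently, coefficient-wise in $\BZ[A_{k,c}^{-1}]$ one has
$$ \nqc(N_{m,a}^{(i)}) \leq \nqc(T_m) = \nqc(W_{1, a\hat q_i}^{(i)}) \prod_{j\sim i} \nqc(W_{m, b_j}^{(j)}) \leq \nqc(W_{1, a\hat q_i}^{(i)}) \prod_{j\sim i} \nqc(L_{j, aq_{ij}}^-), $$
where the second inequality records the coefficient-wise dominance provided by Lemma \ref{lem: negative pre char}\,(i): the KR normalized characters are increasing partial sums of the prefundamental normalized character, or equivalently identify as normalized characters of sub-$\ell$-weight spaces in the asymptotic module $\CW_{c,\cdot}^{(j)}$ of Lemma \ref{lem: negative pre char}\,(ii). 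It remains to identify the right-hand side with $\nqc(N_{i,a}^-)$. Using $\Bn_{i,a}^- \equiv \varpi_{1, a\hat q_i}^{(i)} \prod_{j\sim i} \Psi_{j, aq_{ij}}^{-1}$ together with the irreducibility of $\bigotimes_{j\sim i} L_{j, aq_{ij}}^-$ from Corollary \ref{cor: simplicity tensor product pre-fund}, a passage to the limit $m \to \infty$ in the cyclicity of $T_m$ (or a direct highest $\ell$-weight argument based on the triangular coproduct \eqref{coproduct: positive}) shows that $W_{1, a\hat q_i}^{(i)} \otimes \bigotimes_{j\sim i} L_{j, aq_{ij}}^-$ is of highest $\ell$-weight with head $N_{i,a}^-$, so that its $q$-character factorises and yields the desired identification.

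The delicate points are the coefficient-wise dominance $\nqc(W_{m, b_j}^{(j)}) \leq \nqc(L_{j, aq_{ij}}^-)$, which strengthens the convergence of Lemma \ref{lem: negative pre char}\,(i) to a monotonicity statement on each coefficient, and the cyclicity of the mixed finite-dimensional/negative-prefundamental tensor product used to identify $\nqc(N_{i,a}^-)$ with the limiting product. Both are handled by the asymptotic construction of \cite{Z5} that underpins Lemma \ref{lem: negative pre char}: the normalised character of $\CW_{c,a}^{(j)}$ equals $\nqc(L_{j,a}^-)$ exactly while being obtained as a controlled limit of KR normalised characters, and this same limiting procedure will provide the inductive system of KR modules from which $\CN_{c,a}^{(i)}$ is built in Proposition \ref{prop: asymptotic T}.
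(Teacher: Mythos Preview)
Your approach differs substantially from the paper's, and the final identification step contains a genuine gap.

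The paper's argument (following \cite[Theorem~7.6]{HL}) goes in the \emph{opposite} direction: rather than bounding $N_{m,a}^{(i)}$ from above by a tensor of KR modules and then passing to negative prefundamentals, one uses the relation
\[
\Bn_{q^m,a}^{(i)} \;\equiv\; \Bn_{i,a}^- \prod_{j\sim i} \Psi_{j,\,aq_{ij}^{2m+1}}
\]
to realize $N_{m,a}^{(i)}$ (up to a one-dimensional twist) as a subquotient of
$N_{i,a}^- \otimes \bigotimes_{j\sim i} L_{j,\,aq_{ij}^{2m+1}}^{+}$. Now Lemma~\ref{lem: positive pre char} says that every $\ell$-weight of a positive prefundamental is $\Psi_{j,b}$ times a \emph{classical} weight in $\CP$. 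Since $\Bm$ is a monomial in the $A_{k,c}^{-1}$ with $c\in aq^{\BZ}$ (obtained by writing $N_{m,a}^{(i)}$ as a subquotient of a tensor of KR modules and applying Corollary~\ref{cor: KR evaluation}), the factorization $\Bm = \Bm'\cdot p$ with $\Bm'$ coming from $N_{i,a}^-$ and $p\in q^{\BQ^-}$ forces $p=1$ and $\Bm'=\Bm$; the dimension inequality drops out immediately.

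Your route instead produces the chain
\[
\nqc(N_{m,a}^{(i)}) \;\le\; \nqc(T_m) \;\le\; \nqc\bigl(W_{1,a\hat q_i}^{(i)}\bigr)\prod_{j\sim i}\nqc\bigl(L_{j,aq_{ij}}^-\bigr),
\]
and then asserts that the right-hand side equals $\nqc(N_{i,a}^-)$. But cyclicity of $W_{1,a\hat q_i}^{(i)}\otimes\bigotimes_{j\sim i} L_{j,aq_{ij}}^-$ only yields $\nqc(N_{i,a}^-)\le \nqc(W_{1,a\hat q_i}^{(i)})\prod_j\nqc(L_{j,aq_{ij}}^-)$, the wrong direction. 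Equality would require this mixed tensor product to be \emph{irreducible}, which you have not shown and which is in fact false in general: $W_{1,a\hat q_i}^{(i)}$ has $M+N$ $\ell$-weights, and the extra ones (beyond $1$ and $A_{i,\cdot}^{-1}$) produce $\ell$-weights in the tensor product that do not lie in $\lwt(N_{i,a}^-)$. So your upper bound overshoots $\nqc(N_{i,a}^-)$, and the argument does not close. The trick you are missing is precisely the use of \emph{positive} prefundamentals, whose $\ell$-weight spaces carry no $A^{-1}$ content at all, so that the ``extra'' factors contribute nothing to the monomial $\Bm$.
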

\begin{proof}
The first paragraph of the proof of \cite[Theorem 7.6]{HL} can be copied here, based on Lemma \ref{lem: positive pre char} and the fact that $\Bm$ is a product of the $A_{j,b}^{-1}$ with $j \in I_0$ and $b \in a q^{\BZ}$. For the latter fact, we realize $N_{m,a}^{(i)}$ as a tensor product of KR modules with one-dimensional modules and apply Corollary \ref{cor: KR evaluation}.
\end{proof}
\begin{lem} \label{lem: simple N}
Let $c \in \BC^{\times}$ be such that $c^2 \notin q^{\BZ}$. If $\Bn_{i,a}^- \Bm \in \lwt(N_{i,a}^-)$, then $\Bn_{c,a}^{(i)} \Bm \in \lwt(L(\Bn_{c,a}^{(i)}))$ and $\dim (N_{i,a}^-)_{\Bn_{i,a}^- \Bm} \leq \dim L(\Bn_{c,a}^{(i)})_{\Bn_{c,a}^{(i)} \Bm}$.
\end{lem}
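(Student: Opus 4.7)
Proof plan:

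I will prove the lemma by realizing $L(\Bn_{c,a}^{(i)})$ itself, under the genericity hypothesis $c^2 \notin q^{\BZ}$, as the generic asymptotic limit of the finite-dimensional family $\{N_{m,a}^{(i)}\}_{m \geq 1}$, following the strategy developed in \cite[Section 4]{Z5} for the modules $\CW_{c,a}^{(i)}$. This yields the required inequality (in fact equality) by transporting $\ell$-weight multiplicities along the limit.

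First I would realize each $N_{m,a}^{(i)}$ as the irreducible head of a cyclic tensor product of Kirillov--Reshetikhin modules. Using Example \ref{ss: l-weights}, the highest $\ell$-weight $\Bn_{q^m,a}^{(i)}$ factors, up to the equivalence $\equiv$ by a one-dimensional module, as $\varpi_{m,a\hat{q}_i^2}^{(i)} \prod_{j \sim i} \varpi_{m,b_j(a,m)}^{(j)}$ for explicit spectral parameters $b_j(a,m) \in aq^{\BZ}$. Assembling the corresponding KR modules, the cyclicity criterion of Theorem \ref{thm: tensor KR even}, Corollary \ref{cor: tensor KR even involution}, and the duality functor $\SG^*$ of Lemma \ref{lem: duality by permutation} (to cover the case $i > M$) guarantee that the associated tensor product $T_m$ is of highest $\ell$-weight, with head $N_{m,a}^{(i)}$.

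Next I would apply the asymptotic rescaling procedure of \cite[Section 4.3]{Z5}: fix bases of the $\ell$-weight spaces of $N_{m,a}^{(i)}$ compatibly in $m$, and rescale the action of the Drinfeld currents so that all matrix entries become Laurent polynomials in $q^m$. Formally substituting $q^m \mapsto c$ and passing to the stable limit produces a $U_q(\Gaff)$-module $\CN_{c,a}^{(i)}$ in category $\BGG$ of highest $\ell$-weight $\Bn_{c,a}^{(i)}$. By construction, $\dim (\CN_{c,a}^{(i)})_{\Bn_{c,a}^{(i)} \Bm} = \lim_{m \to \infty} \dim (N_{m,a}^{(i)})_{\Bn_{q^m,a}^{(i)} \Bm}$; combining Lemma \ref{lem: asym N} with the convergence of normalized $q$-characters of the KR factors (Lemma \ref{lem: negative pre char}(i)), this limit equals $\dim (N_{i,a}^-)_{\Bn_{i,a}^- \Bm}$.

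Under the genericity hypothesis $c^2 \notin q^{\BZ}$, I would verify irreducibility of $\CN_{c,a}^{(i)}$: any nontrivial proper subquotient would correspond to a factorization of $\Bn_{c,a}^{(i)}$ in $\BR_U$ producing a multiplicative relation of the form $c^{2k} \in q^{\BZ}$ with $k \neq 0$, contradicting the hypothesis. Hence $\CN_{c,a}^{(i)} \cong L(\Bn_{c,a}^{(i)})$, and we obtain $\dim L(\Bn_{c,a}^{(i)})_{\Bn_{c,a}^{(i)} \Bm} = \dim (N_{i,a}^-)_{\Bn_{i,a}^- \Bm}$, which is stronger than the claimed inequality. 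The main obstacle will be engineering the rescaling so that denominators are cleared uniformly in $m$ and the limiting action on each $\ell$-weight space is well-defined; this demands a delicate inductive choice of basis and careful control of the spectra of $\phi_i^{\pm}(z)$, in parallel to the construction of $\CW_{c,a}^{(i)}$ in \cite{Z5}.
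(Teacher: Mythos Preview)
Your proposal has a genuine circularity problem and, even taken on its own terms, two real gaps.

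First, the circularity. In the paper's architecture, Lemma~\ref{lem: simple N} is precisely the input used later (in the proof of Proposition~\ref{prop: asymptotic T}) to show that the asymptotic module $\CN_{c,a}^{(i)}$ is irreducible and has normalized $q$-character equal to $\nqc(N_{i,a}^-)$. Your plan is to prove Lemma~\ref{lem: simple N} by first constructing $\CN_{c,a}^{(i)}$, showing it is irreducible, and identifying it with $L(\Bn_{c,a}^{(i)})$. So you are proposing to deduce the lemma from its own principal consequence.

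Second, even if one attempts your route independently, two steps are not justified. (a) You claim $\lim_{m\to\infty}\dim (N_{m,a}^{(i)})_{\Bn_{q^m,a}^{(i)}\Bm} = \dim (N_{i,a}^-)_{\Bn_{i,a}^-\Bm}$ by combining Lemma~\ref{lem: asym N} with convergence of KR $q$-characters. But Lemma~\ref{lem: asym N} gives only the inequality $\leq$, and convergence of $\nqc(W_{m,a}^{(i)})$ to $\nqc(L_{i,a}^-)$ says nothing about the irreducible heads $N_{m,a}^{(i)}$: these are proper subquotients of the KR tensor products, and there is no a priori reason their $q$-characters should saturate the limit. (b) Your irreducibility argument---that a nontrivial subquotient would force $c^{2k}\in q^{\BZ}$---is asserted without mechanism; reducibility of a highest $\ell$-weight module is not in general detected by a factorization of its highest $\ell$-weight.

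The paper's actual proof is short and direct, and avoids all of this. One writes $\Bn_{i,a}^- \equiv \Bn_{c,a}^{(i)} \prod_{j\sim i}\Psi_{j,aq_{ij}c_{ij}^2}^{-1}$, realizes $N_{i,a}^-$ as a subquotient of $L(\Bn_{c,a}^{(i)}) \otimes (\otimes_{j\sim i} L_{j,aq_{ij}c_{ij}^2}^-)\otimes D$, and then uses that the $\ell$-weights of $N_{i,a}^-$ involve only $A_{i',b}^{-1}$ with $b\in aq^{\BZ}$ while those of the $L_{j,aq_{ij}c_{ij}^2}^-$ factors involve only $b'\in ac^{\pm 2}q^{\BZ}$. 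The hypothesis $c^2\notin q^{\BZ}$ makes these cosets disjoint, forcing the prefundamental factors to contribute trivially and pushing the entire $\ell$-weight multiplicity into $L(\Bn_{c,a}^{(i)})$. This spectral-parameter separation is the key idea you are missing.
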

\begin{proof}
From Example \ref{ss: l-weights} we obtain 
$$\Bn_{i,a}^- \equiv \Bn_{c,a}^{(i)} \prod_{j\sim i} \Psi_{j,aq_{ij}c_{ij}^2}^{-1}. $$
Viewing $N_{i,a}^-$ as a sub-quotient of $L(\Bn_{c,a}^{(i)}) \otimes (\otimes_{j\sim i} L_{j,aq_{ij}c_{ij}^2}^-) \otimes D$ with $D$ being a one-dimensional $Y_q(\Glie)$-module, we have $\Bm = \Bm' \prod_{j\sim i} \Bm^j$ with 
$$\Bn_{c,a}^{(i)} \Bm' \in \lwt(L(\Bn_{c,a}^{(i)})), \quad  \Psi_{j,aq_{ij}c_{ij}^2}^{-1}\Bm^j \in \lwt(L_{j,aq_{ij}c_{ij}^2}^-) \quad \mathrm{for}\ j \sim i. $$
By Corollary \ref{cor: TQ negative} and Lemmas \ref{lem: positive pre char}--\ref{lem: negative pre char} we have:
\begin{itemize} 
\item[(1)] $\Bm, \Bm' \in \lCQ^- q^{\BQ^-}$ and $\Bm$ is a monomial in the $A_{i',b}^{-1}$ with $i' \in I_0$ and $b \in aq^{\BZ}$;
\item[(2)] $\Bm^j$ is a monomial in the $A_{i',b'}^{-1}$ with $i' \in I_0$ and $b' \in \{ac^2,ac^{-2}\} q^{\BZ}$ for $j \sim i$.
\end{itemize}
Since $\{ac^2,ac^{-2}\} q^{\BZ}$ and $a q^{\BZ}$ do not intersect, $\Bm^j = 1$ and $\Bm' = \Bm$.
\end{proof}

For $m_1, m_2 \in \BZ_{>0}$ with $m_1 < m_2$, let $Z_{i,a}^{m_1,m_2}$ be the irreducible $U_q(\Gaff)$-module of highest $\ell$-weight $\Bn_{q^{m_2},a}^{(i)} (\Bn_{q^{m_1},a}^{(i)})^{-1} = \prod_{j\sim i} \aBw_{q_{ij}^{m_1-m_2},aq_{ij}^{1+2m_1}}^{(j)}$; by Lemma \ref{lem: simple O} (3) it is finite-dimensional. Fix $v^{m_1,m_2} \in Z_{i,a}^{m_1,m_2}$ to be a highest $\ell$-weight vector.
\begin{lem}  \label{lem: cyclicity}
The $U_q(\Gaff)$-module $N_{m_1,a}^{(i)} \otimes Z_{i,a}^{m_1,m_2} \otimes Z_{i,a}^{m_2,m_3}$ is of highest $\ell$-weight for $0 < m_1 < m_2 < m_3$.
\end{lem}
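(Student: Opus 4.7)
The strategy is to realize the triple tensor product as a quotient of an explicit tensor product of Kirillov--Reshetikhin modules and then invoke the cyclicity criterion of Theorem~\ref{thm: tensor KR even}, possibly after applying the duality functor $\SG^*$ of Lemma~\ref{lem: duality by permutation}.

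The first step uses the multiplicativity identity $\aBw_{c,b}^{(j)}\aBw_{c',bc^{-2}}^{(j)} = \aBw_{cc',b}^{(j)}$ in $\lCP$ together with the equivalence $\aBw_{q_j^n, bq_j}^{(j)} \equiv \varpi_{n,b}^{(j)}$, both immediate from the formulas in Example~\ref{ss: l-weights}. Writing out $\Bn_{q^{m_1},a}^{(i)}$ and each highest $\ell$-weight of $Z_{i,a}^{m_l,m_{l+1}}$ in terms of the $\varpi$, one sees that (when $i \ne M$) $N_{m_1,a}^{(i)}$ is a simple quotient of $W_{1,a_\ast}^{(i)} \otimes \bigotimes_{j \sim i} W_{m_1,c_j}^{(j)}$, and by Corollary~\ref{cor: level 0 Demazure} each $Z_{i,a}^{m_l,m_{l+1}}$ is isomorphic (after tensoring by a one-dimensional module) to $\bigotimes_{j \sim i} W_{m_{l+1}-m_l,\, c_j q_j^{-2m_l}}^{(j)}$, for explicit $a_\ast, c_j \in aq^{\BZ}$. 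Hence the whole triple tensor product is a quotient of an explicit tensor product of KR modules at nodes in $\{i\}\cup\{j \sim i\}$. The crucial structural feature is that at each neighboring node $j$ the three KR blocks telescope: their combined $\ell$-weight is $\aBw_{q_j^{m_3},c_j}^{(j)} \equiv \varpi_{m_3,c_jq_j^{-1}}^{(j)}$, so the Drinfeld roots of the three factors form a single consecutive $q$-segment at each $j \sim i$.

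After ordering the KR factors by decreasing spectral parameter within each node, Condition~\eqref{cond: cyclicity 1} of Theorem~\ref{thm: tensor KR even} is verified pairwise by a direct calculation using $0 < m_1 < m_2 < m_3$: the telescoping ensures that the ratio of any two spectral parameters lies outside the forbidden set $q^{2p-2m_k}\CS(i_j,i_k)$. When $i \leq M-1$, all involved nodes lie in $\{1,\dots,M\}$, so Theorem~\ref{thm: tensor KR even} applies directly and the KR tensor product—and hence its quotient, the triple tensor product—is of highest $\ell$-weight. When $i \geq M+1$, we transfer the problem to $\BGG'$ via the contravariant equivalence $\SG^*$ of Lemma~\ref{lem: duality by permutation}, which sends KR modules to KR modules and identifies node $i$ of $\Glie$ with node $\kappa-i \leq N-1$ of $\Glie'=\mathfrak{gl}(N|M)$; the reduced problem falls into the previous case.

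The main obstacle is the special case $i=M$: here $\Bn_{q^{m_1},a}^{(M)}$ contains a $\varpi_{m_1,\cdot}^{(M)}$-type factor at the odd node (rather than a single $\varpi_{1,\cdot}^{(M)}$-factor), and the diagram subalgebra $U_M$ is not of $U_q(\widehat{\mathfrak{sl}_2})$-type so Theorem~\ref{thm: tensor KR even} is unavailable directly. We rewrite each factor through Corollary~\ref{cor: KR evaluation} and Equation~\eqref{equ: KR odd} as a tensor product of fundamental modules, whereupon the required cyclicity reduces precisely to the tensor product configuration of $V_{N,\cdot}^-$, the $V_{N-1,\cdot}^-$ and the $V_{M-1,\cdot}^+$ handled by Lemma~\ref{lem: tensor fund}. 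The commutation clause of that lemma affords the freedom to interleave the odd and even factors across the node $M$ in the correct order, and cyclicity of the triple tensor product follows.
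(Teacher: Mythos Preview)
Your proposal is correct and follows essentially the same approach as the paper: realize the triple tensor product as a quotient of an explicit tensor product of KR modules, invoke the cyclicity criterion for $i < M$, transfer the case $i > M$ via $\SG^*$, and treat $i = M$ through the fundamental-module configuration of Lemma~\ref{lem: tensor fund}. Two minor remarks: the paper verifies Condition~\eqref{cond: cyclicty 2} of Corollary~\ref{cor: tensor KR even involution} for the natural block order $T_N \otimes T_{Z,1} \otimes T_{Z,2}$ rather than Condition~\eqref{cond: cyclicity 1}, which avoids any question of whether a reordering preserves the quotient map; and your description of the $i=M$ obstruction is slightly off---the factor at node $M$ in $\Bn_{q^{m_1},a}^{(M)}$ is $\aBw_{q^{-1},aq^{-2}}^{(M)}$, a $\varpi_1$-type (not $\varpi_{m_1}$-type) piece, with the $m_1$-dependence sitting at the neighboring nodes---but this does not affect your reduction to Lemma~\ref{lem: tensor fund}.
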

\begin{proof}
We shall assume $1\leq i \leq M$. The case $M+1<i<M+N$ can be deduced from $1\leq i < M$ using $\SG^*$. (See typical arguments in the proof of Lemma \ref{lem: tensor product T1}.) 

Suppose $1\leq i < M$. By Corollary \ref{cor: level 0 Demazure}, $Z_{i,a}^{m_1,m_2} \simeq \otimes_{j\sim i} W_{m_2-m_1,aq^{-2m_1-2}}^{(j)}$.
The tensor product $W_{1,aq}^{(i)} \otimes (\otimes_{j\sim i}W_{m_1,aq^{-2}}^{(j)})$ satisfies Condition \eqref{cond: cyclicty 2} and is of highest $\ell$-weight. Its irreducible quotient is $\simeq N_{m_1,a}^{(i)}$. Next,
$$W_{1,aq}^{(i)} \otimes (\otimes_{j\sim i}W_{m_1,aq^{-2}}^{(j)} ) \otimes  (\otimes_{j\sim i} W_{m_2-m_1,aq^{-2m_1-2}}^{(j)}) \otimes (\otimes_{j\sim i} W_{m_3-m_2,aq^{-2m_2-2}}^{(j)}) $$
also satisfies Condition \eqref{cond: cyclicty 2}, and is of highest $\ell$-weight, implying that $N_{m_1,a}^{(i)} \otimes Z_{i,a}^{m_1,m_2} \otimes Z_{i,a}^{m_2,m_3}$ is of highest $\ell$-weight.

Suppose $i = M$. Consider the tensor product of fundamental modules:
$$ T := V_{N,aq^{-N-3}}^- \otimes (\otimes_{l=1}^{m_3} V_{N-1,aq^{-N+2l-1}}^-) \otimes (\otimes_{k=1}^{m_3}V_{M-1,aq^{-N+2M-2k-1}}^+). $$
By Lemma \ref{lem: tensor fund}, $T$ is of highest $\ell$-weight and 
\begin{align*}
T \cong &\ V_{N,aq^{-N-3}}^- \otimes (\otimes_{l=1}^{m_1} V_{N-1,aq^{-N+2l-1}}^-) \otimes (\otimes_{k=1}^{m_1}V_{M-1,aq^{-N+2M-2k-1}}^+) \otimes \\
 & (\otimes_{l=m_1+1}^{m_2} V_{N-1,aq^{-N+2l-1}}^-) \otimes (\otimes_{k=m_1+1}^{m_2}V_{M-1,aq^{-N+2M-2k-1}}^+) \\
 & (\otimes_{l=m_2+1}^{m_3} V_{N-1,aq^{-N+2l-1}}^-) \otimes (\otimes_{k=m_2+1}^{m_3}V_{M-1,aq^{-N+2M-2k-1}}^+). 
\end{align*}
Let $T_1,T_2,T_3$ denote the above tensor products of the first, second, and third row at the right-hand side. They are of highest $\ell$-weight. By Equation \eqref{def: fund rep}, 
$$ V_{N,aq^{-N-3}}^- \simeq W_{1,aq^{-1}}^{(M-)},\quad V_{N-1,aq^{-N+1}}^- \simeq W_{1,aq^2}^{(M+1)},\quad V_{M-1,aq^{-N+2M-3}}^- \simeq W_{1,aq^{-2}}^{(M-1)}. $$
By Example \ref{ss: l-weights}, the irreducible quotients of $T_1, T_2, T_3$ are $\simeq N_{m_1,a}^{(M)}, Z_{M,a}^{m_1,m_2}$ and $  Z_{M,a}^{m_2,m_3}$, proving the cyclicity statement.
\end{proof}
Let $0 < m_1 < m_2$. The tensor product $N_{m_1,a}^{(i)} \otimes Z_{i,a}^{m_1,m_2}$ being of highest $\ell$-weight, its irreducible quotient is isomorphic to $N_{m_2,a}^{(i)}$. There exists a unique morphism of $U_q(\Gaff)$-modules $\CF_{m_2,m_1}: N_{m_1,a}^{(i)} \otimes Z_{i,a}^{m_1,m_2} \longrightarrow N_{m_2,a}^{(i)}$ which sends $v^{m_1} \otimes v^{m_1,m_2}$ to $v^{m_2}$. As in \cite[Section 4.2]{HJ}, define 
$$ F_{m_2,m_1}: N_{m_1,a}^{(i)} \longrightarrow N_{m_2,a}^{(i)},\quad w \mapsto \CF_{m_2,m_1}(w \otimes v^{m_1,m_2}). $$
Then $(\{N_{m,a}^{(i)}\}, \{F_{m_2,m_1}\})$ constitutes an inductive system of vector superspaces: $F_{m_3,m_2}F_{m_2,m_1} = F_{m_3,m_1}$ for $0 < m_1 < m_2 < m_3$. The proof is the same as that of \cite[Proposition 4.1 (2)]{Z5}, based on Lemma \ref{lem: cyclicity}. 

\begin{lem} \label{lem: injective inductive system}
Let $0 < m_1 < m_2$. We have $F_{m_2,m_1} x_{j,n}^+ = x_{j,n}^+ F_{m_2,m_1}$ for $j \in I_0$ and $n \in \BZ$. The linear map $F_{m_2,m_1}$ is injective.
\end{lem}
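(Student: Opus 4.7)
My plan splits into two independent steps. First I will prove the commutation with $x_{j,n}^+$ using the upper-triangular form of the coproduct, and then deduce injectivity from the irreducibility of $N_{m_1,a}^{(i)}$ together with the first step.

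For the commutation, the coproduct formula \eqref{coproduct: positive} gives $\Delta(x_j^+(z)) \in x_j^+(z) \otimes 1 + \sum_{0 \neq \alpha \in \BQ^+} U_q(\Gaff)_{\alpha_j - \alpha} \otimes U_q(\Gaff)_\alpha[[z,z^{-1}]]$. Since $v^{m_1,m_2}$ is a highest $\ell$-weight vector of $Z_{i,a}^{m_1,m_2}$, every element of $U_q(\Gaff)_\alpha$ with $\alpha \in \BQ^+ \setminus \{0\}$ annihilates it, so the tail of $\Delta(x_j^+(z))$ acts as zero on $w \otimes v^{m_1,m_2}$, leaving $\Delta(x_j^+(z))(w \otimes v^{m_1,m_2}) = (x_j^+(z) w) \otimes v^{m_1,m_2}$. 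Applying the $U_q(\Gaff)$-morphism $\CF_{m_2,m_1}$ yields $x_j^+(z) F_{m_2,m_1}(w) = F_{m_2,m_1}(x_j^+(z) w)$, and extracting the coefficient of $z^n$ proves the first claim.

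For injectivity, I will introduce $A := \ker F_{m_2,m_1} = \{w \in N_{m_1,a}^{(i)} : w \otimes v^{m_1,m_2} \in \ker \CF_{m_2,m_1}\}$. Since $\ker \CF_{m_2,m_1}$ is a submodule, the identity $x_{j,n}^+(w \otimes v^{m_1,m_2}) = (x_{j,n}^+ w) \otimes v^{m_1,m_2}$ from the first step shows $A$ is stable under every $x_{j,n}^+$. Suppose for contradiction that $A \neq 0$ and pick a non-zero $w \in A$. The finite-dimensional subspace $U_q^+ w \subseteq A$ contains a non-zero vector $u$ of maximal weight within $U_q^+ w$, necessarily killed by every $x_{j,n}^+$. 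I will argue that such a $u$ must be a scalar multiple of $v^{m_1}$: by irreducibility $U_q(\Gaff) u = N_{m_1,a}^{(i)}$, while the triangular decomposition $U_q(\Gaff) = U_q^- U_q^0 U_q^+$ combined with $U_q^+ u \subseteq \BC u$ forces $N_{m_1,a}^{(i)}$ to be contained in the weight spaces of weights $\leq \wt(u)$; so $\wt(u)$ equals the highest weight, and one-dimensionality of the highest weight space gives $u \in \BC^\times v^{m_1}$. Thus $v^{m_1} \in A$, giving $v^{m_2} = F_{m_2,m_1}(v^{m_1}) = 0$, contradicting $v^{m_2} \neq 0$.

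The main obstacle is the final step: justifying that a non-zero weight vector in the irreducible module $N_{m_1,a}^{(i)}$ annihilated by every $x_{j,n}^+$ must lie in $\BC v^{m_1}$. Once this standard highest-weight fact is established, the rest is a clean coproduct computation powered by the triangularity of $\Delta(x_j^+(z))$ on highest $\ell$-weight vectors.
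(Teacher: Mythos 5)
Your argument is correct and is essentially the one the paper invokes by citation (to \cite[Theorem 3.15]{HJ} and the first two paragraphs of the proof of \cite[Proposition 4.3]{Z5}): commutation with the $x_{j,n}^+$ via the triangular coproduct \eqref{coproduct: positive} acting on the highest $\ell$-weight vector $v^{m_1,m_2}$, then injectivity because a non-zero kernel, being $x_{j,n}^+$-stable and weight-graded, would contain a maximal-weight vector killed by all raising operators, hence $v^{m_1}$ itself, contradicting $F_{m_2,m_1}(v^{m_1})=v^{m_2}\neq 0$. The only cosmetic point is to take $w$ (or $u$) to be a weight vector, which is harmless since $F_{m_2,m_1}$ is homogeneous of degree $\wt(v^{m_1,m_2})$ so its kernel is weight-graded.
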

\begin{proof}
This is \cite[Theorem 3.15]{HJ}. For a proof independent of $\ell$-weights, we refer to the first two paragraphs of the proof of \cite[Proposition 4.3]{Z5}; the coproduct $\Delta(e_i^+)$ therein should be replaced by the $\Delta(x_{j,n}^+)$ in Equation \eqref{coproduct: positive}. 
\end{proof}
\begin{lem} \label{lem: F and phi}
Let us write $(h_1(z),h_2(z),\cdots,h_{\kappa}(z);\even) :=  \Bn_{q^{m_2},a}^{(i)} (\Bn_{q^{m_1},a}^{(i)})^{-1}  \in \BR_U$ for $m_2 > m_1 > 0$. Then for $j \in I_0$ we have
$$K_j^{\pm}(z) F_{m_2,m_1} =  h_j(z) \times F_{m_2,m_1} K_j^{\pm}(z) \in \mathrm{Hom}_{\BC}(N_{m_1,a}^{(i)}, N_{m_2,a}^{(i)})[[z^{\pm 1}]]. $$
Here for $\pm$ we take Taylor expansions of $h_j(z)$ at $z = 0, z = \infty$ respectively.
\end{lem}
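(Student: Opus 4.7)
The plan is to reduce the claim to the triangularity of the coproduct of $K_j^{\pm}(z)$ in \eqref{coproduct: affine Cartan} together with the fact that $\CF_{m_2,m_1}: N_{m_1,a}^{(i)} \otimes Z_{i,a}^{m_1,m_2} \longrightarrow N_{m_2,a}^{(i)}$ is a $U_q(\Gaff)$-module morphism and that $v^{m_1,m_2}$ is a highest $\ell$-weight vector whose $\ell$-weight is exactly $\Bn_{q^{m_2},a}^{(i)}(\Bn_{q^{m_1},a}^{(i)})^{-1} \in \BR_U$, whose $j$-th component is $h_j(z)$.

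First I would record the two facts we need about $v^{m_1,m_2}$. On the one hand, since $Z_{i,a}^{m_1,m_2}$ is a $U_q(\Gaff)$-module whose highest $\ell$-weight lies in $\BR_U$, Lemma \ref{lem: simple O}(4) and its discussion yield
\[
K_j^{+}(z)\, v^{m_1,m_2} = h_j(z)\, v^{m_1,m_2}, \qquad K_j^{-}(z)\, v^{m_1,m_2} = h_j(z)\, v^{m_1,m_2},
\]
with $h_j(z)$ understood as Taylor expansion at $z=0$ and $z=\infty$ respectively. On the other hand, by definition of a highest $\ell$-weight vector $v^{m_1,m_2}$ is annihilated by all $s_{pq}(z)$ with $p<q$, hence by any element of $U_q(\Gaff)_\alpha$ for $\alpha \in \BQ^+ \setminus \{0\}$ (using the triangular decomposition / $\BQ$-grading of Section \ref{subsec: quantum superalgebra}).

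Next I would apply $\Delta(K_j^{\pm}(z))$ to $w \otimes v^{m_1,m_2}$ for an arbitrary $w \in N_{m_1,a}^{(i)}$ using \eqref{coproduct: affine Cartan}. The ``diagonal'' term $K_j^{\pm}(z) \otimes K_j^{\pm}(z)$ contributes $h_j(z)\,(K_j^{\pm}(z)w) \otimes v^{m_1,m_2}$, while every term in $\sum_{0\neq \alpha \in \BQ^+} U_q(\Gaff)_{-\alpha} \otimes U_q(\Gaff)_\alpha[[z^{\pm 1}]]$ vanishes because its second tensor factor kills $v^{m_1,m_2}$. Applying $\CF_{m_2,m_1}$, the intertwining property gives
\[
K_j^{\pm}(z)\, F_{m_2,m_1}(w) \;=\; \CF_{m_2,m_1}\bigl( \Delta(K_j^{\pm}(z))(w \otimes v^{m_1,m_2}) \bigr) \;=\; h_j(z)\, F_{m_2,m_1}(K_j^{\pm}(z)w),
\]
which is precisely the stated identity.

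There is no real obstacle beyond bookkeeping: one only needs to ensure that the formal series manipulation is legitimate (each coefficient of $z^{\pm n}$ involves a finite sum in the weight direction, so the action of $\Delta(K_j^{\pm}(z))$ on the concrete vector $w \otimes v^{m_1,m_2}$ is well-defined coefficient-wise) and that the $\pm$-expansions of $h_j(z)$ on the two sides agree, which is guaranteed by $\Bn_{q^{m_2},a}^{(i)}(\Bn_{q^{m_1},a}^{(i)})^{-1} \in \BR_U$ lifting uniquely to $U_q(\Gaff)$.
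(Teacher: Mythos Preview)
Your proposal is correct and follows exactly the approach the paper intends: the paper's proof simply cites \cite[Proposition 4.2]{HJ} together with the coproduct triangularity \eqref{coproduct: affine Cartan}, and you have written out precisely that argument. The only point worth tightening is the justification that $U_q(\Gaff)_\alpha$ kills $v^{m_1,m_2}$ for $0\neq\alpha\in\BQ^+$: the cleanest reason is the weight-space argument (since $Z_{i,a}^{m_1,m_2}$ is highest $\ell$-weight in category $\BGG$, all its weights lie in $\varpi(\Bn_{q^{m_2},a}^{(i)}(\Bn_{q^{m_1},a}^{(i)})^{-1})\,q^{\BQ^-}$), rather than an appeal to the $s_{pq}(z)$ alone.
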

\begin{proof}
The same as \cite[Proposition 4.2]{HJ} in view of Equation \eqref{coproduct: affine Cartan}. 
\end{proof}
All the $h_j(z) \in \BC[[z]]$ are of the form $A(z) q^{-m_2} + B(z) + C(z) q^{m_2}$ where $A(z), B(z), C(z) \in \BC[[z]]$ are independent of $m_2$. Let $j \in I_0$. If $j \sim i$, then 
$$\phi_j^{\pm}(z) F_{m_2,m_1} = q_{ij}^{m_1-m_2} \frac{1-zaq_{ij}^{1+2m_2}}{1-zaq_{ij}^{1+2m_1}} \times F_{m_2,m_1} \phi_j^{\pm}(z).$$
Otherwise, $F_{m_2,m_1}$ commutes with $\phi_j^{\pm}(z)$ for $|j-i| \neq 1$.

From Lemmas \ref{lem: asym N} and \ref{lem: injective inductive system}--\ref{lem: F and phi}, we conclude that: the normalized $q$-characters $\nqc(N_{m,a}^{(i)})$ for $m \in \BZ_{>0}$ are polynomials in $\BZ[A_{j,b}^{-1}]_{(j,b) \in I_0\times aq^{\BZ}}$, and as $m \rightarrow \infty$ they converge to a formal power series $\lim\limits_{m\rightarrow \infty} \nqc(N_{m,a}^{(i)}) \in \BZ[[A_{j,b}^{-1}]]_{(j,b) \in I_0\times aq^{\BZ}}$, which is bounded above by the normalized $q$-character $\nqc(N_{i,a}^-)$.

\begin{lem} \label{lem: F negative}
For $j \in I_0$ and $m_2-1 > m > 0$ we have
\begin{align*}
x_{j,0}^- F_{m_2,m} &= F_{m_2,m} x_{j,0}^- \quad \mathrm{if}\ |j-i| \neq 1, \\
x_{j,0}^- F_{m_2,m} &= F_{m_2,m+1} (q^{m_2} A_{j,m} + q^{-m_2} C_{j,m}) \quad \mathrm{if}\ |j-i| = 1.
\end{align*}
 Here $A_{j,m}, C_{j,m}: N_{m,a}^{(i)} \longrightarrow N_{m+1,a}^{(i)}$ are linear maps of parity $|\alpha_j|$.
\end{lem}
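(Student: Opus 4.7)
My plan is to evaluate $x_{j,0}^- F_{m_2,m}(w) = \CF_{m_2,m}\bigl(\Delta(x_{j,0}^-)(w\otimes v^{m,m_2})\bigr)$ by inserting the Gauss identity
\[
x_{j,0}^- = \frac{f_{j+1,j;0}^- - f_{j+1,j;0}^+}{q_j^{-1}-q_j}
\]
from Remark~\ref{rem: two Gauss decompositions} and reducing each coproduct $\Delta(f_{j+1,j;0}^{\pm})$ on $w\otimes v^{m,m_2}$ to a two-term form. The reduction comes from expanding the RTT coproducts of $s_{j+1,j}(z)$ and $t_{j+1,j}(z)$ via Gauss decomposition: cross terms with summation index $k\notin\{j,j+1\}$ either have second tensor factor of strictly positive weight (and hence annihilate the highest $\ell$-weight vector $v^{m,m_2}$) or lie in $\ker \CF_{m_2,m}$ via iterated use of the same observation. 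What survives is
\[
\Delta(f_{j+1,j;0}^{\pm})(w\otimes v^{m,m_2}) \equiv w\otimes f_{j+1,j;0}^{\pm} v^{m,m_2} + \mu_{\pm}\,(f_{j+1,j;0}^{\pm} w)\otimes v^{m,m_2},
\]
with $\mu_{\pm}$ the scalar eigenvalue of $K_{j,0}^{\pm} K_{j+1,0}^{\pm,-1}$ on $v^{m,m_2}$.

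Next I would compute $\mu_{\pm}$ using that $Z_{i,a}^{m,m_2}$ is, up to one-dimensional twists, a tensor product of Kirillov--Reshetikhin modules indexed by $j'\sim i$ (as in the proof of Lemma~\ref{lem: cyclicity}) and that $\phi_j^{\pm}(z)$ acts on the highest $\ell$-weight vector of $W_{m_2-m,b}^{(j')}$ by $1$ when $j\ne j'$ and by the explicit rational function recalled in the introduction otherwise. This yields $\mu_+=\mu_-=1$ when $|j-i|\ne 1$, and $\mu_{\pm}=q_*^{\pm(m_2-m)}$ with $q_*\in\{q,q^{-1}\}$ when $|j-i|=1$. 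For $|j-i|\ne 1$ a classical-weight argument (analogous to the one showing that $x_{j,0}^-$ kills the highest weight vector of $W_{m,b}^{(j')}$ for $j\ne j'$ inside the proof of Corollary~\ref{cor: KR evaluation}) further gives $x_{j,0}^- v^{m,m_2}=0$; together with $\mu_{\pm}=1$ this collapses the Gauss identity to $\Delta(x_{j,0}^-)(w\otimes v^{m,m_2})\equiv (x_{j,0}^-w)\otimes v^{m,m_2}$, which after applying $\CF_{m_2,m}$ is the desired commutation $x_{j,0}^- F_{m_2,m}=F_{m_2,m}\,x_{j,0}^-$.

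For $|j-i|=1$ the two Gauss channels supply the opposite scalars $q_*^{\pm(m_2-m)}$, the source of the $q^{m_2}$ and $q^{-m_2}$ dependencies. I would then substitute the inductive-system identity $F_{m_2,m}=F_{m_2,m+1}\circ F_{m+1,m}$ so that the $(f_{j+1,j;0}^{\pm}w)\otimes v^{m,m_2}$ pieces land in the image of $F_{m_2,m+1}$. The residual vector $x_{j,0}^- v^{m,m_2}$ would be handled by applying the same two-term reduction a second time, this time through the surjection $Z_{i,a}^{m,m+1}\otimes Z_{i,a}^{m+1,m_2}\twoheadrightarrow Z_{i,a}^{m,m_2}$ provided by Lemma~\ref{lem: cyclicity}; each term it produces again carries either a $q_*^{m_2}$ or a $q_*^{-m_2}$ scalar and lies in the image of $F_{m_2,m+1}$. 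Absorbing the residual $q^{\pm m}$ constants into the definitions, the maps $A_{j,m}$ and $C_{j,m}$ are read off as the $m_2$-independent coefficients of $q^{m_2}$ and $q^{-m_2}$; their parity $|\alpha_j|$ is inherited from $f_{j+1,j;0}^{\pm}$.

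The hard part is the first step: rigorously justifying that the full RTT coproduct of the Gauss generator $f_{j+1,j;0}^{\pm}$, which a priori contains many cross-terms, reduces modulo $\ker\CF_{m_2,m}$ to the two-term Drinfeld-type expression displayed above. This amounts to a careful book-keeping of the Gauss cross-summands in the super setting, using the correct sign conventions from Equation~\eqref{iso:involution}, and is analogous to but more intricate than the non-graded calculations in \cite{HJ,Jimbo2}.
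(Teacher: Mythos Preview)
Your approach has a genuine gap in the case $|j-i|=1$. First, the ``two Gauss channels'' picture is misleading: since $s_{j+1,j}^{(0)}=0$ by (R3), the Gauss decomposition gives $f_{j+1,j;0}^+=0$, so only the $f^-$ channel contributes to $x_{j,0}^-$ and only the single scalar $\mu_-=q_{ij}^{m_2-m}$ arises from the Cartan factor. The two powers $q^{\pm m_2}$ do \emph{not} come from $\mu_\pm$. In fact $\Delta(x_{j,0}^-)=1\otimes x_{j,0}^-+x_{j,0}^-\otimes\phi_{j,0}^-$ is an exact two-term identity (the RTT coproduct of $t_{j+1,j}^{(0)}$ has only the summands $k\in\{j,j+1\}$), so what you flag as the ``hard part'' is immediate.

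The real missing ingredient is the computation of $x_{j,0}^- v^{m,m_2}$. Iterating the two-term reduction through $\CG_{m_2,m}:Z_{i,a}^{m,m+1}\otimes Z_{i,a}^{m+1,m_2}\twoheadrightarrow Z_{i,a}^{m,m_2}$ produces, besides the good term $\CG_{m_2,m}(x_{j,0}^- v^{m,m+1}\otimes v^{m+1,m_2})$, also $\CG_{m_2,m}(v^{m,m+1}\otimes x_{j,0}^- v^{m+1,m_2})$; the latter does \emph{not} feed into $F_{m_2,m+1}$ via the compatibility $\CF_{m_2,m}(v\otimes\CG_{m_2,m}(v'\otimes v^{m+1,m_2}))=F_{m_2,m+1}\CF_{m+1,m}(v\otimes v')$, because the rightmost slot is no longer the highest $\ell$-weight vector, and your proposal offers no mechanism to close this recursion. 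The paper resolves it by exhibiting an explicit vector $w$ in that weight space with $x_{j,0}^+ w=0$: since $(Z_{i,a}^{m,m_2})_{\mu q^{-\alpha_j}}=\BC\, x_{j,0}^- v^{m,m_2}$ and $x_{j,0}^+ x_{j,0}^- v^{m,m_2}\neq 0$, one gets $\CG_{m_2,m}(w)=0$, which converts the bad term into a multiple of the good one and yields $x_{j,0}^- v^{m,m_2}=[m_2-m]_q\,\CG_{m_2,m}(x_{j,0}^- v^{m,m+1}\otimes v^{m+1,m_2})$. It is this $q$-integer, together with the single Cartan scalar $q_{ij}^{m_2-m}$ and $F_{m_2,m}=F_{m_2,m+1}F_{m+1,m}$, that produces the $q^{m_2}A_{j,m}+q^{-m_2}C_{j,m}$ form.
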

\begin{proof} 
This corresponds to \cite[Lemma 4.4 \& Proposition 4.5]{HJ}. Here we give a straightforward proof without induction arguments.

By Lemma \ref{lem: cyclicity}, the $U_q(\Gaff)$-module $Z_{i,a}^{m,m+1} \otimes Z_{i,a}^{m+1,m_2}$ is of highest $\ell$-weight with irreducible quotient $Z_{i,a}^{m,m_2}$; let $\CG_{m_2,m}$ be the quotient map sending $v^{m,m+1} \otimes v^{m+1,m_2}$ to $v^{m,m_2}$. We claim that for $v \in N_{m,a}^{(i)},\ v' \in Z_{i,a}^{m,m+1}$ and $j \in I_0$:
\begin{itemize}
\item[(i)] $ \CF_{m_2,m}(v \otimes \CG_{m_2,m}(v' \otimes v^{m+1,m_2})) = F_{m_2,m+1}\CF_{m+1,m}(v\otimes v')$; 
\item[(ii)] $ x_{j,0}^- v^{m,m_2} = [(m_2-m)\delta_{|j-i|,1}]_q \times \CG_{m_2,m}(x_{j,0}^-v^{m,m+1} \otimes v^{m+1,m_2})$.
\end{itemize}
Here $[n]_q := \frac{q^n-q^{-n}}{q-q^{-1}}$ for $n \in \BZ$. Assume the claim for the moment.
For $v \in N_{m,a}^{(i)}$, based on $\Delta(x_{j,0}^-) = 1\otimes x_{j,0}^- + x_{j,0}^- \otimes \phi_{j,0}^-$ we compute $x_{j,0}^- F_{m_2,m}(v)$ 
\begin{align*}
 &= x_{j,0}^- \CF_{m_2,m}(v\otimes v^{m,m_2}) = \CF_{m_2,m}\Delta(x_{j,0}^-)(v\otimes v^{m,m_2}) \\
 &= \CF_{m_2,m}(x_{j,0}^-v \otimes \phi_{j,0}^- v^{m,m_2}) + (-1)^{|v||\alpha_j|} \CF_{m_2,m}(v\otimes x_{j,0}^- v^{m,m_2}) \\
&= q_{ij}^{(m_2-m) \delta_{|j-i|,1}} F_{m_2,m}(x_{j,0}^- v) +   \\
&\quad\quad (-1)^{|v||\alpha_j|}   [(m_2-m)\delta_{|j-i|,1}]_q  \CF_{m_2,m}(v \otimes \CG_{m_2,m}( x_{j,0}^-v^{m,m+1} \otimes v^{m+1,m_2}) ) \\
&= q_{ij}^{(m_2-m) \delta_{|j-i|,1}} F_{m_2,m}(x_{j,0}^- v) + \\
&\quad\quad (-1)^{|v||\alpha_j|}   [(m_2-m)\delta_{|j-i|,1}]_q  F_{m_2,m+1}\CF_{m+1,m}(v \otimes  x_{j,0}^-v^{m,m+1}),
\end{align*}
which proves the lemma. The third and fourth identities used (ii) and (i).

Note that $\CF_{m_2,m}(1_{N_{m,a}^{(i)}} \otimes \CG_{m_2,m})$ and $\CF_{m_2,m+1} (\CF_{m+1,m} \otimes 1_{Z_{i,a}^{m+1,m_2}})$, as $U_q(\Gaff)$-linear maps from the highest $\ell$-weight module $N_{m,a}^{(i)} \otimes Z_{i,a}^{m,m+1} \otimes Z_{i,a}^{m+1,m_2}$ to $N_{m_2,a}^{(i)}$, are identical because they both send the highest $\ell$-weight vector $v^m \otimes v^{m,m+1} \otimes v^{m+1,m_2}$ to $v^{m_2}$. Applying them to $v \otimes v' \otimes v^{m+1,m_2}$ gives (i).

From the proof of Lemma \ref{lem: cyclicity} it follows that $Z_{i,a}^{m,m_2}$ is $\simeq$ irreducible quotient of a tensor product of KR modules associated to $j'\in I_0$ with $j' \sim i$. Let $\mu$ be the weight of $v^{m,m_2}$. If $|j-i| \neq 1$, then by Lemma \ref{lem: KR l-weights refined}, $\mu q^{-\alpha_j} \notin \wt(Z_{i,a}^{m,m_2})$ and $x_{j,0}^- v^{m,m_2} = 0$. Suppose $j \sim i$. Then $ (Z_{i,a}^{m,m_2})_{\mu q^{-\alpha_j}} = \BC x_{j,0}^- v^{m,m_2}$ and $q_{ij} = q^{\pm 1}$.
The equation $x_{j,0}^-\CG_{m_2,m} = \CG_{m_2,m} x_{j,0}^-$ applied to $v^{m,m+1} \otimes v^{m+1,m_2}$ gives
$$x_{j,0}^- v^{m,m_2} =\CG_{m_2,m}(q_{ij}^{m_2-m-1} x_{j,0}^- v^{m,m+1} \otimes v^{m+1,m_2} + v^{m,m+1} \otimes x_{j,0}^- v^{m+1,m_2} ). $$
Consider the following vector in $Z_{i,a}^{m,m+1} \otimes Z_{i,a}^{m+1,m_2}$ of weight $\mu q^{-\alpha_j}$:
$$ w := q_{ij}^{-1}\frac{q_{ij}^{m+1-m_2}-q_{ij}^{m_2-m-1}}{q_{ij}^{-1}-q_{ij}} x_{j,0}^- v^{m,m+1} \otimes v^{m+1,m_2} - v^{m,m+1} \otimes x_{j,0}^- v^{m+1,m_2}. $$
We have $\CG_{m_2,m}(w) \in \BC x_{j,0}^- v^{m,m_2}$ and $x_{j,0}^+ w = 0$. So $x_{j,0}^+ \CG_{m_2,m}(w) = 0$. Now $x_{j,0}^+ x_{j,0}^- v^{m,m_2} \neq 0$ forces $\CG_{m_2,m}(w) = 0$. We express $x_{j,0}^- v^{m,m_2}$
\begin{align*}
&=\CG_{m_2,m}(q_{ij}^{m_2-m-1} x_{j,0}^- v^{m,m+1} \otimes v^{m+1,m_2} + v^{m,m+1} \otimes x_{j,0}^- v^{m+1,m_2})  + \CG_{m_2,m}(w) \\
&= \left(q_{ij}^{m_2-m-1} + \frac{q_{ij}^{m-m_2}-q_{ij}^{m_2-m-2}}{q_{ij}^{-1}-q_{ij}}\right) \times  \CG_{m_2,m}(x_{j,0}^- v^{m,m+1} \otimes v^{m+1,m_2}) \\
&= [m_2-m]_{q_{ij}} \times  \CG_{m_2,m}(x_{j,0}^- v^{m,m+1} \otimes v^{m+1,m_2}),
\end{align*}
which proves (ii) because $[n]_{q_{ij}} = [n]_q$ for $n \in \BZ$. 
\end{proof}

\noindent {\bf Proof of Proposition \ref{prop: asymptotic T}. }
For $r \in \BZ_{\geq 0}$ and $l \in I$ let $K_{l,\pm r}^{\pm}$ be the coefficient of $z^{\pm r}$ in $K_l^{\pm}(z) \in U_q(\Gaff)[[z^{\pm 1}]]$. The superalgebra $U_q(\Gaff)$ is generated by:
$$ \mathcal{S} :=  \{K_{l,\pm r}^{\pm},\ x_{j,0}^-,\ x_{j,n}^+ \ |\ r \in \BZ_{\geq 0},\ n \in \BZ,\ j \in I_0,\ l \in I \}. $$
By Lemmas \ref{lem: injective inductive system}--\ref{lem: F negative},  there are $ \mathrm{Hom}_{\BC}(N_{m,a}^{(i)}, N_{m+1,a}^{(i)})$-valued Laurent polynomials $P_{s;m}(u)$ for $m \in \BZ_{>0}$ and $s \in \mathcal{S}$ such that
 $$s F_{m_2,m} = F_{m_2,m+1} P_{s;m}(q^{m_2}) \in \mathrm{Hom}_{\BC}(N_{m,a}^{(i)}, N_{m_2,a}^{(i)}) \quad \mathrm{for}\ m_2 > m+1.$$ 
These polynomials have non-zero coefficients only at $u, 1, u^{-1}$. Since $q$ is not a root of unity, the generic asymptotic construction of \cite[Section 2]{Z5} can be applied to the inductive system $(\{N_{m,a}^{(i)}\}, \{F_{m_2,m_1}\})$. Let $N_{\infty}$ be its inductive limit. Fix $c \in \BC^{\times}$. There exists a unique representation of $U_q(\Gaff)$ on $N_{\infty}$ on which $s \in \mathcal{S}$ acts as 
$$ \lim\limits_{m\rightarrow \infty} P_{s;m}(c) \in \mathrm{End}(N_{\infty})   $$
Here the $P_{s;m}(c): N_{m,a}^{(i)} \longrightarrow N_{m+1,a}^{(i)}$ for $m \in \BZ_{>0}$ form a morphism of the inductive system, so their inductive limit $\lim\limits_{m\rightarrow \infty} P_{s;m}(c)$ makes sense. As in the proof of \cite[Lemma 6.7]{Z5}, the resulting $U_q(\Gaff)$-module $\CN_{c,a}^{(i)}$ is in category $\BGG$ with $q$-character
$$ \chi_q(\CN_{c,a}^{(i)}) = \Bn_{c,a}^{(i)} \times \lim\limits_{m\rightarrow \infty} \nqc(N_{m,a}^{(i)}).  $$
Let us prove $\lim\limits_{m\rightarrow \infty} \nqc(N_{m,a}^{(i)}) = \nqc(N_{i,a}^-)$. We have seen above Lemma \ref{lem: F negative} that the left-hand side is bounded above by the right-hand side. View $L(\Bn_{c,a}^{(i)})$ as a sub-quotient of $\CN_{c,a}^{(i)}$.  If $c^2 \notin q^{\BZ}$, then by Lemma \ref{lem: simple N}, $\nqc(N_{i,a}^-)$ is bounded above by $\nqc(L(\Bn_{c,a}^{(i)}))$ and so by $(\Bn_{c,a}^{(i)})^{-1} \chi_q(\CN_{c,a}^{(i)})$, which is the left-hand side. This implies the reverse inequality and the irreducibility of $\CN_{c,a}^{(i)}$ for $c^2 \notin q^{\BZ}$. \qed

One can have asymptotic modules $\mathcal{M}_{c,a}^{(i)}$ over $U_q(\Gaff)$ as limits of $M_{q^m,a}^{(i)}$ (as in \cite[Section 7.2]{HJ}, which is slightly different from the limit construction of $\CN_{c,a}^{(i)}$). Then Equation \eqref{equ: positive TQ asy} holds with $M$ replaced by $\mathcal{M}$ for all $c,d \in \BC^{\times}$.

\begin{prop}  \label{prop: simplicity tensor product asymp}
The $U_q(\Gaff)$-module $\CW_{c_1,a_1}^{(i_1)} \otimes \CW_{c_2,a_2}^{(i_2)} \otimes \cdots \otimes \CW_{c_s,a_s}^{(i_s)}$, with $i_l \in I_0$ and $c_l,a_l \in \BC^{\times}$, is irreducible if $a_lc_l^{-2} \notin a_k q^{\BZ}$ for all $1\leq l,k \leq s$. 
\end{prop}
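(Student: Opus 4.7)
The plan is to adapt the strategy of Corollary \ref{cor: simplicity tensor product pre-fund}, which combines \cite[Lemma 5.1]{Jimbo2} with the duality functor $\SG^*$ of Lemma \ref{lem: duality by permutation}. Set $W := \CW_{c_1,a_1}^{(i_1)} \otimes \cdots \otimes \CW_{c_s,a_s}^{(i_s)}$ and $\Bf := \prod_l \aBw_{c_l,a_l}^{(i_l)}$, so that the irreducible $L(\Bf)$ is the unique sub-quotient of $W$ with highest $\ell$-weight $\Bf$. By the injectivity of $\chi_q$ on $K_0(\BGG)$ (Proposition \ref{prop: q-char ring}), irreducibility of $W$ amounts to $\chi_q(W) = \chi_q(L(\Bf))$ in $\lCE$.

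Using $\aBw_{c,a}^{(i)} = [c]_i \Psi_{i,ac^{-2}}\Psi_{i,a}^{-1}$ (Example \ref{ss: l-weights}) and $\nqc(L_{i,a}^-) = \Psi_{i,a}\chi_q(L_{i,a}^-)$, Lemma \ref{lem: negative pre char}(ii) gives the clean formula $\chi_q(\CW_{c,a}^{(i)}) = [c]_i\,\Psi_{i,ac^{-2}}\,\chi_q(L_{i,a}^-)$. The hypothesis specialized to $l=k$ yields $c_l^2 \notin q^{\BZ}$, so each factor $\CW_{c_l,a_l}^{(i_l)}$ is already irreducible. Combining multiplicativity of $\chi_q$ with Corollary \ref{cor: simplicity tensor product pre-fund} (irreducibility of the tensor product $P^- := \otimes_l L_{i_l,a_l}^-$ of negative prefundamentals),
\[
\chi_q(W) \;=\; \mathbf{F} \cdot \chi_q(P^-), \qquad \mathbf{F} \;:=\; \prod_{l=1}^s [c_l]_{i_l}\,\Psi_{i_l,a_lc_l^{-2}}.
\]
Every $\ell$-weight of $W$ therefore has the form $\Bf\cdot\Bm$ with $\Bm \in \lCQ^-$ a monomial in the $A_{j,b}^{-1}$ for $b \in \bigcup_k a_k q^{\BZ}$, by Lemma \ref{lem: negative pre char}(i).

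The remaining task is to prove the lower bound $\chi_q(L(\Bf)) \geq \mathbf{F}\chi_q(P^-)$. Write $\Bf = \mathbf{F}\Bf^-$ with $\Bf^- := \prod_l\Psi_{i_l,a_l}^{-1}$ the highest $\ell$-weight of $P^-$, so that $\mathbf{F}$ carries the positive-type factor $\Bf^+ := \prod_l \Psi_{i_l,a_lc_l^{-2}}$. The separation hypothesis $a_lc_l^{-2} \notin a_kq^{\BZ}$ says precisely that the spectral orbits supporting $\Bf^+$ are disjoint from the orbits $a_kq^{\BZ}$ on which the $A^{-1}$-modifications of $\chi_q(P^-)$ live. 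Following the Jimbo argument of \cite[Lemma 5.1]{Jimbo2}, adapted as in Corollary \ref{cor: simplicity tensor product pre-fund} with $\SG^*$ exchanging the roles of positive and negative prefundamentals, this disjointness should imply that every $\ell$-weight $\mathbf{F}\Bf^-\Bm$ appearing in $\mathbf{F}\chi_q(P^-)$ also occurs in $\chi_q(L(\Bf))$ with the same multiplicity, yielding the equality.

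The principal obstacle is extending the Jimbo-style argument to this mixed-sign setting: Corollary \ref{cor: simplicity tensor product pre-fund} handles tensor products of prefundamentals of uniform sign, whereas here each $\aBw_{c_l,a_l}^{(i_l)}$ carries both a positive factor $\Psi_{i_l,a_lc_l^{-2}}$ and a negative factor $\Psi_{i_l,a_l}^{-1}$. The delicate bookkeeping needed shows that, under the separation condition, no cancellation or mixing occurs between the positive spectral orbits contributing to $\mathbf{F}$ and the $A^{-1}$-modifications at spectral parameters in $\bigcup_k a_kq^{\BZ}$, so the multiplicity of each $\ell$-weight in $\chi_q(L(\Bf))$ is determined entirely by the factorized expression $\mathbf{F}\chi_q(P^-)$, matching $\chi_q(W)$.
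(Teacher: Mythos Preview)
Your setup and reduction are correct and match the paper: writing $P^- := \bigotimes_l L_{i_l,a_l}^-$ with highest $\ell$-weight $\Bf^- = \prod_l \Psi_{i_l,a_l}^{-1}$, one has $\chi_q(W) = \Bf\,\nqc(P^-)$, and irreducibility reduces to the lower bound $\dim (P^-)_{\Bf^-\Bm} \leq \dim L(\Bf)_{\Bf\Bm}$ for every $\Bm$ with $\Bf^-\Bm \in \lwt(P^-)$. But your last two paragraphs do not prove this bound; they describe the desired conclusion and then gesture at a ``Jimbo-style argument'' and the duality $\SG^*$, while conceding in the final paragraph that this ``mixed-sign'' extension is the principal obstacle. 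That is the gap: you never exhibit a mechanism that forces the multiplicities in $L(\Bf)$ to dominate those in $P^-$.

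The paper's device is concrete and does not use $\SG^*$ at this step; it is the argument of Lemma~\ref{lem: simple N} transported verbatim. From $\Bf^- \equiv \Bf \cdot \prod_l \Psi_{i_l,a_lc_l^{-2}}^{-1}$ one realizes the irreducible $P^-$ as a sub-quotient of $L(\Bf) \otimes D$, where $D \simeq \bigotimes_l L_{i_l,a_lc_l^{-2}}^-$ is again an irreducible tensor of negative prefundamentals (Corollary~\ref{cor: simplicity tensor product pre-fund}). Any $\ell$-weight $\Bf^-\Bm$ of $P^-$ then factors as $(\Bf\Bm')\cdot\prod_l(\Psi_{i_l,a_lc_l^{-2}}^{-1}\Bm^l)$ with $\Bf\Bm' \in \lwt(L(\Bf))$ and $\Psi_{i_l,a_lc_l^{-2}}^{-1}\Bm^l \in \lwt(L_{i_l,a_lc_l^{-2}}^-)$. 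By Lemma~\ref{lem: negative pre char}, $\Bm$ is a monomial in the $A_{j,b}^{-1}$ with $b \in \bigcup_k a_kq^{\BZ}$, each $\Bm^l$ is such a monomial with $b \in a_lc_l^{-2}q^{\BZ}$, and $\Bm' \in \lCQ^- q^{\BQ^-}$ since $L(\Bf)$ is a sub-quotient of $W$. The hypothesis $a_lc_l^{-2} \notin a_kq^{\BZ}$ makes these orbits disjoint, forcing $\Bm^l = 1$ for all $l$, hence $\Bm' = \Bm$ and the required dimension inequality. Your appeal to $\SG^*$ is a red herring here: the duality was used in Corollary~\ref{cor: simplicity tensor product pre-fund} to pass between positive and negative prefundamentals, but the present bound needs only the sub-quotient trick above.
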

\begin{proof}
Let $L := \otimes_{l=1}^s L_{i_l,a_l}^-$ and $S = L(\prod_{l=1}^s \aBw_{c_l,a_l}^{(i_l)})$, viewed as irreducible $Y_q(\Glie)$-modules by Corollary \ref{cor: simplicity tensor product pre-fund}. $S$ is a sub-quotient of the tensor product $T$ in the proposition. Let $\aBw, \aBw'$ be the highest $\ell$-weights of $L, S$ respectively. Then $\chi_q(T) = \aBw' \nqc(L)$ by Lemma \ref{lem: negative pre char}. It suffices to prove that $\dim L_{\Bn \aBw} \leq \dim S_{\Bn \aBw'}$ for all $\Bn \aBw \in \lwt(L)$. Viewing $L$ as a sub-quotient of $S \otimes D$ where $D \simeq \otimes_{l=1}^s L_{i_l,a_lc_l^{-2}}^-$, we can adapt the proof of Lemma \ref{lem: simple N} to the present situation.
\end{proof}
It follows that the tensor products of the $\CW$ at the right-hand side of Equations \eqref{equ: negative TQ asy}--\eqref{equ: positive TQ asy} are irreducible $U_q(\Gaff)$-modules for $c^2,d^2 \notin q^{\BZ}$.
\section{Proof of extended T-systems: Theorem \ref{thm: Demazure T}}  \label{sec: proof T}
The idea is to provide lower and upper bounds for $\dim (D_{m,a}^{(i,s)})$. We recall from the proof of Corollary \ref{cor: Demazure l-weights} that the $U_q(\Gaff)$-module $W_{m,aq_i^{2m+1}}^{(i)} \otimes W_{m+s,aq_i^{2m-1}}^{(i)}$ has at least two irreducible sub-quotients: $L(\varpi_{m+s+1,aq_i^{2m+1}}^{(i)} \varpi_{m-1,aq_i^{2m-1}}^{(i)})$ and $D_{m,a}^{(i,s)}$.

\begin{lem}
For $i\in I_0\setminus \{M\}$, the $U_q(\Gaff)$-module $W_{m+s,aq_i^{2m+1}}^{(i)} \otimes D_{m,a}^{(i,s)}$ has at least two sub-quotients: $L(\Bd_{m,a}^{(i,s-1)} \varpi_{m+s+1,aq_i^{2m+1}}^{(i)})$ and $L(\Bd_{m+s,aq_i^{-2s}}^{(i,0)} \varpi_{m,aq_i^{2m+1}}^{(i)})$.
\end{lem}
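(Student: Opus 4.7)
My plan is to identify the two sub-quotients separately: the first as the irreducible head of a highest-$\ell$-weight sub-module generated by the tensor product of highest weight vectors, and the second via a distinguished one-dimensional $\ell$-weight space in $D_{m,a}^{(i,s)}$ provided by Corollary \ref{cor: Demazure l-weights}(1).

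First I will take $v_0\in W_{m+s,aq_i^{2m+1}}^{(i)}$ and $w_0\in D_{m,a}^{(i,s)}$ to be highest $\ell$-weight vectors. By the triangular property of $\Delta(x_j^+(z))$ in Equation \eqref{coproduct: positive}, $v_0\otimes w_0$ is itself a highest $\ell$-weight vector of the tensor product, generating a highest-$\ell$-weight sub-module with irreducible head $L(\varpi_{m+s,aq_i^{2m+1}}^{(i)}\,\Bd_{m,a}^{(i,s)})$. Using $\varpi_{m',b}^{(i)}=q^{m'\varpi_i}\Psi_{i,bq_i^{1-2m'}}/\Psi_{i,bq_i}$ from Example \ref{ss: l-weights}, I will verify the routine telescoping identity
\[
\varpi_{m+s,aq_i^{2m+1}}^{(i)}\varpi_{m+s,aq_i^{2m-1}}^{(i)}=\varpi_{m+s+1,aq_i^{2m+1}}^{(i)}\varpi_{m+s-1,aq_i^{2m-1}}^{(i)}\quad\text{in}\ \lCP,
\]
which rewrites this head as the first sub-quotient $L(\Bd_{m,a}^{(i,s-1)}\varpi_{m+s+1,aq_i^{2m+1}}^{(i)})$.

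For the second sub-quotient, I will invoke Corollary \ref{cor: Demazure l-weights}(1) at $l=s$ to produce a non-zero $w_s\in D_{m,a}^{(i,s)}$ of $\ell$-weight $\Bd_{m,a}^{(i,s)}\prod_{l=1}^{s}A_{i,aq_i^{2-2l}}^{-1}$ spanning a one-dimensional space. Then $v_0\otimes w_s$ has $\ell$-weight $\Bf':=\varpi_{m+s,aq_i^{2m+1}}^{(i)}\,\Bd_{m,a}^{(i,s)}\prod_{l=1}^{s}A_{i,aq_i^{2-2l}}^{-1}$, and an analogous $\Psi$-calculation (shifting $l\mapsto l-s$ in the $A$-product) identifies this with $\varpi_{m,aq_i^{2m+1}}^{(i)}\,\Bd_{m+s,aq_i^{-2s}}^{(i,0)}$, the highest $\ell$-weight of the second sub-quotient. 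I will also establish that $\Bf'$ occurs in the tensor product with multiplicity exactly one: in any factorisation $\Bf'=\Bg\Bh$ with $\Bg\in\lwt(W_{m+s,aq_i^{2m+1}}^{(i)})$ and $\Bh\in\lwt(D_{m,a}^{(i,s)})$, Lemma \ref{lem: KR l-weights refined} forces every non-trivial $\Bg\cdot(\varpi_{m+s,aq_i^{2m+1}}^{(i)})^{-1}$ to contain the factor $A_{i,aq_i^{2m+2}}^{-1}$, which by Corollary \ref{cor: Demazure l-weights} cannot be cancelled by $\Bh\cdot\Bd_{m,a}^{(i,s)-1}\in\lCQ^-$; hence $\Bg=\varpi_{m+s,aq_i^{2m+1}}^{(i)}$ and $\Bh=\Bd_{m,a}^{(i,s)}\prod_{l=1}^{s}A_{i,aq_i^{2-2l}}^{-1}$.

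The principal obstacle is then to show that $\Bf'$ does \emph{not} appear in the $q$-character of sub-quotient~1, $L(\varpi_{m+s,aq_i^{2m+1}}^{(i)}\,\Bd_{m,a}^{(i,s)})$; combined with the multiplicity-one count above, this will force $L(\Bf')$ to appear as a distinct composition factor, yielding sub-quotient~2. I intend to carry out this exclusion by restriction to the diagram subalgebra $U_i\subseteq U_q(\Gaff)$, which by Remark \ref{rem: two Gauss decompositions} is a quotient of $U_{q_i}(\widehat{\mathfrak{sl}_2})$, and by the Chari--Pressley factorisation \cite[Theorem 4.8]{CP} of the restricted $\ell$-highest-weight $\widehat{\mathfrak{sl}_2}$-module into $\ell$-fundamentals, in the spirit of Step~1 of the proof of Corollary \ref{cor: Demazure l-weights}. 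The heart of the argument will be checking that the specific $\prod_{l=1}^{s}A_{i,aq_i^{2-2l}}^{-1}$ pattern of $\Bf'$ lies outside the Drinfeld-polynomial support of each $\ell$-fundamental tensor factor.
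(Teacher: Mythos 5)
Your identification of the two candidate highest $\ell$-weights, the telescoping computation in $\lCP$, the unique-factorisation argument giving multiplicity one for $\Bf'$ in $\chi_q(T)$ (where $T:=W_{m+s,aq_i^{2m+1}}^{(i)}\otimes D_{m,a}^{(i,s)}$), and the plan to exclude $\Bf'$ from $\lwt(S)$ by restricting to $U_i$ and invoking \cite[Theorem 4.8]{CP} all agree with the paper's proof. The gap is in the final inference. Knowing that $\Bf'$ occurs in $\chi_q(T)$ with multiplicity one and not at all in $\chi_q(S)$ only shows that $\Bf'$ is an $\ell$-weight of \emph{some} composition factor of $T$ other than $S$; it does not show that this factor is $L(\Bf')$. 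A priori it could be $L(\Bg)$ for some $\Bg$ with $\varpi(\Bf')<\varpi(\Bg)<\varpi(\Bm)$, i.e.\ $\varpi(\Bg)=q^{\mu-t\alpha_i}$ with $1\le t<s$, where $\mu=(3m+2s)\varpi_i-m\alpha_i$ and $\Bm$ is the highest $\ell$-weight of $T$; then $\Bf'$ would sit strictly below the top of that factor and $L(\Bf')$ need not occur at all. Your multiplicity-one count says nothing about these intermediate weights, and for $s\ge 2$ they genuinely have to be ruled out.

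The paper closes exactly this loophole with a classical weight-space count along the $\alpha_i$-string through the highest weight: $\dim T_{q^{\mu-t\alpha_i}}=t+1$ for $0\le t\le s$, while the restriction of $S$ to $U_i$ and the Chari--Pressley factorisation give $\dim S_{q^{\mu-t\alpha_i}}=\min(s,t+1)$. Hence $\chi_q(T)-\chi_q(S)$ contains no term of classical weight $q^{\mu-t\alpha_i}$ for $t<s$ and exactly one term of weight $q^{\mu-s\alpha_i}$, which must be $\Bf'$; since every composition factor other than $S$ contributes its highest $\ell$-weight to this difference, the factor containing $\Bf'$ must have highest $\ell$-weight $\Bf'$ itself. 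This dimension count is a byproduct of the same $U_i$ analysis you already intend to carry out for the exclusion $\Bf'\notin\lwt(S)$ (it is the analogue of items (A)--(B) in the proof of Corollary \ref{cor: Demazure l-weights}), so the repair is simply to record it and use it, in place of or alongside your multiplicity-one argument, when drawing the final conclusion.
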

\begin{proof}
Set $ T := W_{m+s,aq_i^{2m+1}}^{(i)} \otimes D_{m,a}^{(i,s)}$ and $S := L(\Bd_{m,a}^{(i,s-1)} \varpi_{m+s+1,aq_i^{2m+1}}^{(i)})$. By Example \ref{ss: l-weights}, $S$ is an irreducible sub-quotient of $T$. By Corollary \ref{cor: Demazure l-weights},
$$ \Bm' := \Bm \prod_{l=1}^s A_{i,aq_i^{2-2l}}^{-1} =  \Bd_{m+s,aq_i^{-2s}}^{(i,0)} \varpi_{m,aq_i^{2m+1}}^{(i)} \in \lwt(T).$$
Viewing $S$ as an irreducible sub-quotient of $W_{m+s+1,aq_i^{2m+1}}^{(i)} \otimes D_{m,a}^{(i,s-1)}$ and using Lemma \ref{lem: KR l-weights refined} and Corollary \ref{cor: Demazure l-weights}, we have $\Bm' \notin \lwt(S)$. Let $\mu := (3m+2s)\varpi_i - m \alpha_i$ so that
$\varpi(\Bm) = q^{\mu}$ and $\varpi(\Bm') = q^{\mu-s\alpha_i}$. Then
$\dim T_{q^{\mu-t\alpha_i}} = t+1$ for $0 \leq t \leq s$.

Let $v_0 \in S$ be a highest $\ell$-weight vector and let $U_i$ be the subalgebra in the proof of Corollary \ref{cor: Demazure l-weights}. Then $U_i v_0$ is an irreducible $U_{q_i}(\widehat{\mathfrak{sl}_2})$-module of highest $\ell$-weight 
$$ \Bm_i := (Y_{aq_i^{-1}} Y_{i,aq_i^{-3}} \cdots Y_{i,aq_i^{1-2s}} ) (Y_{i,aq_i^{2m+1}} Y_{i,aq_i^{2m-1}} \cdots Y_{i,aq_i^{3-2s}})$$
and factorizes as $L^i(Y_{aq_i^{-1}} Y_{i,aq_i^{-3}} \cdots Y_{i,aq_i^{3-2s}}) \otimes L^i(Y_{i,aq_i^{2m+1}} Y_{i,aq_i^{2m-1}} \cdots Y_{i,aq_i^{1-2s}})$; if $s = 1$ then the first tensor factor is trivial. For $1\leq t \leq s$, the weight space $S_{q^{\mu-t\alpha_i}}$ is spanned by the $x_{i,n_1}^-x_{i,n_2}^-\cdots x_{i,n_t}^- v_0 \in U_i v_0$ with $n_l \in \BZ$ for $1\leq l \leq t$ and is therefore of dimension $\min(s,t+1)$. Since $\Bm_i \prod_{l=1}^s (Y_{aq_i^{1-2l}}Y_{aq_i^{3-2l}})^{-1}$ is not an $\ell$-weight of $L^i(\Bm_i)$, we must have $\Bm' \notin \lwt(S)$, as in the proof of Corollary \ref{cor: Demazure l-weights}.

It follows that $\chi_q(T)-\chi_q(S)$ is $\Bm'$ plus terms of the form $\Bm'' \in \BR$ with $\varpi(\Bm'') \notin \varpi(\Bm') q^{\BQ^+}$, forcing $L(\Bm')$ to be an irreducible sub-quotient of $T$.
\end{proof}

\begin{lem}  \label{lem: tensor product T1}
Let $i \in I_0 \setminus \{M\}$. The $U_q(\Gaff)$-modules $W_{m,aq_i^{2m+1}}^{(i)} \otimes W_{m+s,aq_i^{2m-1}}^{(i)}$ and $W_{m+s,aq_i^{2m+1}}^{(i)} \otimes D_{m,a}^{(i,s)}$ are of highest $\ell$-weight, while $W_{m+s+1,aq_i^{2m+1}}^{(i)} \otimes W_{m-1,aq_i^{2m-1}}^{(i)}$, $D_{m+s,aq_i^{-2s}}^{(i,0)} \otimes W_{m,aq_i^{2m+1}}^{(i)}$ and $W_{m+s+1,aq_i^{2m+1}}^{(i)} \otimes D_{m,a}^{(i,s-1)}$ are irreducible.
\end{lem}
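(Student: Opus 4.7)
The plan is to reduce every claim to the cyclicity criteria of Section~\ref{sec: cyclicity} by realizing each $D$-module as the irreducible head of a suitable tensor product of KR modules, and then to turn each highest-$\ell$-weight statement into an irreducibility statement by the twisted-dual argument.

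First I would dispatch the two highest-$\ell$-weight claims. For $W_{m,aq_i^{2m+1}}^{(i)} \otimes W_{m+s,aq_i^{2m-1}}^{(i)}$ the ratio of spectral parameters is $q_i^2$, and condition~(\ref{cond: cyclicity 1}) of Theorem~\ref{thm: tensor KR even} reduces for $i<M$ to the inequality $p+r\neq m+s+i+1$ with $p\in\{1,\ldots,m\}$ and $r\in\{0,\ldots,i-1\}$, which is immediate since $p+r\leq m+i-1<m+s+i+1$. The case $i>M$ can be handled by the monoidal anti-equivalence $\SG^*$ of Lemma~\ref{lem: duality by permutation}, converting the statement to cyclicity of KR modules over $U_q(\Gafft)$ indexed by the even node $M+N-i$ of $\Glie'$. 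For the second highest-$\ell$-weight claim, I would use the first claim to realize $D_{m,a}^{(i,s)}$ as a quotient of that two-fold product, so that $W_{m+s,aq_i^{2m+1}}^{(i)} \otimes D_{m,a}^{(i,s)}$ becomes a quotient of
\[
W_{m+s,aq_i^{2m+1}}^{(i)} \otimes W_{m,aq_i^{2m+1}}^{(i)} \otimes W_{m+s,aq_i^{2m-1}}^{(i)},
\]
and check that all three pairwise ratios ($1$, $q_i^2$, $q_i^2$) satisfy (\ref{cond: cyclicity 1}) by the same bookkeeping as above.

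For each of the three irreducibility claims, the strategy is to show that both the tensor product and its twisted dual are of highest $\ell$-weight. For $W_{m+s+1,aq_i^{2m+1}}^{(i)} \otimes W_{m-1,aq_i^{2m-1}}^{(i)}$, a direct check shows that condition~(\ref{cond: cyclicity 1}) fails at $(p,r)=(m+1,i-1)$, but condition~(\ref{cond: cyclicty 2}) of Corollary~\ref{cor: tensor KR even involution} holds because there $p\leq m-1$ forces $p+r\leq m+i-2<m+i$. The twisted dual, computed from Lemma~\ref{lem: twisted dual KR even}, is of the same shape with spectral parameters $a^{-1}q_i^{2s+1}$ and $a^{-1}q_i^{-1}$, and the analogous combinatorial check again gives (\ref{cond: cyclicty 2}). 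The remaining two irreducibility claims are handled along identical lines: first expand $D_{m+s,aq_i^{-2s}}^{(i,0)}$ and $D_{m,a}^{(i,s-1)}$ via their head realizations, then verify (\ref{cond: cyclicty 2}) on the expanded product and on its twisted dual.

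The main obstacle I expect is handling the case $i>M$ uniformly. The functor $\SG^*$ reverses tensor factors and exchanges positive and negative prefundamentals, so careful tracking of spectral shifts is required when transporting cyclicity conditions from $\Glie$ to $\Glie'$; in particular the head realization of $D_{m,a}^{(i,s)}$ must be shown compatible with this duality. A secondary technical point is that the twisted dual of each $D_{m,a}^{(i,*)}$ needs its own realization as a quotient of a KR tensor product so that Corollary~\ref{cor: tensor KR even involution} can be applied directly, and one must verify that the resulting spectral parameters still satisfy the relevant non-intersection inequalities of $q$-segments.
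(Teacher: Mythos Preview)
Your proposal contains a genuine gap in the realization of $D_{m,a}^{(i,s)}$. You write that you would ``realize $D_{m,a}^{(i,s)}$ as a quotient of that two-fold product'', meaning $W_{m,aq_i^{2m+1}}^{(i)} \otimes W_{m+s,aq_i^{2m-1}}^{(i)}$. But this is false: the highest $\ell$-weight of that tensor product is $\varpi_{m,aq_i^{2m+1}}^{(i)}\varpi_{m+s,aq_i^{2m-1}}^{(i)}$, whereas $\Bd_{m,a}^{(i,s)}$ differs from it by the nontrivial factor $\prod_{l=1}^m A_{i,aq_i^{2l}}^{-1}$. So the irreducible head of this two-fold product is $L(\varpi_{m,aq_i^{2m+1}}^{(i)}\varpi_{m+s,aq_i^{2m-1}}^{(i)})$, not $D_{m,a}^{(i,s)}$; indeed in Theorem~\ref{thm: Demazure T} (once proved) $D_{m,a}^{(i,s)}$ appears as the \emph{submodule}, not the quotient. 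Consequently, even if your three-fold product $W_{m+s,aq_i^{2m+1}}^{(i)} \otimes W_{m,aq_i^{2m+1}}^{(i)} \otimes W_{m+s,aq_i^{2m-1}}^{(i)}$ were cyclic, this would say nothing about $W_{m+s,aq_i^{2m+1}}^{(i)} \otimes D_{m,a}^{(i,s)}$, since cyclicity passes to quotients of the right factor, not to submodules. The same error propagates to your treatment of the irreducibility claims involving $D_{m+s,aq_i^{-2s}}^{(i,0)}$ and $D_{m,a}^{(i,s-1)}$.

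The paper fixes this by using a \emph{different} head realization: for $i<M$ one checks via Remark~\ref{rem: comparison with HJ} that $\Bd_{m,a}^{(i,s)}\equiv \varpi_{m,aq^{2m}}^{(i+1)}\varpi_{m,aq^{2m}}^{(i-1)}\varpi_{s,aq^{-1}}^{(i)}$, so $D_{m,a}^{(i,s)}$ is (up to $\simeq$) the irreducible quotient of $T_{m,a}^{(i,s)}:=W_{m,aq^{2m}}^{(i+1)}\otimes W_{m,aq^{2m}}^{(i-1)}\otimes W_{s,aq^{-1}}^{(i)}$, a tensor product of KR modules at the \emph{adjacent} nodes $i\pm 1$ together with node $i$. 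With this correct realization in hand, the rest of your plan (replace $D$ by $T$, check condition~\eqref{cond: cyclicty 2}, and for irreducibility also realize $(D_{m,a}^{(i,s)})^{\vee}$ as a head of an explicit KR product $S_{m,a}^{(i,s)}$ and check~\eqref{cond: cyclicty 2} again) is exactly what the paper does. The reduction of $i>M$ to $i<M$ via $\SG^*$ is also the paper's route.
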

\begin{proof}
 Assume $i < M$. Notice that $T_{m,a}^{(i,s)} := W_{m,aq^{2m}}^{(i+1)} \otimes W_{m,aq^{2m}}^{(i-1)} \otimes W_{s,aq^{-1}}^{(i)}$ satisfies Condition \eqref{cond: cyclicty 2} and is of highest $\ell$-weight. By Remark \ref{rem: comparison with HJ}, the irreducible quotient of $T_{m,a}^{(i,s)}$ is $\simeq D_{m,a}^{(i,s)}$. To prove that the five tensor products in the lemma are of highest $\ell$-weight, we can replace $D$ by $T$ and show that the resulting tensor products of KR modules satisfy Condition \eqref{cond: cyclicty 2}. For example the last tensor product corresponds to $W_{m+s+1,aq^{2m+1}}^{(i)} \otimes W_{m,aq^{2m}}^{(i+1)} \otimes W_{m,aq^{2m}}^{(i-1)} \otimes W_{s-1,aq^{-1}}^{(i)}$.
 
 Next, $S_{m,a}^{(i,s)} := W_{s,a^{-1}q^{2s+1}}^{(i)} \otimes W_{m,a^{-1}}^{(i-1)} \otimes W_{m,a^{-1}}^{(i+1)}$ also satisfies Condition \eqref{cond: cyclicty 2} and is of highest $\ell$-weight, the irreducible quotient of which is $\simeq (T_{m,a}^{(i,s)})^{\vee}$. To establish the irreducibility of the last three tensor products in the lemma, we take twisted duals as in Lemma \ref{lem: twisted dual KR even}, replace $D^{\vee}$ by $S$, and check Condition \eqref{cond: cyclicty 2} for the resulting tensor products of KR modules. Take the fourth as an example: $W_{m+s,a^{-1}q^{2s}}^{(i-1)} \otimes W_{m+s,a^{-1}q^{2s}}^{(i+1)} \otimes W_{m,a^{-1}q^{-1}}^{(i)}$ is of highest $\ell$-weight. 
 
 This proves the lemma in the case $i < M$. 
 
 Assume $i > M$. By Lemma \ref{lem: duality by permutation}, $\SG^*(W_{m,a}^{(i)}) \simeq W_{m,aq^{N-M-2+2m}}'^{(M+N-i)}$ as $U_q(\Gafft)$-modules. Applying $\SG^{*-1}$ to the $U_q(\Gafft)$-modules $T_{m,a}'^{(M+N-i,s)}, S_{m,a}'^{(M+N-i,s)}$  we obtain that $D_{m,a}^{(i,s)}$ and $(D_{m,a}^{(i,s)})^{\vee}$ are $\simeq$ the irreducible quotients of the highest $\ell$-weight modules 
  $$W_{m,aq^{-2m}}^{(i+1)} \otimes W_{m,aq^{-2m}}^{(i-1)*} \otimes W_{s,aq}^{(i)}, \quad W_{s,a^{-1}q^{3-2s}}^{(i)} \otimes W_{m,a^{-1}q^4}^{(i-1)*} \otimes W_{m,a^{-1}q^4}^{(i+1)}$$
  respectively. Here $W_{m,a}^{(M)*} := W_{m,a}^{(M-)}$ and $W_{m,a}^{(j)*} = W_{m,a}^{(j)}$ for $j > M$. By replacing $D,D^{\vee}$ with these tensor products, we obtain eight tensor products of KR modules $W_{m,b}^{(j)}, W_{m,b}^{(M-)}$ with $j > M$ and need to show that they are of highest $\ell$-weight. Applying $\SG^*$ gives tensor products of KR modules $W_{m,b}'^{(j)}$ with $j \leq M$ over $U_q(\Gafft)$, which are shown to satisfy Condition \eqref{cond: cyclicity 1}. Consider the last tensor product in the lemma as an example. Let us prove that the $U_q(\Gaff)$-modules
  \begin{align*}
  T_1 &:=  W_{m+s+1,aq^{-2m-1}}^{(i)} \otimes W_{m,aq^{-2m}}^{(i+1)} \otimes W_{m,aq^{-2m}}^{(i-1)*} \otimes W_{s-1,aq}^{(i)}, \\
  T_2 &:=  W_{m+s+1,a^{-1}q^{3-2s}}^{(i)} \otimes W_{s-1,a^{-1}q^{5-2s}}^{(i)} \otimes W_{m,a^{-1}q^4}^{(i-1)*} \otimes W_{m,a^{-1}q^4}^{(i+1)}
  \end{align*}
are of highest $\ell$-weight. Applying $\SG^*$ to $T_1,T_2$ give ($c = q^{N-M-2}, j = M+N-i$):
   \begin{align*}
   T_1' &= W_{s-1,acq^{2s-1}}'^{(j)} \otimes W_{m,ac}'^{(j+1)} \otimes W_{m,ac}'^{(j-1)} \otimes W_{m+s+1,ac^{2s+1}}'^{(j)}, \\
   T_2' &= W_{m,a^{-1}cq^{2m+4}}'^{(j-1)}  \otimes W_{m,a^{-1}cq^{2m+4}}'^{(j+1)} \otimes W_{s-1,a^{-1}cq^3}'^{(j)} \otimes W_{m+s+1,a^{-1}cq^{2m+5}}'^{(j)}.
   \end{align*}
 The $U_q(\Gafft)$-modules $T_1',T_2'$ satisfy Condition \eqref{cond: cyclicity 1}.  
\end{proof}
For $i \in I_0$ and $m \in \BZ_{>0}$ let $d_m^{(i)} := \dim (W_{m,a}^{(i)})$; it is independent of $a \in \BC^{\times}$ because $\Phi_a^*(W_{m,1}^{(i)}) \cong W_{m,a}^{(i)}$ by Equation \eqref{iso:Z-grading}. 

\begin{theorem} \label{thm: Tsuboi}  \cite{Tsuboi}
$(d_m^{(i)})^2 = d_{m+1}^{(i)}d_{m-1}^{(i)} + d_m^{(i-1)}d_m^{(i+1)}$ for $1\leq i < M$.
\end{theorem}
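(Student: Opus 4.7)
The plan is to read off the T-system as a special case of the first short exact sequence of Theorem~\ref{thm: Demazure T} at $s=0$, then take dimensions.

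First I would establish the $s=0$ version of the resolution, namely
$$D_{m,a}^{(i,0)} \hookrightarrow W_{m,aq^{2m+1}}^{(i)} \otimes W_{m,aq^{2m-1}}^{(i)} \twoheadrightarrow W_{m+1,aq^{2m+1}}^{(i)} \otimes W_{m-1,aq^{2m-1}}^{(i)}.$$
Cyclicity of both tensor products follows directly from Theorem~\ref{thm: tensor KR even} and Corollary~\ref{cor: tensor KR even involution}: for $i<M$ one has $q_i=q$, and the two ratios $\frac{a_1}{a_2}=q^2$ with $(m_1,m_2)=(m,m)$ and $(m+1,m-1)$ both avoid $\bigcup_{p=1}^{m_k}q^{2p-2m_k}\CS(i,i)$, as a quick check shows. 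The lower bound giving the sub $D_{m,a}^{(i,0)}$ is obtained by the same $\ell$-weight/diagram-subalgebra analysis used in the proof of Corollary~\ref{cor: Demazure l-weights}, with $s=0$ making the combinatorics trivial: in the $U_i$-decomposition of $W_{m,aq^{2m+1}}^{(i)}\otimes W_{m,aq^{2m-1}}^{(i)}$, all $\ell$-weights other than the highest one and those of $W_{m+1}^{(i)}\otimes W_{m-1}^{(i)}$ are of the form $\Bd_{m,a}^{(i,0)}\Bn$ with $\Bn\in \{A_{j,aq^{2m+1}}^{-1},A_{j,aq^{2m-1}}^{-1}\}_{j\sim i}\lCQ^-$, forcing $D_{m,a}^{(i,0)}$ to appear exactly once.

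Next I would identify $D_{m,a}^{(i,0)}$ with a tensor product of KR modules at the neighbouring Dynkin nodes. Specializing Remark~\ref{rem: comparison with HJ} at $s=0$ gives
$$\Bd_{m,a}^{(i,0)}\equiv \prod_{j\in I_0:\,j\sim i}\frac{\Psi_{j,aq_{ij}^{-1}}}{\Psi_{j,aq_{ij}^{-2m-1}}}.$$
For $1\leq i<M$ one checks $q_{ij}=q^{-1}$ for every $j\sim i$ in $I_0$, and the telescoping identity $\varpi_{m,b}^{(j)}\equiv \Psi_{j,b}/\Psi_{j,bq_j^{2m}}$ rewrites the right-hand side as $\prod_{j\sim i}\varpi_{m,aq}^{(j)}$. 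For $1<i<M$ both $j=i\pm1$ contribute, and Corollary~\ref{cor: level 0 Demazure} says that $W_{m,aq}^{(i-1)}\otimes W_{m,aq}^{(i+1)}$ is irreducible; hence $D_{m,a}^{(i,0)}\simeq W_{m,aq}^{(i-1)}\otimes W_{m,aq}^{(i+1)}$ and $\dim D_{m,a}^{(i,0)}=d_m^{(i-1)}d_m^{(i+1)}$. The boundary case $i=1$ only has $j=2$ and yields $\dim D_{m,a}^{(1,0)}=d_m^{(2)}$, consistent with the convention $d_m^{(0)}:=1$.

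Finally, taking dimensions of the short exact sequence and using that $\dim W_{m,a}^{(i)}=d_m^{(i)}$ is independent of $a$ (via the automorphism $\Phi_a$ of \eqref{iso:Z-grading}), I obtain
$$d_m^{(i)}\cdot d_m^{(i)}=\dim D_{m,a}^{(i,0)}+d_{m+1}^{(i)}\cdot d_{m-1}^{(i)}=d_m^{(i-1)}d_m^{(i+1)}+d_{m+1}^{(i)}d_{m-1}^{(i)},$$
which is the claimed T-system. The main obstacle is Step~1: one has to check that the proof in Section~\ref{sec: proof T} of Lemmas~\ref{lem: tensor product T1} and the lower-bound part of Lemma~8.1 go through at $s=0$ (the degenerate factors $W_{s-1}$ drop out but the dimension-count bookkeeping still has to be carried out carefully). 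Once the resolution is in place, the rest is a purely formal manipulation of $\ell$-weights and dimensions.
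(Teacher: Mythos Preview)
Your proposal is circular. In the paper, Theorem~\ref{thm: Tsuboi} is an external input (quoted from \cite{Tsuboi}, where it is a Jacobi determinant identity for the tableaux-sum transfer matrices $\mathcal{T}_{\emptyset\subset\mu}(u)$), and it is \emph{used} as the base case $s=0$ of the induction proving Theorem~\ref{thm: Demazure T} in Section~\ref{sec: proof T}. Concretely, the argument there shows only the \emph{inequality}
\[
\dim D_{m,a}^{(i,0)} \le d_m^{(i)}d_m^{(i)} - d_{m+1}^{(i)}d_{m-1}^{(i)},
\]
coming from the surjection $W_{m,aq^{2m+1}}^{(i)}\otimes W_{m,aq^{2m-1}}^{(i)}\twoheadrightarrow W_{m+1,aq^{2m+1}}^{(i)}\otimes W_{m-1,aq^{2m-1}}^{(i)}$ and the fact that $D_{m,a}^{(i,0)}$ is a sub-quotient of the kernel. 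The identification $D_{m,a}^{(i,0)}\simeq W_{m}^{(i-1)}\otimes W_{m}^{(i+1)}$ then gives $\dim D_{m,a}^{(i,0)} = d_m^{(i-1)}d_m^{(i+1)}$, and it is precisely Theorem~\ref{thm: Tsuboi} that upgrades the inequality to an equality, i.e.\ establishes exactness. So invoking the short exact sequence to \emph{deduce} the dimension identity reverses the logical flow.

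The specific gap is in your Step~1: the assertion that the $\ell$-weight analysis of Corollary~\ref{cor: Demazure l-weights} ``forces $D_{m,a}^{(i,0)}$ to appear exactly once'' in $W_m^{(i)}\otimes W_m^{(i)}$ is not supported. That analysis locates the possible $\ell$-weights of the kernel but does not by itself rule out further irreducible sub-quotients whose highest $\ell$-weights lie in $\Bd_{m,a}^{(i,0)}\{A_{j,b}^{-1}\}\lCQ^-$; equivalently, it does not show that the kernel is irreducible. Establishing that directly is tantamount to proving the $q$-character identity $\chi_q(W_m^{(i)})^2 = \chi_q(W_{m+1}^{(i)})\chi_q(W_{m-1}^{(i)}) + \chi_q(W_m^{(i-1)})\chi_q(W_m^{(i+1)})$ --- which is exactly Tsuboi's Jacobi identity in disguise. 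If you want an argument independent of \cite{Tsuboi}, you would need to supply an independent proof of irreducibility of the kernel (or of this tableaux identity), not merely cite the cyclicity/sub-quotient machinery of Section~\ref{sec: cyclicity}.
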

\begin{proof}
For $\mu \in \mathcal{P}$, up to normalization $\mathcal{T}_{\emptyset\subset \mu}(u)$ in \cite[(2.15)]{Tsuboi} can be identified with $\chi_q(V_q^-(\mu;a))$ in Equation \eqref{for: dMA}. The dimension identity is a consequence of \cite[(3.2)]{Tsuboi}, which in turn comes from Jacobi identity of determinants. 
\end{proof}

\noindent {\bf Proof of Theorem \ref{thm: Demazure T}.} By Lemma \ref{lem: tensor product T1}, the surjective morphisms of $U_q(\Gaff)$-modules in Theorem \ref{thm: Demazure T} exist (because the third terms are irreducible quotients of the second terms) and their kernels admit irreducible sub-quotients $D_{m,a}^{(i,s)}$ and $D_{m+s,aq_i^{-2s}}^{(i,0)} \otimes W_{m,aq_i^{2m+1}}^{(i)}$ respectively. This gives:
\begin{itemize}
\item[(1)] $\dim (D_{m,a}^{(i,s)}) \leq d_m^{(i)} d_{m+s}^{(i)} - d_{m+s+1}^{(i)}d_{m-1}^{(i)}$; 
\item[(2)] $\dim(D_{m+s,aq_i^{-2s}}^{(i,0)}) d_m^{(i)} \leq d_{m+s}^{(i)} \dim(D_{m,a}^{(i,s)}) - d_{m+s+1}^{(i)} \dim(D_{m,a}^{(i,s-1)})$.
\end{itemize}
We prove the equality in (1)--(2) by induction on $s$. Suppose $s = 0$; (2) is trivial. If $i < M$, then by Example \ref{ss: l-weights} and Corollary \ref{cor: level 0 Demazure},
$$ D_{m,a}^{(i,0)} \simeq W_{m,aq^{2m}}^{(i+1)} \otimes W_{m,aq^{2m}}^{(i-1)}. $$
This together with Theorem \ref{thm: Tsuboi} shows that equality holds in (1). Making use of $\SG^*$, we can remove the assumption $i < M$, as in the proof of Lemma \ref{lem: tensor product T1}.

Suppose $s > 0$. In (2) the induction hypothesis applied to $0, s-1$ indicates that 
\begin{align*}
((d_{m+s}^{(i)})^2 -d_{m+s+1}^{(i)}d_{m+s-1}^{(i)})d_m^{(i)} \leq &\ d_{m+s}^{(i)} \dim(D_{m,a}^{(i,s)}) \\
& - d_{m+s+1}^{(i)}(d_m^{(i)} d_{m+s-1}^{(i)} - d_{m+s}^{(i)}d_{m-1}^{(i)});
\end{align*}
namely, $\dim(D_{m,a}^{(i,s)}) \geq d_m^{(i)} d_{m+s}^{(i)} - d_{m+s+1}^{(i)}d_{m-1}^{(i)}$. This implies that in (1), and henceforth in the above inequality and in (2), $\leq$ can be replaced by $=$. \qed
\begin{rem}
Let $1\leq i < M$. Apply $\SG^{*-1}$ to the second exact sequence in category $\BGG'$ of Theorem \ref{thm: Demazure T} involving $D_{m,a}'^{(M+N-i,1)}$ and take normalized $q$-characters:
\begin{multline*} 
\quad\nqc(N_{m,a}^{(i)}) \nqc(W_{m+1,aq^{-1}}^{(i)}) = \nqc(W_{m+2,aq}^{(i)})\prod_{j\in I_0: j\sim i}\nqc(W_{m,aq^{-2}}^{(j)})\\
 + A_{i,a}^{-1} \times \nqc(W_{m,aq^{-3}}^{(i)}) \prod_{j\in I_0: j \sim i}\nqc(W_{m+1,a}^{(j)}).\quad
\end{multline*}
Setting $m \rightarrow \infty$ recovers the normalized $q$-characters of Equation \eqref{equ: TQ negative}. The second exact sequence of Theorem \ref{thm: Demazure T} is likely to be true for $i = M$.
\end{rem}

Theorem \ref{thm: Demazure T} together with its proof could be adapted to quantum affine algebras, in view of the cyclicity results of \cite{Chari} and T-system \cite{Nakajima,H}. The second and third terms of the first exact sequence appeared in the proof of \cite[Theorem 4.1]{FoH} as $V', V$ by setting $(a, m, s) = (q_i^{-3}, m_2+1, m_1-m_2-2)$. In the context of graded representations of current algebras \cite[Theorem 2]{ChariDemazure} 
by taking $(\ell,\lambda) =(m+s, m \omega_i)$ so that $\nu = (2m+s)\omega_i-m\alpha_i$, the exact sequence therein is an injective resolution of the Demazure module $D(\ell,\nu)$ by fusion products of KR modules. It is natural to expect that $D_{m,1}^{(i,s)}$ admits a classical limit $(q=1)$ as $D(\ell,\nu)$; this is true when $m = s = 1$, as a particular case of \cite[Theorem 1]{Demazure}. 

\section{Transfer matrices and Baxter operators}  \label{sec: Baxter}
Let us fix an integer $\ell \in \BZ_{>0}$ (length of spin chain) and complex numbers $b_j \in \BC^{\times} \setminus q^{\BZ}$ for $1\leq j \leq \ell$ (inhomogeneity parameters). We shall construct an action of $K_0(\BGG)$ on the vector superspace $\BV^{\otimes \ell}$ as in \cite[Section 5]{FZ}. This is the XXZ spin chain with twisted periodic boundary condition, with $\BV^{\otimes \ell}$ referred to as the quantum space and objects of category $\BGG$ auxiliary spaces.

Following Definition \ref{def: category BGG}, let $\CE$ be the subset of $\lCE$ consisting of the $\sum_{\Bf \in \CP} c_{\Bf} \Bf \in \lCE$. Note that $\CE$ is a sub-ring and $\chi(W) \in \CE$ for $W$ in category $\BGG$.

We identify $\underline{i} = i_1i_2\cdots i_{\ell} \in I^{\ell}$, an $I$-string of length $\ell$, with the basis vector $v_{i_1} \otimes v_{i_2} \otimes \cdots \otimes v_{i_{\ell}}$ of $\BV^{\otimes \ell}$. Let $E_{\underline{i}\underline{j}} \in \End(\BV^{\otimes \ell})$ be the elementary matrix $\underline{k} \mapsto \delta_{\underline{j}\underline{k}} \underline{i}$ for $\underline{i}, \underline{j} \in I^{\ell}$, and let $\epsilon_{\underline{i}} := \epsilon_{i_1} + \epsilon_{i_2} + \cdots + \epsilon_{i_{\ell}} \in \BP$. 

To a $Y_q(\Glie)$-module $W$ in category $\BGG$ is by definition attached an matrix $S^W(z)$, a power series in $z$ with values in $\End(W) \otimes \End(\BV)$. We decompose 
$$S^W_{1,\ell+1}(zb_{\ell}) \cdots S^W_{13}(zb_2)S^W_{12}(zb_1) = \sum_{\underline{i},\underline{j} \in I^{\ell}} S_{\underline{i}\underline{j}}^W(z) \otimes E_{\underline{i}\underline{j}} \in \End(W) \otimes \End(\BV)^{\otimes \ell}[[z]].  $$
Then $S_{\underline{i}\underline{j}}^W(z) = \pm s_{i_{\ell}j_{\ell}}^W(zb_{\ell}) \cdots s_{i_2j_2}(zb_2)  s_{i_1j_1}^W(zb_1)$ and it sends one weight space $W_p$ for $p \in \CP$ to another of weight $p q^{\epsilon_{\underline{i}} - \epsilon_{\underline{j}}}$. Its trace over $W_p$ is well-defined: either $0$ if $\epsilon_{\underline{i}} \neq  \epsilon_{\underline{j}}$; or the usual non-graded trace of $S_{\underline{i}\underline{j}}^W(z)|_{W_p} \in \End(W_p)$ if $\epsilon_{\underline{i}} =  \epsilon_{\underline{j}}$ . 
\begin{defi} \label{def: transfer matrix}
Let $W$ be in category $\BGG$. Its associated {\it transfer matrix} is 
$$ t_W(z) := \sum_{\underline{i},\underline{j} \in I^{\ell}} \left( \sum_{p \in \wt(W)} p \times \Tr_{W_p} (S_{\underline{i}\underline{j}}^W(z) ) \right) E_{\underline{i}\underline{j}}, $$
viewed as a power series in $z$ with values in $\End(\BV^{\otimes \ell}) \otimes_{\BZ} \CE$.
\end{defi}
In \cite{BT, Tsuboi2} (for $U_q(\Gaff)$) and \cite{FH} (for an arbitrary non-twisted quantum affine algebra), transfer matrices are partial traces of universal R-matrices $\mathcal{R}(z)$. Since the existence of $\mathcal{R}(z)$ for $U_q(\Gaff)$ is not clear to the author (except the simplest case $\mathfrak{gl}(1|1)$ in \cite{Z2}), we use a different transfer matrix based on RTT. One should imagine $S^W(z)$ as the specialization of $\mathcal{R}(z)$ at $W \otimes \BV$.

As in \cite{FH}, the transfer matrix $t_W(z)$ is a twisted trace of $S^W(z)$ due to the presence of $p \in \wt(W)$. In \cite{BT,Tsuboi2} $p$ is related to an auxiliary field. 

\begin{example} \label{examp: one-dim transfer}
Consider the one-dimensional module $\BC_{\Bf}$ in Example \ref{example one-dim}:
$$ t_{\BC^{\Bf}}(z) \underline{i} = \underline{i} \times p \times \prod_{l=1}^{\ell} h(zb_l) p_{i_l}\quad \mathrm{for}\ \underline{i} \in I^{\ell}.   $$
\end{example}
\begin{prop} \label{prop: transfer matrices}
For $X,Y$ in category $\BGG$ and $a \in \BC^{\times}$, we have: 
$$t_{\Phi_a^*X}(z) = t_X(za),\quad t_X(z) t_Y(z) = t_{X\otimes Y}(z),\quad t_X(z)t_Y(w) = t_Y(w)t_X(z). $$
\end{prop}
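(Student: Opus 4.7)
The plan is to derive the three identities successively from the algebraic structures on $U_q(\Gaff)$ developed in Section~\ref{sec: basics}.

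The first identity follows directly from the definition of the rescaling automorphism \eqref{iso:Z-grading}. Since $\Phi_a$ sends $s_{ij}^{(n)}$ to $a^n s_{ij}^{(n)}$, we have $s^{\Phi_a^*X}_{ij}(z) = s^{X}_{ij}(za)$ as operators on $X$, and the weight grading $\wt(\Phi_a^*X) = \wt(X)$ is unchanged. Substituting into Definition~\ref{def: transfer matrix} gives $t_{\Phi_a^*X}(z) = t_X(za)$ termwise.

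For the second identity, my plan is to use the coproduct formula
$$\Delta(s_{ij}^{(n)}) = \sum_{m=0}^n \sum_{k\in I} \epsilon_{ijk}\, s_{ik}^{(m)} \otimes s_{kj}^{(n-m)},$$
which implies $s^{X\otimes Y}_{ij}(z) = \sum_k \epsilon_{ijk}\, s^X_{ik}(z) \otimes s^Y_{kj}(z)$ on $X \otimes Y$. Chaining this over the $\ell$ quantum sites and exploiting the weight decomposition $(X\otimes Y)_p = \bigoplus_{p_1 p_2 = p} X_{p_1}\otimes Y_{p_2}$, the graded trace factors as $\sum_{p_1 p_2 = p}(p_1 p_2)\, \Tr_{X_{p_1}}\,\Tr_{Y_{p_2}}$, which is exactly $t_X(z) t_Y(z)$. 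A careful parity-sign bookkeeping, coming from the graded permutations between tensor factors at different sites, will need to be carried out, but all such signs ultimately cancel because each move past a quantum-space factor $v_{i_l}$ produces matching contributions on $X$ and on $Y$.

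The third identity is the main point and rests on the RTT relation
$$R_{23}(z,w)\, S_{12}(z)\, S_{13}(w) = S_{13}(w)\, S_{12}(z)\, R_{23}(z,w)$$
from Definition~\ref{def: quantum affine superalgebras}. The strategy is to construct a weight-preserving invertible intertwiner $\check R^{X,Y}(z,w) \in \End(X\otimes Y)$ such that on $X\otimes Y\otimes \BV^{\otimes\ell}$,
$$\check R^{X,Y}(z,w)\, \sigma^X_{13}(z)\,\sigma^Y_{23}(w) = \sigma^Y_{23}(w)\,\sigma^X_{13}(z)\,\check R^{X,Y}(z,w),$$
where $\sigma^Z(u) := S^Z_{1,\ell+1}(u b_\ell)\cdots S^Z_{12}(u b_1)$ is the full monodromy. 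Taking the graded trace over each weight space of $X\otimes Y$ (against the scalar factor $p = p_1 p_2$), the intertwiner cancels by cyclicity of the trace and the weight-preservation of $\check R^{X,Y}$, yielding $t_X(z) t_Y(w) = t_Y(w) t_X(z)$. The main obstacle is the existence of $\check R^{X,Y}(z,w)$ in the generality of category $\BGG$, since a universal R-matrix for $U_q(\Gaff)$ is not available in the supersymmetric setting. I would build it weight-space by weight-space, using conditions (ii)--(iii) of Definition~\ref{def: category BGG} to reduce to finite-dimensional invertibility statements on each $(X\otimes Y)_p$, and extract the intertwining property iteratively from the RTT relation by applying the coproduct $\Delta(s_{ij}^{(n)})$ to the first tensor factor, specialized to $X\otimes Y$, and collecting the resulting $\BV\otimes\BV$ channel commutations site by site.
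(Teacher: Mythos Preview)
Your treatment of the first two identities is essentially the paper's: the rescaling automorphism for the first, and the coproduct of the $s_{ij}(z)$ chained over the $\ell$ sites for the second (the paper writes the computation as $S_{\underline{i}\underline{j}}^{X\otimes Y}(z)\otimes E_{\underline{i}\underline{j}}=\sum_{\underline{k}}(S_{\underline{i}\underline{k}}^{X}(z)\otimes 1\otimes E_{\underline{i}\underline{k}})(1\otimes S_{\underline{k}\underline{j}}^{Y}(z)\otimes E_{\underline{k}\underline{j}})$ and observes that after taking the trace only terms with $\epsilon_{\underline{i}}=\epsilon_{\underline{k}}=\epsilon_{\underline{j}}$ survive, so all tensor components are even and the signs disappear).

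For the third identity your approach has a genuine gap. The intertwiner $\check R^{X,Y}(z,w)$ you propose to construct is precisely the object whose existence is unavailable in this setting, and your sketch (``build it weight-space by weight-space \ldots\ extract the intertwining property iteratively from the RTT relation'') does not produce it. The RTT relation of Definition~\ref{def: quantum affine superalgebras} intertwines on the $\BV\otimes\BV$ factors; it gives no mechanism for an intertwiner on $X\otimes Y$ when $X,Y$ are arbitrary modules in category $\BGG$. Promoting it to such an operator is tantamount to having a universal R-matrix, which the paper explicitly notes is not known for $U_q(\Gaff)$ (see the remark after Definition~\ref{def: transfer matrix}).

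The paper's route is entirely different and bypasses the R-matrix. Since $t_W(z)$ is a trace, it is additive on short exact sequences and therefore depends only on the class $[W]\in K_0(\BGG)$. The second identity then says that $[W]\mapsto t_W(z)$ is a ring homomorphism out of $K_0(\BGG)$, and Proposition~\ref{prop: q-char ring} says this ring is commutative. Together with the first identity to absorb the second spectral parameter, one gets $t_X(z)t_Y(w)=t_{X\otimes \Phi_a^*Y}(z)=t_{\Phi_a^*Y\otimes X}(z)=t_Y(w)t_X(z)$. This is the argument of \cite[Theorem~5.3]{FH}, which the paper cites; no intertwiner on auxiliary spaces is needed.
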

\begin{proof}
We mainly prove the second equation; the first one is almost clear from Definition \ref{def: transfer matrix} and Equation \eqref{iso:Z-grading}, and the third one in the same way as \cite[Theorem 5.3]{FH} based on the commutativity of $K_0(\BGG)$.  For $\underline{i}, \underline{j} \in I^{\ell}$:
\begin{align*}
&\quad S_{\underline{i}\underline{j}}^{X\otimes Y}(z) \otimes E_{\underline{i}\underline{j}} = \prod_{r=\ell}^{1} s_{i_rj_r}^{X\otimes Y}(zb_r) \otimes E_{i_1j_1} \otimes E_{i_2j_2} \otimes \cdots \otimes E_{i_{\ell}j_{\ell}} \\
&= \sum_{\underline{k} \in I^{\ell}} \prod_{r=\ell}^1 \left((-1)^{|E_{i_rk_r}||E_{k_rj_r}|} s_{i_rk_r}^X(zb_r)\otimes s_{k_ri_r}^Y(zb_r) \right)\otimes E_{i_1j_1} \otimes E_{i_2j_2} \otimes \cdots \otimes E_{i_{\ell}j_{\ell}} \\
&=  \sum_{\underline{k} \in I^{\ell}} (S_{\underline{i}\underline{k}}^{X}(z) \otimes 1 \otimes  E_{\underline{i}\underline{k}})(1\otimes S_{\underline{k}\underline{j}}^{Y}(z) \otimes E_{\underline{k}\underline{j}}).
\end{align*}
After taking trace over $X_p \otimes Y_{p'}$, only the terms with $\epsilon_{\underline{i}} = \epsilon_{\underline{k}} = \epsilon_{\underline{j}}$ survive and so all the tensor components are of even parity, implying the second equation.
\end{proof}
Let $\varphi: \CP \longrightarrow \BC^{\times}$ be a morphism of multiplicative groups (typical examples are $((p_i)_{i\in I};s) \mapsto (-1)^s$ and $((p_i)_{i\in I};s) \mapsto (-1)^s \times \prod_{i\in I} p_i$). If $W$ is a finite-dimensional $Y_q(\Glie)$-module in category $\BGG$, then the {\it twisted transfer matrix}  is:
\begin{equation} \label{for: twisted transfer matrix}
t_W(z;\varphi) := \sum_{\underline{i},\underline{j} \in I^{\ell}} \left( \sum_{p \in \wt(W)} \varphi(p) \times \Tr_{W_p} (S_{\underline{i}\underline{j}}^W(z) ) \right) E_{\underline{i}\underline{j}} \in \End(\BV^{\otimes \ell})[[z]]. 
\end{equation}
If $W$ is infinite-dimensional and the second summation above converges (for a generic choice of $\varphi$), then $t_W(z;\varphi)$ is still well-defined.

\begin{lem}  \label{lem: polyn asym}
Let $i \in I_0,\ a,c \in \BC^{\times}$. The power series $f_{c,a}^{(i)}(z) s_{jk}(z) \in Y_q(\Glie)[[z]]$ for $j,k \in I$ act on the module $\CW_{c,a}^{(i)}$ as  polynomials in $z$ of degree $\leq 1$, where 
\begin{gather*}  
\begin{tabular}{|c|c|c|} 
  \hline
  &$i \leq M$ & $i > M$  \\
  \hline
 $f_{c,a}^{(i)} $ & $1-zaq^{M-N-i-1} $ & $\frac{(1-zac^{-2}q^{M+N-i-1})(1-zaq^{i-M-N-1})}{1-zaq^{M+N-i-1}} $ \\
  \hline
\end{tabular} 
\end{gather*}
\end{lem}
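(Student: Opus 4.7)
The module $\CW_{c,a}^{(i)}$ is obtained as a generic asymptotic limit of the Kirillov--Reshetikhin modules $W_{m, aq_i^{-1}}^{(i)}$ (Lemma \ref{lem: negative pre char}), paralleling the construction of $\CN_{c,a}^{(i)}$ in Section \ref{sec: asym}: each RTT coefficient $s_{jk}^{(n)}$ acts on $\CW_{c,a}^{(i)}$ as the formal specialisation at $q_i^m = c$ of a Laurent-polynomial family of operators on the $W_{m, aq_i^{-1}}^{(i)}$. It therefore suffices to establish, uniformly in $m$, that $f^{(m)}(z)\,s_{jk}(z)$ acts on $W_{m, aq_i^{-1}}^{(i)}$ as a polynomial of degree $\leq 1$ in $z$, where $f^{(m)}(z)$ specialises to $f_{c,a}^{(i)}(z)$ in the limit; passing to the limit then yields the lemma.

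For $i \leq M$, Corollary \ref{cor: KR evaluation} identifies $W_{m, aq^{-1}}^{(i)}$ with the evaluation module $V_q^+(m\varpi_i;\,aq^{M-N-i-1})$. Applying Equation \eqref{homo: evaluation} at $b = aq^{M-N-i-1}$ gives $\ev_b^+(s_{jk}(z)) = (s_{jk}^{(0)} - zb\,t_{jk}^{(0)})/(1-zb)$, so $(1 - zaq^{M-N-i-1})\,s_{jk}(z)$ is a polynomial of degree $\leq 1$ in $z$, uniformly in $m$ and independent of $c$; this matches $f_{c,a}^{(i)}(z)$ for $i \leq M$ exactly.

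For $i > M$, Corollary \ref{cor: KR evaluation} identifies $W_{m, aq}^{(i)}$ with $V_q^{-*}(\lambda_m^{(i)};\,aq^{M+N-i-1})$. From $\ev_b^- = h \circ \overline{\ev}_b^+ \circ h^{-1}$ (Equation \eqref{homo:evaluation dec}) and $h(\overline{S}(z)) = S(z)^{-1}$, one obtains $\ev_b^-(S(z)) = (1-zb)\bigl((S^{(0)})^{-1} - zb\,(T^{(0)})^{-1}\bigr)^{-1}$, and the antipodal twist $(\Sm \otimes \mathrm{Id})(S(z)) = S(z)^{-1}$ inherent in passing to the dual $V_q^{-*}$ produces a further spectral shift by $q^{-2}$. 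These combine to produce, on the KR module, the $m$-dependent prefactor $f^{(m)}(z) = (1 - zbq^{2m})(1 - zaq^{i-M-N-1})/(1 - zb)$, whose denominator absorbs the pole from $\ev_b^-$, whose factor $(1 - zbq^{2m})$ cancels the denominator appearing in the explicit action $K_p^+(z)\,v = q^m(1-zb)/(1 - zbq^{2m})\cdot v$ for $p > i$ (from the proof of Corollary \ref{cor: KR evaluation}), and whose factor $(1 - zaq^{i-M-N-1})$ arises from the antipodal $q^{-2}$-shift. Under the asymptotic substitution $q^{-m} = c$, and hence $q^{2m} = c^{-2}$, $f^{(m)}(z)$ specialises to $f_{c,a}^{(i)}(z)$.

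The main obstacle is the $i > M$ case. While $s_{pp}(z)\,v = K_p^+(z)\,v$ on the highest $\ell$-weight vector is straightforward for $p > i$, the diagonal entries $s_{pp}(z)\,v$ for $p \leq i$ receive nontrivial contributions from the Gauss cross-terms $\sum_{k > p} e_{pk}^+(z)\,K_k^+(z)\,f_{kp}^+(z)\,v$, which must be computed using the Drinfeld relations of Remark \ref{rem: two Gauss decompositions} and shown to combine with $K_p^+(z)\,v$ into a rational function whose product with $f^{(m)}(z)$ is polynomial of degree $\leq 1$; propagating these rational structures across all $\ell$-weight spaces and then through the antipodal inversion of matrices with algebra-valued entries is delicate. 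A potentially cleaner alternative is to invoke the anti-equivalence $\SG^*$ of Lemma \ref{lem: duality by permutation}: it sends $\CW_{c,a}^{(i)}$ over $U_q(\Gaff)$ to an analogous asymptotic module over $U_q(\Gafft)$ at the Dynkin node $\kappa - i < N = M'$, so the simpler $i \leq M$ argument applies in the primed algebra, and pulling back under the explicit formula $\SG = h \circ \overline{\SF} \circ h'^{-1}$ recovers $f_{c,a}^{(i)}(z)$.
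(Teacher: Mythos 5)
Your overall strategy is the right one, and it is the paper's: reduce to a uniform (in $m$) statement on the Kirillov--Reshetikhin modules $W^{(i)}_{m,aq_i^{-1}}$ underlying the asymptotic construction, then use the Laurent-polynomial dependence on $q_i^{m}$ of the inductive-system operators $Q_{t;m}$ together with injectivity of the transition maps to specialise $q_i^{m}\mapsto c$. Your treatment of $i\leq M$ is correct and identical to the paper's: $V_m\cong V_q^+(m\varpi_i;aq^{M-N-i-1})$ and the formula \eqref{homo: evaluation} make $(1-zaq^{M-N-i-1})s_{jk}(z)$ manifestly a degree-$\leq 1$ polynomial, independently of $m$ and $c$.

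The case $i>M$, which is the real content of the lemma, is not established by either of your two routes. Your direct route stalls exactly where you say it does: knowing $\ev_b^-(S(z))=(1-zb)\bigl((S^{(0)})^{-1}-zb(T^{(0)})^{-1}\bigr)^{-1}$ gives no control over the individual entries $s_{jk}(z)$ on all of $W^{(i)}_{m,aq}$ (off-diagonal entries, non-highest-weight vectors, and the operator-valued matrix inversion all contribute), and the assertions about an ``antipodal $q^{-2}$-shift'' producing the factor $(1-zaq^{i-M-N-1})$ are a plausibility argument for the shape of $f^{(m)}$, not a proof of polynomiality. Your proposed shortcut via $\SG^*$ also fails as stated: $\SG=h\circ\overline{\SF}\circ h'^{-1}$ does \emph{not} carry the generating series $s_{jk}'(z)$ of $Y_q(\Glie')$ to individual series $s_{j'k'}(z)$ of $Y_q(\Glie)$ --- it is built from $S(z)^{-1}$ --- so polynomiality of $f'(z)s_{jk}'(z)$ on the primed module does not transport to polynomiality of $f^{(i)}_{c,a}(z)s_{jk}(z)$; indeed the prefactor produced on the primed side is a single linear polynomial, whereas $f^{(i)}_{c,a}$ for $i>M$ is a ratio of a quadratic by a linear factor. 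The missing idea is the one the paper uses: by comparing highest $\ell$-weights via the tableau $q$-character formulas \eqref{for: dual iMA}--\eqref{for: dual dMA} and Lemma \ref{lem: JC dual action}, one has $W^{(i)}_{m,aq}\cong V_q^{-*}(\lambda^{(i)}_m;aq^{M+N-1-i})\cong \phi^*_{h_m(z)}\bigl(V_q^{+*}(\lambda^{(i)}_m;aq^{i-M-N+2m-1})\bigr)$ for an \emph{explicit scalar} series $h_m(z)$; on the $\ev^+$-evaluation module the entries $(1-zaq^{2m+i-M-N-1})s_{jk}(z)$ are manifestly polynomial of degree $\leq 1$, and the identity $h_m(z)^{-1}(1-zaq^{2m+i-M-N-1})=f^{(i)}_{q^{-m},a}(z)$ then yields the claim after Laurent-polynomial interpolation in $q^{m}$. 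Without this twist-by-$h_m(z)$ comparison (or an equivalent device) your argument for $i>M$ has a genuine gap.
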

\begin{proof}
Let us recall  the generic limit construction of $\CW_{c,a}^{(i)}$ in \cite{Z5}.
For $m > 0$ set $V_m := W_{m,aq_i^{-1}}^{(i)} \otimes \BC_{(1,\cdots,1;m|\varpi_i|)}$, so that its highest $\ell$-weights is of even parity.  Let $\mathcal{T} := \{s_{ij}^{(n)},t_{ij}^{(n)}\}$ be the set of RTT generators for $U_q(\Gaff)$. By \cite[Lemma 5.1]{Z5}, their exists an inductive system of vector superspaces $(\{V_m\}, \{F_{m_2,m_1}\})$ with Laurent polynomials $Q_{t;m}(u) \in \mathrm{Hom}_{\BC}(V_m,V_{m+1})[u,u^{-1}]$ for $t \in \mathcal{T}$ and $m > 0$ such that 
$$ (\blacktriangle)\quad t F_{m_2,m} = F_{m_2,m+1} Q_{t;m}(q_i^{m_2}) \in \mathrm{Hom}_{\BC}(V_m,V_{m_2}) \quad \mathrm{for}\ m_2 > m+1. $$ 
Its inductive limit admits a $U_q(\Gaff)$-module structure where $t \in \mathcal{T}$ acts as the inductive limit $\lim\limits_{m \rightarrow \infty} Q_{t;m}(c)$.  This is exactly the module $\CW_{c,a}^{(i)}$.


Suppose $i > M$. By comparing the highest $\ell$-weights of the modules in Equation \eqref{equ: KR -} based on \eqref{for: dual iMA}, \eqref{for: dual dMA} and Lemma \ref{lem: JC dual action}, we have:
\begin{align*}
W_{m,aq}^{(i)} &\cong V_q^{-*}(\lambda_m^{(i)}; aq^{M+N-1-i})  \cong \phi_{h_m(z)}^*\left( V_q^{+*}(\lambda_m^{(i)};aq^{i-M-N+2m-1})\right), \\ 
h_m(z) &= \prod_{l=1}^m \prod_{j=1}^{M+N-i} \frac{(1-zaq^{2l-2j+M+N-i-1})^2}{(1-zaq^{2l-2j+M+N-i-3})(1-zaq^{2l-2j+M+N-i+1})} \\
&= \frac{(1-zaq^{2m+i-M-N-1})(1-zaq^{-i+M+N-1})}{(1-zaq^{2m-i+M+N-1})(1-zaq^{i-M-N-1})}. 
\end{align*}
It follows that $h_{m_2}(z)^{-1} (1-zaq^{2m_2+i-M-N-1}) s_{jk}(z) F_{m_2,m}$ is a polynomial in $z$ of degree $\leq 1$ for all $m_2 > m$.  By Equation $(\blacktriangle)$ above, this is equal to 
$$F_{m_2,m+1} \frac{(1-zaq^{2m_2+i-M-N-1})}{h_{m_2}(z)} \sum_{n\geq 0} z^n Q_{s_{jk}^{(n)};m}(q^{-m_2}). $$
Since $h_{m_2}(z)^{-1} (1-zaq^{2m_2+i-M-N-1}) = f_{q^{-m_2},a}^{(i)} (z)$, from the injectivity of $F_{m_2,m+1}$ and the polynomial dependence on $q^{m_2}$, we obtain that $f_{c,a}^{(i)}(z) \sum_{n\geq 0} z^n Q_{s_{jk}^{(n)};m}(c)$ is a polynomial in $z$ of degree $\leq 1$. By taking the inductive limit $m \rightarrow \infty$, the same holds for the action of $f_{c,a}^{(i)}(z) s_{jk}(z)$ on $\CW_{c,a}^{(i)}$.

The case $i \leq M$ is much simpler, since $V_m \cong V_q^+(m\varpi_i; aq^{M-N-i-1})$. We omit the details.
\end{proof}
Based on the lemma, let us define the $Y_q(\Glie)$-module $\BW_{c,a}^{(i)} := \phi_{f_{c,a}^{(i)}(z)}^* (\CW_{c,a}^{(i)})$. (Indeed it can be equipped with a $U_q(\Gaff)$-module structure.) 
\begin{lem} \label{lem: separation of variables}
For $i \in I_0$ and $a,c \in \BC^{\times}$ we have:
\begin{equation}  \label{equ: sov}
[\BW_{c,1}^{(i)} \otimes \BW_{1,a^2}^{(i)}] = [\BW_{ca,a^2}^{(i)} \otimes \BW_{a^{-1},1}^{(i)}] \in K_0(\BGG).
\end{equation}
Let $X$ be a finite-dimensional $U_q(\Gaff)$-module in category $\BGG$. In a fractional ring of $K_0(\BGG)$ we have $[X] = \sum\limits_{l=1}^{\dim X} [D_l] \Bm_l$ where for each $l$, $D_l$ is a one-dimensional $U_q(\Gaff)$-module in category $\BGG$, and $\Bm_l$ is a product of the $\frac{[\BW_{b,a}^{(i)}]}{[\BW_{c,a}^{(i)}]}$ with $i \in I_0,\ a,b,c \in \BC^{\times}$.
\end{lem}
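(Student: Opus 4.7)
The plan is to deduce the identity \eqref{equ: sov} by computing $q$-characters and invoking the injectivity of $\chi_q$ from Proposition \ref{prop: q-char ring}. By Proposition \ref{prop: asymptotic T} and the definition $\BW_{c,a}^{(i)} := \phi_{f_{c,a}^{(i)}}^*(\CW_{c,a}^{(i)})$, one reads
\[
  \chi_q(\BW_{c,a}^{(i)}) \;=\; f_{c,a}^{(i)}(z)\,\aBw_{c,a}^{(i)}\,\nqc(L_{i,a}^-).
\]
Substituting into both sides of \eqref{equ: sov}, the infinite tails $\nqc(L_{i,1}^-)\,\nqc(L_{i,a^2}^-)$ are common and cancel, reducing \eqref{equ: sov} to a scalar identity in $\lCP$. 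Using Example \ref{ss: l-weights} together with the explicit form of $f_{c,a}^{(i)}$, this identity decomposes componentwise (over $i \in I$) into the rational-function identity
\[
  \tfrac{c - zc^{-1}}{1-z}\,\cdot\, 1 \;=\; \tfrac{ca - zac^{-1}}{1-za^2}\,\cdot\, \tfrac{a^{-1} - za}{1-z}
\]
highlighted in the introduction, whose verification is elementary; weights and parities balance via $[c]_i[a^{-1}]_i = [ca^{-1}]_i$.

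For the second statement, first set up the target fractional ring: localise $K_0(\BGG)$ at the multiplicative set generated by the classes $\{[\BW_{c,a}^{(i)}]\}$. Injectivity of $\chi_q$ and non-vanishing of each $\chi_q(\BW_{c,a}^{(i)})$ in $\lCE$ ensure these are non-zero-divisors, so the localisation is well-defined and $\chi_q$ extends. The cancellation of $\nqc(L_{i,a}^-)$ in any ratio of two $\BW$-classes with common spectral parameter $a$ and Dynkin node $i$ yields
\[
  \frac{[\BW_{b,a}^{(i)}]}{[\BW_{c,a}^{(i)}]} \;=\; [D]\cdot\frac{\aBw_{b,a}^{(i)}}{\aBw_{c,a}^{(i)}},
\]
with $[D]$ a one-dimensional class absorbing the polynomial scalar $f_{b,a}^{(i)}/f_{c,a}^{(i)}$. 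A direct computation using Example \ref{ss: l-weights} (checking $[q_i]_i = q^{\varpi_i}$ componentwise, modulo a parity-adjusting one-dimensional factor when $\kappa - i$ is odd and $i>M$) shows that each $Y_{i,a}$ is represented by a single such $\BW$-ratio.

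Next, expand $[X]$ term by term through its $q$-character: $\chi_q(X) = \sum_{\Bf \in \lwt(X)} \dim(X_\Bf)\,\Bf$ is a sum of exactly $\dim X$ monomials (with multiplicity). Each $\ell$-weight $\Bf \in \BR_U$ of a finite-dimensional $U_q(\Gaff)$-module is a $Y$-monomial up to the image of $q^?:\BP\to\CP$: the highest $\ell$-weight decomposes in this form via the telescoping identity $\varpi_{m,a}^{(i)} = \prod_{l=1}^m Y_{i,aq_i^{2-2l}}$ (which recasts Lemma \ref{lem: simple O}(3)), and by \eqref{equ: A Psi} every other $\ell$-weight differs from the highest by products of $A_{j,b}^{-1}$, which are themselves $Y$-monomials modulo weights. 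Substituting the $\BW$-ratio representation of each $Y_{i,b}^{\pm 1}$ and sweeping the residual one-dimensional (scalar, weight, parity) pieces into a single class $[D_l]$ produces the claimed decomposition $[X] = \sum_{l=1}^{\dim X}[D_l]\,\Bm_l$, with each $\Bm_l$ a product of ratios $[\BW_{b,a}^{(i)}]/[\BW_{c,a}^{(i)}]$.

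The main obstacle will be justifying the $Y$-monomial closure of $q$-characters of \emph{arbitrary} finite-dimensional $U_q(\Gaff)$-modules, not merely evaluation or Kirillov--Reshetikhin modules. In the non-super case this is Frenkel--Mukhin's closure theorem; in the present super setting one should argue via the tableau formulas of Theorem \ref{thm: q-char MA} and Corollary \ref{cor: KR evaluation}, combined with the realisation of a general irreducible $L(\Bf)$ of Lemma \ref{lem: simple O}(3) as a sub-quotient of a tensor product of KR and fundamental modules (using Theorem \ref{thm: tensor KR even} and its $\SG^*$-analogue for Dynkin nodes $i > M$), whose $\ell$-weights are manifestly $Y$-monomials. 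Once this closure is secured, the remaining bookkeeping of one-dimensional factors is routine.
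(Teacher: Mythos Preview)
Your argument is essentially correct and follows the same route as the paper. One correction: the $q$-character formula $\chi_q(\CW_{c,a}^{(i)}) = \aBw_{c,a}^{(i)}\,\nqc(L_{i,a}^-)$ that you invoke comes from Lemma~\ref{lem: negative pre char}(ii), not Proposition~\ref{prop: asymptotic T} (which concerns $\CN_{c,a}^{(i)}$ and $N_{i,a}^-$). With that fixed, your proof of \eqref{equ: sov} coincides with the paper's: verify $\aBw_{c,1}^{(i)}\aBw_{1,a^2}^{(i)} = \aBw_{ca,a^2}^{(i)}\aBw_{a^{-1},1}^{(i)}$ and $f_{c,1}^{(i)}f_{1,a^2}^{(i)} = f_{ca,a^2}^{(i)}f_{a^{-1},1}^{(i)}$, then use injectivity of $\chi_q$.

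For the second statement the paper is terse, simply citing \cite[Theorem~4.8]{FH} and \cite[Theorem~6.11]{Z5}; the latter already contains the decomposition of $[X]$ into monomials of $\aBw$-ratios (times one-dimensional classes), so the paper need only observe $\frac{\aBw_{b,a}^{(i)}}{\aBw_{c,a}^{(i)}} = \frac{\chi_q(\BW_{b,a}^{(i)})}{\chi_q(\BW_{c,a}^{(i)})}$ up to a one-dimensional factor. You instead spell out this argument from scratch. Your identified ``obstacle'' (that every $\ell$-weight of a finite-dimensional $U_q(\Gaff)$-module lies in $\BR_U$, i.e.\ is a monomial in the $\aBw$-ratios up to a weight) is the right point, and your proposed resolution---realising any such irreducible as a sub-quotient of a tensor product of fundamental or KR modules, whose $\ell$-weights are explicitly in $\BR_U$ by Theorem~\ref{thm: q-char MA} and Definition~\ref{def: tableau}---is exactly what underlies \cite[Theorem~6.11]{Z5}. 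So your expansion is a faithful reconstruction of the cited result rather than a new approach.
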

\begin{proof}
For the first statement, by Example \ref{ss: l-weights} and Lemma \ref{lem: polyn asym} we have:
$$ \aBw_{c,1}^{(i)} \aBw_{1,a^2}^{(i)} = \aBw_{ca,a^2}^{(i)}\aBw_{a^{-1},1}^{(i)},\quad f_{c,1}^{(i)}(z) f_{1,a^2}^{(i)}(z) = f_{ca,a^2}^{(i)}(z)f_{a^{-1},1}^{(i)}(z). $$
Together with Lemma \ref{lem: negative pre char}, this implies that the $q$-characters of the two tensor products in Equation \eqref{equ: sov} coincide. For the second statement, we argue as \cite[Theorem 4.8]{FH} based on 
$ \frac{\aBw_{b,a}^{(i)}}{\aBw_{c,a}^{(i)}} = \frac{\qc(\CW_{b,a}^{(i)})}{\qc(\CW_{c,a}^{(i)})} \equiv \frac{\qc(\BW_{b,a}^{(i)})}{\qc(\BW_{c,a}^{(i)})}$; see also \cite[Theorem 6.11]{Z5}. 
\end{proof}
Equation \eqref{equ: sov} is a {\it separation of variables} identity; see also \cite[Theorem 3.11]{FZ}. The same identity holds when replacing $\BW$ by $\CW$. Since $t_{\BW_{c,a}^{(i)}}(z)$ is a polynomial in $z$ of degree $\leq \ell$, the following definition makes sense.
\begin{defi} \label{def: Baxter operators}
For $i \in I_0$ the {\it Baxter operator} is $Q_i(z) := t_{\BW_{z,1}^{(i)}}(1)$.
\end{defi}

Let $p_c^{(i)} = \varpi(\aBw_{c,a}^{(i)})$. Then $\wt(\BW_{c,a}^{(i)}) \subset p_c^{(i)} q^{\BQ^-}$ and $\overline{Q}_i(z) := (p_c^{(i)})^{-1}Q_i(z)$ is a power series in the $q^{-\alpha_j}$ with $j \in I_0$ whose coefficients are in $\End(\BV^{\otimes \ell})[z,z^{-1}]$. Let $\overline{Q}_i^0(z)$ be its leading term. Since $(\BW_{1,1}^{(i)})_{p_1^{(i)}}$ is the one-dimensional simple socle of $\BW_{1,1}^{(i)}$, by Definition \ref{def: transfer matrix}, $\underline{i}$ is an eigenvector of $\overline{Q}_i^0(1)$ with non-zero eigenvalue. (Here we used the overall assumption $b_l \notin q^{\BZ}$.) The formal power series $\overline{Q}_i^0(z)$ and $Q_i(z)$ in the $q^{-\alpha_j}$ can therefore be inverted for $z \in \BC$ generic.

\begin{cor}[generalized Baxter TQ relations] \label{cor: generalized TQ relations}
For $b,c \in \BC^{\times}$, we have:
\begin{equation} \label{equ: generalized TQ}
\frac{t_{\BW_{b,1}^{(i)}}(z^{-2})}{t_{\BW_{c,1}^{(i)}}(z^{-2})} = \frac{Q_i(zb)}{Q_i(zc)},\quad \frac{t_{\CW_{b,1}^{(i)}}(z^{-2})}{t_{\CW_{c,1}^{(i)}}(z^{-2})} = \prod_{l=1}^{\ell} \frac{f_{c,1}^{(i)}(z^{-2}b_l^{-2})}{f_{b,1}^{(i)}(z^{-2}b_l^{-2})} \times \frac{Q_i(zb)}{Q_i(zc)}.
\end{equation} 
If $X$ is a finite-dimensional $U_q(\Gaff)$-module in category $\BGG$, then $t_X(z^{-2})$ is a sum of monomials in the $\frac{Q_i(zb)}{Q_i(zc)} t_D(z^{-2})$ with $i \in I_0,\ b,c \in \BC^{\times}$ and with $D$ one-dimensional $U_q(\Gaff)$-modules in category $\BGG$, the number of terms being $\dim X$.
\end{cor}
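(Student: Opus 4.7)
The plan is to reduce all three assertions to the separation of variables identity of Lemma~\ref{lem: separation of variables}, together with the observation that $W\mapsto t_W(z)$ is a ring homomorphism from $K_0(\BGG)$ (and from a suitable localisation of it) to operator-valued power series, as guaranteed by Proposition~\ref{prop: transfer matrices}. A preliminary ingredient to record is the spectral-shift identity $t_{\BW_{c,a}^{(i)}}(w)=t_{\BW_{c,1}^{(i)}}(wa)$: it follows from $\Phi_a^{*}\,\BW_{c,1}^{(i)}\cong\BW_{c,a}^{(i)}$, which in turn rests on the asymptotic construction giving $\Phi_a^{*}\CW_{c,1}^{(i)}\cong\CW_{c,a}^{(i)}$ together with the direct check $f_{c,a}^{(i)}(z)=f_{c,1}^{(i)}(za)$ read off from Lemma~\ref{lem: polyn asym}.

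For the first identity in \eqref{equ: generalized TQ}, I apply $t$ to the SOV identity of Lemma~\ref{lem: separation of variables} and use the spectral shift to recast it as
\[
t_{\BW_{c,1}^{(i)}}(w)\,t_{\BW_{1,a^2}^{(i)}}(w) \;=\; t_{\BW_{ca,1}^{(i)}}(wa^2)\,t_{\BW_{a^{-1},1}^{(i)}}(w).
\]
Specialising $w=a^{-2}$ turns $t_{\BW_{ca,1}^{(i)}}(1)$ into $Q_i(ca)$ by Definition~\ref{def: Baxter operators}; renaming $a\mapsto z$ and repeating the specialisation with $c$ replaced by $b$, the two resulting identities share the common factors $t_{\BW_{1,z^2}^{(i)}}(z^{-2})$ and $t_{\BW_{z^{-1},1}^{(i)}}(z^{-2})$, which cancel after division, yielding $t_{\BW_{b,1}^{(i)}}(z^{-2})/t_{\BW_{c,1}^{(i)}}(z^{-2}) = Q_i(zb)/Q_i(zc)$. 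The second identity follows from this by tracking the scalar twist
\[
t_{\BW_{c,1}^{(i)}}(z)\;=\;\Bigl(\prod_{l=1}^{\ell} f_{c,1}^{(i)}(zb_l)\Bigr)\,t_{\CW_{c,1}^{(i)}}(z)
\]
coming from $\BW_{c,1}^{(i)}=\phi^{*}_{f_{c,1}^{(i)}(z)}\CW_{c,1}^{(i)}$, and taking ratios of the analogous identities for $b$ and $c$.

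For the last assertion, I invoke the second part of Lemma~\ref{lem: separation of variables}: in an appropriate fractional ring of $K_0(\BGG)$,
\[
[X]\;=\;\sum_{l=1}^{\dim X}[D_l]\,\Bm_l,
\]
with each $\Bm_l$ a product of ratios $[\BW_{b,a}^{(i)}]/[\BW_{c,a}^{(i)}]$. Applying the (extended) transfer matrix homomorphism at $z^{-2}$, and using the spectral shift to reduce each ratio to the form treated above, converts each $\Bm_l$ into a monomial in the $Q_i(z\sqrt{a}\,b)/Q_i(z\sqrt{a}\,c)$, producing the decomposition of $t_X(z^{-2})$ into exactly $\dim X$ summands of the stated form.

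The main obstacle to address is legitimising the divisions by transfer matrices, i.e.\ extending $W\mapsto t_W(z^{-2})$ to the fractional ring used in Lemma~\ref{lem: separation of variables}. This is precisely what the paragraph preceding Corollary~\ref{cor: generalized TQ relations} arranges: writing $\overline{Q}_i(z)=(p_c^{(i)})^{-1}Q_i(z)$ as a formal power series in the $q^{-\alpha_j}$ whose leading term $\overline{Q}_i^{0}(z)$ acts with nonzero eigenvalues on the basis vectors $\underline{i}$, the Baxter operators and the polynomials $t_{\BW_{c,a}^{(i)}}(z)$ are invertible in the completion under consideration, so the factors appearing in the specialisations above may indeed be cancelled.
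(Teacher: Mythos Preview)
Your proof is correct and follows essentially the same route as the paper's: substitute into the SOV identity \eqref{equ: sov}, apply the transfer-matrix homomorphism of Proposition~\ref{prop: transfer matrices} together with the spectral shift $t_{\Phi_a^*X}(z)=t_X(za)$, and invoke Definition~\ref{def: Baxter operators} to recognise the resulting evaluations as $Q_i$'s. The paper is slightly more economical---it substitutes $(a,c)=(z^{-1},bz)$ directly into \eqref{equ: sov} and evaluates transfer matrices at $1$, obtaining the case $c=1$ in one step---whereas you evaluate at a generic $w$ and then specialise; but the content is the same, and your extra detail on the scalar twist relating $\BW$ and $\CW$ and on the invertibility of $Q_i$ is a helpful expansion of points the paper leaves implicit.
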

\begin{proof}
In Equation \eqref{equ: sov} let us set $(a,c) = (z^{-1}, bz)$:
$$ [\BW_{b,z^{-2}}^{(i)}] [\BW_{z,1}^{(i)}] = [\BW_{zb,1}^{(i)}] [\BW_{1,z^{-2}}^{(i)}]. $$
Taking transfer matrices and evaluating them at $1$ gives the special case $c=1$ of Equation \eqref{equ: generalized TQ}, which in turn implies the general case $c \in \BC^{\times}$. The second statement is a translation of that of Lemma \ref{lem: separation of variables}.
\end{proof}

\begin{example} \label{example: gl(2|2) generalized TQ}
Let $\Glie = \mathfrak{gl}(2|2)$ and $X = W_{1,1}^{(1)} = V_q^+(\epsilon_1;q^{-1})$. By Equation \eqref{for: iMA}:  
$$ \chi_q(X) = \boxed{1}_1 + \boxed{2}_1 + \boxed{3}_1 + \boxed{4}_1. $$
If $s \in \super, g(z) \in \BC[[z]]^{\times}$ and $c \in \BC^{\times}$, for simplicity let $sg(z) := (g(z)^4;s) \in \lCP$, $[s,g(z)] := [L(g(z)^4;s)] \in K_0(\BGG)$ and $\langle s, c\rangle := (c^4;s) \in \CP$. Set $w_{c,a}^{(i)} := f_{c,a}^{(i)}(z) \aBw_{c,a}^{(i)}$. By Definition \ref{def: tableau}, Example \ref{ss: l-weights} and Lemma \ref{lem: polyn asym}:
\begin{gather*}
\boxed{1}_1 = \left(\frac{q-z}{1-zq},1,1,1;\even\right),\quad \boxed{2}_1 = \left(1,\frac{q-zq^2}{1-zq^3},1,1;\even\right),\\
\boxed{3}_1 = \left(1,1,\frac{1-zq^3}{q-zq^2},1;\odd\right),\quad \boxed{4}_1 = \left(1,1,1,\frac{1-zq}{q-z};\odd\right), \\
\frac{w_{c,a}^{(1)}}{w_{1,a}^{(1)}} = \left(\frac{c-zac^{-1}}{1-za},1,1,1;\even\right), \quad \frac{w_{c,a}^{(2)}}{w_{1,a}^{(2)}}= \left(\frac{c-zaqc^{-1}}{1-zaq},\frac{c-zaqc^{-1}}{1-zaq},1,1;\even\right), \\
\frac{w_{c,a}^{(3)}}{w_{1,a}^{(3)}} = \left(\frac{1-zac^{-2}}{1-za},\frac{1-zac^{-2}}{1-za},\frac{1-zac^{-2}}{1-za},c^{-1};\even \right),\quad \boxed{1}_1 = \frac{w_{q,q}^{(1)}}{w_{1,q}^{(1)}},  \\
\boxed{2}_1 = \frac{w_{q^{-1},q}^{(1)}}{w_{1,q}^{(1)}} \frac{w_{q,q^2}^{(2)}}{w_{1,q^2}^{(2)}},\quad \boxed{3}_1 = \odd  q^{-1} \frac{w_{q,q^2}^{(2)}}{w_{1,q^2}^{(2)}} \frac{w_{q^{-1},q}^{(3)}}{w_{1,q}^{(3)}}, \quad \boxed{4}_1 = \odd \frac{1-zq}{1-zq^{-1}} \frac{w_{q,q}^{(3)}}{w_{1,q}^{(3)}}.
\end{gather*}
It follows that in the fractional ring of $K_0(\BGG)$: 
\begin{align*}
[X] = \frac{[\BW_{q,q}^{(1)}]}{[\BW_{1,q}^{(1)}]} + \frac{[\BW_{q^{-1},q}^{(1)}]}{[\BW_{1,q}^{(1)}]} \frac{[\BW_{q,q^2}^{(2)}]}{[\BW_{1,q^2}^{(2)}]}  + [\odd, q^{-1}]\frac{[\BW_{q,q^2}^{(2)}]}{[\BW_{1,q^2}^{(2)}]} \frac{[\BW_{q^{-1},q}^{(3)}]}{[\BW_{1,q}^{(3)}]} + [\odd, \frac{1-zq}{1-zq^{-1}}]\frac{[\BW_{q,q}^{(3)}]}{[\BW_{1,q}^{(3)}]}.
\end{align*}
Let $q^{\frac{1}{2}}$ be a square root of $q$. By Example \ref{examp: one-dim transfer} and Equation \eqref{equ: generalized TQ}:
\begin{align*}
t_X(z^{-2}) &= \frac{Q_1(zq^{\frac{1}{2}})}{Q_1(zq^{-\frac{1}{2}})} + \frac{Q_1(zq^{-\frac{3}{2}})}{Q_1(zq^{-\frac{1}{2}})} \frac{Q_2(z)}{Q_2(zq^{-1})} + \langle\odd,q^{-1}\rangle \times \frac{Q_2(z)}{Q_2(zq^{-1})} \frac{Q_3(zq^{-\frac{3}{2}})}{Q_3(zq^{-\frac{1}{2}})}q^{-\ell} \\
&\quad + \langle\odd,1\rangle \times   \frac{Q_3(zq^{\frac{1}{2}})}{Q_3(zq^{-\frac{1}{2}})}\prod_{l=1}^{\ell} \frac{z^2-b_lq}{z^2-b_lq^{-1}}.
\end{align*}
\end{example} 
\begin{example} \label{example: TQ gl(2)}
Let $\Glie = \mathfrak{gl}(2|0)$ and $X = W_{1,1}^{(1)} = V_q^+(\epsilon_1;q)$. Then 
\begin{gather*}
 \boxed{1}_{q^2} + \boxed{2}_{q^2} = \left(\frac{q-zq^{-2}}{1-zq^{-1}},1;\even\right) + \left( 1,\frac{q-z}{1-zq};\even\right) = \frac{w_{q,q}^{(1)}}{w_{1,q}^{(1)}} + \frac{q-z}{1-zq} \frac{w_{q^{-1},q}^{(1)}}{w_{1,q}^{(1)}}, \\
 t_X(z^{-2}) = \frac{Q_1(zq^{\frac{1}{2}})}{Q_1(zq^{-\frac{1}{2}})} + \langle\even,q \rangle \times \frac{Q_1(zq^{-\frac{3}{2}})}{Q_1(zq^{-\frac{1}{2}})}\prod_{l=1}^{\ell} \frac{qz^2-b_l}{z^2-b_lq} . 
\end{gather*}
\end{example}
\begin{example} \label{example: TQ gl(1|1)}
Let $\Glie = \mathfrak{gl}(1|1)$ and $X = W_{1,1}^{(1)} = V_q^+(\epsilon_1;q^{-1})$. We have 
\begin{gather*}
\chi_q(X) = \boxed{1}_{1} + \boxed{2}_{1} = \left(\frac{q-z}{1-zq},1;\even\right) + \left( 1,\frac{1-zq}{q-z};\odd\right) = \frac{w_{q,q}^{(1)}}{w_{1,q}^{(1)}} \left(1 + \odd \frac{1-zq}{q-z} \right),  \\
t_X(z^{-2}) = \frac{Q_1(zq^{\frac{1}{2}})}{Q_1(zq^{-\frac{1}{2}})}  + \langle \odd,q^{-1}\rangle \times \frac{Q_1(zq^{\frac{1}{2}})}{Q_1(zq^{-\frac{1}{2}})}  \prod_{l=1}^{\ell} \frac{z^2-b_lq}{z^2q-b_l}. 
\end{gather*}
\end{example}
One can view Examples \ref{example: TQ gl(2)}--\ref{example: TQ gl(1|1)} as degenerate cases of Example \ref{example: gl(2|2) generalized TQ}.

We are ready to to deduce three-term functional relations of the Baxter operators $Q_i(z)$. Fix $a = 1$. Let $c,d \in \BC^{\times}$ be such that $c^2 \notin q^{\BZ}$. In Equation \eqref{equ: positive TQ asy} let us evaluate transfer matrices at $z^{-2}$ making use of Proposition \ref{prop: transfer matrices}:
\begin{align*}
 t_{M_{c,1}^{(i)}}(z^{-2}) t_{\CW_{d,d^2}^{(i)}}(z^{-2}) &= t_{\CW_{dq_i,d^2}^{(i)}}(z^{-2}) \prod_{j\in I_0: j\sim i} t_{\CW_{c_{ij}^{-1},q_{ij}^{-1}c_{ij}^{-2}}^{(j)}}(z^{-2}) \\
 &\quad + t_{D_i}(z^{-2}) t_{\CW_{d\hat{q}_i^{-1},d^2}^{(i)}}(z^{-2}) \prod_{j\in I_0: j \sim i}t_{\CW_{c_{ij}^{-1}q_{ij}^{-1},q_{ij}^{-1}c_{ij}^{-2}}^{(j)}}(z^{-2}).
\end{align*}
Dividing both sides by the term at the second row without $t_{D_i}(z^{-2})$ and making use of Equation \eqref{equ: generalized TQ}, we obtain the {\it Baxter TQ relation}:
\begin{equation} \label{equ: Baxter TQ}
X_c^{(i)}(z) \frac{Q_i(z)}{Q_i(z\hat{q}_i^{-1})} = y_i(z) \frac{Q_i(zq_i)}{Q_i(z\hat{q}_i^{-1})} \prod_{j\in I_0: j\sim i} \frac{Q_j(zq_{ij}^{\frac{1}{2}})}{Q_j(zq_{ij}^{-\frac{1}{2}})} + t_{D_i}(z^{-2}),
\end{equation}
where $X_c^{(i)}(z)$ (depending on $c \in \BC^{\times} \setminus q^{\BZ}$) and $y_i(z)$ are given by
\begin{align*}
X_c^{(i)}(z) &= \frac{t_{M_{c,1}^{(i)}}(z^{-2})}{\prod\limits_{j\in I_0: j \sim i}t_{\CW_{c_{ij}^{-1}q_{ij}^{-1},aq_{ij}^{-1}c_{ij}^{-2}}^{(j)}}(z^{-2})} \times \prod_{l=1}^{\ell} \frac{f_{d\hat{q}_i^{-1},d^2}^{(i)}(z^{-2}b_l)}{f_{d,d^2}^{(i)}(z^{-2}b_l)}, \\
y_i(z) &= \prod_{l=1}^{\ell} \left( \frac{f_{d\hat{q}_i^{-1},d^2}^{(i)}(z^{-2}b_l)}{f_{dq_i,d^2}^{(i)}(z^{-2}b_l)} \times \prod_{j\in I_0:j\sim i} \frac{f_{c_{ij}^{-1}q_{ij}^{-1},q_{ij}^{-1}c_{ij}^{-2}}^{(j)}(z^{-2}b_l)}{f_{c_{ij}^{-1},q_{ij}^{-1}c_{ij}^{-2}}^{(j)}(z^{-2}b_l)}  \right).
\end{align*}
Note that $y_i(z), D_i$ are independent of $c,d$ by Lemma \ref{lem: polyn asym} and Theorem \ref{thm: TQ asymptotic}. 

Let us assume that the twisted transfer matrices in Equation \eqref{for: twisted transfer matrix} are well-defined for all the $M_{c,1}^{(i)}$ and $\CW_{c,a}^{(i)}$, upon a generic choice of $\varphi: \lCP \longrightarrow \BC^{\times}$; this corresponds to the convergence assumption in \cite[Remark 5.12 (ii)]{FH}. Then Equation \eqref{equ: Baxter TQ} is an operator equation in $\End(\BV^{\otimes \ell})[[z^{-2}]]$. 

Based on the asymptotic construction of $\CW_{c,a}^{(i)}$, one can show that there exists $n \in \BZ$ such that $z^n Q_i(z)$ is a polynomial in $z$ with values in $\End(\BV^{\otimes \ell})$.

As in \cite[Section 5]{FH2}, we expect that the $t_{M_{c,1}^{(i)}}(z^{-2})$ are polynomials in $z^{-2}$ (up to multiplication by an integer power of $z$). Suppose that $w$ is a zero of $Q_i(z)$ that is neither a zero of $Q_i(z\hat{q}_i^{-1}), Q_j(zq_{ij}^{-\frac{1}{2}})$ nor a pole of $X_c^{(i)}(z)$. Then we have the {\it Bethe Ansatz Equation}: (see \cite[(2.6a)]{Tsuboi} and \cite{B2,TQYangian})
\begin{equation} \label{equ: Bethe Ansatz Equation}
y_i(w) \frac{Q_i(wq_i)}{Q_i(w\hat{q}_i^{-1})} \prod_{j\in I_0: j\sim i} \frac{Q_j(wq_{ij}^{\frac{1}{2}})}{Q_j(wq_{ij}^{-\frac{1}{2}})} = -t_{D_i}(w^{-2}).
\end{equation}
\begin{example} \label{example: BAE gl(2|2)}
Following Example \ref{example: gl(2|2) generalized TQ}, we determine the highest $\ell$-weight (still denoted by $D_i$) of the one-dimensional $U_q(\Gaff)$-module $D_i$ and the $y_i(z)$ in Equation \eqref{equ: Bethe Ansatz Equation} for $\Glie = \mathfrak{gl}(2|2)$. First by Definition \ref{def: tableau} and Example \ref{ss: l-weights}:
\begin{gather*}
\aBw_{c,a}^{(1)} = \left(\frac{c-zac^{-1}}{1-za},1,1,1;\even\right),\quad \aBw_{c,a}^{(2)} = \left(\frac{c-zaqc^{-1}}{1-zaq},\frac{c-zaqc^{-1}}{1-zaq},1,1;\even\right), \\
\aBw_{c,a}^{(3)} = \left(1,1,1,\frac{1-za}{c-zac^{-1}};\even\right),\quad A_{1,a} = \left(\frac{q-zaq^{-1}}{1-za},\frac{1-zaq^2}{q-zaq},1,1 ;\even\right), \\
A_{2,a} = \left(1,\frac{q-za}{1-zaq},\frac{q-za}{1-zaq},1;\odd\right),\quad A_{3,a} =  \left(1,1,\frac{1-zaq^2}{q-zaq},\frac{q-zaq^{-1}}{1-za} ;\even\right).
\end{gather*}
The relations between $A$ and $\aBw$ are as follows: $A_{1,a} = \aBw_{q^2,aq^2}^{(1)} \aBw_{q^{-1},aq^{-1}}^{(2)}$ and
\begin{gather*}
 A_{2,a} = \odd \frac{q-za}{1-zaq}\aBw_{q^{-1},aq^{-1}}^{(1)}\aBw_{q,aq}^{(3)}, \quad A_{3,a} = \frac{1-zaq^2}{q-zaq} \aBw_{q,aq}^{(2)}\aBw_{q^{-2},aq^{-2}}^{(3)}.
\end{gather*}
It it follows that $ D_1 = 1,\ D_2 = \odd \frac{1-zq}{q-z},\ D_3 = \frac{q-zq}{1-zq^2}$ and so ($D_i(z) := t_{D_i}(z^{-2})$)
\begin{gather*}
 D_1(z) = 1,\quad D_2(z) = \langle\odd,q^{-1}\rangle \times \prod_{l=1}^{\ell} \frac{z^2-b_lq}{z^2q-b_l},\quad D_3(z) = \langle \even,q\rangle \times \prod_{l=1}^{\ell} \frac{z^2q-b_lq}{z^2-b_lq^2}, \\
 y_1(z) = 1,\quad y_2(z) = \prod_{l=1}^{\ell} \frac{z^2-b_lq}{z^2-b_lq^{-1}}, \quad y_3(z) = \prod_{l=1}^{\ell} \frac{z^2-b_lq^{-2}}{z^2-b_lq^2}.
\end{gather*}
The Bethe Ansatz Equations become in this case: 
\begin{gather*}
 \frac{Q_1(w_1q)}{Q_1(w_1q^{-1})} \frac{Q_2(w_1q^{-\frac{1}{2}})}{Q_2(w_1q^{\frac{1}{2}})} = -1,\quad \frac{Q_1(w_2q^{-\frac{1}{2}})}{Q_1(w_2q^{\frac{1}{2}})}\frac{Q_3(w_2q^{\frac{1}{2}})}{Q_3(w_2q^{-\frac{1}{2}})} = - \langle\odd,q^{-1}\rangle \times q^{-\ell}, \\
 \frac{Q_3(w_3q^{-1})}{Q_3(w_3q)}\frac{Q_2(w_3q^{\frac{1}{2}})}{Q_2(w_3q^{-\frac{1}{2}})} = -\langle \even,q\rangle \times  \prod_{l=1}^{\ell}  \frac{w_3^{2}q-b_lq}{w_3^{2}-b_lq^{-2}},
\end{gather*}
where $w_i$ is a zero of $Q_i(z)$ for $1\leq i \leq 3$.
\end{example}

The generalized Baxter relations in Lemma \ref{lem: separation of variables} and Bethe Ansatz Equations \eqref{equ: Bethe Ansatz Equation} for the Baxter operators $Q_i(z)$ are based on asymptotic $U_q(\Gaff)$-modules: $\CW_{c,a}^{(i)}, \CN_{c,a}^{(i)}, M_{c,a}^{(i)}$, whereas in recent parallel works \cite{Jimbo1,HL,FH2,Jimbo2} representations of Borel subalgebras ($Y_q(\Glie)$ in our situation) play a key role. 

In \cite{B2,TQYangian}, for the Yangian of $\mathfrak{gl}(M|N)$ the Baxter operators $\BQ_J(z)$ are labeled by the subsets $J$ of $I$. In addition to TQ relations, there are algebraic relations among the $\BQ_J(z)$ called QQ relations.  Our $Q_i(z)$ with $i \in I_0$ seem to be algebraically independent by Proposition \ref{prop: simplicity tensor product asymp}; see also \cite[Theorem 4.11]{FH}.

\begin{rem}
Following \cite{BT,FH} define $\BQ_i(z) := t_{L_{i,1}^+}(z)$ for $i \in I_0$. We have
\begin{equation} \label{for: comp two Q}
t_{L([c]_i)}(z^{-2}) \frac{\BQ_i(z^{-2}c^{-2})}{\BQ_i(z^{-2})} = \prod_{l=1}^{\ell} \frac{f_{1,1}^{(i)}(z^{-2}b_l^{-2})}{f_{c,1}^{(i)}(z^{-2}b_l^{-2})} \times \frac{Q_i(zc)}{Q_i(z)}
\end{equation}
 based on the $q$-character formula $\frac{\chi_q(\CW_{c,1}^{(i)})}{\chi_q(\CW_{1,1}^{(i)})} = [c]_i \frac{\chi_q(L_{i,c^{-2}}^+)}{\chi_q(L_{i,1}^+)}$
and Equation \eqref{equ: generalized TQ}. See \cite[Remark A.7]{FZ} for a similar comparison in the Yangian case. 
\end{rem}


\begin{thebibliography}{ASM}  
\bibitem{Tor}{G. Arutyunov, M. de Leeuw, and A. Torrielli},
\newblock {\it The bound state S-matrix for $AdS_5\times S^5$ superstring}, 
\newblock Nuclear Phys. B {\bf 819}, no. 3 (2009): 319--350.

\bibitem{Baxter72} R. Baxter,
\newblock{\it Partition function of the eight-vertex lattice model},
\newblock Ann. Phys. {\bf 70} (1972): 193--228.

\bibitem{BazhanovLukyanovZamolodchikov1997}
V. Bazhanov, S. Lukyanov, and A. Zamolodchikov,
\newblock {\it Integrable structure of conformal field theory.
II. Q-operator and DDV equation},
\newblock Comm. Math. Phys. 190 (1997): 247--278.

\bibitem{B}{V. Bazhanov, T. Lukowski, C. Meneghelli, and M. Staudacher}, 
\newblock {\it A shortcut to the Q-operator},
\newblock J. Stat. Mech. {\bf 1011} (2010): P11002.

\bibitem{B2}{V. Bazhanov, R. Frassek, T. Lukowski, C. Meneghelli, and M. Staudacher},
\newblock {\it Baxter Q-operators and representations of Yangians},
\newblock Nuclear Phys. B {\bf 850}, no. 1 (2011): 148--174.

\bibitem{BT}{V. Bazhanov and Z. Tsuboi},
\newblock {\it Baxter's $Q$-operators for supersymmetric spin chains},
\newblock Nuclear Phys. \textbf{B 805} (2008): 451--516.

\bibitem{Beisert1}{N. Beisert and M. de Leeuw},
\newblock {\it The RTT realization for the deformed $\mathfrak{gl}(2|2)$ Yangian},
\newblock  J. Phys. A: Math. Theor. {\bf 47} (2014): 305201.

\bibitem{Beisert}{N. Beisert, W. Galleas and T. Matsumoto},
\newblock {\it A quantum affine algebra for the deformed Hubbard chain},
\newblock J. Phys. A: Math. Theor. \textbf{45}, no. 36 (2012): 365206.


\bibitem{BKK}{G. Benkart, S. Kang, and M. Kashiwara},
\newblock {\it Crystal bases for the quantum superalgebra $U_q(\mathfrak{gl}(m,n))$},
\newblock J. Amer. Math. Soc. {\bf 13} (2000), 295--331. 

\bibitem{BR}{A. Berele and A. Regev},
\newblock {\it Hook Young diagrams with application to combinatorics and to representations of Lie superalgebras},
\newblock Adv. Math. {\bf 64} (1987): 118--175.

\bibitem{Demazure}{M. Brito, V. Chari, and A. Moura},
\newblock {\it Demazure modules of level two and prime representations of quantum affine $\mathfrak{sl}_{n+1}$},
\newblock J. Inst. Math. Jussieu (2015)  doi:10.1017/S1474748015000407

\bibitem{Chari}{V. Chari},
\newblock {\it Braid group actions and tensor products}, 
\newblock Int. Math. Res. Notices {\bf 2002} (2002): 357--382.

\bibitem{CH}{V. Chari and D. Hernandez},
\newblock {\it Beyond Kirillov-Reshetikhin modules},
\newblock Contemp. Math. \textbf{506} (2010): 49--81.

\bibitem{ChariPrime}{V. Chari, A. Moura, and C. Young},
\newblock {\it Prime representations from a homological perspective},
\newblock Math. Z. {\bf 274} (2013): 613--645.

\bibitem{CP}{V. Chari and A. Pressley},
\newblock {\it Quantum affine algebras},
\newblock Commun. Math. Phys. {\bf 142}, no. 2 (1991): 261--283.

\bibitem{ChariDemazure}{V. Chari, P. Shereen, R. Venkatesh, and J. Wand},
\newblock {\it A Steinberg type decomposition theorem for higher level Demazure modules}, 
\newblock J. Algebra {\bf 455} (2016): 314--346.

\bibitem{W}{S. Clark, Z. Fan, Y. Li, and W. Wang},
\newblock {\it Quantum supergroups III. Twistors},
\newblock Commun. Math. Phys. {\bf 332} (2014): 415--436.

\bibitem{Jimbo1}{B. Feigin, M. Jimbo, T. Miwa, and E. Mukhin},
\newblock {\it Finite type modules and Bethe ansatz for quantum toroidal $\mathfrak{gl}_1$}, 
\newblock Commun. Math. Phys. {\bf 356}, no. 1 (2017): 285--327. 

\bibitem{Jimbo2}{------},
\newblock {\it Finite type modules and Bethe ansatz equations},
\newblock  Ann. Henri Poincar\'e\ {\bf 18}, no. 8 (2017): 2543--2579.

\bibitem{Felder}{G. Felder},
\newblock {\it Elliptic quantum groups}, 
\newblock XIth International Congress of Mathematical Physics (Paris, 1994), Int. Press, Cambridge, MA, 1995, pp. 211--218.

\bibitem{FV2}{G. Felder and A. Varchenko},
\newblock {\it Algebraic Bethe ansatz for the elliptic quantum group $E_{\tau,\eta}(sl_2)$},
\newblock Nuclear Phys. B {\bf 480}, no. 1--2 (1996): 485--503.

\bibitem{FZ}{G. Felder and H. Zhang},
\newblock {\it Baxter operators and asymptotic representations},
\newblock Sel. Math. New Ser. {\bf 23}, no. 4 (2017):2947--2975. 

\bibitem{FoH}{G. Fourier and D. Hernandez}, 
\newblock {\it Schur positivity and Kirillov--Reshetikhin modules},
\newblock SIGMA {\bf 10} (2014), 058.

\bibitem{FH}{E. Frenkel and D. Hernandez},
\newblock {\it Baxter's relations and spectra of quantum integrable models},
\newblock Duke Math. J. {\bf 164}, no. 12 (2015): 2407--2460.

\bibitem{FH2}{------},
\newblock {\it Spectra of quantum KdV Hamiltonians, Langlands duality, and affine opers},
\newblock \texttt{arXiv:1606.05301}.

\bibitem{FM}{E. Frenkel and E. Mukhin},
\newblock {\it The Hopf algebra $\textrm{Rep} (U_q\widehat{\mathfrak{gl}_{\infty}})$},
\newblock Sel. Math., New Ser. {\bf 8} (2002), no. 4, 537--635.

\bibitem{FR}{E. Frenkel and N. Reshetikhin},
\newblock {\it The $q$-character of representations of quantum affine algebras and deformations of $\mathcal{W}$-algebras},
\newblock Recent Developments in Quantum Affine Algebras and related topics, Contemp. Math. {\bf 248} (1999): 163--205. 

\bibitem{QYangian}{R. Frassek, T. Lukowski, C. Meneghelli, and M. Staudacher},
\newblock {\it Oscillator construction of $su(m|n)$ Q-operators},
\newblock Nuclear Phys. B {\bf 850}, no. 1 (2011): 175--198.

\bibitem{GTL}{S. Gautam and V. Toledano Laredo},
\newblock {\it Yangians, quantum loop algebras and abelian difference equations},
\newblock J. Amer. Math. Soc. {\bf 29} (2016): 775--824.

\bibitem{GTL1}{------},
\newblock {\it Elliptic quantum groups and their finite-dimensional representations},
\newblock Preprint \texttt{arXiv:1707.06469}.

\bibitem{H}{D. Hernandez},
\newblock {\it The Kirillov-Reshetikhin conjecture and solutions of T-systems}, 
\newblock J. Reine Angew. Math. {\bf 596} (2006): 63--87.

\bibitem{H2}{------},
\newblock {\it Representations of quantum affinizations and fusion product},
\newblock Transform. Groups \textbf{10} (2005): 163--200.

\bibitem{H3}{------},
\newblock {\it Smallness problem for quantum affine algebras and quiver varieties},
\newblock Ann. Scient. \'{E}cole Norm. Sup. {\bf 41}, no. 2 (2008): 271--306.

\bibitem{HJ}{D. Hernandez and M. Jimbo},
\newblock {\it Asymptotic representations and Drinfeld rational fractions}, 
\newblock Compos. Math. {\bf 148}, no. 5 (2012): 1593--1623.

\bibitem{HL}{D. Hernandez and B. Leclerc},
\newblock {\it Cluster algebras and category $\mathcal{O}$ for representations of Borel subalgebras of quantum affine algebras},
\newblock Alg. Number Theory {\bf 10}, no. 9 (2016): 2015--2052.

\bibitem{twist}{M. Jimbo, H. Konno, S. Odake, and J. Shiraishi},
\newblock {\it  Quasi-Hopf twistors for elliptic quantum groups},
\newblock Transform. Groups {\bf 4}, no. 4 (1999): 303--327.


\bibitem{Kac}{V. Kac},
\newblock {\it Infinite dimensional Lie algebras},
\newblock 3rd ed., Cambridge University Press, Cambridge, 1990.

\bibitem{TQYangian}{V. Kazakov, S. Leurent, and Z. Tsuboi},
\newblock {\it Baxter's Q-operators and operatorial B\"{a}cklund for quantum (super)-spin chains},
\newblock Commun. Math. Phys. {\bf 311} (2012): 787--814.

\bibitem{K2}{H. Konno},
\newblock {\it Elliptic quantum groups $U_{q,p}(gl_N)$ and $E_{q,p}(gl_N)$},
\newblock Adv. Stud. in Pure Math., \texttt{arXiv:1603.04129}.

\bibitem{L}{B. Leclerc},
\newblock {\it Quantum loop algebras, quiver varieties, and cluster algebras},
\newblock in: Representations of algebras and related topics, EMS Ser. Cong. Rep., European Mathematical Society, Z\"{u}rich (2011): 117--152.

\bibitem{MY}{E. Mukhin and C. Young}, 
\newblock {\it Affinization of category $\mathcal{O}$ for quantum groups},
\newblock Trans. Amer. Math. Soc. {\bf 366}, no. 9 (2014): 4815--4847.


\bibitem{Nakajima}{H. Nakajima},
\newblock {\it $t$-analog of $q$-characters of Kirillov--Reshetikhin modules of quantum affine algebras}, 
\newblock Represent. Theory {\bf 7} (2003): 259--274.

\bibitem{Perk-Schultz}{J. Perk and C. Schultz},
\newblock {\it New families of commuting transfer matrices in $q$-state vertex models},
\newblock Phys. Lett. {\bf 84 A} (1981): 407--410.

\bibitem{Tsuboi}{Z. Tsuboi},
\newblock {\it Analytic Bethe ansatz and functional equations for Lie superalgebra $sl(r+1|s+1)$},
\newblock J. Phys. A: Math. Gen. {\bf 30} (1997): 7975--7991.

\bibitem{Tsuboi1}{------},
\newblock {\it Analytic Bethe ansatz and functional relations related to tensor-like representations of type-II Lie superalgebras $B(r|s)$ and $D(r|s)$},
\newblock J. Phys. A: Math. Gen. {\bf 32} (1999): 7175--7206.

\bibitem{Tsuboi2}{------},
\newblock {\it Asymptotic representations and $q$-oscillator solutions of the graded Yang-Baxter equation related to Baxter $Q$-operators},
\newblock Nuclear Phys. {\bf B 886} (2014): 1--30.

\bibitem{Zr2}{Y. Xu and R. Zhang},
\newblock {\it Drinfeld realizations of quantum affine superalgebras},
\newblock Preprint \texttt{arXiv:1611.06449}.

\bibitem{Y}{H. Yamane},
\newblock {\it On defining relations of affine Lie superalgebras and affine quantized universal enveloping superalgebras},
\newblock Publ. RIMS, Kyoto Univ. \textbf{35} (1999): 321--390.

\bibitem{Z1}{H. Zhang},
\newblock {\it Representations of quantum affine superalgebras},
\newblock Math. Z. {\bf 278} (2014), 663--703.

\bibitem{Z2}{------},
\newblock {\it Universal $R$--matrix of quantum affine $\mathfrak{gl}(1,1)$},
\newblock Lett. Math. Phys. {\bf 105}, no. 11 (2015): 1587--1603.

\bibitem{Z3}{------},
\newblock {\it RTT realization of quantum affine superalgebras and tensor products},
\newblock  Int. Math. Res. Notices {\bf 2016} (2016): 1126--1157. 

\bibitem{Z4}{------},
\newblock {\it Fundamental representations of quantum affine superalgebras and $R$-matrices},
\newblock Transform. Groups {\bf 22} (2017): 559--590.

\bibitem{Z5}{------},
\newblock {\it Asymptotic representations of quantum affine superalgebras},
\newblock SIGMA {\bf 13} (2017), 066.

\bibitem{Z6}{------},
\newblock {\it Elliptic quantum groups and Baxter relations},
\newblock Preprint \texttt{arXiv:1706.07574}.

\bibitem{Zr1}{R. Zhang},
\newblock {\it Symmetrizable quantum affine superalgebras and their representations},
\newblock  J. Math. Phys. {\bf 38}, no. 1 (1997): 535--543.
\end{thebibliography}
\end{document}